\newcommand{\blind}{1}
\newtheorem{theorem}{Theorem}
\newtheorem{lemma}{Lemma}
\newtheorem{definition}{Definition}
\newtheorem{proposition}{Proposition}
\newtheorem{corollary}{Corollary}
\newtheorem{remark}{Remark}
\DeclareMathOperator*{\argmin}{arg\,min}
\newenvironment{manualprop}[1]{%
  \manualpropinner
}{\endmanualpropinner}
\newenvironment{manualcor}[1]{%
  \manualcorinner
}{\endmanualcorinner}
\begin{document}

\def\spacingset#1{\renewcommand{\baselinestretch}%
{#1}\small\normalsize} \spacingset{1}

\if1\blind
{
  \title{\bf Multiple Change Point Detection in Reduced Rank High Dimensional Vector Autoregressive Models}
  \author{Peiliang Bai \thanks{Department of Statistics, University of Florida, Email: \texttt{baipl92@ufl.edu}, \texttt{a.safikhani@ufl.edu}, \texttt{gmichail@ufl.edu}} \and Abolfazl Safikhani \footnotemark[1] \and George Michailidis \footnotemark[1] \thanks{Computer and Information Science Engineering, University of Florida} \thanks{Informatics Institute, University of Florida.}}
  \maketitle
} \fi

\if0\blind
{
  \bigskip
  \bigskip
  \bigskip
  \begin{center}
    {\LARGE\bf Multiple Change Point Detection in Reduced Rank High Dimensional Vector Autoregressive Models}
  \end{center}
  \medskip
} \fi

\bigskip
\begin{abstract}
We study the problem of detecting and locating change points in high-dimensional Vector Autoregressive (VAR) models, whose transition matrices exhibit low rank plus sparse structure. We first address the problem of detecting a single change point using an exhaustive search algorithm and establish a finite sample error bound for its accuracy. Next, we extend the results to the case of multiple change points that can grow as a function of the sample size. Their detection is based on a two-step algorithm, wherein the first step, an exhaustive search for a candidate change point is employed for overlapping windows, and subsequently a backwards elimination procedure is used to screen out redundant candidates. The two-step strategy yields consistent estimates of the number and the locations of the change points. To reduce computation cost, we also investigate conditions under which a surrogate VAR model with a weakly sparse transition matrix can accurately estimate the change points and their locations for data generated by the original model. This work also addresses and resolves a number of novel technical challenges posed by the nature of the VAR models under consideration.
The effectiveness of the proposed algorithms and methodology is illustrated on both synthetic and two real data sets.
\end{abstract}

\noindent%
{\it Keywords:}  Algorithms; Consistency; Time Series Data and their Applications.
\vfill
\spacingset{1.4} 
\section{Introduction}\label{sec:intro}

High dimensional time series analysis and their applications have become increasingly important in diverse domains, including macroeconomics (\cite{kilian2017structural,stock2016dynamic}), financial economics (\cite{billio2012econometric,lin2017regularized}), molecular biology (\cite{michailidis2013autoregressive}) and neuroscience (\cite{friston2014granger,schroder2019fresped}). Such data are usually both cross-correlated and auto-correlated. There are two broad modeling paradigms for capturing these features in the data: (i) dynamic factor and latent models (\cite{bai2008large,stock2002forecasting,stock2016dynamic,lam2011estimation,li2014new}), and (ii) vector autoregressive (VAR) models (\cite{lutkepohl2013introduction,kilian2017structural}). The basic premise of models in (i) is that the common dynamics of a large number of time series are driven by a relatively small number of latent factors, the latter evolving over time. VAR models aim to capture the self and cross auto-correlation structure in the time series, but the number of parameters to be estimated grows quadratically in the number of time series under consideration. Various structural assumptions have been proposed in the literature to accommodate a large number of time series in the model, with that of 
\textit{sparsity} (\cite{basu2015regularized}) being a very popular one. However, in many applications the autoregressive dynamics of the time series exhibit also low dimensional structure, which gave rise to the introduction of reduced rank auto-regressive models (\cite{box1977canonical,velu1986reduced,ahn1988nested,wang2004forecasting}). For example, brain activity data (see Example 1 in Section~\ref{sec:application}) exhibit low dimensional structure (\cite{schroder2019fresped}) and so do macroeconomic data (\cite{stock2016dynamic}, Example 2 in Section~\ref{sec:application}). Reduced rank auto-regressive models for stationary high-dimensional data were studied in \cite{basu2019low}.
{
The key idea of such reduced rank models is that the lead-lagged relationships between the time series can not simply be described by a few sparse components, as is the case for sparse VAR models. Instead, all the time series influence these relationships and some of them are particularly pronounced (those in the sparse component). Applications in economics/finance, neuroimaging, and environmental science are important candidates for these models.
}

In many application areas including those mentioned above, nonstationary time series data are commonly observed. The simplest, but realistic departure from stationarity, that also leads to interpretable models for the underlying time series, is \textit{piecewise-stationarity}. Under this assumption, the time series data are modeled as approximately stationary between neighbouring change-points, whereas their distribution changes at these change points. The literature on change point analysis for the two classes of modeling paradigms previously mentioned is rather sparse. \cite{bardsley2017change} developed tests for the presence of change points in functional factor models motivated by modeling the yield curve of interest rates, while \cite{barigozzi2018simultaneous} employed the binary segmentation procedure for detecting and identifying the locations of multiple change points in factor models. Change point detection for \textit{sparse VAR} models has been investigated in \cite{wang2019localizing,safikhani2017joint}, and \cite{bai2020multiple}.

The objective of this study is to investigate the problem of change point detection in a reduced rank VAR model, whose transition matrices exhibit \textit{low-rank and sparse} structure. The problem poses a number of technical challenges that we address in the sequel. 

Formally, a piece-wise stationary VAR model of lag-1 (for introducing the basic issues related to it) for a $p$-dimensional time series $\{X_t\}$ with $m_0$ change points $1 \leq \tau^\star_1 < \tau^\star_2 < \cdots < \tau^\star_{m_0} \leq T$ is given by:
\begin{equation*}
    X_t = \sum_{j=1}^{m_0+1}\left(A_j^\star X_{t-1} + \epsilon_t^j \right)\mathbf{I}(\tau^\star_{j-1} \leq t < \tau^\star_j),\quad t=1,2,\dots, T,
\end{equation*}
where $A_j^\star$ is a $p\times p$ coefficient matrix for the $j$-th segment, $j=1,2,\dots, m_0+1$, $\mathbf{I}(\tau^\star_{j-1}\leq t < \tau_j^\star)$ presents the indicator function of the $j$-th interval, and $\epsilon_t^j$s are $m_0+1$ independent zero mean Gaussian noise processes. It is assumed that that the coefficient matrix $A_j^\star$ can be decomposed into a low-rank component plus a sparse component: namely, $A_j^\star = L_j^\star + S_j^\star$, where $L_j^\star$ is a low-rank matrix with rank $r^\star_j$ ($r^\star_j \ll p$), and $S_j^\star$ is a sparse matrix with $d_j^\star$ ($d_j^\star \ll p^2$) non-zero entries.  

{
The modeling framework differs vis-a-vis the one considered in \cite{bai2020multiple}, since in the current work, \textit{both the low rank and the sparse} components of the transition matrices are allowed to \textit{exhibit changes} at break points. This flexibility rules out the use of a fused lasso based detection algorithm that is suitable for the case wherein \textit{only} the sparse component is allowed to exhibit changes, which was the setting in \cite{bai2020multiple}. As a result, a novel rolling window detection algorithm is introduced and its theoretical properties studied in the current work.
}

Next, we outline novel technical challenges, not present in change point analysis of \textit{sparse VAR} (\cite{safikhani2017joint,wang2019localizing}) and other sparse high dimensional models (\cite{roy2017change}): \\
(i) The change in the transition matrix may be due to a change in the low-rank component, in the sparse component or in both. To that end, we introduce a novel \textit{sufficient identifiability condition} for both detecting a \textit{single} change point and decomposing the transition matrix into its low rank plus sparse components (Assumptions H1 and H2 in the sequel); then, it is \textit{extended} to the case of  \textit{multiple} change points (Assumptions H1' and H2'). \\
(ii) For the case of multiple change points, commonly used procedures, such as binary segmentation (\cite{cho2015multiple}) or fused type penalties (\cite{safikhani2017joint}) are not directly applicable due to the presence of the low rank component. Specifically, the former method would lead to effectively performing singular value decompositions on misspecified models involving mixtures of piece-wise low-rank and sparse models, which may lead to the imposition of very stringent conditions for ensuring detectability of the change points (see discussion on related issues in \cite{bhattacharjee2018change}). Further, it is unclear how to design fused penalties that accommodate low-rank matrices. {On the other hand, dynamic programming based algorithms are applicable. However, their time complexity is $\mathcal{O}(T^2C(T))$, where $C(T)$ indicates the computational cost of estimating the model parameters over the entire observation sequence. This is significantly higher complexity than the previously mentioned methods (which is $\mathcal{O}(TC(T))$, see numerical comparisons and discussion in Remark \ref{remark:8} and Appendix F.7).} 

To overcome these challenges, we develop a novel procedure based on \textit{rolling windows}, wherein a \textit{single candidate} change point is identified in each window and then only those exhibiting screened based on certain properties (see Section \ref{sec:multiplecp}) are retained. This allows to leverage the theoretical results developed for the single change point. The proposed procedure based on rolling windows is \textit{naturally parallelizable}, thus speeding up computations.

{Note that the developed rolling window strategy is applicable to any complex statistical model exhibiting multiple change points. One needs to establish consistency properties for a single change point in a time interval and then appropriately select the length of the rolling window, to ensure that at most a single change point falls within. Hence, this development is of general interest for change point analysis.}  \\
(iii) Note that the procedure of estimating change points in low-rank plus sparse VAR models is computationally expensive, even in the presence of a single change point, since it requires performing numerous singular value decompositions. We consider a surrogate model that comes with significant computational savings and under \textit{certain regularity conditions} exhibits \textit{similar accuracy} to the posited model. Specifically, we posit a lag-1 VAR model, wherein the transition matrices $A_j^\star$ are assumed to be \textit{weakly sparse} (see, e.g., \cite{negahban2012unified}), as an alternative modeling framework. The main reason is that the presence of low rank structure renders the auto-regressive parameters in the original model dense. The weak sparse assumption adequately accommodates dense structures under certain conditions and hence can prove useful in certain settings (carefully discussed in the sequel) for change point detection problems. Further, the theoretical properties of exhaustive-search based anomaly detection for weakly sparse VAR models have not been investigated in the literature, and hence this development is of independent interest. \\
 (iv) To establish non-asymptotic error bounds on the model parameters of stationary sparse models, one needs to verify that the commonly imposed (see, e.g. \cite{loh2012high}) restricted strong convexity and deviation bound conditions hold (see Propositions 4.2 and 4.3 in \cite{basu2015regularized}).
 Verifying these assumptions in the presence of change points in the posited reduced rank VAR model -which technically is equivalent to working with a misspecified model (see also discussion in \cite{roy2017change})- represents a non-trivial challenge. This issue is rigorously and successfully addressed in the sequel, together with the introduction of a new version of the deviation bound condition that allows working with misspecified models (technical details presented in Appendix~A). \\
 (v) Finally, obtaining consistent model parameters for each segment identified after detecting the change points requires some care, given the non-stationary nature of the posited model above. This is successfully addressed for the case of a single and multiple change points in Sections \ref{sec:model} and \ref{sec:multiplecp}, respectively, and for the surrogate model in Section \ref{sec:surrogate}.

The remainder of the paper is organized as follows. In Section~\ref{sec:model}, we formulate the model with a single change point, provide a detection procedure based on exhaustive search, and establish theoretical properties for the change point and model parameter estimates. Section~\ref{sec:multiplecp} discusses the case of multiple change points. It introduces a two-step detection algorithm and establishes consistency of the obtained estimates for the change points and model parameters, leveraging results from Section \ref{sec:model}. To reduce computations for detecting the change point(s) in the reduced rank VAR model, we introduce a weakly sparse surrogate model in Section~\ref{sec:surrogate} and establish that under certain regularity conditions on the structure of the transitions matrices $A_j^\star$ of the reduced rank model, the estimated change points from the surrogate model are consistent ones for data generated by the former. 
Section~\ref{sec:simu} presents a number of numerical experiments to illustrate and assess the performance of the estimates obtained from the single and multiple change points detection procedures. Two real data sets (one on EEG and the other on macroeconomics data) are analyzed using the proposed detection procedures in Section~\ref{sec:application}. Some concluding remarks are drawn in Section~\ref{sec:concluding remarks}. Additional technical conditions, proofs of the main results and additional numerical work are available in the Supplement.

\noindent
\textbf{Notation:} Throughout this paper, we denote with a superscript ``$\star$" the true value of the model parameters. For any $p \times p$ matrix, we use $\|\cdot\|_2$, $\|\cdot\|_F$, and $\|\cdot\|_*$ to represent the spectral, Frobenius, and nuclear norm, respectively. For any matrix $A$, $A^\prime$ denotes its transpose, and $A^\dagger$ denotes the conjugate transpose of $A$, while the $\ell_0$, $\ell_1$, and $\ell_\infty$ norms of the vectorized form of $A$ are denoted by: $\|A\|_0 = \text{Card}(\text{vec}(A))$, $\|A\|_1 = \|\text{vec}(A)\|_1$, and $\|A\|_\infty = \|\text{vec}(A)\|_\infty$, respectively. We use $\Lambda_{\max}(\mathbf{X})$ and $\Lambda_{\min}(\mathbf{X})$ to represent the maximum and minimum eigenvalue of the realization matrix $\mathbf{X}$. 


\section{Single Change Point Model Formulation and Detection Procedure}\label{sec:model}

We start by introducing a piece-wise stationary structured VAR(1) model that has a single change point. Suppose there is a $p$-dimensional time series $\{X_t\}$ observed at $T+1$ points: $t=0,1,\dots, T$. Further, there exists a change point, $0 < \tau^\star < T$, so that the available time series can be modeled according to the following two models in the time intervals $[0, \tau^\star)$ and $[\tau^\star+1, T)$, respectively:
\begin{equation}
    \label{eq:1}
    \begin{aligned}
    X_t &= A_1^\star X_{t-1} + \epsilon_t^1, \quad t=1,2,\dots, \tau^\star, \\
    X_t &= A_2^\star X_{t-1} + \epsilon_t^2, \quad t=\tau^\star+1, \dots, T,
    \end{aligned}
\end{equation}
where $X_t \in \mathbb{R}^{p}$ is a vector of observed time series at time $t$, and $A_1^\star$ and $A_2^\star$ are the $p\times p$ transition matrices for the corresponding models in the two time intervals, and the $p$ dimensional error processes  $\epsilon_t^1$ and $\epsilon_t^2$ are independent and identically drawn from Gaussian distributions with mean zero and covariance matrix $\sigma^2 I$ for some fixed $\sigma$.
It is further assumed that the transition matrices comprise of two time-varying components,
a low-rank and a sparse one:  
\begin{equation}
    \label{eq:2}
    A_1^\star = L_1^\star + S_1^\star \quad \text{and} \quad A_2^\star = L_2^\star + S_2^\star.
\end{equation}
The rank of the low-rank components and the density (number of non-zero elements) of the sparse components are denoted by $\text{rank}(L_1^\star) = r_1^\star$,  $\text{rank}(L_2^\star) = r_2^\star$, $d_1^\star = \|S_1^\star\|_0$ and $d_2^\star = \|S_2^\star\|_0$, respectively, and satisfy $r_1^\star, r_2^\star \ll p$, $d_1^\star, d_2^\star \ll p^2$.

\subsection{Detection Procedure}

Let $\lbrace X_0, X_1, \dots, X_T \rbrace$ be a sequence of observations generated from the VAR model posited in \eqref{eq:1} with the structure of the transition matrices given by \eqref{eq:2}. Then, for any time point $\tau\in\{1,\cdots,T\}$
the corresponding objective functions for estimating the model parameters in the intervals $[1,\tau)$ and $[\tau,T)$ are given by:
\begin{equation*}
    \ell(L_1, S_1; \mathbf{X}^{[1:\tau)}) \overset{\text{def}}{=} \frac{1}{\tau-1}\sum_{t=1}^{\tau-1} \|X_t - (L_1+S_1)X_{t-1}\|_2^2 + \lambda_1\|S_1\|_1 + \mu_1\|L_1\|_*,
\end{equation*}
\begin{equation*}
    \ell(L_2, S_2; \mathbf{X}^{[\tau:T)}) \overset{\text{def}}{=} \frac{1}{T-\tau}\sum_{t=\tau}^{T-1} \|X_t - (L_2+S_2)X_{t-1}\|_2^2 + \lambda_2\|S_2\|_1 + \mu_2\|L_2\|_*,
\end{equation*}
where $\mathbf{X}^{[b:e)}$ denotes the data $\{X_t\}$ from time points $b$ to $e$, and the non-negative tuning parameters $\lambda_1$, $\lambda_2$, $\mu_1$, and $\mu_2$ control the regularization of the sparse and the low-rank components in the corresponding transition matrices.

Next, we introduce the objective function with respect to the change point: for any time point $\tau \in \lbrace 1,2, \dots, T-1 \rbrace$, 
\begin{equation}
    \label{eq:3}
    \ell(\tau; L_1, L_2, S_1, S_2) \overset{\text{def}}{=} \frac{1}{T-1}\left( \sum_{t=1}^{\tau-1}\|X_t - (L_1+S_1)X_{t-1}\|_2^2 + \sum_{t=\tau}^{T-1}\|X_t - (L_2+S_2)X_{t-1}\|_2^2\right).
\end{equation}

The estimator $\widehat{\tau}$ of the change point $\tau^\star$ is given by:
\begin{equation}
    \label{eq:4}
    \widehat{\tau} \overset{\text{def}}{=} \argmin_{\tau \in \mathcal{T}}\ell(\tau; \widehat{L}_{1,\tau}, \widehat{L}_{2,\tau}, \widehat{S}_{1,\tau}, \widehat{S}_{2,\tau}),
\end{equation}
for the search domain $\mathcal{T} \subset \{1,2,\dots, T\}$, where, for each $\tau \in \mathcal{T}$, the estimators $\widehat{L}_{1,\tau}$, $\widehat{L}_{2,\tau}$, $\widehat{S}_{1,\tau}$, $\widehat{S}_{2,\tau}$ are derived from the optimization program (\ref{eq:4}) with tuning parameters $\mu_{1,\tau}$, $\mu_{2,\tau}$, $\lambda_{1,\tau}$, and $\lambda_{2,\tau}$, respectively. Algorithm 1 in Appendix B describes in detail the  key steps in estimating the change point $\tau^\star$ together with the model parameters. 


\subsection{Theoretical Properties}\label{sec:theory}

Next, we address the issue of identifiability of model parameters due to the posited decomposition of the transition matrices into low rank and sparse components. The key idea is to restrict the ``spikiness" of the low rank component, so that it can be distinguished from
the sparse component. \cite{agarwal2012noisy} introduced
the space $\Omega$ defined as
\begin{equation*}
    \Omega \overset{\text{def}}{=} \left\{ L_j^\star \in \mathbb{R}^{p \times p}: \|L_j^\star\|_\infty \leq \frac{\alpha_L}{p} \right\},\quad j=1,2,
\end{equation*}
wherein the universal parameter $\alpha_L$ defines the \textit{radius of nonidentifiability} that controls the degree of separating the sparse component from the low-rank one. Note that a larger $\alpha_L$ allows the low-rank component to absorb most of the signal, thus making it harder to identify the sparse component, and vice versa.

Thus, the estimators of the decomposition of the transition matrices $A_j$ are defined as follows, for any fixed time point $\tau$:
\begin{equation}
    \label{eq:5}
        (\widehat{L}_{1,\tau}, \widehat{S}_{1,\tau}) \overset{\text{def}}{=} \argmin_{\substack{L_1\in \Omega \\ L_1, S_1\in \mathbb{R}^{p\times p}}}\ell(L_1, S_1; \mathbf{X}^{[1:\tau)}),\quad
        (\widehat{L}_{2,\tau}, \widehat{S}_{2,\tau}) \overset{\text{def}}{=} \argmin_{\substack{L_2\in \Omega \\ L_2, S_2\in \mathbb{R}^{p\times p}}}\ell(L_2, S_2; \mathbf{X}^{[\tau:T)}).
\end{equation}
 
Next, we introduce an important quantity for future developments, 
the \textit{information ratio} that measures the relative strength of the maximum signal in the transition matrix $A_j^\star$ generated by the low-rank component vis-a-vis its sparse counterpart, defined as:
\begin{equation*}
    \gamma_j \overset{\text{def}}{=} \frac{\|L_j^\star\|_\infty}{\|S_j^\star\|_\infty},\quad j=1,2.
\end{equation*}
\begin{remark}
\label{remark:1}
Based on the definition of the information ratio, some algebra provides guidance on the identifiability conditions that need to be imposed on the transition matrices $A_j^\star$ and their constituent parts. Specifically, for the low rank component we obtain:
\begin{equation*}
    \begin{aligned}
    \|A_2^\star - A_1^\star\|_2 &= \|(L_2^\star - L_1^\star) + (S_2^\star - S_1^\star) \|_2 \geq \|L_2^\star - L_1^\star\|_2 - \|S_2^\star - S_1^\star\|_2 \\
    &\geq \|L_2^\star - L_1^\star\|_2 - p\left(\|S_2^\star\|_\infty + \|S_1^\star\|_\infty\right) \\
    &\geq \|L_2^\star - L_1^\star\|_2 - \alpha_L\left(\frac{1}{\gamma_2} + \frac{1}{\gamma_1}\right) \geq v_L - \frac{\alpha_L(\gamma_1+\gamma_2)}{\gamma_1\gamma_2}.
    \end{aligned}
\end{equation*}
Analogous derivations for the sparse component yield: $\|A_2^\star - A_1^\star\|_2 \geq \|S_2^\star - S_1^\star\|_2 - {2\alpha_L}/{p} \geq v_S - {2\alpha_L}/{p}$, where $v_L\equiv \|L_2^\star - L_1^\star\|_2 \geq 0 $, $v_S
\equiv \|S_2^\star - S_1^\star\|_2 \geq 0$ are  norm differences for the low-rank and the sparse components, respectively. 
\end{remark}
Based on Remark \ref{remark:1}, it can be seen that: (1) when $\gamma_1 \leq 1$ or $\gamma_2 \leq 1$, we have that ${(\gamma_1+\gamma_2)}/{\gamma_1\gamma_2} \geq 2 > {2}/{p}$ (since $p \gg 2$ in a high dimensional setting). The latter fact implies that in order for changes in the transition matrices $A_j^\star$ to be identifiable -and consequently $\tau^\star$- the difference in the $\ell_2$ norm of the low-rank components must significantly exceed that of the sparse components;
(2) when both $\gamma_1 > 1$ and $\gamma_2 > 1$, then the quantity ${(\gamma_1+\gamma_2)}/{\gamma_1\gamma_2}$ is strictly decreasing with respect to $\gamma_1$ and $\gamma_2$. Note that in case $1<\gamma_1 \leq p$ and $1<\gamma_2 \leq p$, ${(\gamma_1+\gamma_2)}/{\gamma_1\gamma_2} \geq {2}/{p}$. Combining these two cases leads to the conclusion that when $\gamma_1 \leq p$ and $\gamma_2 \leq p$, the difference in the $\ell_2$ norm $v_L$ between the low-rank components must be larger than $v_S$, the norm difference between the sparse components to guarantee that the change between the transition matrices is detectable. 

The following remark discusses an extreme case, wherein the signal in the low-rank components is dominant, but their $\ell_2$ norm difference is negligible.
\begin{remark}
\label{remark:2}
Suppose the low-rank components are dominant (i.e. $\gamma_1,\gamma_2 \geq 1$), but their $\ell_2$ norm difference change is small; i.e. $\|L_2^\star - L_1^\star\|_2 \leq \epsilon$, with $\epsilon>0$ being a small enough constant). Then, we have: 
\begin{equation*}
    \begin{aligned}
    \|A_2^\star - A_1^\star\|_2 & \geq \|S_2^\star - S_1^\star\|_\infty - \epsilon \geq \|S_2^\star\|_\infty - \|S_1^\star\|_\infty - \epsilon \geq \frac{1}{\gamma_2}\|L_2^\star\|_\infty - \frac{\alpha_L}{p\gamma_1} - \epsilon \\
    &= \frac{1}{\gamma_2}\left(\|L_2^\star\|_\infty - \frac{\alpha_L}{p}\frac{\gamma_2}{\gamma_1}\right) - \epsilon.
    \end{aligned}
\end{equation*}
Note that since the low rank components are constrained to be in the $\Omega$ space -$\|L_2^\star\|_\infty \leq \alpha_L / p$- it implies that the transition matrices are identifiable, only if $\gamma_2 < \gamma_1$ and $\|S_2^\star\|_\infty > \|S_1^\star\|_\infty$. The roles of $L_2^\star$ and $L_1^\star$ can be swapped to obtain that only if $\gamma_2 \neq \gamma_1$ and $\|S_2^\star\|_\infty \neq \|S_1^\star\|_\infty$, is the change in the full transition matrices $A_j^\star$ identifiable, which is intuitive.
\end{remark}
The derivations in the two Remarks provide insights into the necessary assumptions needed to establish the theoretical results, presented next.
\begin{itemize}[leftmargin=0pt]
    \itemsep.1em
    \item[(H1)]
    {
    There exists a positive constant $C_0 > 0$ such that
    \begin{equation*}
        \Delta_T (v_S^2+v_L^2) \geq C_0\left(d^\star_{\max}\log(p\vee T) + r^\star_{\max}(p\vee \log T)\right),
    \end{equation*}
    where $\Delta_T$ is the spacing between the change point $\tau^\star$ and the boundary, and $v_S$, $v_L$ are the jump sizes, defined as 
    \begin{equation*}
        \Delta_T = \min\{\tau^\star-1, T - \tau^\star\},\quad v_S = \|S_2^\star - S_1^\star\|_2,\quad v_L = \|L_2^\star - L_1^\star\|_2.
    \end{equation*}
    Further, at least one of $v_S, v_L$ is strictly positive.
    }
    {
    \item[(H2)] (Identifiability conditions)
    Consider low rank matrices $L_1^\star$, $L_2^\star$, and their corresponding Singular Value Decompositions: $L_j^\star = U_j^\star D_j^\star V_j^{\star^\prime}$, where $D_j^\star = \text{diag}(\sigma_1^j, \dots, \sigma_{r_j}^j, 0, \dots, 0)$, for $j=1,2$ and $U_j^\star, V_j^\star$ are orthonormal. Then,
    \begin{itemize}[leftmargin=*]
        \item[(1)] there exists a universal positive constant $M_S > 0$, such that for the sparse matrices $S_j^\star$, we have: $\|S_j^\star\|_\infty \leq M_S < +\infty$, $j=1,2$;
        \item[(2)] there exists a large enough constant $c>0$, such that the diagonal matrices $D_j^\star$ satisfy: $\max_{j=1,2}\|D_j^\star\|_\infty \leq c < +\infty$; further the orthonormal matrices $U_j^\star$ and $V_j^\star$ satisfy: $\max_{j=1,2} \left\{ \|U_j^\star\|_\infty, \|V_j^\star\|_\infty \right\} = \mathcal{O}\left(\sqrt{\frac{\alpha_L}{r_{\max}p}}\right)$, where $r_{\max} = \max\{r_1^\star, r_2^\star\}$. In addition, we assume that $\alpha_L = \mathcal{O}\left(p\sqrt{\frac{\log(pT)}{T}}\right)$.
        \item[(3)] the maximal sparsity level $d^\star_{\max} = \max\{ d_1^\star, d_2^\star\}$ satisfies:
        $
            d^\star_{\max} \leq \frac{1}{C_{\max}}\sqrt{\frac{T}{\log(pT)}}, 
        $ for a large enough positive constant $C_{\max}>0$.
    \end{itemize}}
    \item[(H3)] (Restrictions on the search domain $\mathcal{T}$) The change point $\tau^\star$ belongs to the search domain by $\mathcal{T} \subset \{1,2,\dots, T-1\}$ and denote the search domain $\mathcal{T} \overset{\text{def}}{=} [a, b]$. Assume that,
    $
        a = \left\lfloor (d_{\max}^\star + \sqrt{r_{\max}^\star})^{1+\eta}\right\rfloor \ \text{and}\ b = \left\lfloor T - (d_{\max}^\star+\sqrt{r_{\max}^\star})^{1+\eta}\right\rfloor,
    $
    and denote $|\mathcal{T}|$ as the length of the search domain, then:
    \begin{equation*}
        \frac{|\mathcal{T}|}{d_{\max}^\star\log(p\vee T) + r_{\max}^\star(p\vee \log T)} \to +\infty,
    \end{equation*}
    where $\eta>0$ is an arbitrarily small positive constant, $d_{\max}^\star = \max\{d_1^\star, d_2^\star\}$, and $r_{\max}^\star = \max\{r_1^\star, r_2^\star\}$.
\end{itemize}

\begin{remark}
\label{remark:3}
Assumption H1 specifies the relationship between the minimum spacing between the change point and the boundaries of the observation time period and the jump sizes for the low rank and sparse components, analogously to the signal-to-noise assumption in \cite{wang2019localizing}. 
Assumptions H2-(1) and H2-(2) define the restricted space for the low rank components $L_j^\star$: $\Omega \overset{\text{def}}{=} \left\{L: \|L_j^\star\|_\infty \leq \frac{\alpha_L}{p} \right\}$; see analogous definitions and discussion in \cite{agarwal2012noisy, basu2019low, bai2020multiple} for identifying low rank and sparse matrices. Assumptions H2-(1-3) are sufficient for satisfying the identifiability condition in \cite{hsu2011robust}, the latter implying that the decomposition $A_j^\star=L_j^\star+S_j^\star$ is \textit{unique}. {This condition is motivated by the so-called ``rank-sparsity" incoherence concept \citep{chandrasekaran2011rank}, with further refinements along the lines of results in \cite{hsu2011robust}. This assumption ensures identifiability of model parameters by putting certain conditions on the singular values, and  left/right orthonormal singular vectors of the low rank component. Specifically, the new assumption controls the maximum number of non-zeros in any row or column of the sparse component, while ensuring that the low rank part has singular vectors far from the coordinate bases. Note that the new conditions do not put any additional constrains on the dimensionality $p$ and further ensure the \textit{uniqueness} of the low rank plus sparse decomposition of the segment specific transition matrices.}
\end{remark}

Note that \cite{agarwal2012noisy} allow $\alpha_L$ to be any constant, whereas we {require $\alpha_L/p$ to be vanishing to obtain consistent estimates, due to the presence of misspecification, since the location of the change points is unknown.} Assumption H3 reflects the restrictions on the boundary of the search domain $\mathcal{T}$ and connects the estimation rate to the length of the search domain (see analogous condition in \cite{roy2017change}).

For any fixed time point $\tau$ in the search domain $\mathcal{T}$, let $(\lambda_{1,\tau}, \mu_{1,\tau})$ be the tuning parameters on $[1, \tau)$, and $(\lambda_{2,\tau}, \mu_{2,\tau})$  the tuning parameters on $[\tau,T)$, respectively. Then, the tuning parameters of the regularization terms are selected as follows:
\begin{equation}
    \label{eq:6}
    \begin{aligned}
        &(\lambda_{1,\tau}, \mu_{1,\tau}) = \left( 4c_0\sqrt{\frac{\log p + \log(\tau-1)}{\tau-1}},\ 4c_0^\prime\sqrt{\frac{p + \log(\tau-1)}{\tau-1}} \right), \\
        &(\lambda_{2,\tau}, \mu_{2,\tau}) = \left( 4c_0\sqrt{\frac{\log p + \log(T-\tau)}{T-\tau}},\ 4c_0^\prime\sqrt{\frac{p + \log(T-\tau)}{T-\tau}} \right),
    \end{aligned}
\end{equation}
for constants $c_0, c_0^\prime>0$.
\begin{theorem}
\label{thm:1}
    Suppose Assumptions H1-H3 hold, and select the tuning parameters according to \eqref{eq:6}. Then, as $T \to +\infty$, there exists a large enough constant $K_0 > 0$ such that
    \begin{equation*}
        \mathbb{P}\left( |\widehat{\tau} - \tau^\star| \leq K_0\frac{d_{\max}^\star\log(p \vee T) + r_{\max}^\star(p\vee \log T)}{v_S^2 + v_L^2} \right) \to 1.
    \end{equation*}
\end{theorem}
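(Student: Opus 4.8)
The plan is to prove the bound via a basic-inequality argument driven by the optimality of $\widehat\tau$. Writing $Q(\tau)$ for the minimized change-point objective $\ell(\tau;\widehat L_{1,\tau},\widehat L_{2,\tau},\widehat S_{1,\tau},\widehat S_{2,\tau})$ in \eqref{eq:3}, optimality gives $Q(\widehat\tau)\le Q(\tau^\star)$. Without loss of generality I would treat the case $\widehat\tau\le\tau^\star$ (the case $\widehat\tau>\tau^\star$ being symmetric), set the localization error $h=\tau^\star-\widehat\tau\ge 0$, and aim to show $h(v_S^2+v_L^2)\lesssim d^\star_{\max}\log(p\vee T)+r^\star_{\max}(p\vee\log T)$ with probability tending to one. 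The strategy is to lower bound the population excess loss incurred at $\widehat\tau$ by the ``lost signal'' of order $h(v_S^2+v_L^2)/(T-1)$, and to upper bound the stochastic and parameter-estimation fluctuations that separate $Q$ from its population counterpart by the right-hand rate; the basic inequality then forces $h$ to be small.

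First I would assemble the parameter-estimation bounds on correctly specified segments. On any stationary segment of length $n$, combining restricted strong convexity with deviation bounds for Gaussian VAR designs (\cite{basu2015regularized}) and the low-rank-plus-sparse estimator of \cite{agarwal2012noisy} yields a squared-Frobenius error of order $(d^\star_{\max}\log p + r^\star_{\max}p)/n$, where the $\log p$ term comes from the $\ell_\infty$ deviation controlling the sparse part and the $p$ term from the spectral-norm deviation controlling the nuclear-norm-regularized low-rank part. The tuning parameters in \eqref{eq:6} are exactly calibrated to dominate these two deviations on each of the intervals $[1,\tau)$ and $[\tau,T)$, and Assumptions H2-(1)--(3) guarantee identifiability and uniqueness of the $L^\star_j+S^\star_j$ decomposition so that these per-segment bounds are meaningful.

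Next I would decompose $Q(\tau^\star)-Q(\widehat\tau)$. At $\tau^\star$ both segments are correctly specified, so $Q(\tau^\star)$ equals the noise baseline plus estimation error of order $(d^\star_{\max}\log p + r^\star_{\max}p)/T$. At $\widehat\tau$ the first segment $[1,\widehat\tau)$ is still correctly specified, but the second segment $[\widehat\tau,T)$ mixes the two regimes: the data on $[\widehat\tau,\tau^\star)$ obey $A_1^\star$ while those on $[\tau^\star,T)$ obey $A_2^\star$. Substituting $X_t=A_1^\star X_{t-1}+\epsilon_t$ on the overlap window and expanding the quadratic loss produces the signal term $\tfrac1{T-1}\sum_{t=\widehat\tau}^{\tau^\star-1}\|(\widehat A_{2,\widehat\tau}-A_1^\star)X_{t-1}\|_2^2$, whose expectation is bounded below by a constant multiple of $h\,\|A_2^\star-A_1^\star\|_2^2/(T-1)$; here Remarks \ref{remark:1}--\ref{remark:2} and Assumption H1 (with at least one of $v_S,v_L$ positive) are what convert $\|A_2^\star-A_1^\star\|_2^2$ into the effective jump size $v_S^2+v_L^2$ and rule out the degenerate nonidentifiable regime.

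The hard part, and the step I would spend the most care on, is controlling the misspecified segment, since the data on $[\widehat\tau,T)$ are nonstationary and neither the standard restricted strong convexity nor the usual deviation bound applies to $\widehat A_{2,\widehat\tau}$. Here I would invoke the misspecified-design deviation bound developed in Appendix A: the cross terms $\tfrac1{T-1}\sum_t\langle\epsilon_t,(\widehat A_{2,\widehat\tau}-A_1^\star)X_{t-1}\rangle$ must be split along the sparse and low-rank directions and controlled by the $\ell_\infty$- and spectral-norm deviations of the noise-design inner products, each of which has to hold \emph{uniformly} over all candidates $\tau\in\mathcal{T}$. This uniformity forces a union bound over $|\mathcal{T}|$ time points, which is precisely why Assumption H3 ties $|\mathcal{T}|$ to the target rate and why $\log p$ inflates to $\log(p\vee T)$ in the sparse term while a $\log T$ enters the low-rank term $p\vee\log T$. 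Combining the signal lower bound with these fluctuation upper bounds in the basic inequality yields $h(v_S^2+v_L^2)\lesssim d^\star_{\max}\log(p\vee T)+r^\star_{\max}(p\vee\log T)$; choosing $K_0$ large enough and taking a union over the symmetric case $\widehat\tau>\tau^\star$ completes the argument as $T\to\infty$.
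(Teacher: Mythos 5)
Your proposal is correct and follows essentially the same route as the paper's proof: the basic inequality $\mathcal{L}(\widehat\tau)\le\mathcal{L}(\tau^\star)$, a split of the mixed interval into a correctly specified part and a misspecified overlap window whose quadratic loss supplies the signal term $|\widehat\tau-\tau^\star|(v_S^2+v_L^2)$, and uniform-over-$\mathcal{T}$ deviation bounds (Lemma~\ref{lemma:3}) whose union bound produces the $\log(p\vee T)$ and $p\vee\log T$ inflation. The only cosmetic difference is that the paper extracts the jump directly from the sparse and low-rank error components (using the spikiness constraint to prevent cancellation, as in \eqref{eq:38}) rather than passing through $\|A_2^\star-A_1^\star\|_2$, but you invoke Remarks~\ref{remark:1}--\ref{remark:2} and H1 for exactly this purpose.
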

The proof of Theorem \ref{thm:1} is provided in Appendix E. Note that the Theorem provides an upper bound for the change point estimation error based on the total sparsity level and the total rank of the model. 

Next, we establish estimation consistency for the model parameters. First, given the estimated change point $\widehat{\tau}$, we remove it together with its $R$-radius neighborhoods $\mathcal{U}(\widehat{\tau}, R)$, to ensure that the remaining time points form two stationary segments. According to Theorem \ref{thm:1}, the radius $R$ can be of the order $d_{\max}^\star\log(p\vee T) + r_{\max}^\star(p\vee \log T)$.

Let $N_j$ be the length of the $j$-th segments after removing the $R$-radius neighborhoods; then, we select another pair of tuning parameters: 
\begin{equation}
    \label{eq:7}
    (\lambda_j, \mu_j) = \left( 4c_1\sqrt{\frac{\log p}{N_j}} + \frac{4c_1\alpha_L}{p},\  4c_1^\prime\sqrt{\frac{p}{N_j}} \right),\quad j=1,2,
\end{equation}
for constants $c_1$, $c_1^\prime$ that can selected using cross-validation. The procedure for selecting them, as well as $c_0,c_0^\prime$ in \eqref{eq:6}, is provided in Section~\ref{sec:simu}.

Note that the tuning parameters provided in \eqref{eq:7} are different from the tuning parameters in \eqref{eq:6}; the $\log T$ terms are eliminated, since on the selected stationary segments the optimal tuning parameters are always feasible. Based on analogous results in \cite{agarwal2012noisy} and \cite{basu2019low} for models whose parameters admit a low rank and sparse decomposition, the optimal tuning parameters in \eqref{eq:7} lead to the optimal estimation rate given in the next Theorem.
\begin{theorem}
\label{thm:2}
Suppose Assumptions H1-H3 hold, and select the tuning parameters according to \eqref{eq:7}. Then, as $T\to +\infty$, there exist universal positive constants $C_1, C_2>0$, so that the optimal solution of \eqref{eq:5} satisfies
\begin{equation*}
    \|\widehat{L}_j - L_j^\star\|_F^2 + \|\widehat{S}_j - S_j^\star\|_F^2 \leq C_1\left( \frac{d_j^\star\log p + r_j^\star p}{N_j} \right) + C_2\frac{d_j^\star\alpha_L^2}{p^2},\quad j=1,2.
\end{equation*}
\end{theorem}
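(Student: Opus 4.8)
The plan is to treat each estimated segment as a sample from a single stationary VAR and run an M-estimation analysis for the low-rank-plus-sparse decomposition in the spirit of \cite{agarwal2012noisy} and \cite{basu2019low}: combine a restricted strong convexity (RSC) lower bound on the quadratic loss with a deviation bound that the tuning parameters of \eqref{eq:7} are engineered to dominate. Fix a segment index $j$, write $\Delta_L = \widehat{L}_j - L_j^\star$, $\Delta_S = \widehat{S}_j - S_j^\star$, $\Delta = \Delta_L + \Delta_S$, and set $G_j = \tfrac{2}{N_j}\sum_t \epsilon_t X_{t-1}^\prime$, the score at the truth. First I would record the basic inequality: since $(\widehat{L}_j,\widehat{S}_j)$ minimizes $\ell(\cdot,\cdot;\mathbf{X})$ over $\Omega\times\mathbb{R}^{p\times p}$ while $(L_j^\star,S_j^\star)$ is feasible (Assumption H2-(2) places $L_j^\star$ in $\Omega$), expanding the quadratic term around $X_t = A_j^\star X_{t-1}+\epsilon_t$ and cancelling yields
\begin{equation*}
\frac{1}{N_j}\sum_t \|\Delta X_{t-1}\|_2^2 \le |\langle G_j, \Delta_L\rangle| + |\langle G_j, \Delta_S\rangle| + \lambda_j\big(\|S_j^\star\|_1 - \|\widehat{S}_j\|_1\big) + \mu_j\big(\|L_j^\star\|_* - \|\widehat{L}_j\|_*\big).
\end{equation*}

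The two inner products are controlled by their dual norms, $|\langle G_j, \Delta_L\rangle|\le \|G_j\|_2\|\Delta_L\|_*$ and $|\langle G_j, \Delta_S\rangle|\le \|G_j\|_\infty\|\Delta_S\|_1$, so the crux is to establish, on the two post-removal segments, the deviation bounds $\|G_j\|_2 = \mathcal{O}_P(\sqrt{p/N_j})$ and $\|G_j\|_\infty = \mathcal{O}_P(\sqrt{\log p/N_j})$ together with an RSC inequality of the form $\tfrac{1}{N_j}\sum_t\|\Delta X_{t-1}\|_2^2 \ge \kappa\|\Delta\|_F^2 - \tau_{N_j}\|\Delta\|_1^2$ with $\tau_{N_j}\asymp \log p/N_j$. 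By Theorem~\ref{thm:1}, $|\widehat{\tau}-\tau^\star|\le R$, so after excising the $R$-radius neighborhood $\mathcal{U}(\widehat{\tau},R)$ each retained segment lies entirely inside one stationary regime; its observations form a contiguous stretch of a single stationary Gaussian VAR, and the spectral-density and Hanson--Wright arguments of \cite{basu2015regularized} (Propositions 4.2--4.3), as adapted in Appendix~A, apply with $N_j$ in place of the full horizon. These rates are matched by $\mu_j\asymp\sqrt{p/N_j}$ and the leading part of $\lambda_j\asymp\sqrt{\log p/N_j}$, so $\mu_j\ge 2\|G_j\|_2$ and $\lambda_j\ge 2\|G_j\|_\infty$ hold with probability tending to one.

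Given these choices, decomposability of the nuclear norm over the rank-$r_j^\star$ row/column spaces of $L_j^\star$ and of the $\ell_1$ norm over the support of $S_j^\star$ converts the basic inequality into a cone constraint, forcing the off-model and off-support parts of $\Delta_L,\Delta_S$ to be dominated by their on-model/on-support counterparts; this gives $\|\Delta_L\|_*\lesssim \sqrt{r_j^\star}\|\Delta_L\|_F$ and $\|\Delta_S\|_1\lesssim \sqrt{d_j^\star}\|\Delta_S\|_F$. Substituting into the RSC bound and absorbing the $\tau_{N_j}\|\Delta\|_1^2$ tolerance (lower order under Assumption H2-(3)) reduces the problem to a scalar quadratic inequality
\begin{equation*}
\kappa\big(\|\Delta_L\|_F^2 + \|\Delta_S\|_F^2\big) \lesssim \mu_j\sqrt{r_j^\star}\,\|\Delta_L\|_F + \lambda_j\sqrt{d_j^\star}\,\|\Delta_S\|_F,
\end{equation*}
whose solution by the arithmetic--geometric inequality is $\|\Delta_L\|_F^2 + \|\Delta_S\|_F^2 \lesssim \mu_j^2 r_j^\star + \lambda_j^2 d_j^\star$. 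Inserting $\mu_j^2\asymp p/N_j$ and $\lambda_j^2\asymp \log p/N_j + \alpha_L^2/p^2$ reproduces precisely the claimed bound $C_1(d_j^\star\log p + r_j^\star p)/N_j + C_2 d_j^\star\alpha_L^2/p^2$.

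The step I expect to be most delicate — and the reason the extra summand $4c_1\alpha_L/p$ appears in $\lambda_j$ in \eqref{eq:7} — is the interaction between the spikiness constraint $L_j\in\Omega$ and the sparse deviation term. Because $L_j^\star$ is restricted only in $\ell_\infty$, the effective noise felt by the sparse component carries an additional entrywise bias of order $\alpha_L/p$, so $\lambda_j$ must dominate $\|G_j\|_\infty + \mathcal{O}(\alpha_L/p)$ rather than $\|G_j\|_\infty$ alone; it is this inflation, squared and scaled by the support size $d_j^\star$, that yields the irreducible term $d_j^\star\alpha_L^2/p^2$. Showing that this correction suffices, while simultaneously controlling RSC on segments of \emph{random} length $N_j$ determined by the estimated rather than the true change point, is the principal technical hurdle; it is resolved by observing that $N_j$ differs from the true segment length by at most $R = o(N_j)$, so the stationary concentration bounds transfer with unchanged rates.
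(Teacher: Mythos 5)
Your proposal is correct and follows essentially the same route as the paper: the paper's proof simply invokes the decomposable-regularizer M-estimation framework of Proposition 4.1 in \cite{basu2015regularized} (and \cite{agarwal2012noisy}), noting that the only items needing verification are the restricted strong convexity and deviation bound conditions on the two post-removal segments — precisely the two ingredients you identify as the crux, with the deviation bounds supplied by Lemma 1 of the Appendix. Your additional detail on the cone constraint, the scalar quadratic inequality, and the origin of the $d_j^\star\alpha_L^2/p^2$ term from the $\alpha_L/p$ inflation of $\lambda_j$ is an accurate unpacking of what the cited proposition delivers.
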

The proof of Theorem \ref{thm:2} is provided in Appendix E. 
\begin{remark}
\label{remark:6}
    Notice that Theorem \ref{thm:2} provides the joint estimation rate for the low-rank and the sparse component. It comprises of two terms, wherein the first one involves the dimensions of the model parameters and converges to zero as the sample size increases, whereas the second term represents the error due to possible unidentifiability of the model parameters. However, in conjunction with Assumption H2 that restricts the space for the low rank component, the second term also converges to zero as the sample size (and hence the dimensionality of the model) increases.
\end{remark}

\section{The Case of Multiple Change Points}\label{sec:multiplecp}

Section \ref{sec:theory} introduced the technical framework and established the
consistency rate for detecting a single change point. Next, these technical developments are leveraged to address the more relevant in practice problem of detecting multiple change points consistently.

We start by formulating the piece-wise VAR model with multiple change points.
Consider the $p$-dimensional VAR(1) process $\{X_t\}$ with $m_0$ change points $1=\tau^\star_0<\tau^\star_1<\cdots<\tau^\star_{m_0}<\tau^\star_{m_0+1}=T$; then, the model under consideration is written as:
\begin{equation}
\label{eq:8}
    X_t = \sum_{j=1}^{m_0+1}\left(A_j^\star X_{t-1} + \epsilon_t^j \right)\mathbf{I}(\tau^\star_{j-1} \leq t < \tau^\star_j),\quad t=1,2,\dots, T,
\end{equation}
where $L_j^\star$ and $S_j^\star$ represent the decomposition of the $j$-th transition matrix into its low-rank and sparse components, and $\mathbf{I}(\tau^\star_{j-1}\leq t < \tau^\star_j)$ denotes the indicator function for the $j$-th stationary segment. Analogously to the single change point case, we define the sparsity level $d_j^\star = \|S_j^\star\|_0$ and rank $r_j^\star = \text{rank}(L_j^\star)$ for the components in each segment, wherein $d_j^\star \ll p^2$ and $r_j^\star \ll p$, (i.e., $d_j^\star = o(p^2)$ and $r_j^\star = o(p)$). Finally, $\epsilon_t^j$'s are independent and independently distributed zero mean Gaussian noise processes with covariance matrices $\sigma^2 I$, $j=1, \ldots, m_0+1$. 

For detecting the change points and estimating the model parameters consistently,
the following minor modifications to Assumptions H1-H3 are required:
\begin{itemize}[leftmargin=0pt]
    \itemsep.1em
    \item[(H1')]
    {
    There exists a positive constant $C_0>0$ such that
    \begin{equation*}
        \Delta_T\min_{1\leq j\leq m_0}\{ v_{j,S}^2+v_{j,L}^2 \} \geq C_0(d_{\max}^\star\log(p\vee T) + r^\star_{\max}(p\vee \log T)),
    \end{equation*}
    where $\Delta_T$ is the minimum spacing defined as $\Delta_T \overset{\text{def}}{=} \min_{1\leq j \leq m_0}|\tau^\star_{j+1} - \tau^\star_{j}|$, and the minimum norm differences (jump sizes) between two consecutive segments are defined as: $v_{j, S} \overset{\text{def}}{=} \|S_{j+1}^\star - S_{j}^\star\|_2$, and $v_{j, L} \overset{\text{def}}{=} \|L_{j+1}^\star - L_{j}^\star\|_2$.
    }
    \item[(H2')] 
    {
    Consider low rank matrices $L_j^\star$, and their corresponding Singular Value Decompositions: $L_j^\star = U_j^\star D_j^\star V_j^{\star^\prime}$, where $D_j^\star = \text{diag}(\sigma_1^j, \dots, \sigma_{r_j}^j, 0, \dots, 0)$, for $j=1,2,\dots,m_0+1$. Then,
    \begin{itemize}
        \item[(1)] there exists a universal positive constant $M_S > 0$, such that for the sparse matrices $S_j^\star$, we have: $\|S_j^\star\|_\infty \leq M_S < +\infty$, $j=1,\dots,m_0+1$;
        \item[(2)] there exists a large enough constant $c>0$, such that the diagonal matrices $D_j^\star$ satisfy: $\max_{j=1,2}\|D_j^\star\|_\infty \leq c < +\infty$, and the orthonormal matrices $U_j^\star$ and $V_j^\star$ such that: $\max_{1\leq j \leq m_0+1} \left\{ \|U_j^\star\|_\infty, \|V_j^\star\|_\infty \right\} = \mathcal{O}\left(\sqrt{\frac{\alpha_L}{r_{\max}p}}\right)$, where $r_{\max} = \max_{1\leq j\leq m_0+1} r_j^\star$. In addition, we assume that $\alpha_L = \mathcal{O}\left(p\sqrt{\frac{\log(pT)}{T}}\right)$.
        \item[(3)] the maximal sparsity level $d^\star_{\max} = \max_{1\leq j\leq m_0+1} d_j^\star$ satisfies:
        $
            d^\star_{\max} \leq \frac{1}{C_{\max}}\sqrt{\frac{T}{\log(pT)}},
        $
        for a large enough positive constant $C_{\max}>0$.
    \end{itemize}
    }
    \item[(H3')] There exists a vanishing positive sequence $\{\xi_T\}$ such that, as $T \to +\infty$,
    \begin{equation*}
        \begin{aligned}
            &\frac{\Delta_T}{T\xi_T(d_{\max}^{\star^3}+r_{\max}^{\star^2})} \to +\infty,\quad  d_{\max}^{\star^2}\sqrt{\frac{\log p}{T\xi_T}} \to 0,\quad r_{\max}^{\star^{\frac{3}{2}}}\sqrt{\frac{p}{T\xi_T}} \to 0, \\
            &\frac{\Delta_T(d_{\max}^\star\log p + r_{\max}^\star p)}{(T\xi_T)^2(d_{\max}^{\star^3} + r_{\max}^{\star^2})} \to C \geq 1,
        \end{aligned}
    \end{equation*}
    for a positive constant $C>0$.
\end{itemize}
Assumptions H1' and H2' are direct extensions of Assumptions H1 and H2 to the multiple change points setting. Assumption H3' provides a minimum distance requirement on the consecutive change points and connects the estimation rate and the minimum spacing between change points.  

Our detection algorithm will leverage results from the single change point case, and thus, we introduce additional assumptions next. As mentioned in the introduction, the use of fused type penalties is not applicable to the low-rank component and hence an entire different detection procedure is required.

\subsection{A Two-step Algorithm for Detecting Multiple Change Points and its Asymptotic Properties}\label{sec:twostep}

\begin{itemize}[leftmargin=0pt]
\itemsep.1em
\item {\bf Step 1:} It is based on Algorithm 1 provided in Appendix B that detects a single change point, additionally equipped with a \textit{rolling window} mechanism to select \textit{candidate} change points. We start by selecting an interval $[b_1,e_1) \subset \{1,2,\dots, T\}, b_1=1$, of length $h$ and employ on it the exhaustive search Algorithm 1 to obtain a candidate change point $\widetilde{\tau}_1$. Next, we shift the interval to the right by $l$ time points and obtain a new interval $[b_2, e_2)$, wherein $b_2=b_1+l$ and $e_2=e_1+l$. The application of Algorithm 1 to $[b_2,e_2)$ yields another candidate change point $\widetilde{\tau}_2$.
This procedure continues until the last interval that can be formed, namely $[b_{\widetilde{m}},e_{\widetilde{m}})$, where $e_{\widetilde{m}}=T$ and $\widetilde{m}$ denotes the number of windows of size $h$ that can be formed. The following Figure \ref{fig:rolling-window} depicts this rolling-window mechanism. 
The blue lines represent the boundaries of each window, awhile the green dashed lines represent the candidate change point in each window. Note that the basic assumption for Algorithm 1 is that there exists a single change point in the given time series. However, it can easily be seen in Figure \ref{fig:rolling-window} that \textbf{not} every window includes a single change point.
\begin{figure}[ht!]
    \centering
    \begin{tikzpicture}[scale=.85]
    \centering
        \draw[->](-8.2,0)--(8,0) node[pos=1,below]{$T$};
        \foreach \x/ \xtext in {-8/$b_1$, -7/$b_2$, -5/$e_1$, -4/$e_2$, 0/$b_j$, 3/$e_j$, 4/$b_{\widetilde{m}}$, 7/$e_{\widetilde{m}}$}{
            \draw (\x cm,-2pt) -- (\x cm,2pt) node[label={[label distance =0.1em]below:{\xtext}}]{};
            \draw[blue,thick] (\x cm,-0.2) -- (\x cm, 4);
        }
        \foreach \truept in {-4.5, 1, 5}{
            \draw(\truept,0)circle[radius=1.5pt];
        }
        \node[at={(-4.5,0)},label={[label distance=0.1em]above:{$\tau^\star_1$}}]{};
        \node[at={(1,0)},label={[label distance=0.1em]above:{$\tau^\star_2$}}]{};
        \node[at={(5,0)},label={[label distance=0.1em]above:{$\tau^\star_3$}}]{};
        
        \fill[red](-4.5,0)circle[radius=1.5pt];
        \fill[red](1,0)circle[radius=1.5pt];
        \fill[red](5,0)circle[radius=1.5pt];
        
        \foreach \estpt in {-6.44, -4.3, 0.8, 4.8}{
            \draw(\estpt,0)circle[radius=1.5pt];
            \draw[dashed,green,thick] (\estpt, -0.2) -- (\estpt, 4);
        }
        \node[at={(-6.44,0)},label={[label distance=0.1em]below:{$\widetilde{\tau}_1$}}]{};
        \node[at={(-4.3,0)},label={[label distance=0.1em]below:{$\widetilde{\tau}_2$}}]{};
        \node[at={(0.8,0)},label={[label distance=0.1em]below:{$\widetilde{\tau}_j$}}]{};
        \node[at={(4.8,0)},label={[label distance=0.1em]below:{$\widetilde{\tau}_{\widetilde{m}}$}}]{};
        
        \fill[green](-6.44,0)circle[radius=1.5pt];
        \fill[green](-4.3,0)circle[radius=1.5pt];
        \fill[green](0.8,0)circle[radius=1.5pt];
        \fill[green](4.8,0)circle[radius=1.5pt];
        
        \draw[ultra thick,->](-3.5,3) -- node[above]{rolling windows} (-0.5,3);
        
        \draw[<->,red,thick](-8,1) -- node[above]{$h$}(-5,1);
        \draw[->,red,thick](-8,3) -- node[above]{$l$}(-7,3);
        
    \end{tikzpicture}
    
    \caption{Depiction of the rolling windows strategy. There are three true change points: $\tau_1^\star$, $\tau_2^\star$, and $\tau_3^\star$ (red dots); the boundaries of the rolling-window are represented in blue lines; the estimated change points in each window are plotted in green dashed lines, where the subscript indicates the index of the window used to obtain it.}
    
    \label{fig:rolling-window}
\end{figure}
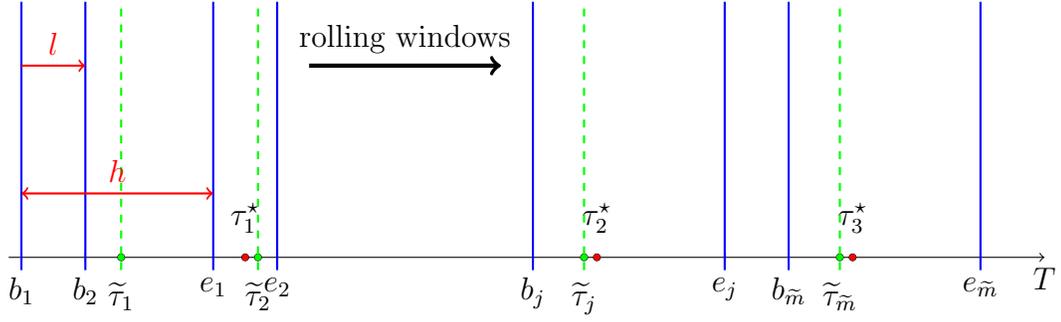

To showcase the last point, we compare the behavior of Algorithm 1 on an interval with and without a change point based on data generated from a low-rank plus sparse VAR process $\{X_t\}$ with $p=20$. We select two windows of length $h=200$, one containing a change point at $t=100$ and another not containing a change point. Plots of the objective function \eqref{eq:3} used in Algorithm 1 for these two windows are depicted in the left and right panels of Figure \ref{fig:interval_cp}, respectively. 
\begin{figure}[htb]
    \centering
    \resizebox{5in}{1.67in}{%
    \begin{subfigure}
        \centering
        \includegraphics[scale=.5]{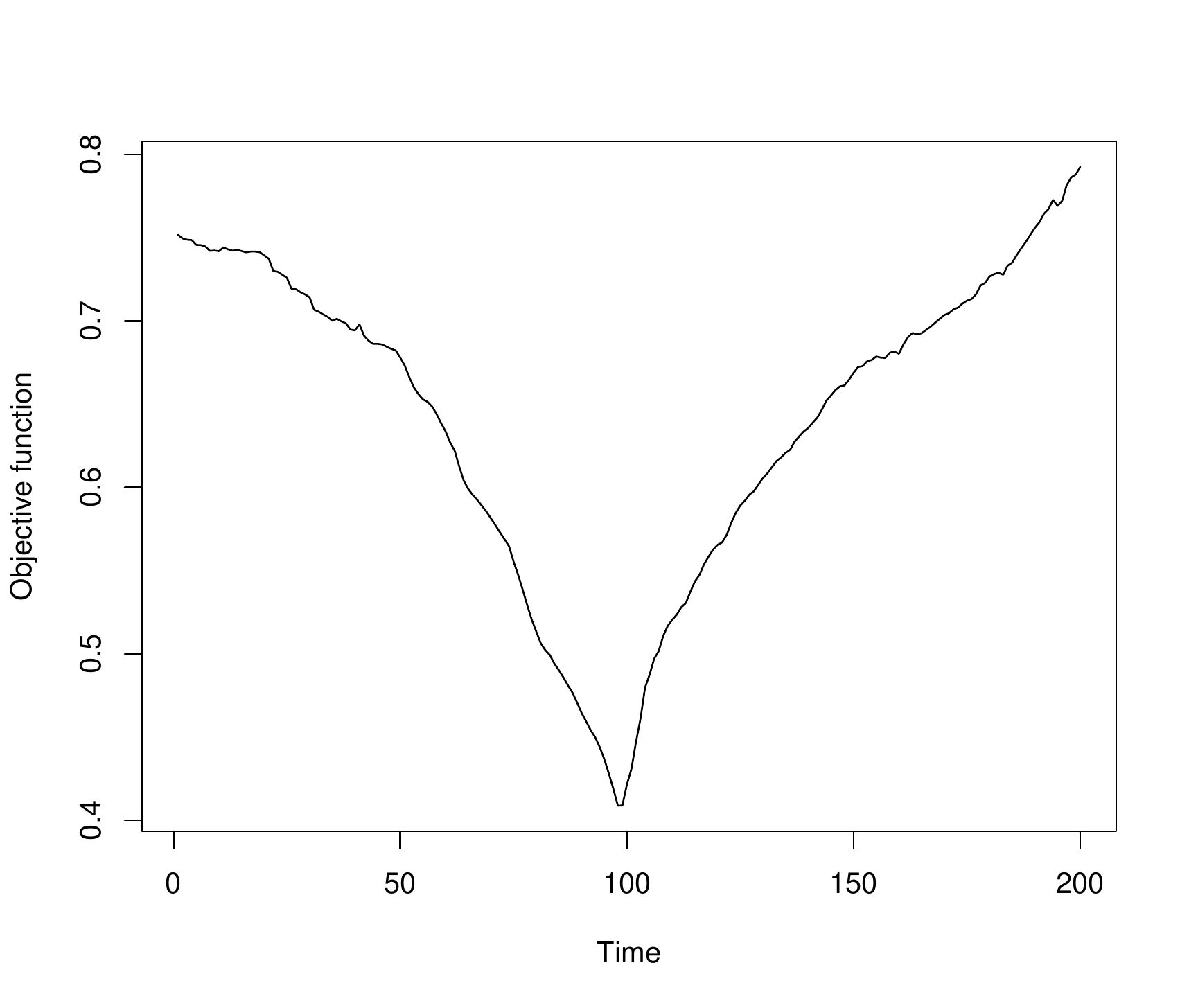}%
        \quad\quad
        \includegraphics[scale=.5]{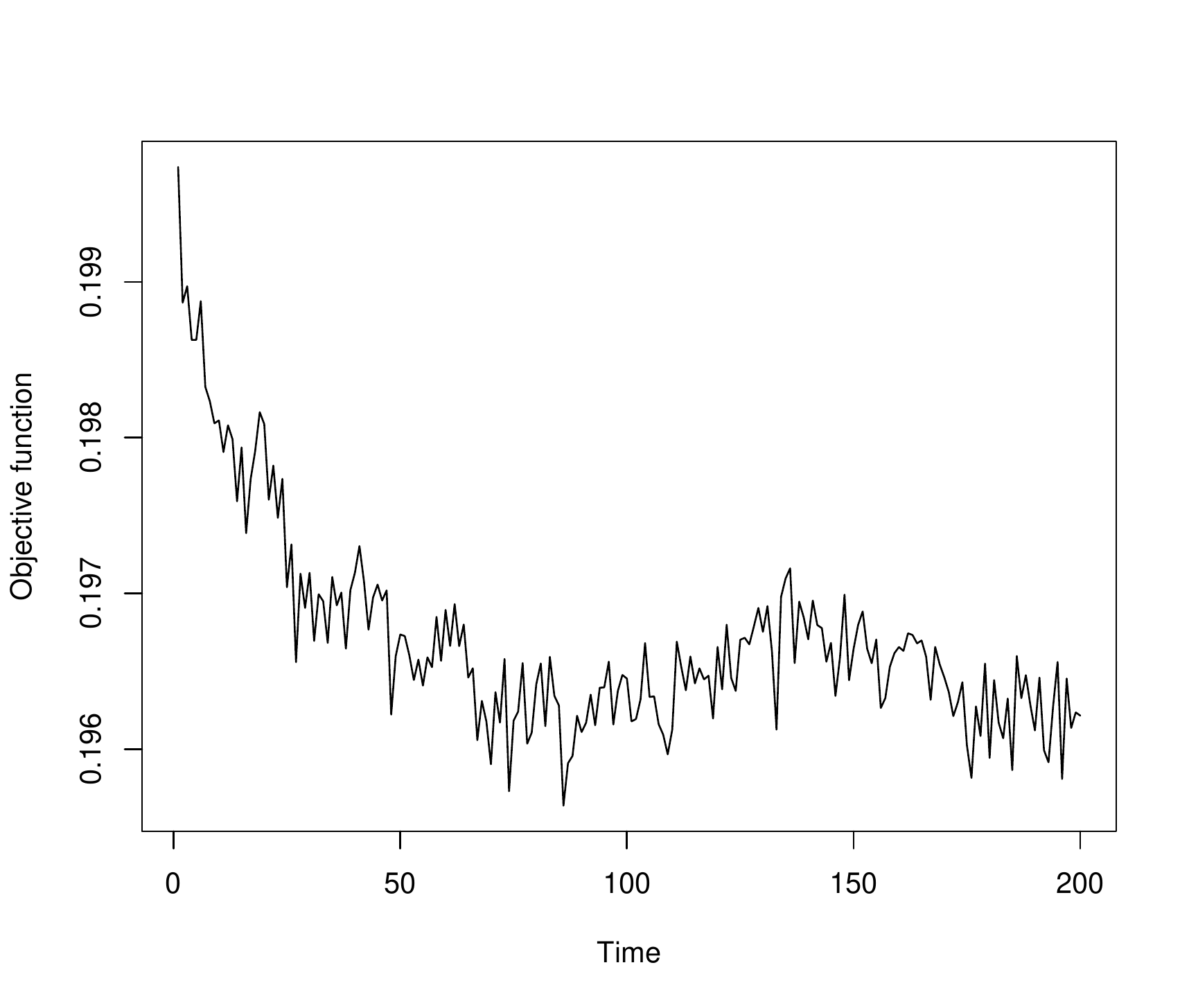}
    \end{subfigure}}
    
    \caption{Plots of the objective functions obtained by an application of Algorithm 1, in the presence (left panel) and absence (right panel) of a true change point.}
    
    \label{fig:interval_cp}
\end{figure}
It can be seen that in the presence of a change point, a clearly identified minimum close to the true change point exists. Contrary, in the absence of a change point, the objective function is mostly flat without a clearly identified minimum. Next, we introduce an assumption on the size of the window $h$ used in the detection procedure:
\begin{itemize}[leftmargin=0pt]
    \item[(H4)] Let $h$ denote the length of the window in the rolling window algorithm. Further, the minimum spacing $\Delta_T$ and the vanishing sequence $\{\xi_T\}$ are defined as in Assumption H3', and let $l$ denote the length by which the window is shifted to the right; it is assumed that:
    \begin{equation*}
        0 < l \leq \max\{\frac{h}{2}, 1\},\ \limsup_{T \to +\infty} \frac{h}{\Delta_T} < 1,\ \text{and}\ \liminf_{T \to +\infty} \frac{h}{T\xi_T} \geq 2.
    \end{equation*}
\end{itemize}
Assumption H4 restricts $h$, so that asymptotically can not include more than a single true change point and also is not too small, so that the deviation bound and restricted eigenvalue conditions used for establishing theoretical properties of the estimates of the model parameters hold for each time segment (see Appendix A). Further, this assumption places an upper bound on the shift $l$, to ensure that no true break point close to the boundary of windows would be missed by the proposed algorithm. The shift size can vary in $[1,h/2]$; a small $l$ helps reduce the finite sample estimation error for locating the break points, while a large $l$ speeds up the detection procedure, by considering fewer rolling windows.

Next, we establish theoretical guarantees for Step 1 of the proposed detection procedure. Denote by $\widetilde{\mathcal{S}}$ the set of candidate change points and by $\mathcal{S}^\star$ the set of true change points. Specifically, $\widetilde{\mathcal{S}}$ is defined as:
\begin{equation*}
    \widetilde{\mathcal{S}} \overset{\text{def}}{=} \left\{ \widetilde{t}_i \in [b_i, e_i): \widetilde{t}_i = \argmin_{\tau \in [b_i, e_i)}\ell(\tau; \widehat{L}_{1,\tau}, \widehat{L}_{2,\tau}, \widehat{S}_{1,\tau}, \widehat{S}_{2,\tau}),\quad  i = 1,2,\dots, \widetilde{m}  \right\},
\end{equation*}
where $[b_i, e_i)$ is the $i$-th rolling-window. Following \cite{chan2014group}, we define the Hausdorff distance between two countable sets on the real line as:
\begin{equation*}
    d_H(A, B) \overset{\text{def}}{=} \max_{b \in B} \min_{a \in A} |b - a|.
\end{equation*}
Next, we extend Theorem \ref{thm:1} to the multiple change points scenario:
\begin{proposition}
\label{prop:1}
Suppose Assumptions H1'-H3' and H4 hold, and select the tuning parameters for each rolling window according to (\ref{eq:6}). Then, as $T\to +\infty$, there exists a large enough constant $K>0$ such that
\begin{equation*}
    \mathbb{P}\left( d_H(\widetilde{\mathcal{S}}, \mathcal{S}^\star) \leq K\frac{d_{\max}^\star\log(p\vee h) + r_{\max}^\star(p\vee \log h)}{\min_{1\leq j\leq m_0}\{v_{j,S}^2+v_{j,L}^2\}}\right) \to 1.
\end{equation*}
\end{proposition}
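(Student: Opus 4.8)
The plan is to reduce the multiple change point problem to repeated applications of the single change point result, Theorem~\ref{thm:1}, on individual rolling windows. The one-sided Hausdorff distance $d_H(\widetilde{\mathcal{S}},\mathcal{S}^\star)=\max_{\tau_j^\star\in\mathcal{S}^\star}\min_{\widetilde{t}\in\widetilde{\mathcal{S}}}|\tau_j^\star-\widetilde{t}|$ only requires that \emph{every} true change point have \emph{some} candidate close to it; spurious candidates produced by windows without a change point can only decrease the inner minimum and are therefore harmless here (they are eliminated in Step~2). Hence it suffices to exhibit, for each $\tau_j^\star$, a single well-behaved window whose candidate localizes it within the stated radius, and then to intersect the corresponding high-probability events.

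First I would use Assumption H4 to control the window geometry. Since $\limsup_{T}h/\Delta_T<1$, for $T$ large a window of length $h$ contains at most one true change point, because two change points inside it would have spacing strictly below $h<\Delta_T$, contradicting the definition of $\Delta_T$. Next, because the shift satisfies $0<l\leq\max\{h/2,1\}$, consecutive windows overlap enough that for each $\tau_j^\star$ one can select an index $i_j$ for which $\tau_j^\star$ is the unique true change point in $[b_{i_j},e_{i_j})$ and lies at distance at least $\sim h/4$ from both endpoints: choosing the window whose left endpoint is closest to $\tau_j^\star-h/2$ places the change point within $l/2\leq h/4$ of the window center. I would call such a window \emph{good}; the two boundary change points are accommodated by the first and last windows together with the minimal-spacing structure implied by H3'.

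On a good window the restricted data obey exactly a single change point VAR model, and I would verify that Assumptions H1--H3 hold for this restricted model with $T$ replaced by $h$ and the spacing replaced by the within-window spacing $\gtrsim h/4$. The signal-to-noise requirement H1 then reduces to $\tfrac{h}{4}\min_{j}\{v_{j,S}^2+v_{j,L}^2\}\gtrsim d_{\max}^\star\log(p\vee h)+r_{\max}^\star(p\vee\log h)$, which I would deduce from H1' together with the lower bound $\liminf_{T}h/(T\xi_T)\geq 2$ in H4 and the rate relations in H3'. The identifiability conditions H2 transfer directly from H2', since $h\leq T$ renders the required bound $\alpha_L=\mathcal{O}\big(p\sqrt{\log(ph)/h}\big)$ weaker than the global $\alpha_L=\mathcal{O}\big(p\sqrt{\log(pT)/T}\big)$, and the search-domain condition H3 follows from H3'--H4. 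The lower bound on $h$ also guarantees that each of the two sub-segments is long enough for the restricted eigenvalue and deviation bound conditions of Appendix~A to hold. Applying Theorem~\ref{thm:1} on the good window then yields, with probability tending to one,
\begin{equation*}
|\widetilde{t}_{i_j}-\tau_j^\star|\leq K_0\frac{d_{\max}^\star\log(p\vee h)+r_{\max}^\star(p\vee\log h)}{v_{j,S}^2+v_{j,L}^2}\leq K\frac{d_{\max}^\star\log(p\vee h)+r_{\max}^\star(p\vee\log h)}{\min_{1\leq j\leq m_0}\{v_{j,S}^2+v_{j,L}^2\}}.
\end{equation*}

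Finally I would aggregate these events by a union bound over the (at most $m_0$) good windows. The delicate point, and what I expect to be the main obstacle, is that both the number of change points $m_0$ and the total number of windows $\widetilde{m}=\mathcal{O}(T/l)$ grow with $T$, so the per-window failure probability extracted from the proof of Theorem~\ref{thm:1} must decay fast enough -- polynomially in $p\vee h$ with a large exponent, as the deviation bounds of Appendix~A provide -- that its product with $\widetilde{m}$ still vanishes. Establishing this quantitative, \emph{uniform} tail control across all windows, rather than merely the $\to 1$ statement for a single window, is the crux of the argument. Once it is in place, intersecting the good events shows that with probability tending to one every $\tau_j^\star$ is matched by a candidate within the claimed radius, which yields the stated bound on $d_H(\widetilde{\mathcal{S}},\mathcal{S}^\star)$.
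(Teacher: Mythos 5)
Your proposal takes essentially the same route as the paper's proof: the paper also reduces the claim to Theorem~\ref{thm:1} applied to a window containing each true change point (guaranteed by H4), notes that the one-sided Hausdorff distance only requires every $\tau_j^\star$ to be matched by some candidate, and combines the $m_0$ resulting events. Your write-up is in fact more careful than the paper's rather terse argument on the window geometry, the transfer of H1--H3 to the restricted window, and the union bound over the good windows, all of which the paper leaves implicit.
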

Proposition~\ref{prop:1} shows that the number of candidate change points identified in Step 1
of the algorithm is an overestimate of the true number of change points. Hence,
a second \textit{screening step} is required to remove the redundant ones.
\item {\bf Step 2:} Let the candidate change points from Step 1 be denoted by $\{s_j\}$, $j=1,2,\cdots, \widetilde{m}$. Then, model \eqref{eq:8} can be rewritten in the following form:
\begin{equation*}
    X_t = \sum_{i=1}^{\widetilde{m}+1}\left((L_{(s_{i-1},s_i)}+S_{(s_{i-1},s_i)})X_{t-1} + \epsilon_t^i\right)\mathbf{I}(s_{i-1} \leq t < s_i),\quad t=1,2,\dots,T,
\end{equation*}
where $L_{(s_{i-1}, s_i)}$ and $S_{(s_{i-1}, s_i)}$ denote for the low-rank and sparse components of the transition matrix in the interval $[s_{i-1}, s_i)$. We define  {\small$0=s_0 < s_1 < s_2 < \dots < s_{\widetilde{m}} < s_{\widetilde{m}+1} = T$} and for ease of presentation use $L_i$ and $S_i$ instead of $L_{(s_{i-1}, s_i)}$ and $S_{(s_{i-1}, s_i)}$ for $i=1,2,\dots, m+1$. We also define matrices $\mathbf{L} \overset{\text{def}}{=} [L_1^\prime, L_2^\prime, \dots, L_{\widetilde{m}+1}^\prime]^\prime$ and $\mathbf{S} \overset{\text{def}}{=} [S_1^\prime, S_2^\prime, \dots, S_{\widetilde{m}+1}^\prime]^\prime$. Estimates for $\mathbf{L}$ and $\mathbf{S}$ are obtained as the solution to the following regularized regression problem:
\begin{equation*}
    (\widehat{\mathbf{L}}, \widehat{\mathbf{S}}) = \argmin_{L_i,S_i, 1\leq i \leq \widetilde{m}+1} \sum_{i=1}^{\widetilde{m}+1} \left\{
    \frac{1}{s_i - s_{i-1}}\sum_{t=s_{i-1}}^{s_i-1}\|X_t - (L_i+S_i)X_{t-1}\|_2^2 + \lambda_i\|S_i\|_1 + \mu_i \|L_i\|_*\right\},
\end{equation*}
with tuning parameters $(\bm{\lambda}, \bm{\mu}) = \{(\lambda_i, \mu_i)\}_{i=1}^{\widetilde{m}+1}$. Next, we define the objective function with respect to $(s_1,s_2, \dots, s_m)$:
\begin{equation}
    \label{eq:9}
    \mathcal{L}_T(s_1,s_2,\dots,s_m; \bm{\lambda}, \bm{\mu}) \overset{\text{def}}{=} \sum_{i=1}^{\widetilde{m}+1} \left\{\sum_{t=s_{i-1}}^{s_i-1}\|X_t - (\widehat{L}_i+\widehat{S}_i)X_{t-1}\|_2^2 + \lambda_i\|\widehat{S}_i\|_1 + \mu_i \|\widehat{L}_i\|_*\right\}.
\end{equation}
Then, for a suitably selected penalty sequence $\omega_T$, specified in the upcoming Assumption H5, we consider the following \textit{information criterion} defined as:
\begin{equation}
    \label{eq:10}
    \text{IC}(s_1, s_2,\dots, s_m; \bm{\lambda}, \bm{\mu}, \omega_T) \overset{\text{def}}{=} \mathcal{L}_T(s_1,\dots,s_m; \bm{\lambda}, \bm{\mu}) + m\omega_T.
\end{equation}
The second step selects a subset of initial $\widetilde{m}$ change points from the first step by solving:
\begin{equation*}
    (\widehat{m}, \widehat{\tau}_i, i=1,2,\dots, \widehat{m}) = \argmin_{0\leq m \leq \widetilde{m}, (s_1,\dots, s_m)} \text{IC}(s_1,\dots,s_m; \bm{\lambda}, \bm{\mu}, \omega_T).
\end{equation*}
\end{itemize}
Algorithm 2 in Appendix B describes in detail the key steps for screening the candidate change points by minimizing the information criterion.

The following two additional assumptions on the minimum spacing $\Delta_T$ and the selection of tuning parameters are required to establish the main theoretical results.
\begin{itemize}[leftmargin=0pt]
    \item[(H5)] 
    Assume that $m_0T\xi_T(d_{\max}^{\star^2} + r_{\max}^{\star^{\frac{3}{2}}})/ \omega_T \to 0$ and $m_0\omega_T / \Delta_T \to 0$ as $n \to +\infty$.
    \item[(H6)] Suppose $(s_1, \dots, s_m)$ are a set of change points obtained from the Step 1, we consider the following scenarios: (a) if $|s_i - s_{i-1}| \leq T\xi_T$, select $\lambda_i = c\sqrt{T\xi_T \log p}$ and $\mu_i = c\sqrt{T\xi_T p}$, for $i=1,2,\dots, m$; (b) if there exist two true change points $\tau_j^\star$ and $\tau_{j+1}^\star$ such that $|s_{i-1} - \tau_j^\star| \leq T\xi_T$ and $|s_i - \tau_{j+1}^\star| \leq T\xi_T$, select $\lambda_i = 4\left(c\sqrt{\frac{\log p}{s_i - s_{i-1}}} + M_Sd_{\max}^\star\frac{T\xi_T}{s_i - s_{i-1}}\right)$ and $\mu_i = 4\left(c\sqrt{\frac{p}{s_i - s_{i-1}}} + \alpha_L \sqrt{r_{\max}^\star}\frac{T\xi_T}{s_i - s_{i-1}}\right)$; (c) otherwise, select $\lambda_i = 4c\sqrt{\frac{\log p + \log(s_i-s_{i-1})}{s_i - s_{i-1}}}$ and $\mu_i = 4c\sqrt{\frac{p + \log(s_i-s_{i-1})}{s_i - s_{i-1}}}$, for some large constant $c$.
\end{itemize}

Assumption H5 connects the screening penalty term $\omega_T$, defined with the information criterion \eqref{eq:10}, and the minimum spacing $\Delta_T$ allowed between the change points. Assumption H6 provides the specific rate of the tuning parameters used in the regularized optimization problem formulated in \eqref{eq:9}. Note that Assumption H6 is required even in standard lasso regression problems for independent and identically distributed data and in the absence of change points \citep{zhang2008sparsity}. In the literature on change points analysis with misspecified models, a more complex selection of the tuning parameters is needed \citep{chan2014group, roy2017change}. Then, the following Theorem establishes the main result of estimating consistently the number of change points and their locations.
\begin{theorem}
\label{thm:3}
Suppose Assumptions H1'--H3', and H4--H6 hold. As $T \to +\infty$, the minimizer $(\widehat{\tau}_1, \dots, \widehat{\tau}_{\widehat{m}})$ of (\ref{eq:10}) satisfies: $\mathbb{P}(\widehat{m} = m_0) \to 1$. Further, there exists a large enough positive constant $B>0$ so that
\begin{equation*}
    \mathbb{P}\left( \max_{1\leq j \leq m_0}|\widehat{\tau}_j - \tau_j^\star| \leq Bm_0 T\xi_T \frac{d_{\max}^{\star^2} + r_{\max}^{\star^{\frac{3}{2}}}}{\min_{1\leq j\leq m_0}\{v_{j,S}^2 + v_{j,L}^2\}} \right) \to 1.
\end{equation*}
\end{theorem}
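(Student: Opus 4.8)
The plan is to prove the two assertions in turn: first the consistency of the estimated number of change points, $\mathbb{P}(\widehat m=m_0)\to 1$, and then, on that event, the localization rate. Both parts proceed by comparing the information criterion \eqref{eq:10} at an \emph{oracle} configuration against all competing ones, so I would first fix the oracle. By Proposition~\ref{prop:1} together with Assumption~H4, with probability tending to one the candidate set $\widetilde{\mathcal S}$ from Step~1 contains, for each true change point $\tau_j^\star$, a candidate within distance $O(T\xi_T)$; let $\{\bar s_1,\dots,\bar s_{m_0}\}\subset\widetilde{\mathcal S}$ be these $m_0$ nearest candidates. All subsequent comparisons are then restricted to subsets of $\widetilde{\mathcal S}$, of which there are only polynomially many, so a single union bound makes the deterministic inequalities below hold simultaneously, provided the restricted-strong-convexity and (misspecified) deviation-bound conditions of Appendix~A are verified uniformly over all candidate sub-intervals.

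The heart of the argument is two one-sided bounds on the goodness-of-fit term $\mathcal L_T$ defined in \eqref{eq:9}, calibrated by the sandwich on $\omega_T$ in Assumption~H5. For \textbf{underestimation}, I would show that any configuration omitting a point within $O(T\xi_T)$ of some true break forces a fitted segment to contain an entire true regime (of length $\ge\Delta_T$) flanked by different regimes; fitting a single stationary VAR there inflates the residual sum of squares by at least $c\,\Delta_T\min_{1\le j\le m_0}\{v_{j,S}^2+v_{j,L}^2\}$, since the in-sample misfit is proportional to the spacing times the squared jump $\|A_{j+1}^\star-A_j^\star\|_2^2\gtrsim v_{j,S}^2+v_{j,L}^2$ (the last step using Remarks~\ref{remark:1}--\ref{remark:2} and the identifiability conditions H2'). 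Assumptions~H1' and H5 ensure this exceeds the at most $m_0\omega_T$ of penalty saved, so such configurations have strictly larger IC. For \textbf{overestimation}, I would show that adding any extra candidate (necessarily lying inside a stationary regime) reduces $\mathcal L_T$ by at most $C\,T\xi_T(d_{\max}^{\star\,2}+r_{\max}^{\star\,3/2})$: expand the RSS difference, control the noise--error cross term via the deviation bound of Appendix~A, and bound the bias through the nuclear- and $\ell_1$-penalty terms of Assumption~H6, where the powers $d_{\max}^{\star\,2}$ and $r_{\max}^{\star\,3/2}$ arise from pairing the $O(T\xi_T/(s_i-s_{i-1}))$ bias parts of $\lambda_i,\mu_i$ with $\|\widehat S-S^\star\|_1\lesssim d_{\max}^\star\|\cdot\|_F$ and $\|\widehat L-L^\star\|_*\lesssim\sqrt{r_{\max}^\star}\,\|\cdot\|_F$ (cf.\ Theorem~\ref{thm:2}). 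Since each extra point costs $\omega_T$ and H5 gives $\omega_T\gg T\xi_T(d_{\max}^{\star\,2}+r_{\max}^{\star\,3/2})$, no spurious point can lower the IC. Combining the two bounds pins the minimizer to exactly $m_0$ points, each within $O(T\xi_T)$ of a distinct true break, yielding $\mathbb P(\widehat m=m_0)\to 1$.

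For the \textbf{localization} rate I would work on the event $\{\widehat m=m_0\}$ secured above and sharpen the placement of each $\widehat\tau_j$ by comparing $\mathcal L_T$ at the selected configuration with $\mathcal L_T$ at the configuration moving $\widehat\tau_j$ to $\tau_j^\star$. Displacing a break by $\delta_j=|\widehat\tau_j-\tau_j^\star|$ fits the wrong regime over $\delta_j$ observations and hence raises the population RSS by at least $c\,\delta_j(v_{j,S}^2+v_{j,L}^2)$, while the aggregate stochastic fluctuation of $\mathcal L_T$ over the $m_0$ segments is at most $C\,m_0\,T\xi_T(d_{\max}^{\star\,2}+r_{\max}^{\star\,3/2})$ by the same machinery. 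Because the penalty $m\omega_T$ is unchanged when only locations move, optimality of the IC forces $c\,\delta_j(v_{j,S}^2+v_{j,L}^2)\le C\,m_0T\xi_T(d_{\max}^{\star\,2}+r_{\max}^{\star\,3/2})$ for every $j$; dividing by $v_{j,S}^2+v_{j,L}^2\ge\min_{1\le j\le m_0}\{v_{j,S}^2+v_{j,L}^2\}$ gives the claimed bound with $B=C/c$.

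The step I expect to be the main obstacle is the \textbf{overestimation} bound, namely the uniform control of how much an extra, misplaced break can reduce the in-sample RSS on data that is only piecewise stationary. Unlike the i.i.d.\ or sparse-only settings, the fitted segments here are genuinely misspecified (they may contain fragments of two regimes), so the restricted-eigenvalue and deviation-bound conditions of \cite{basu2015regularized} do not apply off the shelf; establishing them uniformly over all $O(|\widetilde{\mathcal S}|^2)$ candidate sub-intervals, with the modified deviation bound of Appendix~A and the three tuning-parameter regimes of Assumption~H6, is delicate. The additional complication specific to the low-rank-plus-sparse structure is that the cross term between the estimation errors of $\widehat L$ and $\widehat S$ must be decoupled using the spikiness and incoherence conditions of H2', which is precisely what prevents the low-rank bias from masquerading as a spurious change in the sparse component and vice versa.
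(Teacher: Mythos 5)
Your proposal follows essentially the same route as the paper: an oracle configuration of candidates within $T\xi_T$ of the true breaks whose criterion value is bounded above by $\sum_t\|\epsilon_t\|_2^2 + Km_0T\xi_T(d_{\max}^{\star 2}+r_{\max}^{\star 3/2})$, a lower bound of order $c\,\Delta_T\min_j\{v_{j,S}^2+v_{j,L}^2\}$ for any under-fitted configuration (the paper's Lemma~4/Lemma~\ref{lemma:6}), a contradiction via $\omega_T \gg m_0T\xi_T(d_{\max}^{\star 2}+r_{\max}^{\star 3/2})$ and $m_0\omega_T/\Delta_T\to 0$ for over- and under-estimation respectively, and the same sandwich to extract the localization rate. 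You also correctly identify both the source of the exponents on $d_{\max}^\star$ and $r_{\max}^{\star}$ (the bias parts of the tuning parameters in Assumption~H6 paired with the $\ell_1$/nuclear norms of the estimation errors) and the genuinely delicate step, namely the uniform deviation/RSC control over misspecified candidate sub-intervals, which the paper handles via the modified deviation bounds of Appendix~A and the three-case analysis in Lemma~\ref{lemma:6}.
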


\begin{remark}
\label{remark:7}
For a finite number of change points $m_0$, the sequence $\{\xi_T\}$ can be selected as ${\left(d_{\max}^\star\log(p\vee T) + r_{\max}^\star(p\vee \log T)\right)^{1+\frac{\rho}{2}}}/{T}$
for some small $\rho>0$. Assuming that the maximum rank among all the low-rank components and the maximum sparsity level among all the sparse components satisfy $d_{\max}^{\star^2} + r_{\max}^{\star^{\frac{3}{2}}} = o\bigg(\left(d_{\max}^\star\log(p\vee T) + r_{\max}^\star(p\vee \log T)\right)^{\frac{\rho}{2}}\bigg)$, then the order of detecting the relative location -$\tau^\star_j/T$- becomes ${\left(d_{\max}^\star\log(p\vee T) + r_{\max}^\star(p\vee \log T)\right)^{1+\rho}}/{T}$ in Theorem \ref{thm:3}. Finally, one can choose the penalty tuning parameter $\omega_T$ to be of order $\left(d_{\max}^\star\log(p\vee T) + r_{\max}^\star(p\vee \log T)\right)^{1+2\rho}$ in this setting, and the minimum spacing $\Delta_T$ to be at least of order $\left(d_{\max}^\star\log(p\vee T) + r_{\max}^\star(p\vee \log T)\right)^{2+\rho}$ in accordance to Assumption H3'. Comparing the consistency rates provided in Theorem \ref{thm:3} with those in \cite{safikhani2017joint}, the additional term $r^\star_{\max}(p\vee\log T)$ reflects the complexity of estimating the low-rank components in the model.
\end{remark}
\begin{remark}
\label{remark:8}
{\bf Computational cost of the rolling windows strategy.} For the proposed $p$-dimensional VAR model with $T$ observations and window size $h=\mathcal{O}(T^\delta)$, where $\delta \in (0, 1]$, the computational complexity of the first step is of order $\mathcal{O}(TC(T))$, and the second screening step is of order $\mathcal{O}(T^{1-\delta}C(T))$, where $C(T)$ is the computational cost for model parameters estimation for every search. Hence, the overall complexity is $\mathcal{O}(TC(T))$.
\end{remark}
The following corollary provides the error bound for consistent estimation of the low-rank and the sparse components, which is directly extended from Theorem \ref{thm:2} to the multiple change points scenario. To obtain the stationary time series for each segments, we employ the exact same technique of removing $R$-radius neighborhoods for every estimated change point. In accordance to Theorem \ref{thm:3}, the radius $R$ should be at least of order $Bm_0T\xi_T(d_{\max}^{\star^2} + r_{\max}^{\star^{\frac{3}{2}}})$ for some large constant $B>0$. Denote the length of the $j$-th stationary segment by $N_j$, after removing the $R$-radius neighborhoods for each estimated change point.
\begin{corollary}
\label{cor:1}
Given the estimated change points: $1=\widehat{\tau}_0<\widehat{\tau}_1<\cdots<\widehat{\tau}_{\widehat{m}}<\widehat{\tau}_{\widehat{m}+1}=T$, let Assumptions H1'-H3' and H4 hold and remove the $R$-radius neighborhoods for each $\widehat{\tau}_j$ for $j=1,2,\dots,\widehat{m}+1$. Further, by using the following tuning parameters: $(\lambda_j, \mu_j) = \left( 4c_1\sqrt{\frac{\log p}{N_j}} + \frac{4c_1\alpha_L}{p},\  4c_1^\prime\sqrt{\frac{p}{N_j}} \right)$, where $c_1, c_1^\prime$ are positive constants. For $T \to +\infty$, there exist universal positive constants $C_1^\prime, C_2^\prime>0$ such that for each selected segment, the estimated low-rank and the sparse components satisfy
\begin{equation*}
    \|\widehat{L}_j - L_j^\star\|_F^2 + \|\widehat{S}_j - S_j^\star\|_F^2 \leq C_1^\prime\left( \frac{d_j^\star\log p + r_j^\star p}{N_j} \right) + C_2^\prime\frac{d_j^\star\alpha_L^2}{p^2}.
\end{equation*}
\end{corollary}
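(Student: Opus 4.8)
The plan is to reduce the claim to a segment-by-segment application of Theorem~\ref{thm:2}, using the consistency of the change-point estimates from Theorem~\ref{thm:3} to certify that, after trimming, each working segment is genuinely stationary, so that the single-stationary-segment analysis applies verbatim.

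First, I would condition on the high-probability event supplied by Theorem~\ref{thm:3}, on which simultaneously $\widehat{m}=m_0$ and $\max_{1\le j\le m_0}|\widehat{\tau}_j-\tau_j^\star| \le Bm_0T\xi_T(d_{\max}^{\star^2}+r_{\max}^{\star^{\frac{3}{2}}})/\min_{1\le j\le m_0}\{v_{j,S}^2+v_{j,L}^2\}$. Since the removal radius $R$ is taken to be at least of order $Bm_0T\xi_T(d_{\max}^{\star^2}+r_{\max}^{\star^{\frac{3}{2}}})$, the trimming step deletes around each $\widehat{\tau}_j$ an interval that also contains the matching true change point $\tau_j^\star$. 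Consequently the trimmed $j$-th segment $[\widehat{\tau}_{j-1}+R,\ \widehat{\tau}_j-R)$ satisfies $\widehat{\tau}_{j-1}+R \ge \tau_{j-1}^\star$ and $\widehat{\tau}_j-R \le \tau_j^\star$, hence lies entirely inside the single true stationary regime $[\tau_{j-1}^\star,\tau_j^\star)$. On this event the data on every trimmed segment is generated by one stationary VAR(1) process with transition matrix $A_j^\star=L_j^\star+S_j^\star$.

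Second, on each trimmed segment of length $N_j$ I would invoke the estimation machinery behind Theorem~\ref{thm:2}. Because the data are now genuinely stationary, the restricted strong convexity and deviation-bound conditions for low-rank-plus-sparse recovery hold on the segment; these are exactly the stationary-segment conditions established in the proof of Theorem~\ref{thm:2}, with $N_j$ playing the role of the segment length. Feeding in the prescribed tuning parameters $(\lambda_j,\mu_j)=(4c_1\sqrt{\log p/N_j}+4c_1\alpha_L/p,\ 4c_1'\sqrt{p/N_j})$ reproduces the bound of Theorem~\ref{thm:2}, namely $\|\widehat{L}_j-L_j^\star\|_F^2+\|\widehat{S}_j-S_j^\star\|_F^2 \le C_1'(d_j^\star\log p+r_j^\star p)/N_j + C_2'd_j^\star\alpha_L^2/p^2$, where the second (unidentifiability-floor) term is controlled through Assumption~H2' exactly as in Remark~\ref{remark:6}.

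Finally, I would take a union bound over the $m_0+1$ segments; this preserves the probability-tending-to-one conclusion because the exponential tail probabilities in the single-segment analysis dominate the polynomial factor $m_0+1$. The step I expect to be the main obstacle is confirming that the trimmed segments remain long enough for the stationary rate to be informative, i.e.\ that each $N_j$ stays of the same order as the true spacing after two $R$-radius neighborhoods are excised. This reduces to verifying $R=o(\Delta_T)$, which follows from the minimum-spacing requirement in Assumption~H3' (ensuring $T\xi_T(d_{\max}^{\star^3}+r_{\max}^{\star^2})/\Delta_T\to 0$, which dominates $T\xi_T(d_{\max}^{\star^2}+r_{\max}^{\star^{\frac{3}{2}}})$), together with the control on the $m_0$ and $\omega_T$ scaling provided by Assumption~H5. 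Once $R=o(\Delta_T)$ is in hand, $N_j\asymp|\tau_j^\star-\tau_{j-1}^\star|$ and the corollary follows.
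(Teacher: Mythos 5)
Your proposal is correct and follows essentially the same route as the paper's own proof: condition on the event from Theorem~\ref{thm:3} so that each trimmed segment of length $N_j\asymp\Delta_T$ lies inside a single true stationary regime, verify the restricted strong convexity and deviation bound conditions on that segment, and then apply the stationary low-rank-plus-sparse estimation result with the prescribed tuning parameters. Your additional care about $R=o(\Delta_T)$ and the union bound over segments simply makes explicit what the paper leaves implicit.
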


\begin{itemize}[leftmargin=0pt]
    \item {\bf Step 3 (Optional):} After the second Step, the results in Theorem \ref{thm:3} and Corollary \ref{cor:1} ensure accurate estimation of the number of change points and their locations, as well as of the underlying model parameters across the stationary segments. However, a further refinement and hence a tighter bound on the result provided in Theorem \ref{thm:3} can be obtained through the following re-estimation procedure (see also discussion on this point in \cite{wang2019localizing}). Specifically, the conclusions in Theorem~\ref{thm:3} ensure that $\widehat{m} = m_0$ almost surely and also provide good estimates of the boundaries of the stationary segments. Then, for an estimated change point $\widehat{\tau}_j$, consider a ``refined" interval $(s_j,e_j) \overset{\text{def}}{=} (2\widehat{\tau}_{j-1}/3+\widehat{\tau}_j/3, 2\widehat{\tau}_{j}/3+\widehat{\tau}_{j+1}/3)$ for $j=1,2,\dots,\widehat{m}$, where $\tau_0=0$. Then, we define the objective function:
    \begin{equation*}
        \ell(\tau; s_j, e_j, A_{j,1}, A_{j,2}) \overset{\text{def}}{=} \frac{1}{e_j-s_j}\left(\sum_{\tau=s_j}^{\tau-1}\|X_t - A_{j,1}X_{t-1}\|_2^2 + \sum_{t=\tau}^{e_j}\|X_t - A_{j,2}X_{t-1}\|_2^2\right),
    \end{equation*}
    and a ``refined" change point together with the refitted model parameters corresponds to:
    \begin{equation}
    \label{eq:11}
        (\widetilde{\tau}_j, \widetilde{A}_{j,1}, \widetilde{A}_{j,2}) = \argmin_{\tau \in (s_j, e_j)}\ell(\tau; s_j, e_j, A_{j,1}, A_{j,2})
    \end{equation}
\end{itemize}
According to the proposed refinement, we derive the following corollary:

\begin{corollary}
\label{cor:2}
Suppose Assumptions H1'--H3', and H4--H6 hold. As $T \to +\infty$, the minimizer $(\widetilde{\tau}_1, \dots, \widetilde{\tau}_{\widehat{m}})$ of \eqref{eq:11} satisfies:
\begin{equation*}
    \mathbb{P}\left(\max_{1\leq j\leq m_0}|\widetilde{\tau}_j - \tau_j^\star| \leq K\frac{d_{\max}^\star\log(p\vee h) + r_{\max}^\star(p\vee \log h)}{\min_{1\leq j\leq m_0}\{v_{j,S}^2+v_{j,L}^2\}}\right) \to 1.
\end{equation*}
\end{corollary}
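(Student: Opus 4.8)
The plan is to reduce the refinement in~\eqref{eq:11} to the single change point problem of Section~\ref{sec:model} on each interval $(s_j,e_j)$ and then apply the localization rate behind Theorem~\ref{thm:1}. First I would condition on the event $\mathcal{A}$ on which the conclusions of Theorem~\ref{thm:3} hold, i.e. $\widehat{m}=m_0$ together with $\delta_T\overset{\text{def}}{=}\max_{1\le j\le m_0}|\widehat{\tau}_j-\tau_j^\star| \le Bm_0T\xi_T(d_{\max}^{\star^2}+r_{\max}^{\star^{3/2}})/\min_j\{v_{j,S}^2+v_{j,L}^2\}$; since $\mathbb{P}(\mathcal{A})\to1$ it suffices to prove the bound on $\mathcal{A}$. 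Combining Assumptions H3' and H5 forces $\delta_T=o(\Delta_T)$, which is the crucial smallness that drives the argument.

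Next I would show that on $\mathcal{A}$ every refined interval isolates exactly one true change point. Expressing the endpoints through the true locations and absorbing the error $\delta_T$ gives $\tau_j^\star-s_j \ge \tfrac{2}{3}(\tau_j^\star-\tau_{j-1}^\star)-\delta_T \ge \tfrac{2}{3}\Delta_T-\delta_T$ and $e_j-\tau_j^\star \ge \tfrac{1}{3}(\tau_{j+1}^\star-\tau_j^\star)-\delta_T \ge \tfrac{1}{3}\Delta_T-\delta_T$, while the same computation yields $s_j>\tau_{j-1}^\star$ and $e_j<\tau_{j+1}^\star$. Because $\delta_T=o(\Delta_T)$, for $T$ large $\tau_j^\star$ lies in the interior of $(s_j,e_j)$, is separated from both endpoints by a distance of order $\Delta_T$, and no other true change point falls in the interval; moreover $e_j-s_j=\Theta(\Delta_T)$.

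With a single isolated change point, the data on $(s_j,e_j)$ obey a model of the form~\eqref{eq:1} with pre-break matrix $A_j^\star=L_j^\star+S_j^\star$ and post-break matrix $A_{j+1}^\star=L_{j+1}^\star+S_{j+1}^\star$, and~\eqref{eq:11} is precisely the exhaustive-search single change point estimator restricted to this interval. I would then verify that the hypotheses of Theorem~\ref{thm:1} hold for the restricted problem: H1 follows from H1' since the boundary gap is of order $\Delta_T$ and $v_{j,S}^2+v_{j,L}^2\ge\min_j\{v_{j,S}^2+v_{j,L}^2\}$; the identifiability conditions H2 are segment-wise and hence inherited directly from H2'; and H3 holds because $e_j-s_j=\Theta(\Delta_T)$ dominates $d_{\max}^\star\log(p\vee T)+r_{\max}^\star(p\vee\log T)$. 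An argument identical to the proof of Theorem~\ref{thm:1} then gives $|\widetilde{\tau}_j-\tau_j^\star|\le K_0[d_{\max}^\star\log(p\vee(e_j-s_j))+r_{\max}^\star(p\vee\log(e_j-s_j))]/(v_{j,S}^2+v_{j,L}^2)$.

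Finally I would convert and aggregate. Since $h=\mathcal{O}(T^\delta)$ and $e_j-s_j\le T$, one has $\log(p\vee(e_j-s_j))\le\delta^{-1}\log(p\vee h)$ up to constants (and likewise for the rank term), so the numerator may be replaced by $d_{\max}^\star\log(p\vee h)+r_{\max}^\star(p\vee\log h)$ after enlarging the constant; bounding the denominator below by $\min_j\{v_{j,S}^2+v_{j,L}^2\}$ and taking a union bound over $j=1,\dots,m_0$ yields the stated uniform bound. The main obstacle I anticipate is the step of invoking Theorem~\ref{thm:1} on $(s_j,e_j)$: because $s_j,e_j$ are data-dependent, the deviation-bound and restricted-strong-convexity conditions of Appendix~A must be re-established on a \emph{random} interval. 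I would resolve this by performing the verification on $\mathcal{A}$, where $s_j$ and $e_j$ are trapped between fixed multiples of the true change points; the stationarity-based concentration bounds then apply uniformly over the finitely many admissible interval configurations, and conditioning on $\mathcal{A}$ transfers the conclusion to the full probability space.
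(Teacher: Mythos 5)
Your proposal follows essentially the same route as the paper's own (very terse) proof: invoke Theorem \ref{thm:3} to obtain $\widehat{m}=m_0$ and localization within $o(\Delta_T)$, observe that each refined interval then isolates exactly one true change point, apply the single-change-point argument of Theorem \ref{thm:1} on each $(s_j,e_j)$, and combine over $j$. Your write-up additionally supplies the interval-containment computations, the verification of H1--H3 on each refined interval, and the conversion from $\log(p\vee(e_j-s_j))$ to $\log(p\vee h)$, all of which the paper leaves implicit, so it is a correct and more complete rendering of the same argument.
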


\begin{remark}
\label{remark:9}
Note that in the bound of Corollary \ref{cor:2}, the maximum density across all sparse components $d_{\max}^{\star}$ appears as a linear term, instead of a quadratic one in Theorem \ref{thm:3}. This refinement is primarily of theoretical interest, since as the numerical work in Section \ref{sec:multi-performance} indicates the detection procedure based on Steps 1 and 2 achieves very accurate estimates of the change points and the model parameters.
\end{remark}

{
\begin{remark}
\label{remark:new}
Corollary \ref{cor:2} indicates that the high probability finite sample bound on the estimation error depends on the maximum sparsity level $d_{\max}^\star$ among the sparse components, the maximum rank $r^\star_{\max}$ among the low rank components, the dimension $p$, and the signal strength $v_S$, $v_L$ of the sparse and low rank components. Note that the issue of obtaining asymptotic distributions for the estimated change points is a rather complicated task and has not been addressed in the literature even for much simpler models, including sparse mean shift models.
\end{remark}
}


\section{A Fast Procedure Based on a Surrogate Model}\label{sec:surrogate}

Remark \ref{remark:8} shows that identifying multiple change points in a low-rank and sparse VAR model is computationally expensive, due to the presence of the nuclear norm and the need for selecting the tuning parameters through a 2-dimensional grid search.

The question addressed next is whether there are settings wherein the nature of the signal in the norm difference $||A_j^\star-A_{j+1}^\star||_2$ is such that it can be \textit{adequately captured} by a less computationally demanding surrogate model. For example, if the norm difference is primarily due to a large enough change in the sparse component, it is reasonable to expect that a \textit{surrogate} VAR model with a \textit{sparse} transition matrix may prove adequate under certain regularity conditions. However, if the norm difference is due to a change in the low-rank component, which by construction is dense, a pure sparse VAR model will not be adequate; however, a \textit{weakly sparse} model may be sufficient. Indeed, some numerical evidence suggests that this is the case. Figure \ref{fig:compare-curves} presents plots of the objective functions of the original and the surrogate weakly sparse model under the same experimental setting for a low-rank plus sparse VAR process $\{X_t\}$ with $p=20$, $T=200$, and a single change point at $\tau^\star=100$ with \textit{changes in both the low-rank and sparse} components.
\begin{figure}[htb]
    \centering
    \resizebox{5in}{1.67in}{%
    \begin{subfigure}
        \centering
        \includegraphics[scale=.45]{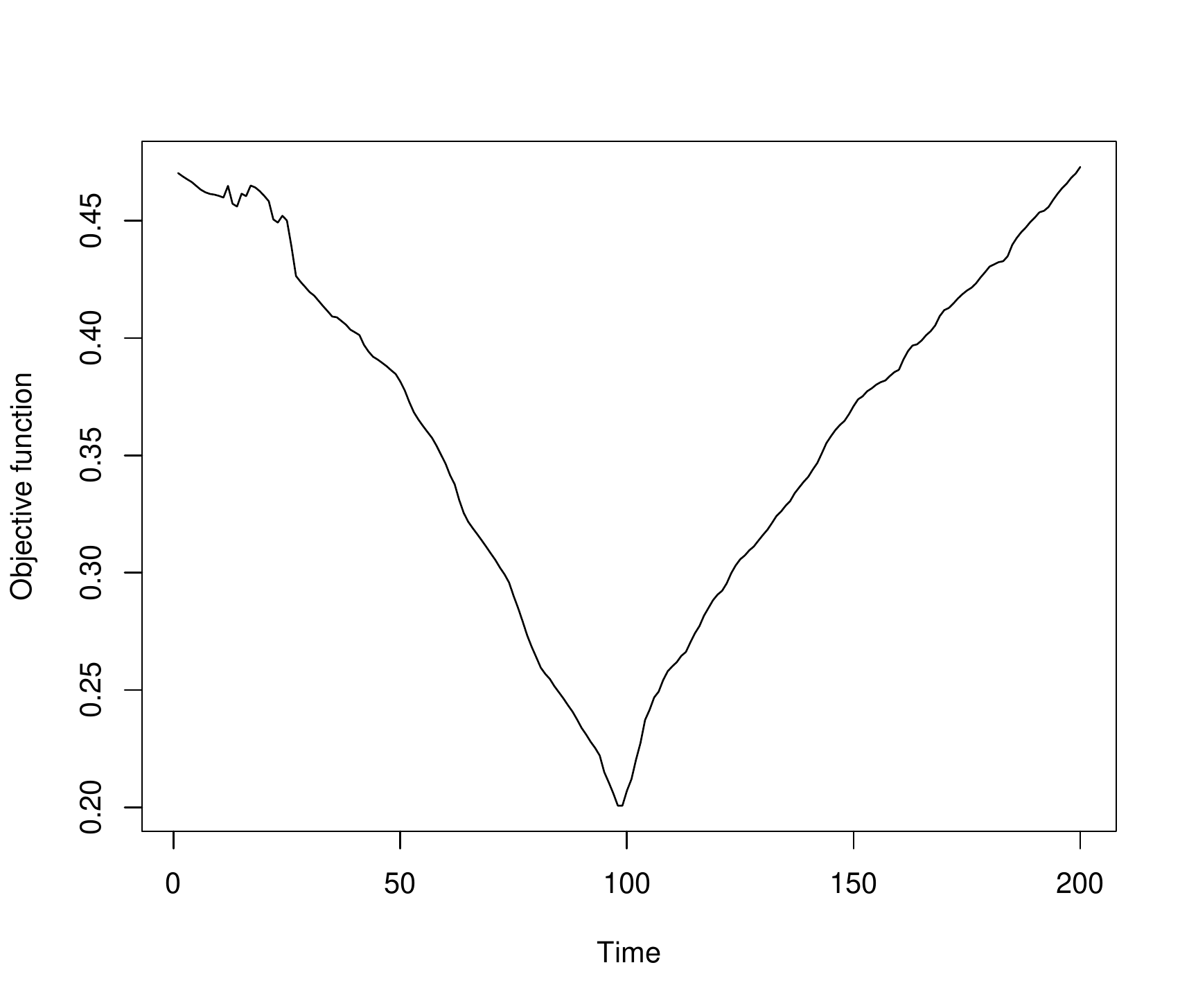}%
        \quad\quad
        \includegraphics[scale=.45]{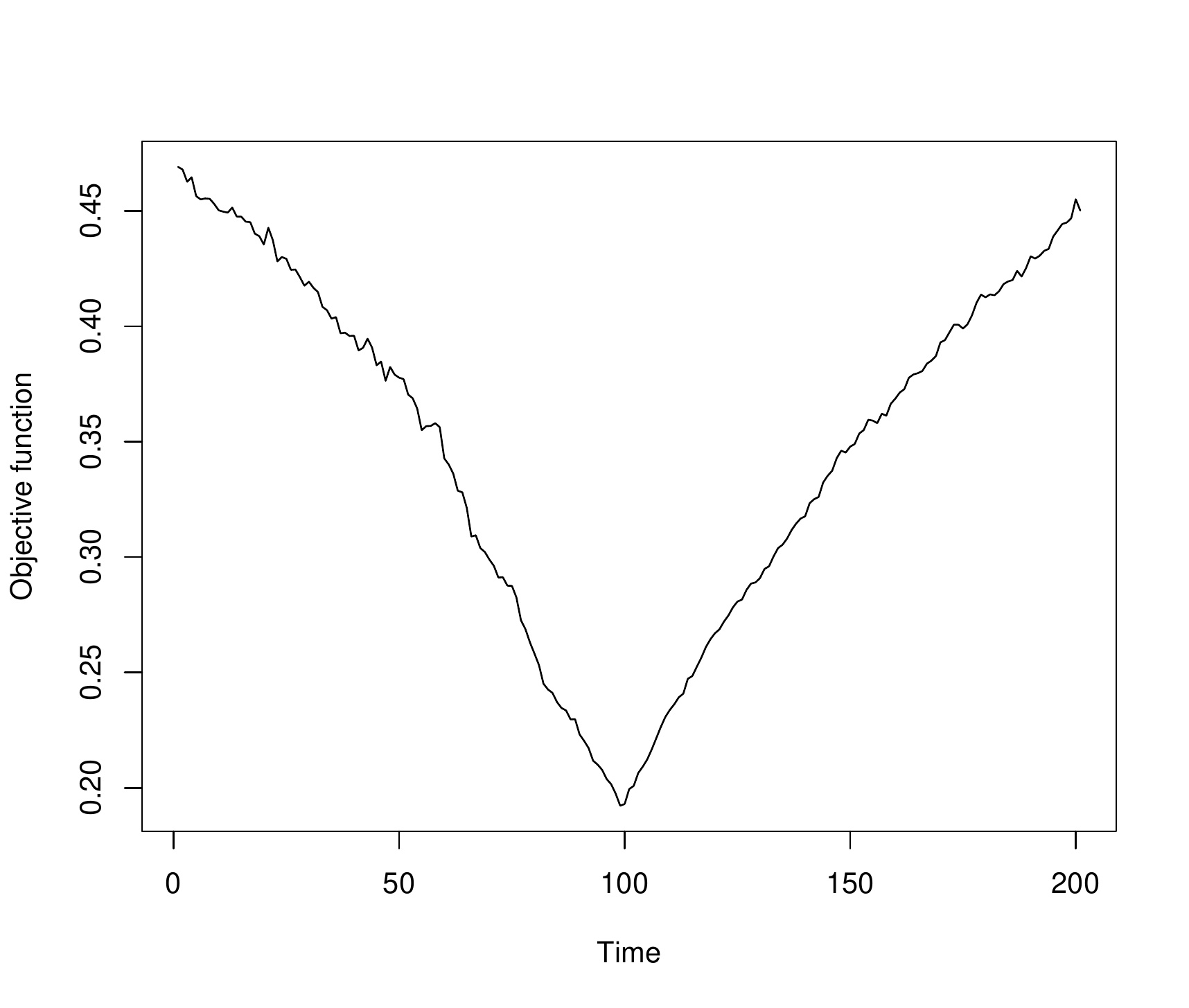}
    \end{subfigure}}
    
    \caption{Left: the curve of the objective function of the full low-rank plus sparse model; Right: the curve of the objective function of the alternative weakly sparse model.}
    
    \label{fig:compare-curves}
\end{figure}

As can be seen, the plot for the surrogate weakly sparse model shares a similar pattern to that of the true model. However, in practice, we can not a priori guarantee a change both in the low-rank and the sparse component, simultaneously. Therefore, an extra assumption is required to ensure the detectability of the change points. Before we state it, we first introduce formally the surrogate piece-wise weakly sparse VAR model.

\subsection{Formulation of the Surrogate Weakly Sparse VAR Model}\label{sec:weakformulate}

A $p\times p$ real matrix $A$ is weakly sparse, if it satisfies
\begin{equation}
    \label{eq:12}
    \mathbb{B}_q(R_q) := \bigg\{ A \in \mathbb{R}^{p \times p}:  \sum_{i=1}^p\sum_{j=1}^p |a_{ij}|^q  \leq R_q \bigg\},
\end{equation}
for some $q \in (0, 1)$; namely, its entries are restricted in an $\ell_q$ ball of radius $R_q$ \citep{negahban2012unified}. Note that when $q \to 0^+$, this set converges to an exact sparse model, that is, $A \in \mathbb{B}_0(R_0)$, if and only if $A$ has at most $R_0$ nonzero elements. When $q \in (0,1)$, the set $\mathbb{B}_q(R_q)$ enforces a certain rate of decay on the ordered absolute values of $A$.

We focus the discussion on detecting a single change point and establish under what conditions the change point can be estimated consistently based on the weakly sparse surrogate model. Subsequently, we extend the result to the case of multiple change points using the proposed rolling window strategy.

{
Since the focus is on the weakly sparse VAR model, the detection procedure provided in Section~\ref{sec:model} requires some modification, whose details are given in Appendix C.
}

We assume that $(A_1^\star,A_2^\star) \in \mathbb{B}_q(R_q)$, for some $q\in (0,1)$ and $R_q>0$.
We also introduce a modification on the Assumptions made in Sections \ref{sec:model} and \ref{sec:theory}. Based on Remark \ref{remark:1} and using the same notation as in the results in Sections \ref{sec:theory} and \ref{sec:multiplecp}, the counterpart of Assumption H1 becomes: 
{
\begin{itemize}[leftmargin=0pt]
    \item[(W1)] The weakly sparse assumption on the $A_j^\star$'s singles out \textit{spiky} entries. Hence, one of the following needs to hold: 
    \begin{itemize}[leftmargin=*]
        \itemsep.1em
        \item[(1)] If $\gamma_1, \gamma_2 \geq p$, then we require the minimum spacing $\Delta_T$ and the jump size $v_A = \|A_2^\star - A_1^\star\|_2$ satisfy:
        \begin{equation*}
            \Delta_Tv_A^2 \geq C_0^w\left(T^{\frac{q}{2}}R_q(\log(p \vee T))^{1-\frac{q}{2}}\right);
        \end{equation*}
        \item[(2)] Otherwise, the change point is identifiable as long as: 
        \begin{equation*}
            \Delta_Tv_S^2 \geq C_0^w\left(T^{\frac{q}{2}}R_q(\log(p \vee T))^{1-\frac{q}{2}}\right).
        \end{equation*}
    \end{itemize}
\end{itemize}}
\begin{remark}
\label{remark:10}
Assumption W1 is based on Remark \ref{remark:1}. Note that if the low-rank components dominate the signal, then an adequate change in them is required to identify the change point; otherwise, we need different information ratios together with distinct spiky entries in the sparse components. The latter sufficient condition indicates that the changes in the spiky entries play an important role in identifying the change points. For the second case, if the low-rank components are not dominant in both segments, then an adequately large change in the sparse components is sufficient to determine the change point.
\end{remark}

\subsection{Theoretical Properties}

The following proposition provides a lower bound for the radius $R_q$, so that the true transition matrices $(A_1^\star,A_2^\star)$ that admit a low-rank plus sparse decomposition do belong to the above defined $\ell_q$ ball. We only discuss the case $0 < \gamma_1, \gamma_2 \leq p$. Analogous results for the other cases can be derived in a similar manner. 
\begin{proposition}
\label{prop:2}
Let $q \in (0,1)$ be fixed and $R_q>0$ be the radius of $\mathbb{B}_q(R_q)$ defined in \eqref{eq:12}. Further, the transition matrices for the data generating model satisfy the following decomposition: $A_1^\star = L_1^\star + S_1^\star$ and $A_2^\star = L_2^\star + S_2^\star$, where $L_1^\star$, $L_2^\star$, $S_1^\star$, and $S_2^\star$ are the corresponding low-rank and sparse components. Then, $A_1^\star, A_2^\star$ belong to $\mathbb{B}_q(R_q)$ if $R_q$ satisfies:
\begin{equation*}
    R_q \geq  d^\star_{\max}\left(\left(\frac{\alpha_L}{p}\right)^q + M_S^q\right) + (p^2 - d^\star_{\max})|\sigma_{\max}|^q,
\end{equation*}
where $\sigma_{\max} = \max \{\|L_1^\star\|_2, \|L_2^\star\|_2\}$ and $d_{\max}^\star = \max\{d_1^\star, d_2^\star\}$. 
\end{proposition}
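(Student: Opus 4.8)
The plan is to bound the $\ell_q$ quasi-norm $\|A_j^\star\|_q^q = \sum_{i,k}|(A_j^\star)_{ik}|^q$ entrywise for each $j\in\{1,2\}$ and then read off the radius $R_q$ that makes both sums at most $R_q$, which is exactly the membership condition for $\mathbb{B}_q(R_q)$. First I would fix $j$ and partition the $p^2$ index pairs into the support $\Sigma_j=\{(i,k):(S_j^\star)_{ik}\neq 0\}$, of cardinality $d_j^\star\le d_{\max}^\star$, and its complement. Writing $(A_j^\star)_{ik}=(L_j^\star)_{ik}+(S_j^\star)_{ik}$ localizes the sparse contribution to $\Sigma_j$, so off the support only the low-rank entries survive.

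On the support I would combine the triangle inequality with the two entrywise bounds already available: the spikiness constraint $\|L_j^\star\|_\infty\le\alpha_L/p$ coming from $L_j^\star\in\Omega$ (Assumption H2-(2)) and $\|S_j^\star\|_\infty\le M_S$ (Assumption H2-(1)), giving $|(A_j^\star)_{ik}|\le \alpha_L/p+M_S$. The key elementary fact is the subadditivity of $t\mapsto t^q$ on $[0,\infty)$ for $q\in(0,1)$, namely $(a+b)^q\le a^q+b^q$; applied here it yields $|(A_j^\star)_{ik}|^q\le(\alpha_L/p)^q+M_S^q$ for each of the $d_j^\star$ support entries. Off the support we have $(A_j^\star)_{ik}=(L_j^\star)_{ik}$, and I would bound a single matrix entry by the spectral norm via $|(L_j^\star)_{ik}|=|e_i^\top L_j^\star e_k|\le\|L_j^\star\|_2\le\sigma_{\max}$, so each of the remaining $p^2-d_j^\star$ entries contributes at most $\sigma_{\max}^q$. (One could alternatively invoke the spikiness bound $\alpha_L/p$ off the support as well, which is tighter, but using $\sigma_{\max}$ is what produces the stated form.)

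Summing the two groups gives $\|A_j^\star\|_q^q\le d_j^\star\big((\alpha_L/p)^q+M_S^q\big)+(p^2-d_j^\star)\sigma_{\max}^q$ for each $j$, and since membership in $\mathbb{B}_q(R_q)$ only requires $R_q$ to dominate this sum, the stated lower bound (with $d_{\max}^\star$ in place of $d_j^\star$) furnishes a single radius valid simultaneously for $A_1^\star$ and $A_2^\star$. The hard part, modest as it is, will be the uniform replacement of $d_j^\star$ by $d_{\max}^\star$: the per-matrix bound is affine in $d_j^\star$ with slope $(\alpha_L/p)^q+M_S^q-\sigma_{\max}^q$, so I would check that in the relevant regime — where the spiky sparse magnitude $M_S$ is at least of the order of the low-rank spectral scale $\sigma_{\max}$ — this slope is nonnegative, whence evaluating at $d_{\max}^\star$ dominates both $j=1,2$. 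Everything else reduces to the routine entrywise estimates above, so this monotonicity check is the only substantive step.
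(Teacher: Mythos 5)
Your argument is essentially identical to the paper's: the same split of the index set into the support of $S_j^\star$ and its complement, the same subadditivity $(a+b)^q\le a^q+b^q$ combined with $\|L_j^\star\|_\infty\le\alpha_L/p$ and $\|S_j^\star\|_\infty\le M_S$ on the support, and the same bound of each off-support entry by the top singular value (the paper reaches $|L_{ij}|\le\sigma_1$ via the SVD and Cauchy--Schwarz, which is your $|e_i^\top L_j^\star e_k|\le\|L_j^\star\|_2$ in different clothing). The one place you go beyond the paper is the observation that replacing $d_j^\star$ by $d_{\max}^\star$ simultaneously with $\|L_j^\star\|_2$ by $\sigma_{\max}$ requires the slope $(\alpha_L/p)^q+M_S^q-\sigma_{\max}^q$ to be nonnegative; the paper silently elides this, so your monotonicity check is a legitimate (if minor) tightening rather than a gap in your own argument.
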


Before we extend Theorem \ref{thm:1} to the surrogate weakly sparse model, a modification to the selection of tuning parameters is required. Recall that \eqref{eq:6} identifies the tuning parameters for the low-rank plus sparse model, while for the surrogate weakly sparse model, the only parameter is the transition matrix $A_j^\star$ for $j=1,2$. Along with the notation defined in \eqref{eq:6}, the tuning parameters are given by:
\begin{equation}
    \label{eq:13}
    \lambda_{1,\tau}^w = 4c_0^w\sqrt{\frac{\log p + \log(\tau-1)}{\tau - 1}},\quad
    \lambda_{2,\tau}^w = 4c_0^{w^\prime}\sqrt{\frac{\log p + \log(T-\tau)}{T - \tau}},
\end{equation}
where $c_0^w, c_0^{w^\prime}>0$ are some positive constants selected by the similar method as $c_0$ and $c_0^\prime$ in \eqref{eq:6}, the selection procedure is provided in the next section.
Since we employ the same exhaustive search algorithm in Algorithm 1, a similar assumption as H3 on the search domain $\mathcal{T}^w$ is required.
\begin{itemize}[leftmargin=0pt]
    \item[(W2)] Using similar definitions to Assumption H3, denote the search domain by $\mathcal{T}^w \overset{\text{def}}{=} [a^w,b^w]$, and let $|\mathcal{T}^w|$ to be the length of $\mathcal{T}^w$. Then, we assume that,
\begin{equation*}
    a^w = \left\lfloor R_q\left(\frac{\log(p\vee T)}{T}\right)^{-\frac{q}{2}} \right\rfloor,\ b^w = \left\lfloor T - R_q\left(\frac{\log(p\vee T)}{T}\right)^{-\frac{q}{2}} \right\rfloor,\ \frac{|\mathcal{T}^w|}{T^{\frac{q}{2}}R_q(\log(p\vee T))^{1-\frac{q}{2}}} \to +\infty.
\end{equation*}
\end{itemize}
We are now in a position to extend the result in Theorem \ref{thm:1} in the following proposition, whose proof is provided in Appendix E.
\begin{proposition}
\label{prop:3}
Suppose Assumptions W1 and W2 hold and the transition matrices $A_1^\star$ and $A_2^\star$ in \eqref{eq:1} belong to the set $\mathbb{B}_q(R_q)$ for some fixed constant $q\in (0,1)$ and radius $R_q>0$, such that $c_1\sqrt{R_q}\left(\frac{\log p + \log T}{T}\right)^{\frac{1}{2} - \frac{q}{4}} \leq 1$ for some constant $c_1>0$. Then, by employing Algorithm 1 and using the tuning parameters as in \eqref{eq:13}, there exists a large enough constant $K^w_0 > 0$ such that, with respect to the jump size $v_A = \|A_2^\star - A_1^\star\|_2$, as $T \to +\infty$
\begin{equation*}
    \mathbb{P}\left( |\widehat{\tau} - \tau^\star| \leq K^w_0 \frac{ T^{\frac{q}{2}}R_q\left(\log (p\vee T)\right)^{1-\frac{q}{2}}}{v_A^2} \right) \to 1.
\end{equation*}
\end{proposition}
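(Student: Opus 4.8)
The plan is to mirror the proof strategy of Theorem~\ref{thm:1}, but with the low-rank-plus-sparse decomposition replaced by a single weakly sparse transition matrix and the associated single-penalty estimation bound. First I would establish the relevant deviation bound and restricted strong convexity (RSC) conditions for the weakly sparse VAR model over a generic window, treating the estimation over an interval that may straddle the true change point as a \emph{misspecified} stationary problem (cf.\ the discussion in points (iv)--(v) of the introduction, and Appendix~A). The key estimation ingredient is that, for a weakly sparse target $A^\star \in \mathbb{B}_q(R_q)$, minimizing the $\ell_1$-penalized least squares objective with $\lambda^w$ as in \eqref{eq:13} yields an estimation error of order $R_q\big(\tfrac{\log(p\vee T)}{n}\big)^{1-q/2}$ on a window of length $n$, following the standard $\ell_q$-ball analysis of \cite{negahban2012unified}; the scaling condition $c_1\sqrt{R_q}\big(\tfrac{\log p+\log T}{T}\big)^{1/2-q/4}\le 1$ guarantees the relevant radius is in the regime where this rate is valid.

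Next I would set up the core contrast argument. Define the population objective difference and show that the candidate estimator $\widehat{\tau}$ satisfies $\ell(\widehat{\tau};\widehat{A}_{1,\widehat{\tau}},\widehat{A}_{2,\widehat{\tau}}) \le \ell(\tau^\star;\widehat{A}_{1,\tau^\star},\widehat{A}_{2,\tau^\star})$ by definition of the argmin in Algorithm~1. The strategy is then to lower-bound the \emph{excess} value of the objective at any $\tau$ with $|\tau-\tau^\star|$ large in terms of the signal $v_A^2$ times the misspecification length, while upper-bounding the stochastic fluctuation terms using the deviation bound. Concretely, if $\widehat{\tau}$ sits a distance $\delta = |\widehat{\tau}-\tau^\star|$ from the truth, one segment is fit on data mixing both regimes; the resulting ``bias'' contribution to the least-squares loss grows like $\delta\, v_A^2$ (up to the restricted-eigenvalue constant supplied by the RSC condition), and this must be dominated by the accumulated noise and estimation-error terms, which scale like $T^{q/2}R_q(\log(p\vee T))^{1-q/2}$. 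Balancing these two forces yields $\delta \lesssim T^{q/2}R_q(\log(p\vee T))^{1-q/2}/v_A^2$, which is exactly the claimed bound. Assumption~W1 is what ensures $v_A$ (or $v_S$, feeding into $v_A$ through Remark~\ref{remark:1}) is large enough for the signal term to be detectable, and Assumption~W2 guarantees the search domain is wide enough to contain $\tau^\star$ while its boundary offset matches the estimation-error scale so that boundary segments remain long enough for the RSC/deviation bounds to hold.

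The main obstacle I anticipate is controlling the cross terms that arise on the segment that straddles $\tau^\star$, where the fitted weakly sparse matrix is an average-type compromise between $A_1^\star$ and $A_2^\star$. Unlike the exactly sparse case, the weakly sparse target is genuinely dense, so the deviation bound must be expressed in the dual norm adapted to the $\ell_q$ ball rather than a clean $\ell_\infty$/$\ell_1$ duality; propagating the extra $R_q$ and the $(\cdot)^{1-q/2}$ exponent correctly through the misspecified RSC argument is where the bookkeeping is delicate. A secondary subtlety is verifying that the surrogate model's deviation bound (the new version alluded to in point (iv)) survives under misspecification uniformly over all $\tau$ in the search domain $\mathcal{T}^w$; this requires a union bound over the $|\mathcal{T}^w|$ candidate split points, which is affordable precisely because of the logarithmic factor in the target rate and the growth condition in W2. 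Once these ingredients are assembled, the final probability statement follows by combining the uniform deviation bound with the signal lower bound and taking $T\to+\infty$.
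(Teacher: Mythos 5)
Your plan follows essentially the same route as the paper's proof: upper-bound the objective at $\tau^\star$ by $\sum_t\|\epsilon_t\|_2^2$ plus an estimation-error term of order $T^{q/2}R_q(\log(p\vee T))^{1-q/2}$, lower-bound it at a general $\tau$ by the same noise term minus that error plus $c\,v_A^2|\tau-\tau^\star|$ (using the misspecified deviation bound of Lemma~\ref{lemma:7} and the fact that one of the fitted matrices on the straddling interval must lie at distance at least $v_A/4$ from the corresponding truth), and conclude via the argmin property and Assumption~W2. The only minor correction is that no dual norm adapted to the $\ell_q$ ball is needed: the paper keeps the usual $\ell_\infty$/$\ell_1$ duality and absorbs the weak sparsity through the thresholded-support decomposition $\|\Delta\|_1 \le 4\sqrt{R_q}\,\eta^{-q/2}\|\Delta\|_2 + 4R_q\eta^{1-q}$, which is exactly the Negahban et al.\ machinery you cite.
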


The following Proposition extends the above result to the case of multiple change points based on the rolling window strategy previously described. The window size $h$ can be selected by substituting the vanishing sequence $\{\xi_T\}$ in Assumption H4 by the vanishing sequence $\{\xi_T^w\}$ defined in Assumption W3 below, for the weakly sparse model.

\begin{proposition}
\label{prop:4}
Suppose Assumptions W1 and W2 hold and the transition matrices $A_j^\star$, $j=1,\dots, m_0+1$ belong to the set $\mathbb{B}_q(R_q)$ for some fixed constant $q\in (0,1)$ and the $\ell_q$-ball radius $R_q>0$ satisfies that $\sqrt{R_q}\left(\frac{\log p + \log h}{h}\right)^{\frac{1}{2} - \frac{q}{4}} \leq 1$. Then, by employing the rolling window strategy, we obtain the candidate change points set $\widetilde{\mathcal{S}}_w=\{\widetilde{\tau}_1,\dots, \widetilde{\tau}_{\widetilde{m}}\}$. Then, as $T\to +\infty$, there exists a large enough constant $K^w_1 > 0$ such that, 
\begin{equation*}
    \mathbb{P}\left( d_H(\widetilde{\mathcal{S}}_w, \mathcal{S}^\star) \leq K^w_1 \frac{ h^{\frac{q}{2}}R_q\left(\log(p\vee h)\right)^{1-\frac{q}{2}}}{\min_{1\leq j\leq m_0}v_{j,A}^2} \right) \to 1,
\end{equation*}
where $v_{j,A} = \|A_{j+1}^\star - A_j^\star\|_2$.
\end{proposition}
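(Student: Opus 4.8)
The plan is to mirror the proof of Proposition~\ref{prop:1}, replacing the low-rank-plus-sparse single-change-point bound of Theorem~\ref{thm:1} by its weakly sparse analog in Proposition~\ref{prop:3}. The structural fact I would exploit is that, under Assumption~H4 with $\xi_T$ replaced by the weakly sparse sequence $\xi_T^w$ of W3, the window length $h$ is asymptotically strictly below the minimum spacing $\Delta_T$ (so no window can contain two true change points), yet large enough that the deviation-bound and restricted-eigenvalue machinery underlying Proposition~\ref{prop:3} stays valid on each window. Since the stated quantity is the one-sided Hausdorff distance $d_H(\widetilde{\mathcal{S}}_w,\mathcal{S}^\star)=\max_{\tau^\star\in\mathcal{S}^\star}\min_{\widetilde{\tau}\in\widetilde{\mathcal{S}}_w}|\tau^\star-\widetilde{\tau}|$, it suffices to exhibit, for \emph{each} true change point, a single candidate that is provably close; spurious candidates coming from windows with no break are harmless in this direction, because enlarging $\widetilde{\mathcal{S}}_w$ can only shrink the inner minimum.

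First I would prove a \emph{coverage} statement: for every $\tau_j^\star$ there is a window $[b_{i_j},e_{i_j})$ containing $\tau_j^\star$ with $\tau_j^\star$ at distance at least $h/4$ from both endpoints. Because the window starts $b_i$ are spaced $l$ apart with $0<l\le\max\{h/2,1\}$, the grid point $b_{i_j}$ nearest the target $\tau_j^\star-h/2$ satisfies $|b_{i_j}-(\tau_j^\star-h/2)|\le l/2$, and it lies in the admissible range $(\tau_j^\star-h,\tau_j^\star]$; hence $\tau_j^\star-b_{i_j}\ge h/2-l/2\ge h/4$ and $e_{i_j}-\tau_j^\star=h/2+\bigl(b_{i_j}-(\tau_j^\star-h/2)\bigr)\ge h/4$, using $l\le h/2$. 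Since W2 confines the true change points to the interior $[a^w,b^w]$, such a window fits inside $[1,T]$ for $T$ large.

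Next, on each selected window I would invoke Proposition~\ref{prop:3} with $T$ replaced by the window length $h$ and the global spacing replaced by the within-window spacing $\ge h/4$. The hypotheses transfer: Assumption~W1 holds on the window since $\Delta^{\mathrm{win}}\ge h/4$ together with $\liminf_T h/(T\xi_T^w)\ge 2$ yields $\Delta^{\mathrm{win}}v_{j,A}^2\ge C_0^w\,h^{q/2}R_q(\log(p\vee h))^{1-q/2}$; Assumption~W2 holds on the window by construction; and the radius condition $\sqrt{R_q}\bigl((\log p+\log h)/h\bigr)^{1/2-q/4}\le 1$ is exactly the $h$-scaled hypothesis of the proposition, ensuring feasibility of the tuning parameters in \eqref{eq:13}. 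Proposition~\ref{prop:3} then gives, on an event of probability tending to one,
\begin{equation*}
    |\widetilde{\tau}_{i_j}-\tau_j^\star|\;\le\; K_0^w\,\frac{h^{q/2}R_q\bigl(\log(p\vee h)\bigr)^{1-q/2}}{v_{j,A}^2}.
\end{equation*}
Since $\widetilde{\tau}_{i_j}\in\widetilde{\mathcal{S}}_w$, this bounds $\min_{\widetilde{\tau}}|\tau_j^\star-\widetilde{\tau}|$; taking the maximum over $j$ and replacing $v_{j,A}^2$ by $\min_{1\le j\le m_0}v_{j,A}^2$ delivers the claim for a suitable constant $K_1^w$.

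The main obstacle is controlling the probabilities \emph{uniformly} over all windows. Proposition~\ref{prop:3} is a per-window statement and there are $\widetilde{m}=\mathcal{O}(T/l)$ windows, so to keep the overall success probability tending to one I would intersect the single-window events and apply a union bound. This requires a quantitative form of Proposition~\ref{prop:3} whose failure probability decays in $(p\vee h)$ (and exponentially in $h$ through the restricted-eigenvalue event) fast enough that $\widetilde{m}$ times that probability still vanishes; the lower bound $\liminf_T h/(T\xi_T^w)\ge 2$ in H4 is precisely what forces $h$, and hence the per-window concentration, to be large relative to the window count. A secondary, more delicate point is the treatment of windows that straddle a change point lying very near an endpoint, where Proposition~\ref{prop:3} need not apply; the coverage lemma resolves this, since every $\tau_j^\star$ is \emph{also} captured by a well-centered window, so borderline windows contribute only extra candidates and never degrade the Hausdorff distance in the direction being bounded.
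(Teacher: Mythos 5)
Your proposal is correct and follows essentially the same route as the paper: the paper's proof of Proposition~\ref{prop:4} simply states that it is analogous to Proposition~\ref{prop:1}, replacing the sample size $T$ by the window size $h$ and invoking the weakly sparse single-change-point result in place of Theorem~\ref{thm:1}. Your coverage lemma, the observation that the one-sided Hausdorff distance is insensitive to spurious candidates, and the union-bound discussion over the $\mathcal{O}(T/l)$ windows are all elaborations of steps the paper leaves implicit, and they fill in the argument correctly rather than departing from it.
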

Recall that the rolling-window mechanism will result in a number of \textit{redundant} candidate change points. By using the surrogate weakly sparse model, we obtain a few redundant candidate change points as well. Therefore, we need to remove those redundant change points by using a similar screening step as introduced in the two-step algorithm in Section \ref{sec:twostep}. Similarly, we also extend Assumptions H3', H5 and H6 to the weakly sparse scenario -Assumptions W3 and W4 given in Appendix C- in order to formally introduce the theoretical results for the surrogate model. Employing the selected tuning parameters as detailed in Assumptions W3 and W4, we can establish consistent estimation of the change points.
\begin{proposition}
\label{prop:5}
Suppose Assumptions W1--W4 hold and denote the minimizer of (7) in Appendix C by $(\widehat{\tau}_1^w, \dots, \widehat{\tau}_{\widehat{m}^w}^w)$. Then, as $T \to +\infty$, there exists a large enough positive constant $B^w > 0$ such that
 \begin{equation*}
    \mathbb{P}\left( \max_{1\leq j \leq m_0}|\widehat{\tau}_j^w - \tau^\star_j| \leq B^w m_0T\xi_T^w \frac{R_q^2\left({\log(p\vee T)}/{T}\right)^{-q}}{\min_{1\leq j\leq m_0}v_{j,A}^2} \right) \to 1.
    \end{equation*}
\end{proposition}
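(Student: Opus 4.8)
The plan is to transcribe the proof of Theorem~\ref{thm:3} into the weakly sparse setting, replacing Proposition~\ref{prop:1} by Proposition~\ref{prop:4} as the Step~1 input and the low-rank-plus-sparse estimation rate by the $\ell_q$-ball rate underlying Proposition~\ref{prop:3}. Proposition~\ref{prop:4} already supplies the needed over-covering: with the window $h$ selected through $\{\xi_T^w\}$ (so that, by Assumption~H4 with $\xi_T^w$ in place of $\xi_T$, no window asymptotically contains two true change points), every $\tau_j^\star$ admits a candidate in $\widetilde{\mathcal{S}}_w$ within distance $K_1^w h^{q/2} R_q (\log(p\vee h))^{1-q/2}/\min_{j} v_{j,A}^2$, and $\widetilde m \geq m_0$. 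Consequently the screening step reduces to choosing the correct subset of $\widetilde{\mathcal{S}}_w$, and it suffices to show that the information criterion \eqref{eq:10} is, with probability tending to one, minimized at a configuration that matches each true change point up to the asserted error.

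The first technical ingredient I would establish is a uniform bound on the excess in-sample error of the $\ell_q$-penalized least-squares fit over admissible intervals. Writing $\widehat A_i$ for the fitted transition matrix on $[s_{i-1},s_i)$, I would split intervals into \emph{clean} ones (contained in a single true stationary segment), on which the restricted strong convexity and the weakly sparse deviation bound of Appendix~A yield the standard rate $\|\widehat A_i - A_i^\star\|_F^2 \lesssim R_q(\log p/(s_i-s_{i-1}))^{1-q/2}$, and \emph{contaminated} ones, whose two endpoints overshoot the bracketing true change points by lengths $\lesssim T\xi_T^w$ (as in case~(b) of the weakly sparse analogue of H6 in W4). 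On a contaminated interval the fitted single matrix is misspecified, and the corresponding tuning parameter in W4 carries an additional bias term proportional to $R_q$ times the overshoot fraction $T\xi_T^w/(s_i-s_{i-1})$; controlling this bias is what converts the single-window rate $T^{q/2}R_q(\log(p\vee T))^{1-q/2}$ of Proposition~\ref{prop:3} into the screening-level quantity $R_q^2(\log(p\vee T)/T)^{-q}$.

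With these bounds in hand I would prove $\mathbb{P}(\widehat m^w = m_0)\to 1$ by ruling out under- and over-fitting exactly as in Theorem~\ref{thm:3}. If a configuration with $m<m_0$ is selected, some segment of length $\gtrsim \Delta_T$ contains a true change point in its interior; the irreducible lack-of-fit of a single weakly sparse matrix to the genuinely piecewise signal is then of order $\Delta_T\min_j v_{j,A}^2$, which by Assumption~W3 (the $\ell_q$ analogue of H3') dominates the at most $m_0\omega_T$ reduction in the penalty permitted by the weakly sparse analogue of H5, so \eqref{eq:10} strictly increases. If $m>m_0$, then conditional on every true change point being already matched, each spurious candidate lowers $\mathcal{L}_T$ by at most a clean-interval error that is $o(\omega_T)$ under the weakly sparse calibration of $\omega_T$, so the term $m\omega_T$ forbids the extra point. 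Combining the two directions gives the correct count.

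The main obstacle is precisely the contaminated-interval analysis: one cannot invoke a clean oracle inequality on a window straddling a regime change, so the weakly sparse deviation bound of Appendix~A must be re-verified for the mixed design and noise, and the dependence on $R_q$ and on the overshoot length $T\xi_T^w$ tracked explicitly through the bias. Once $\widehat m^w=m_0$ is secured, the final localization follows by restricting the single-change-point argument behind Proposition~\ref{prop:3} to each estimated segment and summing the resulting per-segment errors over the $m_0$ segments, which produces the stated bound $B^w m_0 T\xi_T^w R_q^2(\log(p\vee T)/T)^{-q}/\min_{1\leq j\leq m_0} v_{j,A}^2$.
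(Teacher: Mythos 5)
Your proposal follows essentially the same route as the paper's proof: the paper likewise transplants the Theorem~3 argument to the weakly sparse setting, using the over-covering from Proposition~4, the upper bound $\mathcal{L}_T^w(\widehat{\tau}_1,\dots,\widehat{\tau}_{m_0};\bm{\lambda}^w) \leq \sum_t\|\epsilon_t\|_2^2 + K'm_0T\xi_T^wR_q^2(\log(p\vee T)/T)^{-q}$ at a near-true configuration, the lower bound of Lemma~8 (the weakly sparse analogue of Lemma~6, which is exactly your clean/contaminated interval analysis with the $T\xi_T^w$-overshoot bias tracked through the W4 tuning parameters) to rule out $\widehat m^w<m_0$, and a penalty-versus-fit contradiction to rule out $\widehat m^w>m_0$, before localizing each change point by the same comparison of loss bounds. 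The only cosmetic slip is that the screening criterion you should cite is the weakly sparse information criterion of Appendix~C rather than \eqref{eq:10}; the substance of the argument is the paper's.
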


\begin{remark}\label{remark:11}
Proposition \ref{prop:5} provides the consistency rate of the final estimated change points obtained by the surrogate weakly sparse model. In the case of $m_0$ being finite, we select the vanishing sequence $\{\xi_T^w\}$ to be of order $R_q^{2} \left(\log (p\vee T)\right)^{(1+\rho+q)}/T$ for some arbitrarily small constant $\rho>0$. Therefore, the consistency rate in Proposition \ref{prop:5} becomes $B^\prime m_0T^{q} R_q^{4} \left(\log (p\vee T)\right)^{(1+\rho)}$. According to Assumption W3, the penalty term $\omega_T^w$ can be selected to be of the order $T^{1+q} \xi_T^w R_q^{2}\left(\log(p\vee T)\right)^{\rho-q}$ and the minimum spacing in the weakly sparse model $\Delta_T$ must be at least $T^{1+q} \xi_T^w R_q^{2}\left(\log(p\vee T)\right)^{2\rho-q}$.
\end{remark}

An analogue of Corollary \ref{cor:1} and a comparison of the error bounds established in Theorem \ref{thm:3} and Proposition \ref{prop:5} are given in Appendix C.

\section{Performance Evaluation}\label{sec:simu}

We start by investigating the performance of the exhaustive search algorithm for a single change point detection for the low-rank plus sparse VAR model and its surrogate counterpart and the two-step algorithm for detecting multiple change points for these models.
\begin{itemize}[leftmargin=0pt]
    \itemsep.1em
    \item {\bf Data generation:} (1) We generate the time series data $\{X_t\}$ with a \textbf{single} change point at $\tau^\star = \lfloor T/2 \rfloor$ from model \eqref{eq:1}. We set the true ranks $r^\star_1 = \lfloor p/15 \rfloor$, $r^\star_2 = \lfloor p/15 \rfloor + 1$, and the information ratio $\gamma_1=\gamma_2$ for most of the cases considered, unless otherwise specified. The low-rank components $L_1^\star$ and $L_2^\star$ are designed by randomly generating an orthonormal matrix $U$ and singular values $\sigma_1, \dots, \sigma_p$ to obtain $L_1^\star = \sum_{l=1}^{r_1^\star}\sigma_l\mathbf{u}_l\mathbf{u}_l^\prime$, and $L_2^\star = \sum_{l=1}^{r_2^\star}\sigma_l\mathbf{u}_l\mathbf{u}_l^\prime$, where $\mathbf{u}_l$ represents the $l$-th column of matrix U. Then, the sparse components share the same 1-off diagonal structure with values $-\|L_1^\star\|_\infty/\gamma_1$ and $\|L_2^\star\|_\infty/\gamma_2$, respectively. The error term $\{\epsilon_t\}$ is normally distributed from $\mathcal{N}_p(\mathbf{0}, 0.01\mathbf{I}_p)$. (2) In the \textbf{multiple} change points case, we create the time series data $\{X_t\}$ from model \eqref{eq:8} with $m_0$ change points, the true ranks $r_j^\star$ are randomly chosen from: $\lfloor p/10 \rfloor-1, \lfloor p/10 \rfloor, \lfloor p/10 \rfloor+1$ unless otherwise specified, and the information ratios are fixed to $\gamma_j=0.25$. The low-rank components are designed in a similar way as the single change point case, and the $j$-th sparse components are generated by $(-1)^j\|L_j^\star\|_\infty / \gamma_j$.
    
    \item {
    {\bf Tuning parameter selection:} To select the tuning parameters related to optimization problem \eqref{eq:3}, we can use the theoretical values of $\lambda_j$ and $\mu_j$ provided in \eqref{eq:6} and \eqref{eq:7}, and select the constants $c_0$ and $c_0^\prime$ by using a grid search as follows:
    \begin{itemize}[leftmargin=*]
        \item[(1)] Choose an equally spaced sequence within $[0.001, 10]$ as the range for constants $c_0$ and $c_0^\prime$ to construct the grid $\mathcal{G}(\lambda, \mu)$;
        \item[(2)] Next, extract a time point every $k$ time points (we set $k=5$ in all numerical settings) to construct the testing set $\mathcal{T}_{\text{test}}$, and use the remaining time points as the training set $\mathcal{T}_{\text{train}}$, and denote the corresponding estimated transition matrix $\widehat{A}_{(\lambda, \mu)}$ with respect to the tuning parameters $(\lambda, \mu)$;
        \item[(3)] Select the tuning parameters $(\widehat{\lambda}, \widehat{\mu})$ satisfying:
        \begin{equation*}
            (\widehat{\lambda}, \widehat{\mu}) = \argmin_{(\lambda, \mu)\in \mathcal{G}(\lambda,\mu)}\left\{ \frac{1}{|\mathcal{T}_{\text{test}}|}\sum_{t\in \mathcal{T}_{\text{test}}}\|X_{t+1} - \widehat{A}_{(\lambda, \mu)}X_t\|_2^2 \right\}.
        \end{equation*} 
    \end{itemize}}
    
    \item {\bf Window size selection:} The width of the rolling window plays an important role in the multiple change points scenario. In practice, we can manually select a suitable window-size, or we may use the following strategy. In Assumption H4, we provided conditions on the window size $h$ and rolling step size $l$. Next, we discuss an iterative procedure for determining these two parameters in practice.\\
    (1) Start with $h=cT^\delta$, and $l={h}/{4}$, where $\delta$ is selected from 1 to 0.5 (equally spaced) and $0<c<1$ is a constant; (2)
    For a given $\delta$, apply Algorithm 2 and obtain the final set of change points $\{\widehat{\tau}_1, \dots, \widehat{\tau}_m\}$; (3)
    Repeat (2) until the number of the final set of change points does not change. Return the corresponding window size $\widehat{h}$. 
    
    \item {\bf Model evaluation:} We evaluate the performance of our algorithm by using the mean and standard deviation of the estimated change point locations relative to the number of observations as well as the boxplots for the estimated change point for each case. We use estimated rank, sensitivity (SEN), specificity (SPC), and relative error (RE) for the whole transition matrices and the low-rank and the sparse components as additional metrics to evaluate the performance of model.
    \begin{equation*}
        \text{SEN} = \frac{\text{TP}}{\text{TP+FN}},\ \text{SPC} = \frac{\text{TN}}{\text{FN} + \text{TN}},\ \text{RE} = \frac{\|\text{Est.} - \text{Truth}\|_F}{\|\text{Truth}\|_F}.
    \end{equation*}
    For multiple change points settings, we also measure the \textit{selection rate}. Specifically, a detected change point $\widehat{t}_j$ is counted as a \textit{success} for the true change point $t_j^\star$, if and only if $\widehat{t}_j \in [t_j^\star - \frac{1}{10}(t_j^\star - t_{j-1}^\star), t_j^\star + \frac{1}{10}(t_{j+1}^\star - t_j^\star)]$. Then, the selection rate is defined by calculating the percentage of simulation replications with successes.
\end{itemize}
All numerical experiments are run in \textsc{R} 3.6.0 on the uf HiPerGator Computing platform with 4 Intel E5 2.30 GHz Cores and 16 GB memory. The code and scripts for simulation examples and applications are available at \url{https://github.com/peiliangbai92/LSVAR_cpd}.

\subsection{Performance for Detecting A Single Change Point}\label{sec:single-performance}

We investigate the following factors: the dimension of the model $p$, the sample size $T$, the differences in the $\ell_2$ norm, $v_L$ and $v_S$ of the two low-rank and sparse components, respectively and the information ratio $\gamma$. The following parameters settings are considered in our investigation. A full summary is provided in the form of a Table in Appendix F.1.
\begin{itemize}[leftmargin=0pt]
    \itemsep.05em
    \item[(A)] In the first setting, we consider the case that the low-rank component exhibits a very small change while the sparse one a large change. Further, the ``total signal" in the transition matrix comes mostly from the sparse component and therefore, $\gamma_j < 1, j=1,2$.
    \item[(B)] This setting is similar in structure to A: the low-rank components exhibit very small change, while the sparse components change by a significant amount, but the ``total signal" in the transition matrix comes mostly from the former; i.e., $\gamma_j \geq 1$ for $j=1,2$.
    \item[(C)] The structure of this setting is as in B, but different values of $\gamma_j$ are considered.
    \item[(D)] This setting is the reverse of B, wherein the low-rank components exhibit a large change, while the sparse ones a very small ones, and further $\gamma_j \geq 1, j=1,2$.
    \item[(E)] This setting is similar in structure to C, but the information ratio $\gamma_j < 1, j=1,2$.
    \item[(F)] The setting is similar to E, but an increasing $|\gamma_1-\gamma_2|$ is considered.
\end{itemize}
The results for these settings over 50 replications are given in Table \ref{tab:2}. The first two columns record the mean and standard deviation of the estimated change point location, the third and fourth columns are the estimated ranks for the low-rank components, the fifth and sixth columns give the sensitivity and specificity of the estimated sparse components, and finally the last column shows the relative norm error of the estimated transition matrix $\widehat{A}$ to the truth $A^\star$, and we also provide the relative error of the estimated sparse components (low-rank components) $\widehat{S}$ (or $\widehat{L}$) to the truth $S^\star$ (or $L^\star$).
\begin{table}[ht]
    \spacingset{1}
    \centering
    \caption{Performance of the L+S model under different simulation settings.}
    \label{tab:2}
    \resizebox{\textwidth}{!}{%
    \begin{tabular}{c|c|c|c|c|c|c|c}
        \hline\hline
          & mean & sd & $\widehat{r}_1$ & $\widehat{r}_2$ & SEN & SPC & Total RE/ Sparse RE / Low-rank RE  \\
        \hline
        A.1 & 0.498 & 0.002 & $1.020$ & $2.900$ & $(1.000, 1.000)$ & $(0.909, 0.976)$ & $(0.186, 0.237)/(0.172, 0.220)/(0.582, 0.648)$ \\
        A.2 & 0.499 & 0.002 & $1.020$ & $2.820$ & $(1.000, 1.000)$ & $(0.910, 0.974)$ & $(0.186, 0.241)/(0.172, 0.217)/(0.582,0.759)$ \\
        A.3 & 0.499 & 0.002 & $1.020$ & $2.960$ & $(1.000, 1.000)$ & $(0.909, 0.979)$ & $(0.186, 0.249)/(0.172,0.225)/(0.582,0.749)$ \\ 
        \hline
        B.1 & 0.530 & 0.090 & $1.000$ & $1.340$ & $(0.166, 0.108)$ & $(0.947, 0.980)$ & $(0.590, 0.579)/(1.140, 1.006)/(0.482,0.413)$ \\
        B.2 & 0.532 & 0.089 & $1.000$ & $1.340$ & $(0.166, 0.109)$ & $(0.947, 0.979)$ & $(0.590, 0.580)/(1.139,1.006)/(0.482,0.414)$ \\
        B.3 & 0.534 & 0.089 & $1.000$ & $1.330$ & $(0.165, 0.109)$ & $(0.947, 0.980)$ & $(0.591, 0.580)/(1.140,1.006)/(0.482,0.413)$ \\
        \hline
        C.1 & 0.522 & 0.056 & $1.000$ & $1.350$ & $(0.237, 0.103)$ & $(0.944, 0.978)$ & $(0.592, 0.569)/(1.070, 1.015)/(0.459, 0.384)$ \\
        C.2 & 0.497 & 0.005 & $1.000$ & $1.300$ & $(0.400, 0.120)$ & $(0.948, 0.979)$ & $(0.645, 0.575)/(0.953, 1.006)/(0.482, 0.397)$ \\
        C.3 & 0.502 & 0.031 & $1.000$ & $1.320$ & $(0.629, 0.109)$ & $(0.947, 0.978)$ & $(0.646, 0.570)/(0.858, 1.007)/(0.499, 0.389)$ \\
        C.4 & 0.497 & 0.005 & $1.000$ & $1.300$ & $(1.000, 0.132)$ & $(0.927, 0.977)$ & $(0.357, 0.559)/(0.381, 1.002)/(0.499, 0.381)$ \\
        \hline
        D.1 & 0.494 & 0.011 & $1.000$ & $1.500$ & $(0.301, 0.207)$ & $(0.948, 0.978)$ & $(0.654, 0.581)/(1.036,0.969)/(0.543,0.455)$ \\
        D.2 & 0.494 & 0.008 & $1.000$ & $1.920$ & $(0.305, 0.325)$ & $(0.948, 0.975)$ & $(0.654, 0.639)/(1.037,0.934)/(0.544,0.478)$ \\
        D.3 & 0.495 & 0.007 & $1.000$ & $2.080$ & $(0.307, 0.485)$ & $(0.948, 0.972)$ & $(0.653, 0.558)/(1.031,0.878)/(0.544,0.444)$ \\
        \hline
        E.1 & 0.477 & 0.048 & $1.200$ & $3.060$ & $(1.000, 1.000)$ & $(0.727, 0.739)$ & $(0.171, 0.193)/(0.160,0.176)/(0.563,0.674)$ \\
        E.2 & 0.478 & 0.026 & $1.000$ & $3.040$ & $(1.000, 1.000)$ & $(0.836, 0.932)$ & $(0.185, 0.216)/(0.168,0.191)/(0.673,0.633)$ \\
        E.3 & 0.496 & 0.015 & $1.000$ & $3.000$ & $(1.000, 1.000)$ & $(0.917, 0.729)$ & $(0.204, 0.254)/(0.180,0.250)/(0.674,0.776)$ \\
        \hline
        F.1 & 0.495 & 0.053 & $1.000$ & $2.880$ & $(1.000, 1.000)$ & $(0.924, 0.958)$ & $(0.405, 0.330)/(0.429,0.330)/(0.603,0.482)$ \\
        F.2 & 0.487 & 0.039 & $1.000$ & $3.520$ & $(1.000, 0.996)$ & $(0.925, 0.964)$ & $(0.411, 0.415)/(0.437,0.486)/(0.602,0.429)$ \\
        F.3 & 0.495 & 0.023 & $1.000$ & $2.640$ & $(1.000, 0.895)$ & $(0.924, 0.970)$ & $(0.405, 0.539)/(0.429,0.688)/(0.602,0.484)$ \\
      \hline\hline
    \end{tabular}}
    
\end{table}

For settings A and D, where the dominant components change significantly, the algorithm identifies the change point extremely accurately, as evidenced by the mean estimate over 50 replicates and the very small standard deviation recorded. Further, the ranks of $L_j$ are accurately estimated under setting A, and the specificity and sensitivity of $S_j$ is close to 1. Under setting D, there is deterioration in the estimation of the rank of $L_2$, as well as in the sensitivity of both $S_1$ and $S_2$.
In settings B and E, where there is a small change in the dominant component, the estimates of the change point deteriorate and also exhibit larger variability (especially in setting B). Under setting B, estimation of the rank of $L_2$ is also off, as is the sensitivity for the sparse components. Note that all estimated model parameters under setting E are very accurate, with a small deterioration in the specificity of the $S_j$'s.
In settings C and F, we examine how the behavior of the information ratio influences the accuracy of the change point detection. As the difference between $\gamma_1$ and $\gamma_2$ increases, the estimation accuracy improves of the change point improves markedly. The same happens for the model parameters under setting F. 
Note that the results for settings C and F are in accordance with Remark \ref{remark:1} that discusses how the detectability of the full transition matrix is controlled by the information ratio. 
We provide the performance of single change point detection based on the surrogate model in Table 4 in Appendix F.1.  

Figure \ref{fig:boxplots} depicts boxplots based on 50 replicates of the distance between the location of the true change point and its estimate, i.e., $|\widehat{\tau} - \tau^\star|$. The yellow bars correspond to the full low-rank plus sparse model, while the orange ones to the surrogate model. In accordance to previous findings, under settings A and C, the results are comparable, as well as certain cases for setting E. On the other hand, under settings B, D and F, the full model clearly outperforms the surrogate one, even though in settings F2 and F3 the differences become smaller as the corresponding differences in the information ratios increase.
\begin{figure}[!ht]
    \centering
    \resizebox{6.4in}{1.85in}{%
    \begin{subfigure}
        \centering
        \includegraphics[scale=0.42]{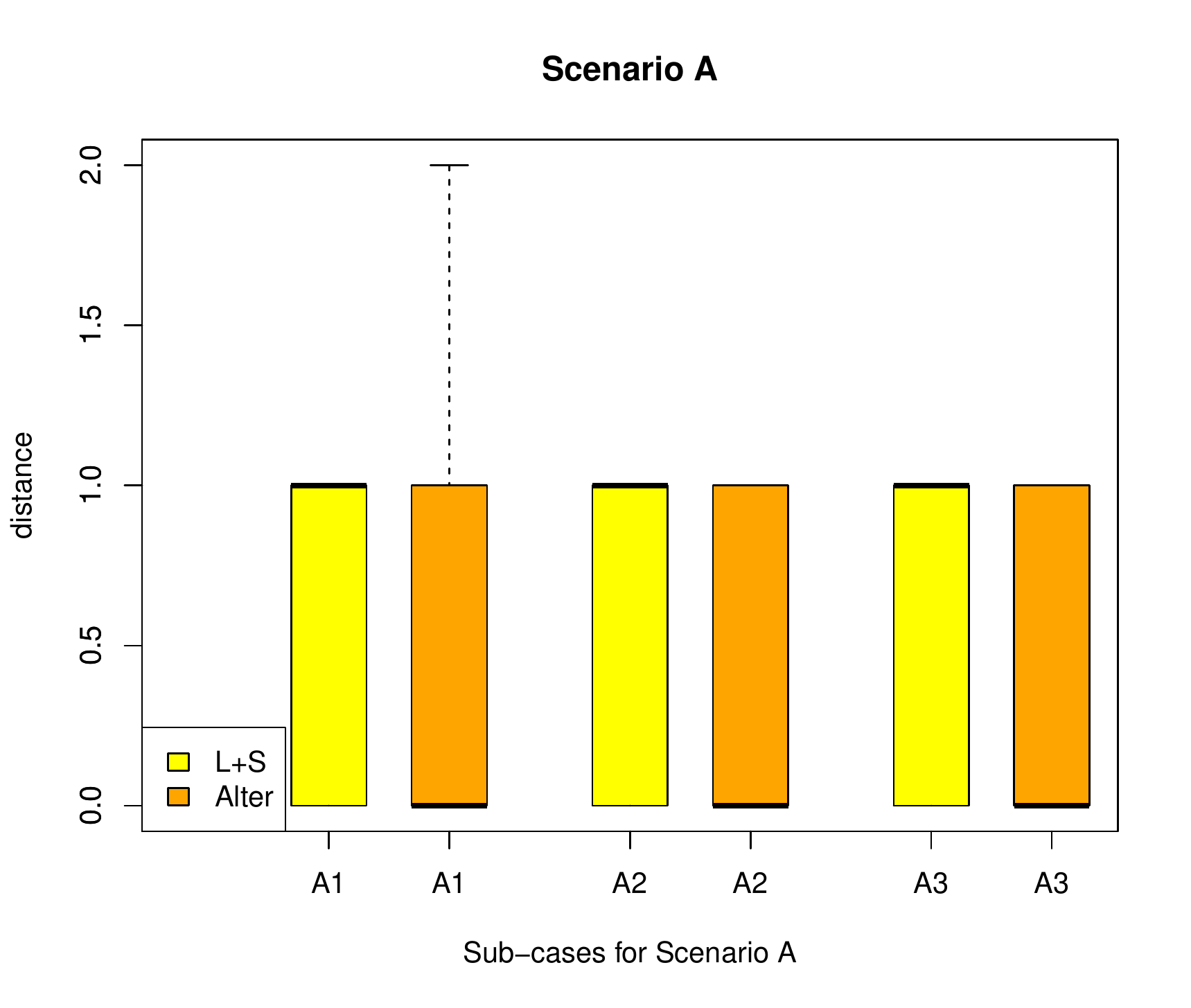}
    \end{subfigure}
    \begin{subfigure}
        \centering 
        \includegraphics[scale=0.42]{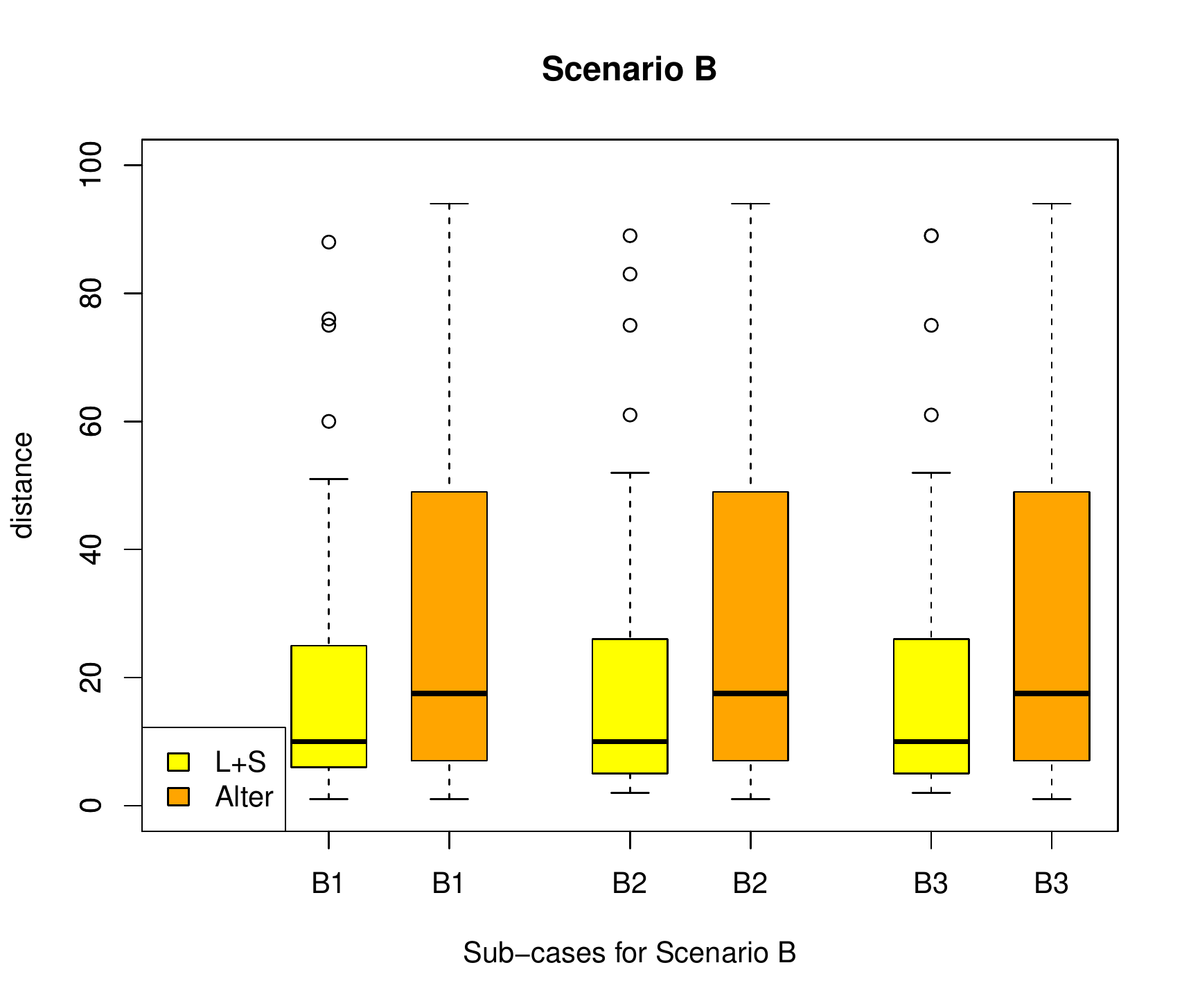}
    \end{subfigure}
    \begin{subfigure}
        \centering
        \includegraphics[scale=0.42]{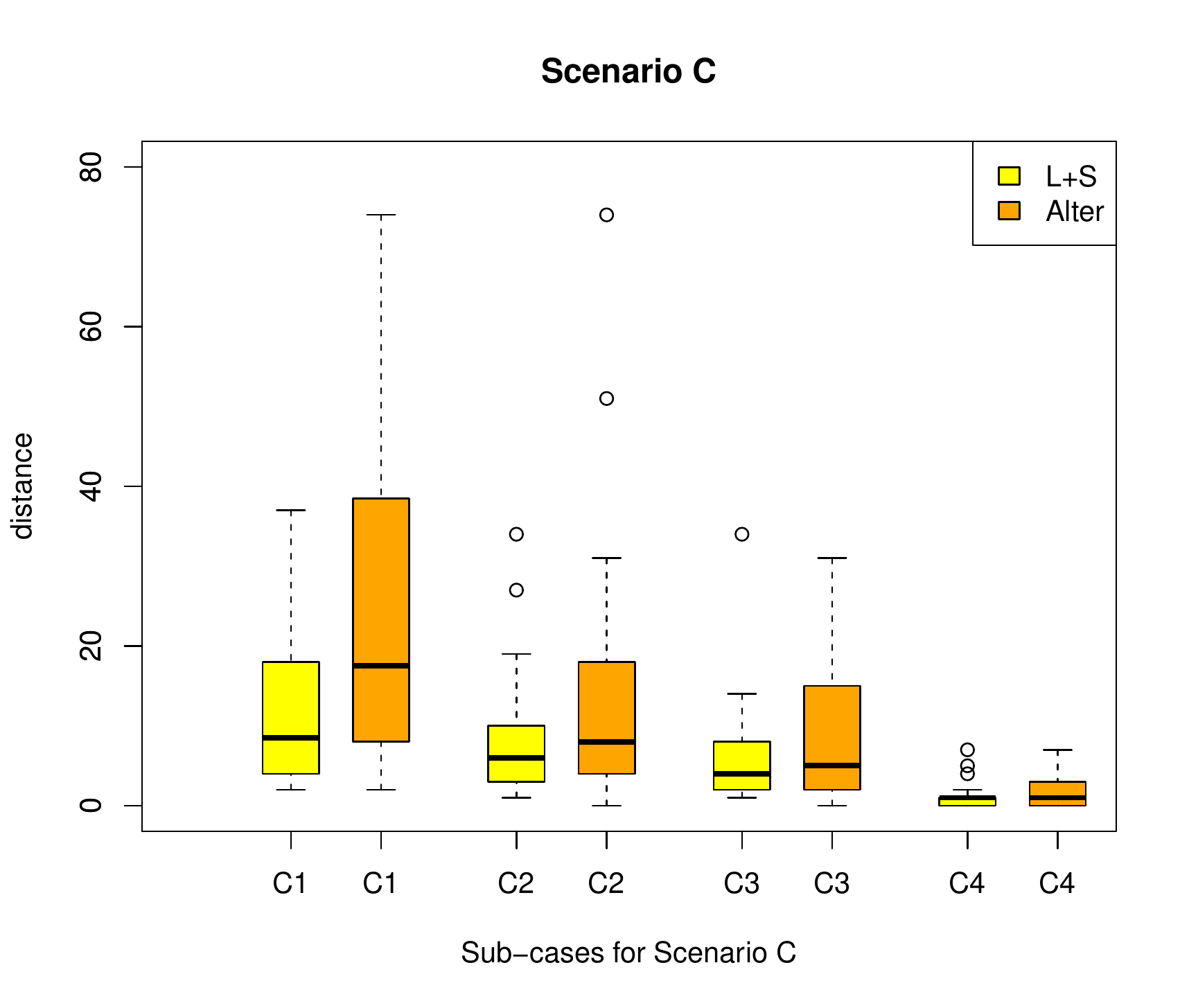}
    \end{subfigure}
    }
    \resizebox{6.4in}{1.85in}{%
    \begin{subfigure}
        \centering 
        \includegraphics[scale=0.42]{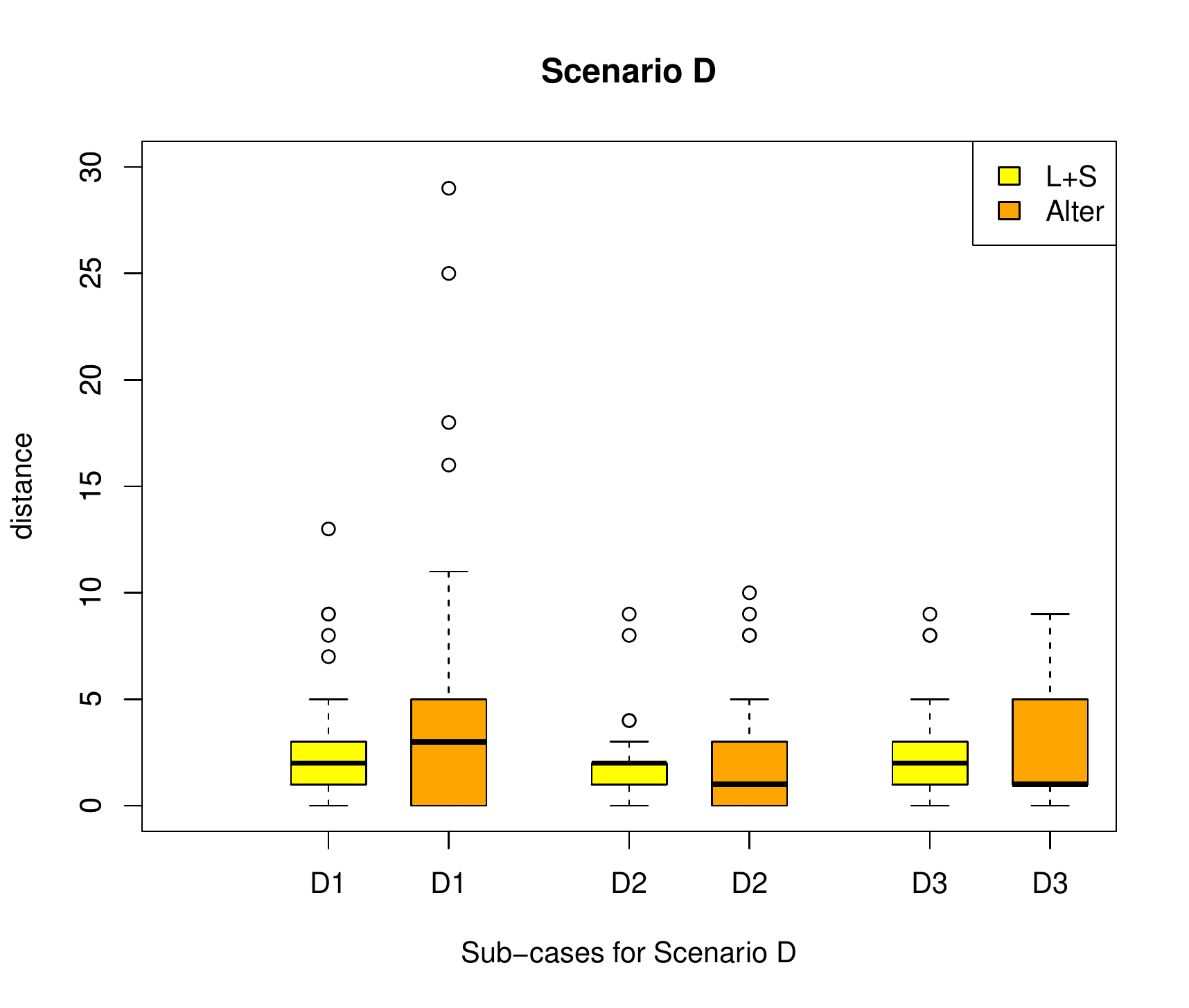}
    \end{subfigure}
    \begin{subfigure}
        \centering 
        \includegraphics[scale=0.42]{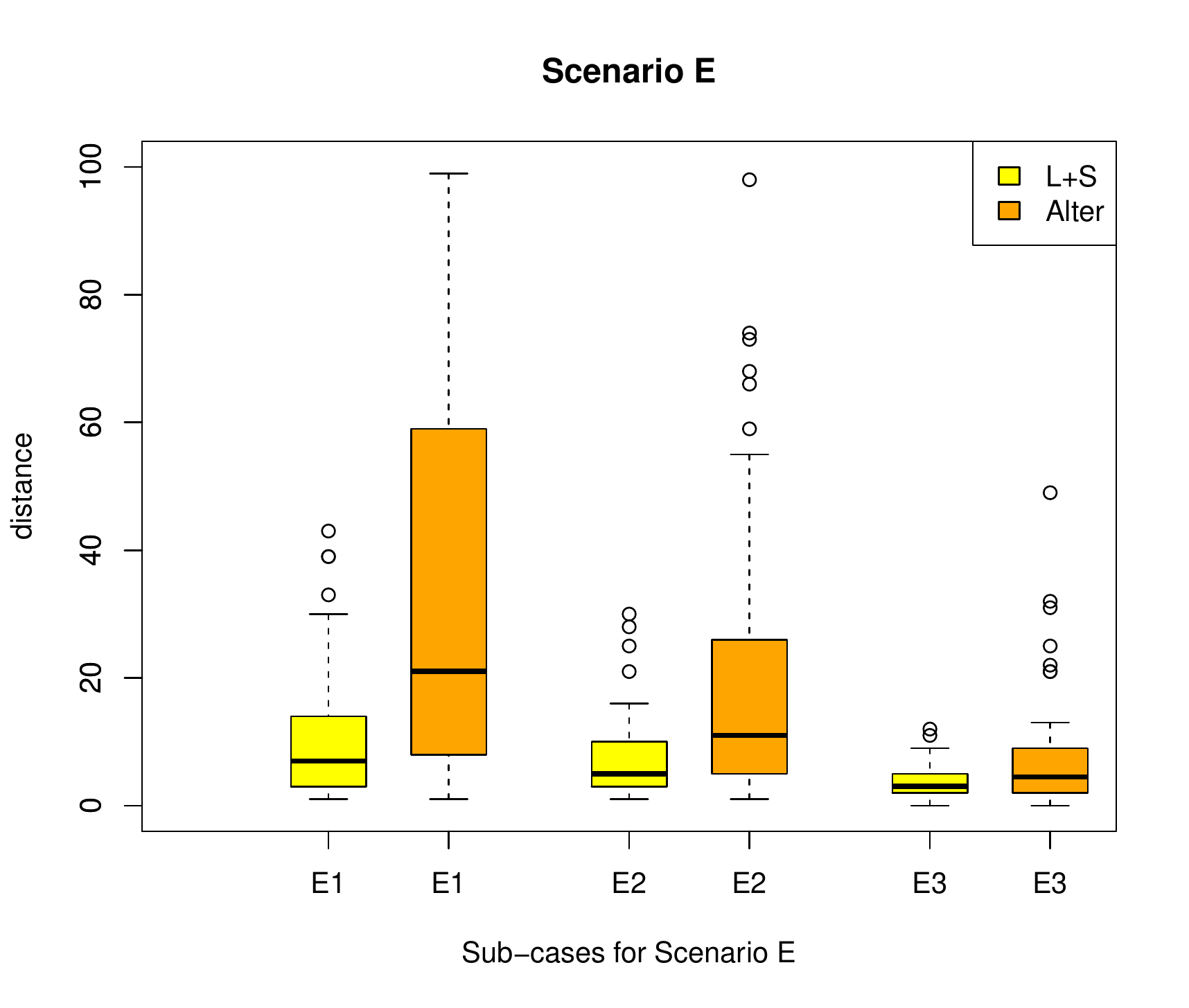}
    \end{subfigure}
    \begin{subfigure}
        \centering
        \includegraphics[scale=0.42]{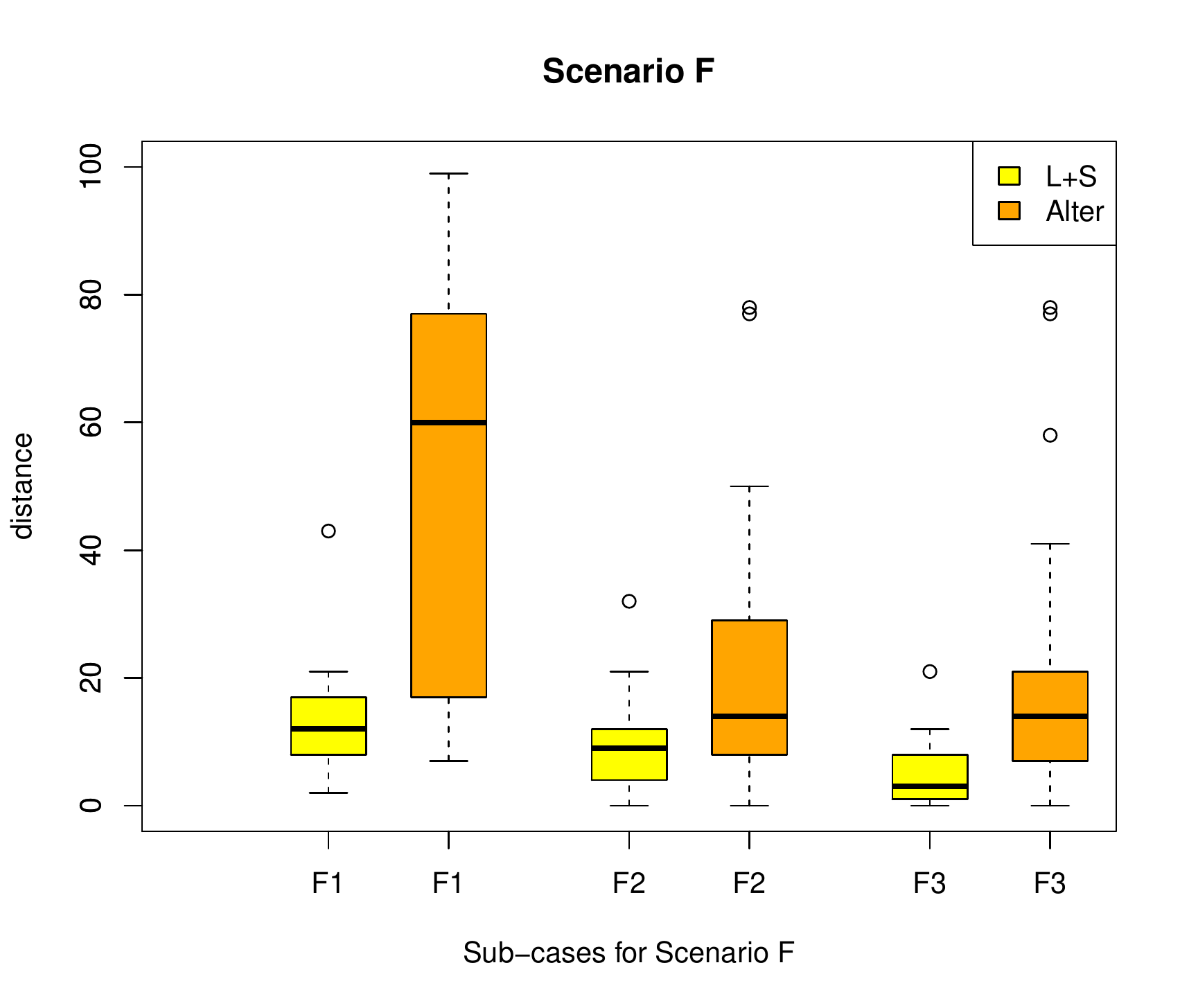}
    \end{subfigure}
    }
    
    \caption{Boxplots for $|\widehat{\tau} - \tau^\star|$ under settings A--F with the full model and the surrogate weakly sparse model.} 
    
    \label{fig:boxplots}
\end{figure}

\subsection{Performance for Detecting Multiple Change Points}\label{sec:multi-performance}

We consider the same settings for each change point, as in case A in Section \ref{sec:single-performance} with modified $T$ and $p$, respectively. The specific scenarios under consideration are as follows:
\begin{itemize}[leftmargin=0pt]
    \itemsep.1em
    \item[(L)] In the first case, we consider settings with different number of change points. Specifically, we investigate the following three cases: (1) $T = 1200$ with $\tau^\star_1 = \lfloor T/6 \rfloor$, $\tau^\star_2 = \lfloor T/3 \rfloor$, $\tau^\star_3 = \lfloor T/2 \rfloor$, $\tau^\star_4 = \lfloor 2T/3 \rfloor$, and $\tau^\star_5 = \lfloor 5T/6 \rfloor$; (2) $T=1800$ with $\tau^\star_1 = \lfloor T/10 \rfloor$, $\tau^\star_2 = \lfloor 3T/10 \rfloor$, $\tau^\star_3 = \lfloor T/2 \rfloor$, $\tau^\star_4 = \lfloor 7T/10 \rfloor$, and $\tau^\star_9 = \lfloor 9T/10 \rfloor$; (3) $T=2400$ with $\tau^\star_1 = \lfloor T/10 \rfloor$, $\tau^\star_2 = \lfloor T/4 \rfloor$, $\tau^\star_3 = \lfloor 2T/5 \rfloor$, $\tau^\star_4 = \lfloor 3T/5 \rfloor$, and $\tau^\star_5 = \lfloor 4T/5 \rfloor$.
    \item[(M)] In the second case, we consider $p$ large enough to satisfy $p^2 > T$ with two change points: $\tau^\star_1 = \lfloor T/3 \rfloor$ and $\tau^\star_2 = \lfloor 2T/3 \rfloor$.
    \item[(N)] In the last scenario, the change in sparsity patterns is considered. We consider a different sparsity pattern rather than the 1-off diagonal structure in the sparse components. 
\end{itemize}

The detailed model parameters are listed in the Table 5 in the Appendix F.2. 

Table \ref{tab:5} presents the mean and standard deviation of the estimated locations of the change points, relative to the sample size $T$, together with the selection rate, as defined at the beginning of the current section. For all cases under settings L and M, the two-step algorithm obtains very accurate results, also exhibiting little variability. The complex random sparse pattern considered in setting N leads to a small deterioration in the selection rate. The locations of the estimated change points together with box plots of $|\widehat{\tau}_j - \tau_j^\star|$ for scenario N over 50 replicates are depicted in the Appendix F.2. 
\begin{table}[!ht]
    \spacingset{1}
    \centering
    \caption{Results for multiple change point selection by full L+S model.}
    \label{tab:5}
    \resizebox{\textwidth}{!}{%
    \begin{tabular}{c|c|c|c|c|c|c|c|c|c|c|c}
        \hline\hline
                           & points & truth & mean & sd & selection rate & & points & truth & mean & sd & selection rate \\
        \cline{1-12}
        \multirow{5}*{L.1} & 1 & 0.1667 & 0.1667 & 0.0004 & 1.00 & \multirow{2}*{M.1} & 1 & 0.3333 & 0.3331 & 0.0005 & {1.00} \\
                          & 2 & 0.3333 & 0.3333 & 0.0003 & 1.00 & & 2 & 0.6667 & 0.6665 & 0.0004 & 1.00 \\
                          \cline{7-12}
                          & 3 & 0.5000 & 0.4999 & 0.0003 & 1.00 & \multirow{2}*{M.2} & 1 & 0.3333 & 0.3329 & 0.0003 & 1.00 \\
                          & 4 & 0.6667 & 0.6665 & 0.0004 & 1.00 & & 2 & 0.6667 & 0.6667 & 0.0006 & 1.00 \\
                          \cline{7-12}
                          & 5 & 0.8333 & 0.8335 & 0.0004 & 1.00 & \multirow{2}*{N.1} & 1 & 0.3333 & 0.3311 & 0.0125 & {0.94} \\
        \cline{1-6}
        \multirow{5}*{L.2} & 1 & 0.1000 & 0.0999 & 0.0002 & 1.00 & & 2 & 0.6667 & 0.6656 & 0.0056 & {0.98} \\
        \cline{7-12}
                          & 2 & 0.2500 & 0.2500 & 0.0000 & 1.00 & \multirow{2}*{N.2} & 1 & 0.1667 & 0.1683 & 0.0115 & {0.92} \\
                          & 3 & 0.4000 & 0.3999 & 0.0002 & 1.00 & & 2 & 0.8333 & 0.8267 & 0.0181 & {0.94} \\
                          \cline{7-12}
                          & 4 & 0.6000 & 0.6000 & 0.0000 & 1.00 & \multirow{2}*{N.3} & 1 & 0.3333 & 0.3302 & 0.0121 & 0.98 \\
                          & 5 & 0.8000 & 0.7999 & 0.0001 & 1.00 & & 2 & 0.6667 & 0.6655 & 0.0119 & 0.98 \\
        \cline{1-12}
        \multirow{5}*{L.3} & 1 & 0.1000 & 0.1000 & 0.0000 & 1.00 & & & & & & \\
                          & 2 & 0.3000 & 0.3000 & 0.0000 & 1.00 & & & & & & \\
                          & 3 & 0.5000 & 0.5000 & 0.0000 & 1.00 & & & & & & \\
                          & 4 & 0.7000 & 0.6999 & 0.0002 & 1.00 & & & & & & \\
                          & 5 & 0.9000 & 0.8998 & 0.0002 & 1.00 & & & & & & \\
        \hline\hline
    \end{tabular}}
\end{table}

\subsection{A Simulation Scenario Based on a EEG Data Set}

{
For this scenario, the sparsity structure is extracted from the EEG data set analyzed in Section \ref{sec:eeg}. Specifically, the setting under consideration is as follows: $T=300$, $p=21$, with two change points located at $\lfloor T/3 \rfloor$ and $\lfloor 2T/3 \rfloor$, respectively. The structure of the transition matrices is obtained by using the results presented in the application section (see Figure 6 in the Section G.2. in the supplement). We keep the non-zero elements (see Figure \ref{fig:brain-simu}) and set their magnitudes at random to 0.4, -0.6, and 0.4, respectively. The low rank components are generated by using the spectral decomposition with ranks equal to 1, 3, and 1. The estimated \emph{sparse} and \emph{low rank} structures are illustrated in Figure \ref{fig:brain-simu}:
\begin{figure}[!ht]
    \centering
    \includegraphics[trim={0 4cm 0 4cm}, clip, scale=.35]{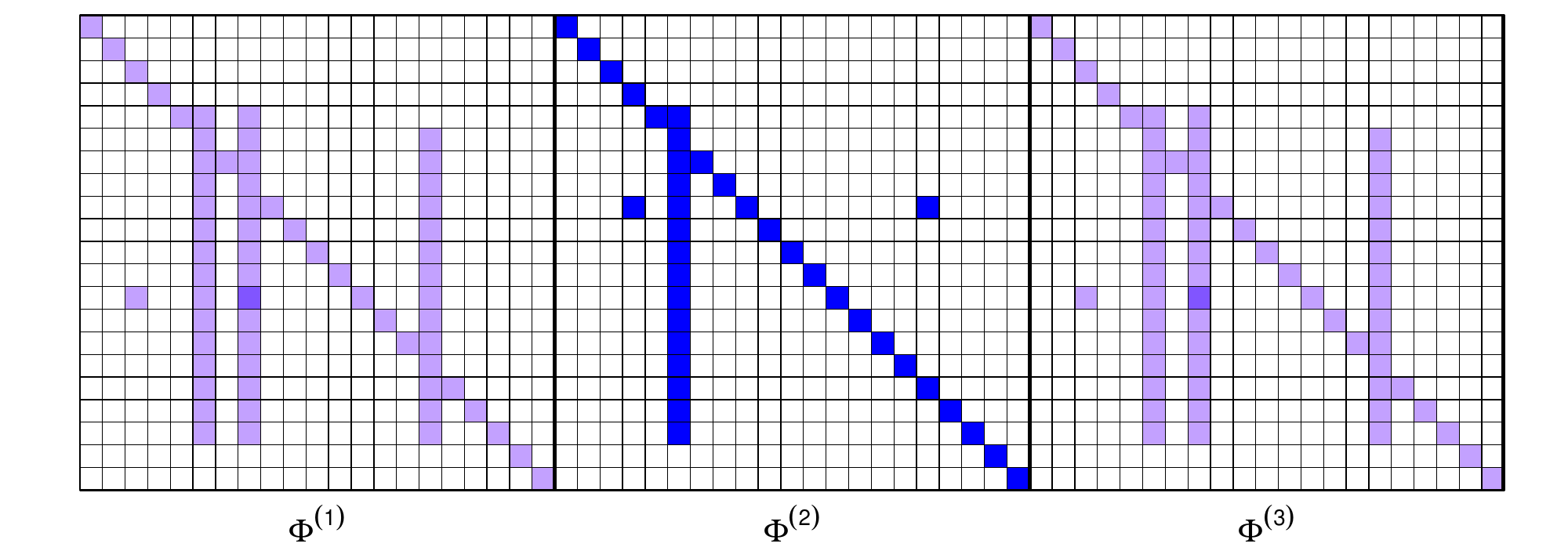}%
    \includegraphics[trim={0 4cm 0 4cm}, clip, scale=.35]{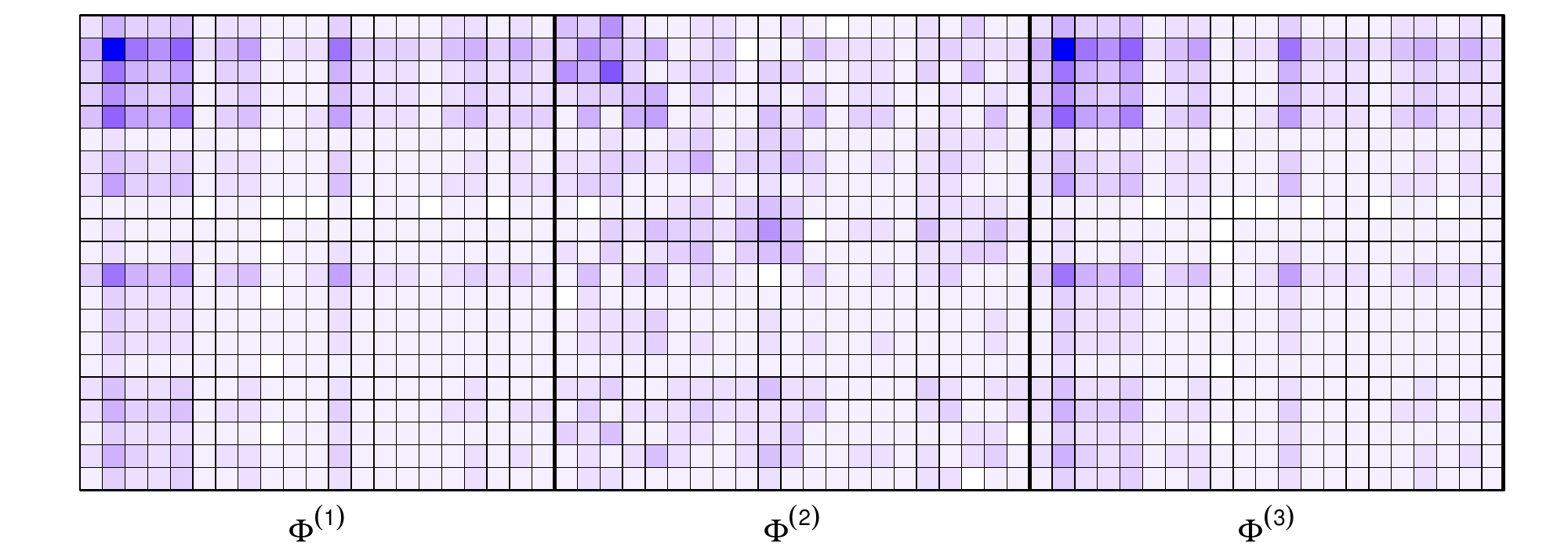}
    \caption{Left: Estimated sparse brain connectivity structure; Right: Estimated low rank brain connectivity structure.}
    \label{fig:brain-simu}
\end{figure}

The results are summarized in Table \ref{tab:general}.
\begin{table}[!ht]
    \centering
    \spacingset{1}
    \caption{Results of simulation scenario based on an EEG data set.}
    \label{tab:general}
    \begin{tabular}{c|c|c|c|c|c}
    \hline\hline
                & points & truth & mean & sd & selection rate \\
    \hline
        \multirow{2}*{General sparsity pattern} & 1 & 0.3333 & 0.3328 & 0.002 & 1.00 \\
                                    & 2 & 0.6667 & 0.6663 & 0.007 & 1.00 \\
    \hline\hline
    \end{tabular}
\end{table}
It can be seen that based on a low rank and sparse structure motivated by real data, the proposed algorithm exhibits a very satisfactory performance.
}

\subsection{Impact of the Signal-to-noise Ratio on the Detection Rate}

{
The signal-to-noise ratio (SNR) is defined as (see also \cite{wang2020univariate,rinaldo2021localizing}):
\begin{equation*}
    \text{SNR} = \frac{\Delta_Tv}{T},
\end{equation*}
wherein $v \overset{\text{def}}{=} \min_j v_j$ the minimum jump size, and $\Delta_T$ is the minimum spacing, i.e. $\Delta_T = \min_{1\leq j \leq m_0}|\tau_j^\star - \tau_{j+1}^\star|$. We set $T=300$ and $p=20$ with two change points located at $\lfloor T/3\rfloor=100$ and $\lfloor 2T/3 \rfloor=200$, respectively. Further, we set the minimum jump size to $v$ = 0.8, 1.0, and 1.6, and the resulting SNR takes the values 0.27, 0.33, and 0.53. The results are given Table \ref{tab:snr}.
\begin{table}[!ht]
    \centering
    \spacingset{1}
    \caption{Extra simulation performance for different signal-to-noise ratios}
    \label{tab:snr}
    \begin{tabular}{c|c|c|c|c|c}
    \hline\hline
      SNR & points & truth & mean & sd & selection rate \\
    \hline
      \multirow{2}*{0.27} & 1 & 0.3333 & 0.3412 & 0.017 & 0.90 \\
                          & 2 & 0.6667 & 0.6702 & 0.012 & 0.94 \\
    \hline
      \multirow{2}*{0.33} & 1 & 0.3333 & 0.3330 & 0.002 & 1.00 \\
                         & 2 & 0.6667 & 0.6687 & 0.004 & 1.00 \\
    \hline
      \multirow{2}*{0.57} & 1 & 0.3333 & 0.3332 & 0.002 & 1.00 \\
                        & 2 & 0.6667 & 0.6665 & 0.001 & 1.00 \\
    \hline\hline
    \end{tabular}
\end{table}

As expected, for small SNR the detection accuracy deteriorates, both in terms of the selection rate of change points, as well as their locations. However, for SNR around or greater than 1, it becomes very satisfactory. Additional results are provided in Section F.3 in the Supplement. 
}

{
\begin{remark}[Additional numerical results and comparisons] Additional numerical results including (i) for the surrogate model, (ii) for additional scenarios for multiple change points, (iii) for run times between the low rank plus sparse and the surrogate models, (iv) with a factor model exhibiting change points, (v) between a factor and the low rank plus sparse models under a misspecified data generating mechanism, (vi) comparison between the proposed two-step algorithm and the TSP algorithm in \cite{bai2020multiple}, and (vi) between the two-step rolling window strategy and a dynamic programming algorithm are presented in Appendices F.1-F.7, respectively.
\end{remark}
}

\section{Applications}\label{sec:application}

\subsection{Change Point Detection in EEG Signals}\label{sec:eeg}

There has been work in the literature on analyzing EEG data using low-rank models for task related signals, since the latter exhibit low-rank structure  \citep{liu2018estimating,jao2018using}. 
Next, we employ the full low-rank plus sparse model to detect change points in data from \cite{trujillo2017effect}. This data set recorded 72 channels of continuous EEG signals by using active electrodes. The sampling frequency is 256Hz and the total number of time points per EEG electrode is 122880 over 480 seconds. The stimulus procedure is that after a resting state (eliminated from the data set) lasting 8 mins, the subject alternates between a 1-min period with eyes open followed by a 1-min period with eyes closed, repeated four times.
Hence, we expect that the employed model captures the low-rank structure associated with the task at hand (open/closed eyes), while the sparse component
can capture idiosyncratic behavior across repetitions of the task.

To illustrate the proposed methodology, two subjects are selected; differences in the EEG signals over time are visible for the first subject, but not for the second one. 
The data are de-trended, by calculating the moving average of each EEG signal and removing it. Specifically, the period average, which is an unbiased estimator of trend, is given by $\hat{m}_l = \frac{1}{d}\sum_{t=1}^{d} X_{l+t}$; we select $d=256$ in accordance to the frequency of the data, and we obtain the de-trended time series by removing the period average. In this work, we use 21 selected EEG channels and $T=67952$ time points in the middle of the whole time series. According to the experiments described in \cite{trujillo2017effect}, there are five open/closed eyes segments in the selected time period with four change points approximately at locations: $\tau_1^\star \cong 11650$, $\tau_2^\star \cong 27750$, $\tau_3^\star \cong 44000$, and $\tau_4^\star \cong 60000$. The data are plotted in Figure 2 in Appendix G.2. Selection of the tuning parameters is based on the guidelines given in Appendix G.1. Note that to separate adequately the sparse component from the low-rank one, we set $\alpha_L$ based on its theoretical values provided in Assumption H2.

The change points estimated by the two-step algorithm are $\widehat{\tau}_1 = 9633$, $\widehat{\tau}_2 = 28529$, $\widehat{\tau}_3 = 43361$ and $\widehat{\tau}_4 = 60209$. The estimated change points are close to those identified based on the designed experiment. In order to quantify the differences among the estimated components across segments, we use the Hamming distance for both sparse and low-rank ones. The results are shown in Figure \ref{fig:hamming-dist} in the form of a heat map that confirms the high degree of similarity between all ``eyes closed" segments (1, 3, 5) and all ``eyes open" segments (2, 4), thus further confirming the accuracy of the methodology. We also provide the estimated low-rank and the sparse patterns for 5 segments in Figure 3, and the correlation networks for the sparse components in Figure 4 in Appendix G.2.
\begin{figure}[!ht]
    \centering
    \resizebox{.7\textwidth}{!}{
    \begin{subfigure}
        \centering
        \includegraphics[scale=.4]{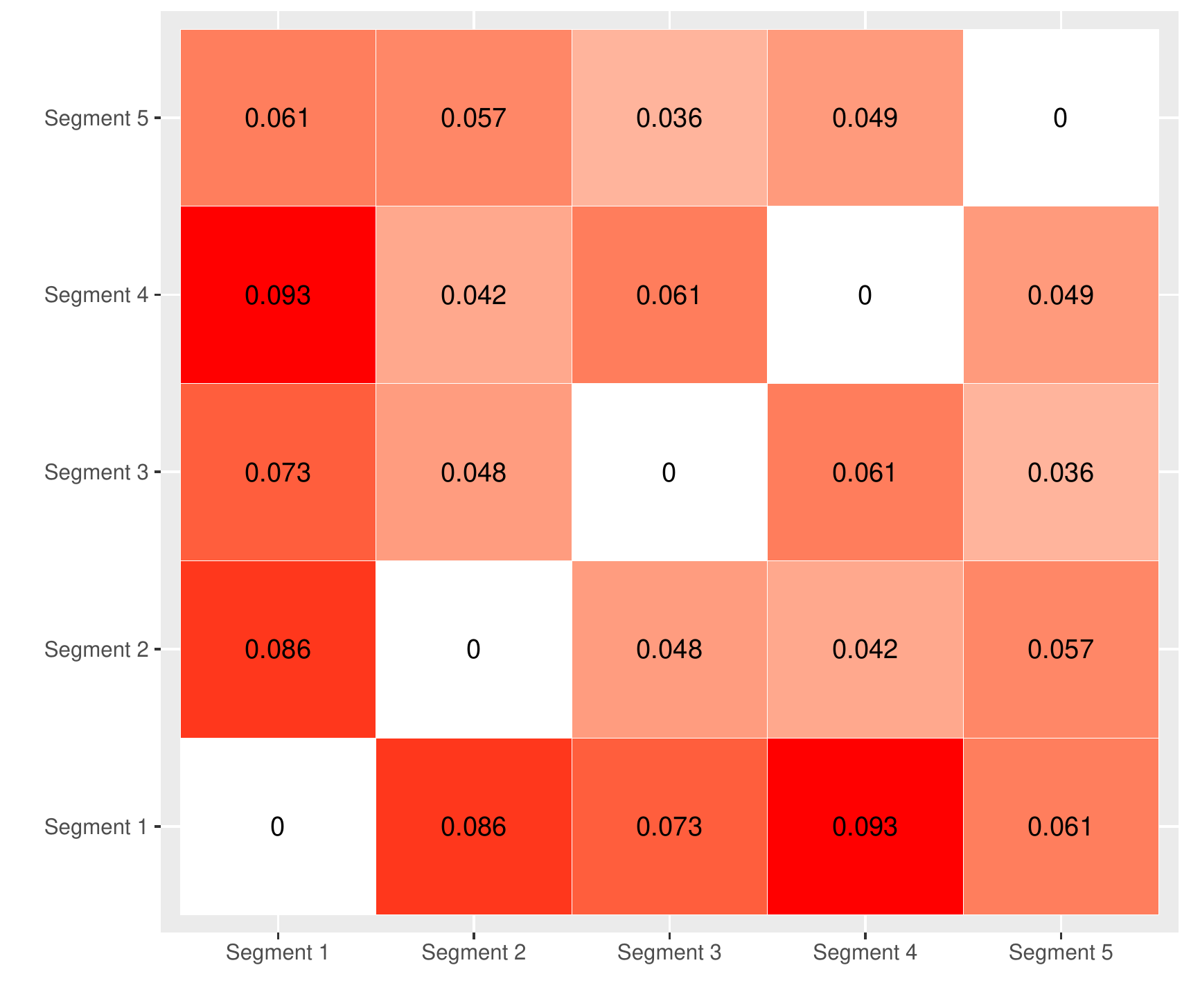}%
        \quad\quad\quad
        \includegraphics[scale=.4]{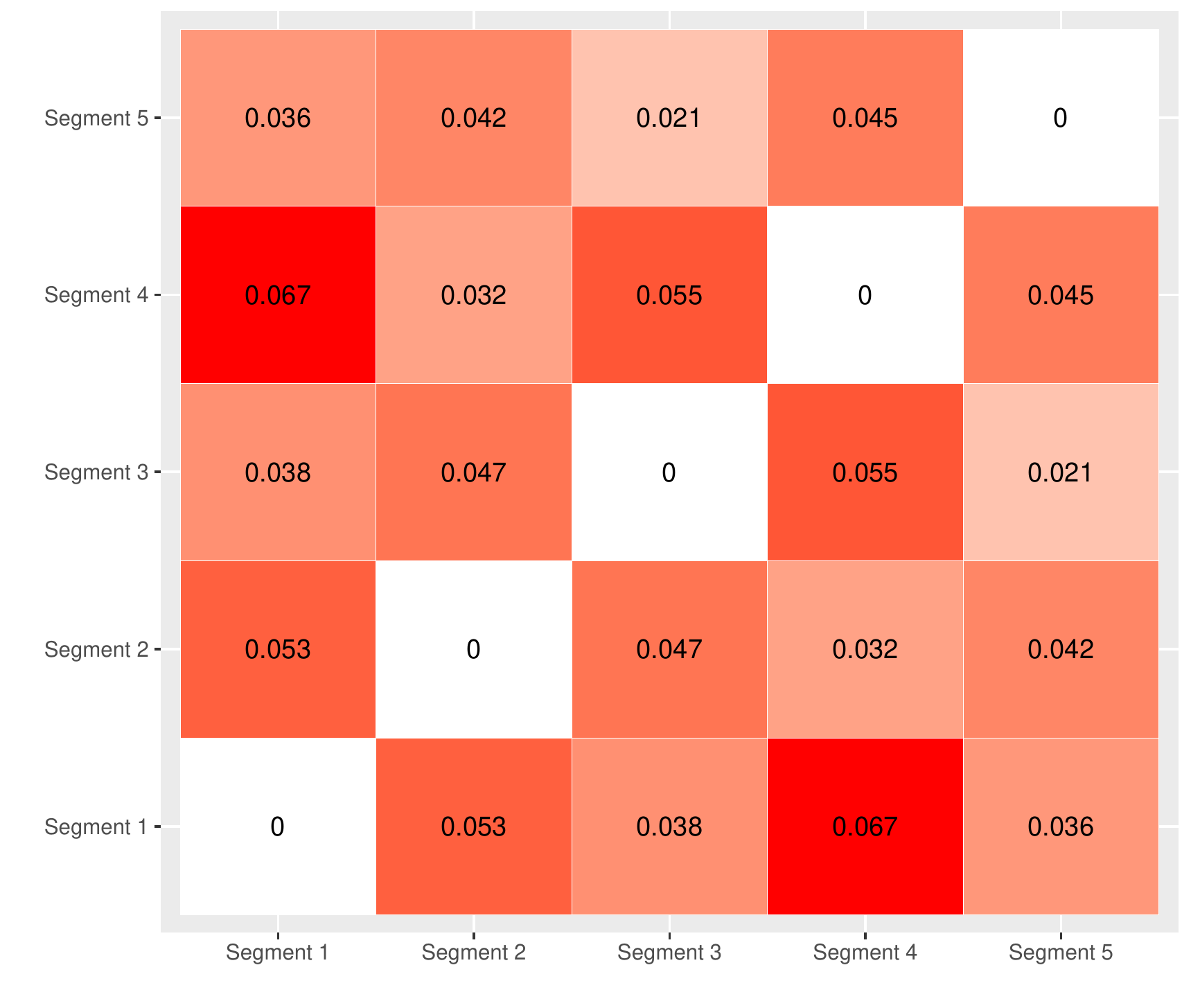}
    \end{subfigure}}
    
    \caption{Left: heat map of Hamming distances between the estimated low-rank components; Right: heat map of Hamming distances between the estimated sparse components.}
    
    \label{fig:hamming-dist}
\end{figure}

\subsection{An Application to Macroeconomics Data}\label{sec:macro}

We consider the macroeconomics data obtained from the FRED database \cite{doi:10.1080/07350015.2015.1086655}. This data set comprises of 19 key macroeconomic variables, corresponding to the ``Medium" model analyzed in \cite{banbura2010large} and covering the 1959--2019 period (723 observations). The original time series data are non-stationary and we de-trend them by taking first differences.  

To select the tuning parameters $(\lambda, \mu)$, we employ a 2-dimensional grid search procedure. In our analysis, we set $\alpha_L$ based on its theoretical value in Assumption H2 to ensure identifiability of the sparse component from the low-rank one. The estimated change points are listed in Table \ref{tab:cp-events}, while the sparsity levels and ranks for each segment are plotted in Figure \ref{fig:result-pts}. 
{
The selected change points are presented in Figure 5 in Appendix G.3. A detailed discussion (due to space constraints) of related events is also provided in Appendix G.3.
}
\begin{table}[!ht]
    \spacingset{1}
    \centering
    \caption{Estimated Change Points and Candidate Related Events.}
    \label{tab:cp-events}
    \resizebox{0.8\textwidth}{!}{%
    \begin{tabular}{c|l}
    \hline\hline
        Date (mm/dd/yyyy) & Candidate Related Events\\
        \hline
        02/01/1975 & Aftermath of 1973 oil crisis \\
        \hline
        04/01/1977 & Rapid build-up of inflation expectations \\
        \hline
        12/01/1980 & Rapid increase of interest rates by the Volcker Fed \\
        \hline
        01/01/1994 & Multiple events - see Appendix G.3 \\
        \hline
        09/01/2008 & Recession following collapse of Lehman Brothers\\
        \hline
        05/01/2010 & Recovery from the Great Financial crisis of 2008  \\
        \hline\hline
    \end{tabular}}
\end{table}
\begin{figure}[!ht]
    \centering
    \resizebox{5in}{1.67in}{%
    \begin{subfigure}
        \centering
        \includegraphics[scale=.5]{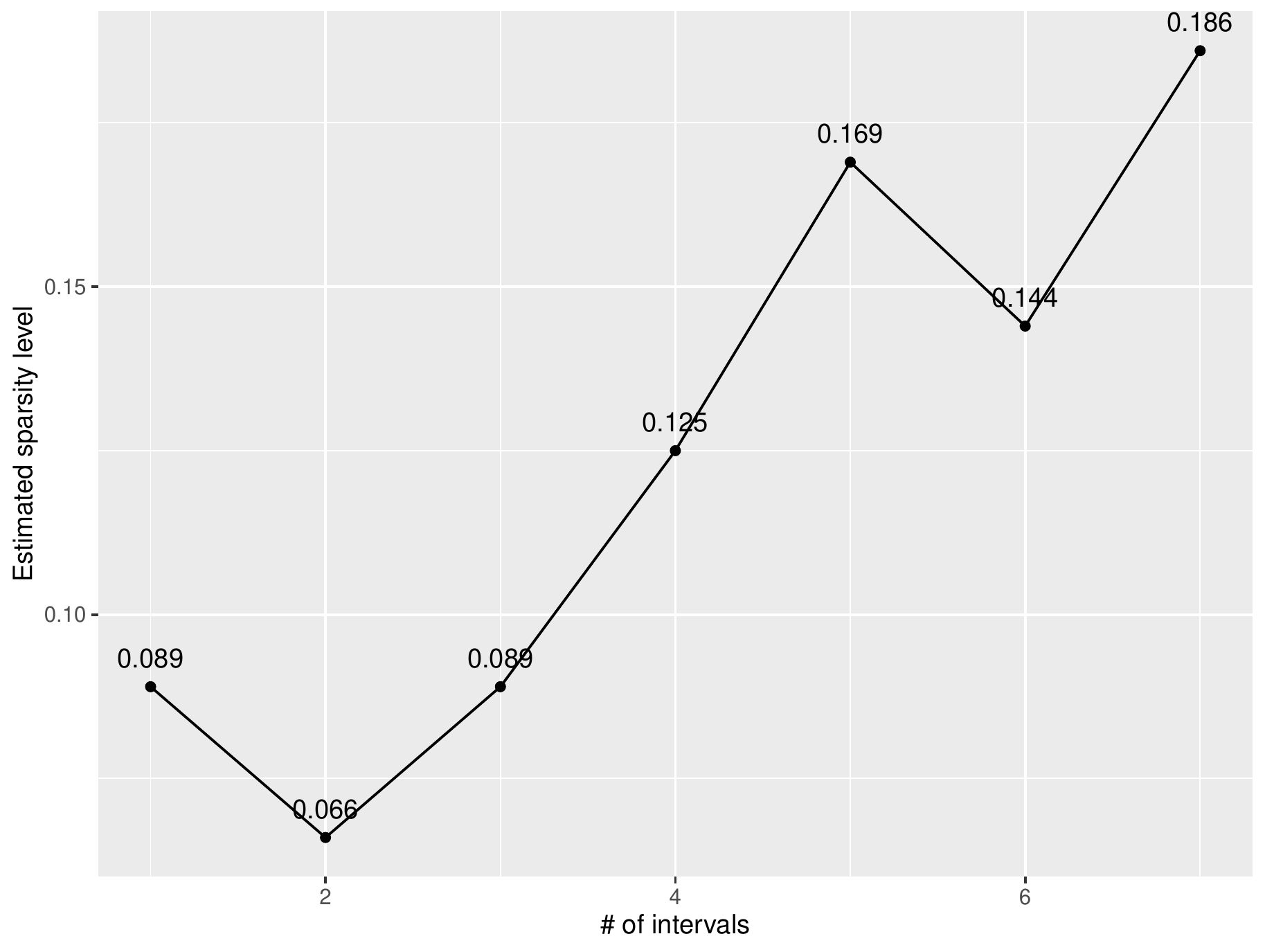}%
        \quad\quad
        \includegraphics[scale=.5]{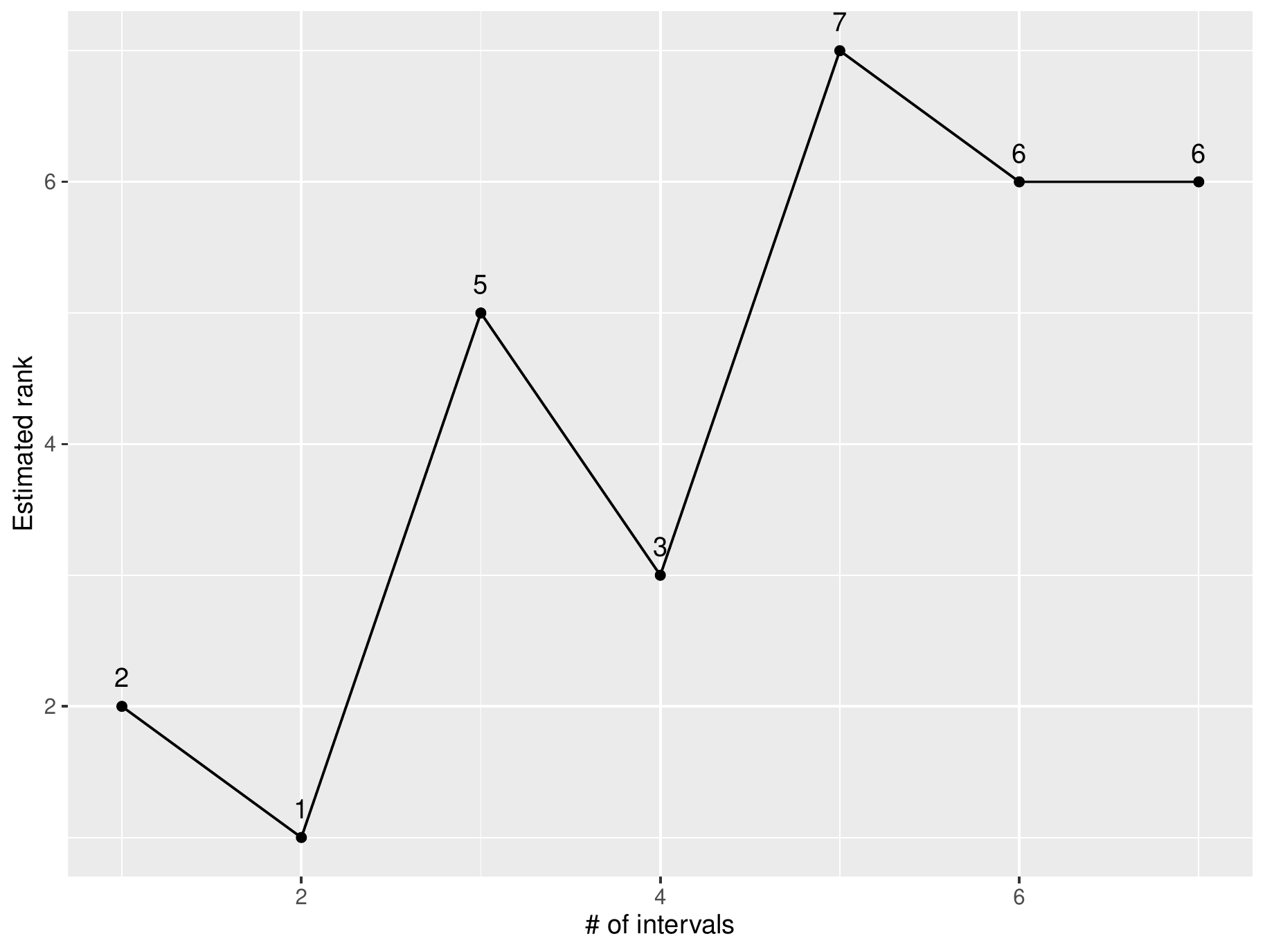}
    \end{subfigure}}
    \caption{Left panel: Estimated sparsity level for each selected interval; Right panel: Estimated rank for each selected interval.}
    \label{fig:result-pts}
\end{figure}

We also compare the results using the detection strategy based on the static factor model in \cite{barigozzi2018simultaneous}. According to \cite{fama1996multifactor}, we set the maximum number of factors to three and the estimated change points are listed in Table~\ref{tab:factorcpt}.
\begin{table}[!ht]
    \spacingset{1}
    \centering
    \caption{Estimated Change Points by the Detection Strategy based on a Factor Model.}
    \label{tab:factorcpt}
    \begin{tabular}{c|c}
        \hline\hline
        Date (mm/dd/yyyy) & Candidate Related Events \\
        \hline
        12/01/1979 & Rapid increase of interest rates by the Volcker Fed \\
        \hline
        01/01/1985 & Multiple events \\
        \hline
        11/01/1993 & Multiple events \\
        \hline
        04/01/2008 & Prequel to the Great Financial Crisis \\
        \hline\hline
    \end{tabular}
\end{table}

The factor model misses important events, including the economic recovery following the Financial Crisis of 2008 and the recession following the first oil crisis of 1973. Further, it identifies a change point in early April of 2008, even though most of the macroeconomic (as opposed to financial market) indices started deteriorating in the summer of 2008 and tumbled in the 3rd quarter, following the collapse of Lehman Brothers in mid-September.

\section{Concluding Remarks}\label{sec:concluding remarks}

The paper addressed the problem of multiple change point detection in reduced rank VAR models. The key innovation is the development of a two-step strategy that obtains consistent estimates of the change points and the model parameters. Other strategies for detecting multiple change points in high-dimensional models, such as fused penalties or binary segmentation type of procedures, either require very stringent conditions or are not directly applicable. Further, dynamic programming entails a quadratic computational cost in the number of time points compared to a linear cost for the proposed strategy. To enhance computational efficiency, we introduced a surrogate weakly sparse model and identified sufficient conditions under which the aforementioned 2-step strategy detects change points in low-rank and sparse VAR models as accurately as using the correctly specified model, but at significant computational gains. 

{
In the algorithmic and technical results presented, similar to the case of a sparse VAR model with change points (\cite{wang2019localizing}), we assume a simple structure on the error terms, i.e., in segment $j$, $\epsilon_t^j \sim \mathcal{N}(0,\sigma^2I)$, where $\sigma$ is a fixed constant independent of $j$. Such a simple structure on the covariance matrices of error terms ensures the identifiability of change points, since a change in the transition matrices would imply that the second order structure (the auto-correlation function) of the stochastic process before and after the change points have changed, thus the definition of change points becomes meaningful. It is of interest to investigate in future work a general covariance matrix $\Sigma_E$, or even segment specific ones $\Sigma_E^j$, including conditions that lead to changes in the segment specific auto-correlation function of the process.
}

Further, the proposed strategy is directly applicable to other forms of structured sparsity in the transition matrix of the VAR model, including low-rank plus structured sparse, or structured sparse plus sparse, as discussed for stationary models in \cite{basu2019low}.

Finally, the presentation focused on a VAR model with a single lag, but both the modeling framework and the developed 2-step detection strategy can be extended to $\mbox{VAR}(d)$ processes with $d>1$ in a similar manner, as presented in \cite{basu2019low}.

\bibliography{main}



\bigskip
\bigskip
\bigskip
\begin{center}
    {\LARGE \bf Supplementary Material for Multiple Change Point Detection in Reduced Rank High Dimensional Vector Autoregressive Models}
\end{center}

In the Supplement, we first present conditions related to the underlying VAR processes in Appendix~\ref{appendix:A}, the detection algorithms are provided in Appendix~\ref{appendix:B}, a comparison of the true and the surrogate model is provided in Appendix~\ref{appendix:C}, then introduce auxiliary lemmas with their proofs in Appendix~\ref{appendix:D}, followed by the proofs of then main Theorems in Appendix~\ref{appendix:E}, the results for additional numerical experiments in Appendix~\ref{appendix:F}, and the supplementary results for real-data applications in Appendix~\ref{appendix:G}.

\begin{appendices}

\section{Conditions on VAR Processes}\label{appendix:A}
Recall that the piece-wise stationary structured VAR(1) model is given by:
\begin{equation}
    \label{eq:14}
    \begin{aligned}
    X_t &= (L_1^\star + S_1^\star)X_{t-1} + \epsilon_t^1, \quad t=1,2,\dots, \tau^\star, \\
    X_t &= (L_2^\star + S_2^\star)X_{t-1} + \epsilon_t^2, \quad t=\tau^\star+1, \dots, T,
    \end{aligned}
\end{equation}
where $X_t \in \mathbb{R}^{p}$ is a vector of observed time series at time $t$, $L_1^\star$, $L_2^\star$ are the $p\times p$ low-rank components and $S_1^\star$, $S_2^\star$ are the $p\times p$ sparse matrices for the corresponding models in the two time intervals. The $p$-dimensional noise process $\epsilon_t^1$ and $\epsilon_t^2$ are independent and identically distributed from Gaussian distributions; {$\epsilon_t^j \overset{\text{iid}}{\sim} \mathcal{N}(0, \sigma^2I)$, where $I$ denotes the identity matrix and $\sigma$ is a fixed constant.} In this modeling formulation, there are two independent VAR processes over the whole time interval $[1,T]$; the first process is observed up to time $\tau^\star$ and the second VAR process is observed afterwards. In other words, there are two stationary VAR processes concatenated together at the (unknown) time $\tau^\star$. The independence assumption on these two VAR processes makes it possible to write down the density function of the stochastic process $\lbrace X_t \rbrace_{t=1}^T $ as the product of the density functions of the two stationary processes, thus properly defining the infinite-dimensional distribution of the stochastic process $X_t$.


\textit{Stability.} To show consistency of both change points and model parameters, we assume that the VAR(1) model in \eqref{eq:14} is piece-wisely stable. Formally, we assume that the characteristic polynomial with respect to the $j$-th segment $\mathcal{A}_j(z) \overset{\text{def}}{=} I_p - Az$ satisfies $\text{det}(\mathcal{A}_j(z)) \neq 0$ for $z \in \{z \in \mathbb{C}: |z| = 1\}$. This is a common assumption for stationary VAR processes \citep{lutkepohl2013introduction, basu2015regularized}. This assumption also indicates that the spectral density function for the $j$-th segment of the VAR model:
\begin{equation*}
    f_j(\theta)= \frac{1}{2\pi}\left(\mathcal{A}^{-1}_j(e^{i\theta})\right)\Sigma_j\left(\mathcal{A}_j^{-1}(e^{i\theta})\right)^{\dagger},\quad \theta \in [-\pi, \pi],
\end{equation*}
is upper bounded in the spectral norm. In addition, we also assume that \\
(i) the maximum eigenvalue is bounded:
\begin{equation*}
    \mathcal{M} \overset{\text{def}}{=} \max_{1\leq j \leq m_0+1}\mathcal{M}(f_j) = \max_{1\leq j\leq m_0+1}\sup_{\theta \in [-\pi, \pi]}\Lambda_{\max}(f_j(\theta)) < +\infty;
\end{equation*}
(ii) the minimum eigenvalue is bounded away from zero:
\begin{equation*}
    \mathfrak{m} \overset{\text{def}}{=} \min_{1\leq j \leq m_0+1}\mathfrak{m}(f_j) = \min_{1\leq j \leq m_0+1}\inf_{\theta \in [-\pi, \pi]}\Lambda_{\min}(f_j(\theta)) > 0.
\end{equation*}
As discussed in \cite{basu2015regularized}, it is known that $\mathcal{M}(f_j)$ and $\mathfrak{m}(f_j)$ reflect the stability of a VAR process and they are related to $\mu_{\max}(\mathcal{A}_j)$ and $\mu_{\min}(\mathcal{A}_j)$ as follows:
\begin{equation*}
    \mathfrak{m}(f_j) \geq \frac{1}{2\pi}\frac{\Lambda_{\min}(\Sigma_j)}{\mu_{\max}(\mathcal{A})}\ \text{and}\  \mathcal{M}(f_j) \leq \frac{1}{2\pi}\frac{\Lambda_{\max}(\Sigma_j)}{\mu_{\min}(\mathcal{A})},
\end{equation*}
where {$\mu_{\max}(\mathcal{A}_j) = \max_{|z|=1}\Lambda_{\max}\left(\mathcal{A}^{\dagger}_j(z)\mathcal{A}_j(z)\right)$} and { $\mu_{\min}(\mathcal{A}_j) = \min_{|z|=1}\Lambda_{\min}\left(\mathcal{A}^{\dagger}_j(z)\mathcal{A}_j(z)\right)$}, respectively. For the specific low-rank plus sparse structure for the underlying VAR process, we can verify that for each segment, an upper bound of $\mu_{\max}(\mathcal{A})$ is given by:
\begin{equation*}
    \mu_{\max}(\mathcal{A}_j) = \|I_p - (L_j^\star+S_j^\star)z\|^2_{\text{op}} \leq \|I_p\|^2_{\text{op}} + \|L_j^\star\|^2_{\text{op}} + \|S_j^\star\|^2_{\text{op}} \leq \left(1 + \sigma_{\max}(L_j^\star) + M_Sd_j^\star\right)^2,
\end{equation*}
where $\sigma_{\max}(L_j^\star)$ is the maximum singular value of $L_j^\star$.

Next, we introduce two key assumptions, required in our theoretical proofs 
that are commonly made for high-dimensional regularized estimation problems.
\begin{itemize}[leftmargin=0pt]
    \item \textbf{Restricted strong convexity.} We start by introducing the \textit{weighted regularizer} $\mathcal{Q}$ that combines the penalties for the low-rank and the sparse components. Specifically, for any pair $(\lambda, \mu)$ of positive numbers, we define the weighted regularizer $\mathcal{Q}$ with respect to the low-rank matrix $L$ and the sparse matrix $S$ as:
    \begin{equation*}
        \mathcal{Q}(L, S) \overset{\text{def}}{=} \|L\|_* + \frac{\lambda}{\mu}\|S\|_1,
    \end{equation*}
    and the associated norm $\Phi$ is given by:
    \begin{equation*}
        \Phi(\Delta) \overset{\text{def}}{=} \inf_{\Delta = L+S}\mathcal{Q}(L,S).
    \end{equation*}
    Then, the restricted strong convexity (RSC) condition becomes:
    \begin{definition}[{\bf Restricted Strong Convexity (RSC)}]
        \label{def:1}
        A generic linear operator $\mathfrak{X}: \mathbb{R}^{p\times p} \to \mathbb{R}^{p\times T}$ satisfies the RSC condition with respect to the associated norm $\Phi$ with curvature constant $\nu_{\text{RSC}}>0$ and tolerance constant $\kappa>0$ if:
        \begin{equation*}
            \frac{1}{2T}\|\mathfrak{X}(\Delta)\|_F^2 \geq \frac{\nu_{\text{RSC}}}{2}\|\Delta\|_F^2 - \kappa\Phi^2(\Delta).
        \end{equation*}
    \end{definition}
    Note that the RSC condition is equivalent to the \textit{restricted eigenvalues} (RE) condition \citep{loh2012high, basu2015regularized}. In our work, the RSC condition is required in the proofs of Lemma \ref{lemma:5} case (b), and Theorem 2.
    \item \textbf{Deviation bounds.} 
    Given a time interval $I$, according to \cite{loh2012high,basu2015regularized,basu2019low}, we need to upper bound the following quantities for the sparse and the low-rank components respectively:
    \begin{equation*}
        \text{(for sparse)}\ \left\| \sum_{t\in I}X_{t-1}\epsilon_t^\prime \right\|_\infty,\ \text{and}\ \text{(for low-rank)}\ \left\| \sum_{t\in I}X_{t-1}\epsilon_t^\prime \right\|_{\text{op}}.
    \end{equation*}
    Unlike the deviations in \cite{basu2015regularized}, in our model, as the time interval $I$ changes, the misspecified intervals are needed to be considered in both Algorithms~\ref{algo:1} and \ref{algo:2}. In order to upper bound the misspecified terms, a sufficiently large deviation bound is required:
    \begin{definition}[{\bf Deviation Bounds with Misspecifications}]
    \label{def:2}
        For the given time interval $I$ with $|I| \geq T\xi_T$, and the realization $\{X_t\}$ generated from the piecewisely stationary VAR(1) model in (1) or (8), it is assumed that there exist deterministic functions $\mathbb{Q}(A^\star, \Sigma_\epsilon)$, $\mathbb{Q}^\prime_\infty(A^\star, \Sigma_\epsilon)$ and $\mathbb{Q}^\prime_{\text{op}}(A^\star, \Sigma_\epsilon)$ such that the realizations $\{X_t\}_{t\in I}$, and the noise terms $\{\epsilon_t\}_{t\in I}$ satisfy:\\ 

        (a) For $I \subset [\tau^\star_{j-1}, \tau^\star_j)$:
        \begin{equation*}
            \frac{1}{|I|}\left\| \sum_{t\in I}X_{t-1}\epsilon_t^\prime \right\|_\infty \leq 4\mathbb{Q}(A^\star_j,\Sigma_\epsilon)\sqrt{\frac{\log p}{|I|}},\ \text{and}\ \frac{1}{|I|}\left\| \sum_{t\in I}X_{t-1}\epsilon_t^\prime \right\|_{\text{op}} \leq 4\mathbb{Q}(A^\star_j,\Sigma_\epsilon)\sqrt{\frac{p}{|I|}}.
        \end{equation*}
        (b) For $I = [b, e)$, $\tau^\star_{j-1} < b \leq \tau^\star_{j} < e < \tau^\star_{j+1}$, and $e-\tau^\star_j \leq \tau^\star_j-b$, with the misspecified interval $[\tau^\star_j, e)$:
        \begin{equation*}
            \begin{aligned}
                &\frac{1}{|I|}\left\|\sum_{t\in I}X_{t-1}\epsilon_t^\prime \right\|_\infty \leq 4\mathbb{Q}(A^\star_j,\Sigma_\epsilon)\sqrt{\frac{\log p + \log |I|}{|I|}} + \frac{e-\tau_j^\star}{|I|}\mathbb{Q}^\prime_\infty(A^\star_{j+1}-A^\star_{j}, \Sigma_{\epsilon}), \\
                &\frac{1}{|I|}\left\|\sum_{t\in I}X_{t-1}\epsilon_t^\prime \right\|_{\text{op}} \leq 4\mathbb{Q}(A^\star_j,\Sigma_\epsilon)\sqrt{\frac{p + \log |I|}{|I|}} + \frac{e-\tau^\star_j}{|I|}\mathbb{Q}^\prime_{\text{op}}(A^\star_{j+1}-A^\star_{j}, \Sigma_{\epsilon}),
            \end{aligned}
        \end{equation*}
        where $\mathbb{Q}^\prime_\infty$ and $\mathbb{Q}^\prime_{\text{op}}$ are deterministic functions to upper bound the misspecified terms:
        \begin{equation*}
            \frac{1}{e-\tau^\star}\sum_{t=\tau^\star_j}^{e-1}X_{t-1}X_{t-1}^\prime(A_2^\star-A_1^\star)^\prime
        \end{equation*}
        with respect to the $\ell_\infty$ and operator norms, respectively.
    \end{definition}
    The verification of these two types of deviation bounds and also for the surrogate model are provided in Lemmas \ref{lemma:1}, \ref{lemma:3} and \ref{lemma:7}. In our analysis, the modified deviation bounds are essential in the proofs of Lemma~\ref{lemma:6}, Theorem 1, Theorem 3, Proposition 3, Proposition 5.
\end{itemize}
\newpage
\section{Detection Algorithms}\label{appendix:B}
In this section, we provide schematic representations for both the exhaustive search and the backwards elimination algorithms. 

First, we exhibit the main procedures for the exhaustive search to detect a single change point in Algorithm \ref{algo:1}.
\begin{algorithm}[!ht]
    \DontPrintSemicolon
    \KwInput{\small Time series data $\{X_t\},\ t = 0,1,\dots,n$; search domain $\mathcal{T} \subset \{1,2,\dots, T\}$; }
    \While{$\tau \in \mathcal{T}$}{
        \ Estimate the low rank and sparse components on the sub-intervals $[1,\tau)$ and $[\tau, T)$, respectively:
        \begin{equation*}
            \begin{aligned}
                &(\widehat{L}_{1,\tau}, \widehat{S}_{1,\tau}) := \argmin_{\substack{L_1\in \Omega \\ L_1, S_1\in \mathbb{R}^{p\times p}}} \left\{ \frac{1}{\tau-1}\sum_{t=1}^{\tau-1}\|X_t - (L_1+S_1)X_{t-1}\|_2^2 + \lambda_{1,\tau}\|S_1\|_1 + \mu_{1,\tau}\|L_1\|_* \right\}, \\
                &(\widehat{L}_{2,\tau}, \widehat{S}_{2,\tau}) := \argmin_{\substack{L_2\in \Omega \\ L_2, S_2\in \mathbb{R}^{p\times p}}} \left\{ \frac{1}{T-\tau}\sum_{t=\tau}^{T-1}\|X_t - (L_2+S_2)X_{t-1}\|_2^2 + \lambda_{2,\tau}\|S_2\|_1 + \mu_{2,\tau}\|L_2\|_* \right\},
            \end{aligned}
        \end{equation*}
        
        \ Estimate the change point $\widetilde{\tau}$:
        \begin{equation*}
        \widehat{\tau} := \argmin_{\tau \in \mathcal{T}}\left\{\frac{1}{T-1}\left( \sum_{t=1}^{\tau-1} \|X_t - (\widehat{S}_{1,\tau}+\widehat{L}_{1,\tau})X_{t-1}\|_2^2 + \sum_{t=\tau}^{T-1} \|X_t - (\widehat{S}_{2,\tau}+\widehat{L}_{2,\tau})X_{t-1}\|_2^2\right) \right\}.
        \end{equation*}
        
        \ Updating the time point $\tau$ by $\tau+1$
    }    
    \KwOutput{ The estimated change point $\widehat{\tau}$ and model parameters $\widehat{L}_{j,\widehat{\tau}}$, $\widehat{S}_{j,\widehat{\tau}}$, $j=1,2$. }
    \caption{\textbf{Single Change Point Detection via Exhaustive Search}}
    \label{algo:1}
\end{algorithm}

Next, we present the backwards elimination algorithm for screening the redundant candidate change points in Algorithm \ref{algo:2}. 
\begin{algorithm}
    \DontPrintSemicolon
    \KwInput{Time series data $\{X_t\}$, $t=1,2,\dots, n$; candidate change points $\{\widetilde{t}_j\}$ for $j=1,2,\dots, \widetilde{m}$. }
    \KwInit{Define the interval partition of time axis based on candidate change points:
    $\mathcal{P} \overset{\text{def}}{=} \left\{ \{1,\dots, \widetilde{t}_1\}, \{\widetilde{t}_1+1, \dots, \widetilde{t}_2\}, \dots, \{\widetilde{t}_{\widetilde{m}}+1, \dots, n\} \right\}$. Set the initial value of information criterion is $W_0=0$ and the number of final selected change points $m=\widetilde{m}$.}
    \While{$W_{m-1} \leq W_m$ and $m \neq 1$}{
        \ Let $\widetilde{\mathbf{t}} \overset{\text{def}}{=} \{\widetilde{t}_1, \dots, \widetilde{t}_{m}\}$ be the screened change points and define $W_m^* = \text{IC}(\widetilde{\mathbf{t}}; \bm{\lambda}, \bm{\mu}, \omega_n)$;
        
        \ For each $j=1,2,\dots, m$, we calculate $W_{m,-j} = \text{IC}(\widetilde{t}/\{\widetilde{t}_j\}; \bm{\lambda}, \bm{\mu}, \omega_n)$, and define $W_{m-1} = \min_j W_{m,-j}$;
        
        \ There are three cases:
        \begin{itemize}
            \item[(a)] If $W_{m-1} > W_m$, then no further step is needed. Return the current partition $\widehat{\mathcal{P}}$;
            \item[(b)] If $W_{m-1} \leq W_m$ and $m > 1$, set $j^* = \argmin_j W_{m,-j}$, then we update candidate \\ 
            change points vector $\widetilde{\mathbf{t}} \leftarrow \widetilde{\mathbf{t}} / \widetilde{t}_{j^*}$ and $m \leftarrow m-1$;
            \item[(c)] If $W_{m-1} \leq W_m$ and $m = 1$, return an empty set. 
        \end{itemize}
    }
    \KwOutput{The final set of screened change points $\{\widehat{t}_j\}$, for $j=1,2,\dots, \widehat{m}$. }
    \caption{\textbf{Screening via a Backwards Elimination Algorithm}}
    \label{algo:2}
\end{algorithm}

Prompted by a comment by a reviewer, we also consider a dynamic programming (DP) based algorithm to detect multiple change points. The key steps are listed in Algorithm~\ref{algo:3}. As mentioned in Section \ref{sec:dp}, the time complexity of the DP based algorithm is {$\mathcal{O}(T^2C(T))$}.

First, we introduce some additional notation. For any given two time points: $0 \leq l < u \leq T$, we define the following regularized regression problem:
\begin{equation*}
    \begin{aligned}
     &(\widehat{L}_{(l,u)}, \widehat{S}_{(l,u)}) \\
     &= \argmin_{(L_{(l,u)}, S_{(l,u)})}\left\{\frac{1}{u - l}\sum_{t=l}^{u-1}\|X_t - (L_{(l,u)}+S_{(l,u)})X_{t-1}\|_2^2 + \lambda_i\|S_{(l,u)}\|_1 + \mu_i \|L_{(l,u)}\|_*\right\},
    \end{aligned}
\end{equation*}
{
and the corresponding objective function $\mathcal{L}$ is given by:
\begin{equation*}
     \mathcal{L}(l, u) \overset{\text{def}}{=} \sum_{t=l}^{u-1}\|X_t - (\widehat{L}_{(l,u)}+\widehat{S}_{(l,u)})X_{t-1}\|_2^2.
\end{equation*}

According to the \textit{optimal partitioning} problem introduced in \cite{friedrich2008complexity}, we obtain the following algorithm based on Dynamic Programming (DP) as outlined in Algorithm \ref{algo:3}.
\begin{algorithm}[!ht]
    \DontPrintSemicolon
    \KwInput{Time series data $\{X_t\}$, $t=1,2,\dots, T$, the tuning parameter $\gamma$ which controls the partition, a list of empty sets to store estimated change points $\mathcal{C}(\tau)$ for $\tau=1,2,\dots, T$.}
    
    \KwInit{Let $F$ be a list with length $T$, and set $F(0) = -\gamma$, $\mathcal{C}(\tau) = \emptyset$ for $\tau=1,2,\dots, T$.}
    
    \For{$t=1,2,\dots, T$}{
        \ Let $F(t)= \min_{1\leq s\leq t}\left\{ F(s) + \mathcal{L}(s,t) + \gamma \right\}$;
        
        \ Denote $t^\prime = \argmin_{1\leq s \leq t}\left\{ F(s) + \mathcal{L}(s,t) + \gamma \right\}$;
        
        \ $\mathcal{C}(t) = \mathcal{C}(t^\prime) \cup \{t^\prime\}$;
    }
    
    \KwOutput{The set of estimated change points $\mathcal{C}(T)$.}
    \caption{{\bf {Penalized Dynamic Programming Algorithm.}}}
    \label{algo:3}
\end{algorithm}

The tuning parameter $\gamma$ aims to control the issue of over-partitioning. Intuitively, the larger the value of $\gamma$, the fewer partitions are created.  
}

\newpage
\section{Additional Technical Developments for the Weakly Sparse Surrogate Model}\label{appendix:C}
{
For the weakly sparse model, the detection procedure provided in Section 2.1 requires some modification. Specifically, for the single change point problem, the corresponding objective functions for estimating the model parameters in the intervals $[1,\tau)$ and $[\tau, T)$ for any time point $\tau \in \{1,\dots, T\}$ are given by:
\begin{equation*}
    \ell^w(A_1; \mathbf{X}^{[1:\tau)}) \overset{\text{def}}{=} \frac{1}{\tau-1}\sum_{t=1}^{\tau-1}\|X_t - A_1X_{t-1}\|_2^2,\quad \text{s.t.}\ A_1 \in \mathbb{B}_q(R_q);
\end{equation*}
\begin{equation*}
    \ell^w(A_2; \mathbf{X}^{[\tau:T)}) \overset{\text{def}}{=} \frac{1}{T-\tau}\sum_{t=\tau}^{T-1}\|X_t - A_2X_{t-1}\|_2^2,\quad \text{s.t.}\ A_2 \in \mathbb{B}_q(R_q);
\end{equation*}
where $\mathbb{B}_q(R_q)$ is the $\ell_q$ ball with radius $R_q$ introduced in \eqref{eq:12}. Therefore, analogously to \eqref{eq:3}, the objective function with respect to the change point becomes:
\begin{equation*}
    \ell^w(\tau; A_1, A_2) \overset{\text{def}}{=} \frac{1}{T-1}\left(\sum_{t=1}^{\tau-1}\|X_t - A_1X_{t-1}\|_2^2 + \sum_{t=\tau}^{T-1}\|X_t - A_2X_{t-1}\|_2^2\right).
\end{equation*}
Then, the estimator $\widehat{\tau}$ is obtained by:
\begin{equation*}
   \widehat{\tau} \overset{\text{def}}{=} \argmin_{\tau \in \mathcal{T}^w}\ell^w(\tau; \widehat{A}_1, \widehat{A}_2),
\end{equation*}
for the search domain $\mathcal{T}^w$, which is precisely specified in Assumption W2. Further, the estimators $\widehat{A}_1$ and $\widehat{A}_2$ are obtained by the above defined optimization problems.
}

Next and in order to formulate the theoretical properties of the surrogate model, we present several useful definitions and results. Specifically, for a chosen threshold $\eta_j > 0$, we firstly define the thresholded subset:
\begin{equation*}
    \mathcal{J}(\eta_j) \overset{\text{def}}{=} \left\{ (k,l) \in \{1,2,\dots, p\}^2: |A_j^\star(k,l)| > \eta_j \right\}.
\end{equation*}
Recalling the $\ell_1$ decomposition with respect to $\mathcal{J}(\eta_j)$, then we derive the upper bound of the cardinality of $\mathcal{J}(\eta_j)$ in terms of the threshold $\eta_j$ and $\ell_q$-ball radius $R_q$. Note that we have:
\begin{equation*}
    R_q \geq \sum_{k,l}|A_j^\star(k,l)|^q \geq \sum_{(k,l) \in \mathcal{J}(\eta_j)}|A_j^\star(k,l)|^q \geq \eta_j^q|\mathcal{J}(\eta_j)|,
\end{equation*}
hence, $|\mathcal{J}(\eta_j)| \leq R_q \eta_j^{-q}$ for any $\eta_j > 0$. Here, we set $\eta_j \propto \lambda_{j,\tau}^w$, for $j=1,2$, and denote  $\widetilde{\Delta}_{1,\tau} \overset{\text{def}}{=} \widetilde{A}_{1,\tau}^w - A_1^\star$, $\widetilde{\Delta}_{2,\tau} \overset{\text{def}}{=} \widetilde{A}_{2,\tau}^w - A_2^\star$, and $\widetilde{\Delta}_{1/2,\tau} \overset{\text{def}}{=} \widetilde{A}_{1,\tau}^w - A_2^\star$, where $\widetilde{A}_{j,\tau}^w$ represent the estimated transition matrices by the weakly sparse model with respect to time point $\tau$. Based on the decomposition of the $\ell_1$ norm and the discussion in Section 4.3 in \cite{negahban2012unified}, we obtain that:
\begin{equation*}
    \begin{dcases}
        \|\widetilde{\Delta}_{1,\tau}\|_1 \leq 4\sqrt{R_q}\eta_1^{-\frac{q}{2}}\|\widetilde{\Delta}_{1,\tau}\|_2 + 4R_q\eta_1^{1-q}, \\
        \|\widetilde{\Delta}_{2,\tau}\|_1 \leq 4\sqrt{R_q}\eta_2^{-\frac{q}{2}}\|\widetilde{\Delta}_{2,\tau}\|_2 + 4R_q\eta_2^{1-q}, \\
        \|\widetilde{\Delta}_{1/2,\tau}\|_1 \leq 4\sqrt{R_q}\eta_2^{-\frac{q}{2}}\|\widetilde{\Delta}_{1/2,\tau}\|_2 + 4R_q\eta_2^{1-q}.
    \end{dcases}
\end{equation*}

Based on step 2 of Algorithm~\ref{algo:2}, denote by $s_1, s_2, \dots, s_{\widetilde{m}}$ the candidate change points obtained from the rolling-window step. We analogously formulate the model as (8) in the main paper. Then, we estimate $A_{(s_{i-1}, s_i)}$ by solving the following regularized problem:
\begin{equation*}
    \widehat{A}^w_{(s_{i-1}, s_i)} = \argmin_{A \in \mathbb{B}_q(R_q)}\frac{1}{s_i - s_{i-1}}\sum_{t=s_{i-1}}^{s_i-1}\|X_t - AX_{t-1}\|_2^2.
\end{equation*}
Further, define the tuning parameter vector $\bm{\lambda}^w \overset{\text{def}}{=} (\lambda_1^w, \dots, \lambda_{\widetilde{m}}^w)$ to obtain
\begin{equation*}
    \mathcal{L}^w_T(s_1,s_2,\dots,s_m; \bm{\lambda}^w) \overset{\text{def}}{=} \sum_{i=1}^{\widetilde{m}+1} \left\{\sum_{t=s_{i-1}}^{s_i-1}\|X_t - \widehat{A}_i^wX_{t-1}\|_2^2 + \lambda_i^w\|\widehat{A}^w_i\|_1\right\}.
\end{equation*}
Then, we define the \textit{information criterion} for the weakly sparse model as follows:
\begin{equation}
    \label{eq:15}
    \text{IC}^w(s_1, s_2,\dots, s_m; \bm{\lambda}^w, \omega_T^w) \overset{\text{def}}{=} \mathcal{L}^w_T(s_1,\dots,s_m; \bm{\lambda}^w) + m\omega_T^w.
\end{equation}
The final selected change points are given by:
\begin{equation*}
    (\widehat{m}^w, \widehat{\tau}^w_i, i=1,2,\dots, \widehat{m}^w) = \argmin_{0\leq m \leq \widetilde{m}, (s_1,\dots, s_m)} \text{IC}^w(s_1,\dots,s_m; \bm{\lambda}^w, \omega_T^w).
\end{equation*}
Then, we can use the exact same backward elimination algorithm as proposed in Algorithm 2 to screen the redundant candidate change points by substituting the information criterion function with the newly defined $\text{IC}^w$ in \eqref{eq:15}. 

\noindent
\textbf{Additional Assumptions for the Multiple Change Points Problem for the Surrogate Model:}
Recall that employing the rolling-window mechanism in the multiple change points scenario will result in a number of \textit{redundant} candidate change points. By using the surrogate weakly sparse model, we obtain a few redundant candidate change points as well. Therefore, we need to remove those redundant change points by using a similar screening step as introduced in the two-step algorithm in Section 3. Similarly, we also extend Assumptions H3', H5 and H6 to the weakly sparse scenario in order to formally introduce the theoretical results for the surrogate model.
\begin{itemize}[leftmargin=*]
    \item[(W3)] Let $\Delta_T \overset{\text{def}}{=} \min_{1\leq j \leq m_0}|\tau_{j+1}^\star - \tau_j^\star|$ denote the minimum spacing between consecutive change points, there exists a vanishing positive sequence $\{\xi_T^w\}$ such that, as $T \to +\infty$, 
    \begin{equation*}
        \frac{\Delta_T}{T\xi_T} \to +\infty,\ R_q\eta_{\min}^{-q}\sqrt{\frac{\log p}{T\xi_T^w}} \to 0,\ \frac{m_0T^{1+q}\xi_TR_q^2\left(\log (p\vee T)\right)^{-q}}{\omega_T^w} \to 0,\ \text{and}\ \frac{\Delta_T}{m_0\omega_T^w} \to +\infty,
    \end{equation*}
    where $\eta_{\min} = \min_{1\leq j \leq m_0+1}\eta_j$, $\eta_j \propto \lambda^w_j$, and the definition of $\eta_j$'s are provided in Appendix \ref{appendix:C}.
    \item[(W4)] Suppose $(s_1, \dots, s_m)$ are a set of change points obtained from Step 1 of the rolling window strategy. Then, we consider the following scenarios: (a) if $|s_i - s_{i-1}| \leq T\xi_T^w$, select $\lambda_i^w = c\sqrt{T\xi_T^w \log p}$ for $i=1,2,\dots, m$; (b) if there exist two true change points $\tau_j^\star$ and $\tau_{j+1}^\star$ such that $|s_{i-1} - \tau_j^\star| \leq T\xi_T^w$ and $|s_i - \tau_{j+1}^\star| \leq T\xi_T^w$, select $\lambda_i^w = 4\left(c\sqrt{\frac{\log p}{s_i - s_{i-1}}} + M_SR_q\left(\frac{\log p}{s_i-s_{i-1}}\right)^{-\frac{q}{2}}\frac{T\xi_T^w}{s_i - s_{i-1}}\right)$; (c) otherwise, select $\lambda_i^w = 4c\sqrt{\frac{\log p + \log(s_i-s_{i-1})}{s_i - s_{i-1}}}$.
\end{itemize}
Assumption W3 is a direct extension of Assumptions H3' and H5 to the weakly sparse model. It reflects the connections among the minimum spacing $\Delta_T^w$, the radius of $\ell_q$-ball $R_q$, and the vanishing sequence $\xi_T^w$. Similar to Assumption H6, Assumption W4 specifies the selection of the tuning parameters to solve a lasso regression problem given in Appendix \ref{appendix:C}. Note that these complex tuning parameters are due to the misspecified models, per the discussion ensuing Assumption H6. 


Next, an analogue of Corollary 1 in the main text is established. We select the radius $R_w$ as $B^\prime m_0T^{1+q}\xi_T^wR_q^2\left(\log(p\vee T)\right)^{-q}$ for some large constant $B^\prime$. Then, for each estimated change point, we remove its $R_w$-radius neighborhood to establish:
\begin{manualcor}{3}
\label{cor:3}
For the estimated change points $\widehat{\tau}^w_1, \dots, \widehat{\tau}^w_{\widehat{m}^w}$ and for the $j$-th interval of length $N_j^w$ obtained after removing $R_w$-radius neighborhoods around them, by selecting the tuning parameter $\lambda_j^w = 4c_0^\prime\sqrt{\frac{\log p}{N_j^w}}$, we can establish that the error bound for the estimated model parameters is given by:
\begin{equation*}
    \|\widehat{A}^w_j - A^\star_j\|_F^2 \leq C_0R_q\left(\frac{\log p}{N_j^w}\right)^{1-\frac{q}{2}},
\end{equation*}
for some large enough universal constant $C_0>0$.
\end{manualcor}
{
Next, we compare the error bounds obtained from the low-rank plus sparse model with the surrogate weakly sparse model in the following proposition. Before we state the result, we need to clarify the following further assumptions:
\begin{itemize}[leftmargin=*]
    \item[(W5a)] The window size is set to $h = c_0\log T \left( d_{\max}^\star\log p + r_{\max}^\star p\right)$, where $c_0>0$ is a positive constant. The information ratios are $0 < \gamma_j < p$ for $j=1,2,\dots, m_0+1$, that is, all segments are sparse dominating.
    \item[(W5b)] $q$ is restricted in the range: $0 < q \leq \frac{1}{2}\frac{\log (d^\star_{\max} + r^\star_{\max})}{\log p} < 1$.
    \item[(W5c)] Let $\alpha_L \overset{\text{def}}{=}\max_j{\alpha_j}$ for each segment; then, the radius $R_q$ is upper bounded by: 
    \begin{equation*}
        d_{\max}^\star\left(\left(\frac{\alpha_L}{p}\right)^q + M_S^q\right) + (p^\star - d_{\max}^\star)|\sigma_{\max}|^q \leq R_q \leq p^{2-q}(d_{\max}^\star + r_{\max}^\star)^{1-\frac{q}{2}}\max\left\{\alpha_L, M_S\right\}^q.
    \end{equation*}
\end{itemize}
These assumptions are essential ingredients to analyze the asymptotic behaviour of the error bounds obtained from different models. Assumption W5a is designed to satisfy the Assumption H4 on the size of the rolling window $h$. Assumption W5b indicates that $q$ is not allowed to be too large, which is consistent with the preceding discussion in Section 4. Finally, Assumption W5c controls the spiky entries in the transition matrix and is in accordance with Assumption H2.
\begin{manualprop}{6}
\label{prop:6}
Suppose Assumptions (W5a)--(W5c) hold; then, the following result holds:
\begin{equation*}
    1 \leq \frac{d_H(\widetilde{\mathcal{S}}_w, \mathcal{S}^\star)}{d_H(\widetilde{\mathcal{S}}, \mathcal{S}^\star)} \leq c_0p^{2-q}(\log T)^{\frac{q}{2}}.
\end{equation*}
\end{manualprop}}

Proposition~\ref{prop:6} indicates that the error bound for the estimated change points obtained from the surrogate weakly sparse model is larger than the one obtained from the low-rank plus sparse model, while it can be asymptotically upper bounded by $\mathcal{O}(p^{2-q}(\log T)^{\frac{q}{2}})$. For the extreme case $q=0$, the surrogate model becomes a strictly sparse one and thus the upper bound is $\mathcal{O}(p^2)$ due to the dense low-rank components in the true model. The details of the proof and the required assumptions for Proposition \ref{prop:6} are provided in Appendix E.

\newpage
\section{Auxiliary Lemmas}\label{appendix:D}
\begin{lemma}
\label{lemma:1}
Given a VAR(1) series $\{X_t\}$ and a time point $s$, for any true change point $\tau_j^\star$, if $|s-\tau_j^\star| \geq T\xi_T$, and $\tau_{j-1}^\star < s < \tau_j^\star$, there exist constants $c_i > 0$ such that with probability at least $1-c_1\exp(-c_2\log p)$:
\begin{align*}
    &\sup_{1\leq j \leq m_0, |s-\tau^\star_j| \geq T\xi_T}\left\| (t_j^\star-s)^{-1}\left( \sum_{t= s}^{\tau_j^\star -1}X_{t-1}X_{t-1}^\prime - \Gamma_j(0) \right) \right\|_{\infty} \leq c_0\sqrt{\frac{\log p}{\tau_j^\star - s}}, \\
    &\sup_{1\leq j \leq m_0, |s-\tau^\star_j| \geq T\xi_T}\left\| (\tau_j^\star-s)^{-1}\sum_{t = s}^{\tau_j^\star-1}X_{t-1}
    ^\prime \epsilon_t \right\|_{\infty} \leq c_0\sqrt{\frac{\log p}{\tau_j^\star - s}}.
\end{align*}
Similarly, there exist constants $c_i^\prime > 0$, such that with probability at least $1-c_1\exp(-c_2 p)$:
\begin{align*}
    &\sup_{1\leq j \leq m_0, |s-\tau^\star_j| \geq T\xi_T}\left\| (\tau_j^\star-s)^{-1}\left( \sum_{t= s}^{\tau_j^\star -1}X_{t-1}X_{t-1}^\prime - \Gamma_j(0) \right) \right\|_{\text{op}} \leq c_0\sqrt{\frac{p}{\tau_j^\star - s}}, \\
    &\sup_{1\leq j \leq m_0, |s-\tau^\star_j| \geq T\xi_T}\left\| (\tau_j^\star-s)^{-1}\sum_{t = s}^{\tau_j^\star-1}X_{t-1}
    ^\prime \epsilon_t \right\|_{\text{op}} \leq c_0\sqrt{\frac{p}{\tau_j^\star - s}}.
\end{align*}
\end{lemma}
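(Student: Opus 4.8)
The plan is to reduce all four bounds to the stationary deviation inequalities of \cite{basu2015regularized}, applied segment by segment, and then to upgrade them so that they hold uniformly over the starting point $s$ by a union bound combined with a dyadic peeling device. First I would fix a segment index $j$ and a time point $s$ with $\tau_{j-1}^\star < s < \tau_j^\star$ and set $n := \tau_j^\star - s \geq T\xi_T$. On the interval $[s,\tau_j^\star)$ the observations $X_{s-1},\dots,X_{\tau_j^\star-2}$ and the innovations $\epsilon_s,\dots,\epsilon_{\tau_j^\star-1}$ are governed by the single stable VAR(1) law with transition matrix $A_j^\star$, so by the stability assumption of Appendix~\ref{appendix:A} the spectral density $f_j$ satisfies $\mathcal{M}(f_j)<+\infty$ and $\mathfrak{m}(f_j)>0$, and $\Gamma_j(0)$ is its stationary covariance.

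I would then recognize each entry of $\tfrac{1}{n}\sum_{t=s}^{\tau_j^\star-1}X_{t-1}X_{t-1}'-\Gamma_j(0)$ as a centered quadratic form in a Gaussian vector, and each entry of $\tfrac{1}{n}\sum_{t=s}^{\tau_j^\star-1}X_{t-1}'\epsilon_t$ as a bilinear form in the jointly Gaussian pair $(X,\epsilon)$ (with $\epsilon_t$ independent of $X_{t-1}$). Applying the Hanson--Wright type concentration used in \cite{basu2015regularized} yields, for a single fixed entry, a sub-exponential tail of the form $\exp(-c\,n\min\{\eta^2,\eta\})$ whose scale is controlled by $\mathcal{M}(f_j)$; choosing $\eta\asymp\sqrt{\log p/n}$ (admissible since $n\geq T\xi_T\geq\log p$ eventually, forcing the quadratic branch of the minimum) and taking a union over the $p^2$ entries gives the $\ell_\infty$ bounds at rate $\sqrt{\log p/n}$ with failure probability $c_1\exp(-c_2\log p)$. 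The operator-norm bounds follow the same route but replace the entrywise union by a discretization of the unit sphere through a $1/4$-net of cardinality $e^{O(p)}$, which produces the rate $\sqrt{p/n}$ and failure probability $c_1\exp(-c_2 p)$.

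Next I would promote the fixed-$s$ statements to the stated suprema. Since there are at most $T$ admissible pairs $(j,s)$, a crude union bound inflates the failure probability by a factor $T$; to keep it of the form $c_1\exp(-c_2\log p)$ one either works in the regime $\log T=O(\log p)$ or absorbs the factor by enlarging $c_0$, which enlarges the exponent coefficient proportionally. To cope with the fact that the normalization $n=\tau_j^\star-s$, and hence the target rate $\sqrt{\log p/n}$, varies across the supremum, I would group the admissible $s$ into dyadic blocks $n\in[2^k,2^{k+1})$ and run the union bound within each block, where the rate is constant up to a factor of $2$; since there are only $O(\log T)$ blocks and $n\geq T\xi_T$ in every block, the peeling does not degrade the exponent.

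The main obstacle I anticipate is precisely this uniformity over $s$: the supremum ranges over intervals of widely different lengths, so the deviation rate cannot be held constant, and one must ensure that the union-bound and peeling inflation of the failure probability stays negligible against $\exp(-c_2\log p)$ (respectively $\exp(-c_2 p)$). This is exactly where the lower bound $|s-\tau_j^\star|\geq T\xi_T$ enters, guaranteeing that every interval entering the supremum is long enough for the concentration inequality to be effective. The analogous but more delicate bounds that additionally carry a $\log|I|$ factor, needed when the interval straddles a change point rather than lying inside a single segment, are recorded separately in Definition~\ref{def:2} and established in Lemmas~\ref{lemma:3} and \ref{lemma:7}.
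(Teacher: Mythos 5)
Your proposal is correct and follows essentially the same route as the paper, which simply omits the proof and cites Proposition 2.4 of \cite{basu2015regularized}, Proposition 3 of \cite{basu2019low}, and Lemma 3 of \cite{safikhani2017joint} --- i.e., exactly the Hanson--Wright-type concentration for stationary Gaussian VAR segments, the entrywise union bound for the $\ell_\infty$ rates, the $1/4$-net discretization for the operator-norm rates, and a union bound over admissible $(j,s)$ for the suprema. Your explicit treatment of the uniformity over $s$ (dyadic peeling, and absorbing the factor $T$ either via $\log T = O(\log p)$ or by enlarging constants) supplies detail the paper leaves implicit, and correctly identifies the role of the lower bound $|s-\tau_j^\star|\geq T\xi_T$.
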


\begin{proof}[Proof of Lemma \ref{lemma:1}]
    The proof of the lemma follows along similar lines as that of Proposition 2.4 in \cite{basu2015regularized}, Proposition 3 in \cite{basu2019low} and Lemma 3 in \cite{safikhani2017joint} and thus is omitted.
\end{proof}

Consider the following two sets of subspaces $\{\mathcal{I}, \mathcal{I}^c\}$ and $\{A, B\}$ associated with some generic matrix $\Theta \in \mathbb{R}^{p \times p}$, in which the $\ell_1$ norm and the nuclear norm are decomposable, respectively \citep{negahban2012unified}. Specifically, let the singular value decomposition of $\Theta$ be $\Theta = U\Sigma V^\prime$ with $U$ and $V$ being orthonormal matrices. Let $r = \text{rank}(\Theta)$, and $U^r$ and $V^r$  denote the first $r$ columns of $U$ and $V$ associated with the first $r$ singular values of $\Theta$, respectively. Define:
\begin{align*}
    A &\overset{\text{def}}{=} \left\{ \Psi \in \mathbb{R}^{p \times p}: \text{row}(\Psi) \subseteq V^r \ \text{and}\ \text{col}(\Psi) \subseteq U^r \right\}, \\
    B &\overset{\text{def}}{=} \left\{ \Psi \in \mathbb{R}^{p \times p}: \text{row}(\Psi) \perp V^r \ \text{and}\ \text{col}(\Psi) \perp U^r \right\}.
\end{align*}
Let $\mathcal{J}$ be the set of indices in which $\Theta$ is nonzero. Analogously, we define
\begin{align*}
    \mathcal{I} &\overset{\text{def}}{=} \left\{ \Psi \in \mathbb{R}^{p \times p}: \Psi_{ij} = 0\  \text{for}\ (i,j) \notin \mathcal{J} \right\}, \\
    \mathcal{I}^c &\overset{\text{def}}{=} \left\{ \Psi \in \mathbb{R}^{p \times p}: \Psi_{ij} = 0\  \text{for}\ (i,j) \in \mathcal{J} \right\}.
\end{align*}
Therefore, we have $\|\Theta\|_* = \|\Theta\|_{*, A} + \|\Theta\|_{*, B}$ and $\|\Theta\|_1 = \|\Theta\|_{1,\mathcal{I}} + \|\Theta\|_{1,\mathcal{I}^c}$. 
\begin{lemma}
\label{lemma:2}
    Define the error matrices $\widehat{\Delta}^L = \widehat{L} - L^\star$ and $\widehat{\Delta}^S = \widehat{S} - S^\star$ associated with any positive parameters $\lambda$, $\mu$, and let the weighted regularizer $\mathcal{Q}$ be defined as:
    \begin{equation*}
        \mathcal{Q}(\widehat{\Delta}^L,\widehat{\Delta}^S) \overset{\text{def}}{=} \|\widehat{\Delta}^L\|_* + \frac{\lambda}{\mu}\|\widehat{\Delta}^S\|_1,
    \end{equation*}
    for the previously defined subspaces. Then, the following inequality holds:
    \begin{equation*}
        \mathcal{Q}(L^\star, S^\star) - \mathcal{Q}(\widehat{L}, \widehat{S}) \leq \mathcal{Q}(\widehat{\Delta}^L_A, \widehat{\Delta}^L_{\mathcal{I}}) - \mathcal{Q}(\widehat{\Delta}^L_B, \widehat{\Delta}^L_{\mathcal{I}^c}).
    \end{equation*}
\end{lemma}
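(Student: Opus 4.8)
The plan is to prove the two norm inequalities separately---one for the nuclear-norm piece and one for the $\ell_1$ piece---and then recombine them with the weight $\lambda/\mu$, exploiting the decomposability of each norm over its associated subspace pair. First I would record the two structural facts that drive the argument: the subspaces $\{A,B\}$ are built from the SVD of the \emph{true} low-rank matrix $L^\star$, so that $L^\star\in A$ and consequently its projection onto $B$ vanishes, i.e.\ $L^\star_B=0$; likewise $\{\mathcal{I},\mathcal{I}^c\}$ are built from the support $\mathcal{J}$ of the \emph{true} sparse matrix $S^\star$, so that $S^\star\in\mathcal{I}$ and $S^\star_{\mathcal{I}^c}=0$. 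Both facts are immediate from the definitions of the subspaces given just above the statement.

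For the nuclear norm, write $\widehat{L}=L^\star+\widehat{\Delta}^L$ and split $\widehat{\Delta}^L=\widehat{\Delta}^L_A+\widehat{\Delta}^L_B$. The triangle inequality gives $\|\widehat{L}\|_*\geq\|L^\star+\widehat{\Delta}^L_B\|_*-\|\widehat{\Delta}^L_A\|_*$. Since $L^\star\in A$ and $\widehat{\Delta}^L_B\in B$ have, by construction, mutually orthogonal row spaces and mutually orthogonal column spaces, the decomposability of the nuclear norm yields $\|L^\star+\widehat{\Delta}^L_B\|_*=\|L^\star\|_*+\|\widehat{\Delta}^L_B\|_*$. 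Combining and rearranging produces
\begin{equation*}
\|L^\star\|_*-\|\widehat{L}\|_*\leq\|\widehat{\Delta}^L_A\|_*-\|\widehat{\Delta}^L_B\|_*.
\end{equation*}

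The $\ell_1$ argument is identical in form: writing $\widehat{S}=S^\star+\widehat{\Delta}^S$, splitting $\widehat{\Delta}^S=\widehat{\Delta}^S_{\mathcal{I}}+\widehat{\Delta}^S_{\mathcal{I}^c}$, and using that the $\ell_1$ norm decomposes additively over the disjoint index sets $\mathcal{J}$ and $\mathcal{J}^c$ (with $S^\star$ supported on $\mathcal{J}$), one obtains $\|S^\star\|_1-\|\widehat{S}\|_1\leq\|\widehat{\Delta}^S_{\mathcal{I}}\|_1-\|\widehat{\Delta}^S_{\mathcal{I}^c}\|_1$. Multiplying this inequality by $\lambda/\mu$, adding the result to the nuclear-norm inequality, and regrouping the terms according to the definition $\mathcal{Q}(\,\cdot\,,\,\cdot\,)=\|\cdot\|_*+(\lambda/\mu)\|\cdot\|_1$ (so that the sparse projections entering the right-hand side are those of $\widehat{\Delta}^S$) gives exactly the claimed bound. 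I do not anticipate a serious obstacle here: the computation is a routine application of decomposability. The only point requiring genuine care is the correct identification of which true component lies in the respective model subspace---$L^\star\in A$ and $S^\star\in\mathcal{I}$---since it is precisely the vanishing of the complementary projections $L^\star_B=0$ and $S^\star_{\mathcal{I}^c}=0$ that makes the two decomposability identities applicable.
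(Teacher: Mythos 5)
Your proposal is correct and follows essentially the same route as the paper's own proof: exploit $L^\star_B=0$ and $S^\star_{\mathcal{I}^c}=0$, apply the triangle inequality together with decomposability of the nuclear norm over $(A,B)$ and of the $\ell_1$ norm over $(\mathcal{I},\mathcal{I}^c)$, and recombine with the weight $\lambda/\mu$. In fact your version is the cleaner execution --- the paper's displayed computation swaps the roles of $\widehat{\Delta}^S_{\mathcal{I}}$ and $\widehat{\Delta}^S_{\mathcal{I}^c}$ in an intermediate step (an evident typo), whereas your bounds $\|L^\star\|_*-\|\widehat{L}\|_*\leq\|\widehat{\Delta}^L_A\|_*-\|\widehat{\Delta}^L_B\|_*$ and $\|S^\star\|_1-\|\widehat{S}\|_1\leq\|\widehat{\Delta}^S_{\mathcal{I}}\|_1-\|\widehat{\Delta}^S_{\mathcal{I}^c}\|_1$ yield exactly the stated inequality (with the superscripts on the sparse projections read as $S$ rather than $L$).
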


\begin{proof}[Proof of Lemma \ref{lemma:2}]
    Based on the definition of the subspaces, we immediately get that $L_B^\star = 0$ and $S_{\mathcal{I}^c}^\star = 0$. Then, we get:
    \begin{align*}
        \mathcal{Q}(\widehat{L}, \widehat{S})
        &= \mathcal{Q}(L^\star + \widehat{\Delta}^L, S^\star + \widehat{\Delta}^S) = \|L^\star_A + L^\star_B + \widehat{\Delta}^L_A + \widehat{\Delta}^L_B\|_* + \frac{\lambda}{\mu}\|S^\star_{\mathcal{I}} + S^\star_{\mathcal{I}^c} + \widehat{\Delta}^S_{\mathcal{I}} + \widehat{\Delta}^S_{\mathcal{I}^c}\|_1 \\
        &\geq \|L^\star_A + \widehat{\Delta}^L_B\|_* - \|\widehat{\Delta}^L_A\|_* + \frac{\lambda}{\mu}\left( \|S^\star_{\mathcal{I}} + \widehat{\Delta}^S_{\mathcal{I}}\|_1 - \|\widehat{\Delta}^S_{\mathcal{I}^c}\|_1 \right) \\
        &\geq \|L_A^\star\|_* + \|\widehat{\Delta}^L_B\|_* - \|\hat{\Delta}^L_A\|_* + \frac{\lambda}{\mu}\left( \|S^\star_{\mathcal{I}}\|_1 + \|\widehat{\Delta}^S_{\mathcal{I}}\|_1 - \|\widehat{\Delta}^S_{\mathcal{I}^c}\|_1 \right).
    \end{align*}
    Therefore, it follows that,
    \begin{align*}
        &\mathcal{Q}(L^\star, S^\star) - \mathcal{Q}(\widehat{L}, \widehat{S})
        = \left( \|L_A^\star\|_* + \frac{\lambda}{\mu}\|S_{\mathcal{I}}^\star\|_1 \right) - \mathcal{Q}(\widehat{L}, \widehat{S}) \\
        &\leq \|\widehat{\Delta}^L_B\|_* + \frac{\lambda}{\mu}\|\widehat{\Delta}^S_{\mathcal{I}^c}\|_1 - \left(\|\widehat{\Delta}^L_A\|_* + \frac{\lambda}{\mu}\|\widehat{\Delta}^S_{\mathcal{I}}\|_1 \right) = \mathcal{Q}(\widehat{\Delta}^L_B, \widehat{\Delta}^S_{\mathcal{I}^c}) - \mathcal{Q}(\widehat{\Delta}^L_A, \widehat{\Delta}^S_{\mathcal{I}}).
    \end{align*}
\end{proof}

Recall that the exhaustive search algorithm requires examining every time point in the search domain $\mathcal{T}$. It can then be seen that for the fixed true change point $\tau^\star$ solving optimization problems (4) in the main paper on $[1,\tau)$ and $[\tau,T)$ includes a portion of time where the underlying model is \textit{misspecified}. For example, assuming that $\tau > \tau^\star$, then solving (4) in the main paper cannot reach the optimal estimation error rate. Therefore, we require the following lemma to select the tuning parameters for intervals involving misspecified models.
\begin{lemma}\label{lemma:3}
Under the condition of Theorem 1 with $\tau > \tau^\star$, consider the interval $[1,\tau)$ where the model is misspecified and further select tuning parameters
\begin{equation*}
    \lambda_{1,\tau} = 4c\sqrt{\frac{\log p + \log(\tau-1)}{\tau-1}},\quad \mu_{1,\tau} = 4c\sqrt{\frac{p + \log(\tau-1)}{\tau-1}}.
\end{equation*}
Suppose that the search domain $\mathcal{T}$ satisfies Assumption H3; then, the following hold: \\
(1) for $T \succsim \log p$, with probability at least $1-c_1p^{-1}$:
\begin{equation}\label{eq:16}
    \left\| \frac{1}{\tau-1}\sum_{t=1}^{\tau-1}X_{t-1}(X_t - (L_1^\star+S_1^\star)X_{t-1})^\prime \right\|_\infty \leq \frac{\lambda_{1,\tau}}{2} + c_0\frac{(\tau-\tau^\star)_+}{\tau-1}(M_S\vee \alpha_L)(d_{\max}^\star+\sqrt{r_{\max}^\star}),
\end{equation}
(2) for $T \succsim p$, with probability at least $1-c_1^\prime\exp(-c_2^\prime p)T^{1-c_3^\prime}$:
\begin{equation}\label{eq:17}
    \left\| \frac{1}{\tau-1}\sum_{t=1}^{\tau-1}X_{t-1}(X_t - (L_1^\star+S_1^\star)X_{t-1})^\prime \right\|_{\text{op}} \leq \frac{\mu_{1,\tau}}{2} + c_0\frac{(\tau-\tau^\star)_+}{\tau-1}(M_S\vee \alpha_L),
\end{equation}
where $c_0, c_1, c_1^\prime, c_2^\prime, c_3^\prime$ are some generic large enough positive constants. Symmetrically, we can obtain the following deviation bounds for the other side of interval $[\tau, T)$:\\
(1) with probability at least $1-c_1p^{-1}$:
\begin{equation}
    \left\| \frac{1}{T-\tau}\sum_{t=\tau}^{T-1}X_{t-1}(X_t - (L_2^\star+S_2^\star)X_{t-1})^\prime \right\|_\infty \leq \frac{\lambda_{2,\tau}}{2} + c_0\frac{(\tau^\star-\tau)_+}{T-\tau}(M_S\vee \alpha_L)(d_{\max}^\star+\sqrt{r_{\max}^\star}),\tag{3'}\label{eq:16'}
\end{equation}
(2) with probability at least $1-c_1^\prime\exp(-c_2^\prime p)T^{1-c_3^\prime}$:
\begin{equation}
    \left\| \frac{1}{T-\tau}\sum_{t=\tau}^{T-1}X_{t-1}(X_t - (L_2^\star+S_2^\star)X_{t-1})^\prime \right\|_{\text{op}} \leq \frac{\mu_{2,\tau}}{2} + c_0\frac{(\tau^\star-\tau)_+}{T-\tau}(M_S\vee \alpha_L),\tag{4'}\label{eq:17'}
\end{equation}
where $c_0, c_1, c_1^\prime, c_2^\prime, c_3^\prime$ are some large enough positive constants, and the tuning parameters $\lambda_{2,\tau}$ and $\mu_{2,\tau}$ are given by:
\begin{equation*}
    \lambda_{2,\tau} = 4c\sqrt{\frac{\log p + \log(T-\tau)}{T-\tau}},\quad \mu_{2,\tau} = 4c\sqrt{\frac{p + \log(T-\tau)}{T-\tau}}.
\end{equation*}
\end{lemma}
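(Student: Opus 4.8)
The plan is to separate the genuine noise from the bias created by fitting a single transition matrix across the change point. Since $\tau > \tau^\star$, the residual satisfies $X_t - (L_1^\star+S_1^\star)X_{t-1} = \epsilon_t^1$ for $t \le \tau^\star$ (where $A_1^\star$ is the correct matrix) and $X_t - (L_1^\star+S_1^\star)X_{t-1} = (A_2^\star - A_1^\star)X_{t-1} + \epsilon_t^2$ for $\tau^\star < t \le \tau-1$ (the misspecified portion). Substituting and splitting the cross-product sum, I would first establish the identity
\begin{equation*}
    \frac{1}{\tau-1}\sum_{t=1}^{\tau-1}X_{t-1}(X_t-(L_1^\star+S_1^\star)X_{t-1})^\prime = \frac{1}{\tau-1}\Bigl(\sum_{t=1}^{\tau^\star}X_{t-1}\epsilon_t^{1\prime} + \sum_{t=\tau^\star+1}^{\tau-1}X_{t-1}\epsilon_t^{2\prime}\Bigr) + \frac{1}{\tau-1}\sum_{t=\tau^\star+1}^{\tau-1}X_{t-1}X_{t-1}^\prime(A_2^\star-A_1^\star)^\prime,
\end{equation*}
so that the right-hand side is a ``noise'' part plus a deterministic ``bias'' part.

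For the two noise sums I would invoke Lemma~\ref{lemma:1} (the Basu--Michailidis-type deviation bounds for stationary VAR segments), applied separately to the stationary stretches $[1,\tau^\star)$ and $[\tau^\star,\tau)$. Because $\widehat\tau$ is obtained by an exhaustive search, I must take a union bound over all $\tau \in \mathcal{T}$; this is exactly what forces the extra $\log(\tau-1)$ inside the square roots relative to \eqref{eq:7}, and the resulting rates $\sqrt{(\log p + \log(\tau-1))/(\tau-1)}$ in $\ell_\infty$ and $\sqrt{(p+\log(\tau-1))/(\tau-1)}$ in operator norm are absorbed into $\lambda_{1,\tau}/2$ and $\mu_{1,\tau}/2$, producing the failure probabilities $1-c_1p^{-1}$ and $1-c_1^\prime\exp(-c_2^\prime p)T^{1-c_3^\prime}$ after accounting for the $|\mathcal{T}| \le T$ events.

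The bias term is the main obstacle and requires the structural assumptions. I would first use the concentration of the sample second moment (again Lemma~\ref{lemma:1}) to replace $\frac{1}{\tau-1-\tau^\star}\sum_{t=\tau^\star+1}^{\tau-1}X_{t-1}X_{t-1}^\prime$ by the segment-$2$ autocovariance $\Gamma_2(0)$ up to a vanishing error, leaving the prefactor $(\tau-\tau^\star)_+/(\tau-1)$. It then remains to bound $\Gamma_2(0)(A_2^\star-A_1^\star)^\prime$ after decomposing $A_2^\star - A_1^\star = (L_2^\star-L_1^\star)+(S_2^\star-S_1^\star)$. For the $\ell_\infty$ bound I would treat each entry as an inner product of a row of $\Gamma_2(0)$ with a row of the difference: the sparse part has at most $d_{\max}^\star$ nonzeros per row, each of size $\le 2M_S$, giving a contribution of order $M_S d_{\max}^\star$; the low-rank part, using $\|L_j^\star\|_\infty \le \alpha_L/p$ together with the rank bound $r_{\max}^\star$ and the incoherence of $U_j^\star,V_j^\star$ from H2, yields a contribution of order $\alpha_L\sqrt{r_{\max}^\star}$ — together these give the factor $(M_S\vee\alpha_L)(d_{\max}^\star+\sqrt{r_{\max}^\star})$. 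For the operator norm I would instead bound $\|\Gamma_2(0)\|_{\mathrm{op}}$ by the stability constant $\mathcal{M}$ and use $\|L_j^\star\|_{\mathrm{op}}\le\|L_j^\star\|_F\le\alpha_L$ for the low-rank part and a row/column-sparsity estimate of order $M_S$ for the sparse part, which removes the $(d_{\max}^\star+\sqrt{r_{\max}^\star})$ factor and leaves $(M_S\vee\alpha_L)$. The hardest bookkeeping is ensuring the autocovariance concentration and the norm estimates hold \emph{uniformly} over $\tau\in\mathcal{T}$ at the stated probabilities. Finally, the bounds \eqref{eq:16'} and \eqref{eq:17'} for the interval $[\tau,T)$ with $\tau<\tau^\star$ follow by the identical argument with the roles of the two segments and the misspecified stretch $[\tau^\star,T)$ interchanged.
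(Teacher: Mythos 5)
Your proposal is correct and follows essentially the same route as the paper: decompose the residual on $[1,\tau)$ into the genuine noise on each stationary stretch plus the deterministic misspecification bias $\frac{1}{\tau-1}\sum_{t>\tau^\star}X_{t-1}X_{t-1}^\prime(A_2^\star-A_1^\star)^\prime$, control the noise via the Basu--Michailidis-type deviation bounds with a union bound over $\tau\in\mathcal{T}$ (which is exactly where the extra $\log(\tau-1)$ in the tuning parameters comes from), and absorb the bias into the second term of the stated bound. The paper's own proof is considerably terser --- it simply subtracts the bias bound inside the probability and reduces to the noise deviation via Proposition 3 of \cite{basu2019low} --- so your more explicit bookkeeping of the bias term through $\Gamma_2(0)$ and the low-rank/sparse decomposition of $A_2^\star-A_1^\star$ fills in details the paper leaves implicit.
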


\begin{proof}[Proof of Lemma \ref{lemma:3}]
    First, we present the details for \eqref{eq:16}. Fix $t\in \mathcal{T}$, and define $Y_t \overset{\text{def}}{=} X_t - (L_1^\star+S_1^\star)X_{t-1}$ for $t=2,\dots, \tau$. Therefore, we obtain that $\mathbb{E}(Y_t) = 0$, while $\text{cov}(X_t, Y_t) \neq 0$. Then, by setting $\lambda_{1,\tau} = \frac{A}{\sqrt{\tau-1}}$ and $\mu_{1,\tau} = \frac{A^\prime}{\sqrt{\tau-1}}$, with $A\overset{\text{def}}{=} 4c\sqrt{\log p + \log(\tau-1)}$ and $A^\prime\overset{\text{def}}{=} 4c\sqrt{p + \log(\tau-1)}$, and for some large enough constant $c>0$, we obtain using a union bound the following:
    \begin{align*}
        &\mathbb{P}\left( \max_{\tau \in \mathcal{T}} \lambda_{1,\tau}^{-1}\left\| \frac{1}{\tau-1}\sum_{t=1}^{\tau-1}X_{t-1}Y_t^\prime \right\|_\infty - c_0\lambda_{1,\tau}^{-1}\frac{(\tau-\tau^\star)_+}{\tau-1}(M_S\vee \alpha_L)(d_{\max}^\star+\sqrt{r_{\max}^\star}) > \frac{1}{2} \right) \nonumber \\
        \leq &\sum_{\tau \in \mathcal{T}}\mathbb{P}\left(\left\|\frac{1}{\tau-1}\sum_{t=1}^{\tau-1}X_{t-1}\epsilon_t^\prime \right\|_\infty > \frac{A}{2\sqrt{\tau-1}}\right) \overset{\text{(i)}}{\leq} 6\sum_{\tau\in \mathcal{T}}\exp\left(-\frac{c_0^\prime A^2}{4}\right) \leq \frac{6c_0^{\prime\prime}}{p} \to 0,
    \end{align*}
    where $c_0^\prime$ and $c^{\prime\prime}$ are some large constants, and inequality (i) holds based on the results of Proposition 3 in \cite{basu2019low}. 
    
    Next, to see \eqref{eq:17}, we use the same defined random process $Y_t = X_t - (L_1^\star+S_1^\star)X_{t-1}$ and the notations $A$ and $A^\prime$ above. Then, we obtain:
    \begin{equation*}
        \begin{aligned}
            &\mathbb{P}\left(\max_{\tau \in \mathcal{T}}\mu_{1,\tau}^{-1}\left\| \frac{1}{\tau-1}\sum_{t=1}^{\tau-1}X_{t-1}Y_t^\prime \right\|_{\text{op}} - \mu_{1,\tau}^{-1}c_0\frac{(\tau-\tau^\star)_+}{\tau-1}(M_S\vee \alpha_L) > \frac{1}{2}\right) \\
            \leq &\sum_{\tau \in \mathcal{T}}\mathbb{P}\left(\left\| \frac{1}{\tau-1}\sum_{t=1}^{\tau-1}X_{t-1}\epsilon_t^\prime \right\|_{\text{op}} > \frac{A^\prime}{2\sqrt{\tau-1}}\right) \overset{\text{(i)}}{\leq} 6\sum_{\tau \in \mathcal{T}}\exp\left(-\frac{c_1A^{\prime^2}}{4}\right) \leq \frac{6}{e^{c_1p}T^{c_1-1}} \to 0,
        \end{aligned}
    \end{equation*}
    for some large enough constant $c_1>0$. (i) is a direct application of the result of Proposition 3 in \cite{basu2019low} to this inequality with the choice of $\eta = \frac{A^\prime}{2\sqrt{\tau-1}}$. Note that $T \succsim p$ ensures that $\eta \leq 4c^\prime\frac{\log (\tau-1)}{\tau-1} < 1$, and then we can derive the anticipated results in \eqref{eq:16} and \eqref{eq:17}. By using a similar procedure, \eqref{eq:16'} and \eqref{eq:17'} also follow.
\end{proof}
{
Next, we provide a proof for the \textit{uniqueness} of the low rank and sparse decomposition. The main idea follows the \textit{rank-sparsity incoherence} condition introduced in \cite{chandrasekaran2011rank} followed by certain refinements in \cite{hsu2011robust} to characterize a decomposition of a matrix including a low rank component $L$ and a sparse component $S$. Before proving the following lemma, we require the following essential quantities:
\begin{itemize}
    \item[1.] \textit{Maximum number of non-zero entries in any row or column of $S$}:
    \begin{equation*}
        \alpha(\rho) \overset{\text{def}}{=} \max\left\{ \rho\|\text{sign}(S)\|_{1\to1}, \rho^{-1}\|\text{sign}(S)\|_{\infty \to \infty} \right\},
    \end{equation*}
    \item[2.] \textit{Sparseness of the singular vectors of $L$}: let $L=UDV$, $U$ and $V$ are matrices of left and right orthonormal singular vectors corresponding to the non-zero singular values of $L$, and the rank of $L$ is $r$. Define
    \begin{equation*}
        \beta(\rho) \overset{\text{def}}{=} \rho^{-1}\|UU^\prime\|_{\infty} + \rho\|VV^\prime\|_{\infty} + \|U\|_{2\to\infty}\|V\|_{2\to\infty},
    \end{equation*}
    where 
    \begin{equation*}
        \text{sign}(M)_{i,j} = 
        \begin{dcases}
            -1, \quad &\text{if }M_{i,j} < 0 \\
            0, \quad &\text{if }M_{i,j} = 0 \\
            +1, \quad &\text{if }M_{i,j} > 0
        \end{dcases},
    \end{equation*}
    and further define the induced norm $\|M\|_{p\to q} \overset{\text{def}}{=} \max\left\{ \|Mv\|_q: v \in \mathbb{R}^n, \|v\|_p \leq 1 \right\}$.
\end{itemize}
Additionally, we define two subspaces:
\begin{equation*}
    \Omega = \Omega(S) \overset{\text{def}}{=} \left\{ X \in \mathbb{R}^{p\times p}: \text{supp}(X) \subset \text{supp}(S) \right\},
\end{equation*}
be the space of matrices whose supports are subsets of the support of $S$, and let
\begin{equation*}
    T = T(L) \overset{\text{def}}{=} \left\{ X_1 + X_2 \in \mathbb{R}^{p\times p}: \text{range}(X_1) \subset \text{range}(L),\  \text{range}(X_2^\prime) \subset \text{range}(L^\prime) \right\}
\end{equation*}}
{
\begin{lemma}
\label{lemma:4}
Suppose Assumption~H2 in the case of a single change point or Assumption~H2' in the case of multiple change points is satisfied. Then, the low rank plus sparse decomposition of all transition matrices $A_j^*$'s for $j=1, \ldots, m_0+1$ are unique and further the restricted space condition proposed in \cite{agarwal2012noisy} is satisfied.
\end{lemma}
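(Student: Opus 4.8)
The plan is to invoke the rank--sparsity incoherence framework of \cite{chandrasekaran2011rank} together with its refinement in \cite{hsu2011robust}, whose identifiability guarantee is phrased in terms of the two quantities $\alpha(\rho)$ and $\beta(\rho)$ introduced above. Since Assumptions H2 and H2' impose the same uniform bounds on the constituents of every transition matrix $A_j^\star = L_j^\star + S_j^\star$, it suffices to verify the incoherence condition for a generic index $j$; the conclusion then holds simultaneously for all $j = 1,\dots,m_0+1$. The argument has two parts: bounding the incoherence product to get uniqueness, and a direct SVD computation to get the restricted space membership.

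First I would bound the low-rank incoherence parameter $\beta(\rho)$. Writing $L_j^\star = U_j^\star D_j^\star V_j^{\star\prime}$ with rank $r_j^\star \leq r_{\max}$, the entries satisfy $\|U_j^\star U_j^{\star\prime}\|_\infty \leq r_{\max}\|U_j^\star\|_\infty^2$ and $\|V_j^\star V_j^{\star\prime}\|_\infty \leq r_{\max}\|V_j^\star\|_\infty^2$, while $\|U_j^\star\|_{2\to\infty} \leq \sqrt{r_{\max}}\,\|U_j^\star\|_\infty$ and similarly for $V_j^\star$. Substituting the bound $\max\{\|U_j^\star\|_\infty, \|V_j^\star\|_\infty\} = \mathcal{O}(\sqrt{\alpha_L/(r_{\max}p)})$ from H2-(2) and choosing $\rho$ of constant order, each of the three terms defining $\beta(\rho)$ is of order $\alpha_L/p$, so $\beta(\rho) = \mathcal{O}(\alpha_L/p)$.

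Next I would bound the sparse incoherence parameter $\alpha(\rho)$. Since $\|\mathrm{sign}(S_j^\star)\|_{1\to 1}$ and $\|\mathrm{sign}(S_j^\star)\|_{\infty\to\infty}$ equal the maximum number of nonzero entries in any column, respectively row, of $S_j^\star$, both are dominated by $d_{\max}^\star$, so $\alpha(\rho) = \mathcal{O}(d_{\max}^\star)$ for $\rho$ of constant order. Multiplying the two bounds and inserting $\alpha_L = \mathcal{O}(p\sqrt{\log(pT)/T})$ from H2-(2) and $d_{\max}^\star \leq C_{\max}^{-1}\sqrt{T/\log(pT)}$ from H2-(3) gives
\begin{equation*}
    \alpha(\rho)\,\beta(\rho) = \mathcal{O}\!\left(d_{\max}^\star\,\frac{\alpha_L}{p}\right) = \mathcal{O}\!\left(\frac{1}{C_{\max}}\sqrt{\frac{T}{\log(pT)}}\cdot \sqrt{\frac{\log(pT)}{T}}\right) = \mathcal{O}(C_{\max}^{-1}).
\end{equation*}
Taking $C_{\max}$ large enough forces $\alpha(\rho)\beta(\rho) < 1$, which is precisely the transversality condition $\Omega(S_j^\star)\cap T(L_j^\star) = \{0\}$ ensuring that the decomposition $A_j^\star = L_j^\star + S_j^\star$ is unique by the identifiability theorem in \cite{hsu2011robust}.

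Finally, the restricted space condition of \cite{agarwal2012noisy} requires $\|L_j^\star\|_\infty \leq \alpha_L/p$, which I would verify directly from the SVD: for any entry,
\begin{equation*}
    |(U_j^\star D_j^\star V_j^{\star\prime})_{kl}| \leq \|D_j^\star\|_\infty \sum_{s=1}^{r_j^\star}|U_{ks}^\star||V_{ls}^\star| \leq c\, r_{\max}\, \|U_j^\star\|_\infty\|V_j^\star\|_\infty = \mathcal{O}\!\left(c\,\frac{\alpha_L}{p}\right),
\end{equation*}
using $\|D_j^\star\|_\infty \leq c$ from H2-(2); absorbing the constant yields $\|L_j^\star\|_\infty \leq \alpha_L/p$, i.e.\ $L_j^\star \in \Omega$. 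The main obstacle I anticipate is the careful calibration of the free scaling parameter $\rho$ and of the constants $C_{\max}$ and $c$, so that the product $\alpha(\rho)\beta(\rho)$ lands strictly below one while the restricted-space inequality holds with the stated radius $\alpha_L/p$; once these constants are fixed, both conclusions follow in tandem, and the identical structure of H2' makes the multiple change point case immediate.
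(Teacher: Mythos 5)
Your proposal is correct and follows essentially the same route as the paper: both verify the rank--sparsity incoherence condition of \cite{hsu2011robust} by bounding $\alpha(\rho)\beta(\rho)$ (the paper fixes $\rho=1$ and computes $\alpha(1)\beta(1)=3d_{\max}^\star\alpha_L/p$, exactly your product bound) and then invoke H2-(2)/(3) to make this smaller than one, while the restricted space membership $\|L_j^\star\|_\infty\leq\alpha_L/p$ follows from the same entrywise SVD estimate you write out. The only difference is cosmetic: you spell out the intermediate bounds on $\|U_j^\star U_j^{\star\prime}\|_\infty$, $\|V_j^\star V_j^{\star\prime}\|_\infty$ and the $2\to\infty$ norms that the paper states implicitly.
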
}

\begin{proof}[Proof of Lemma~\ref{lemma:4}]
    {
    Without loss of generality, we consider the multiple change points case (i.e., we investigate Assumption H2'). Assuming that at the $j$-th stationary segment, $A_j^\star = S_j^\star + L_j^\star$, and then an application of the singular value decomposition (SVD) on $L_j^\star$ yields: $L_j^\star = U_jD_jV_j^\prime$, where $D_j = \text{diag}(\sigma_1^j, \dots, \sigma_{r_j}^j, 0, \dots, 0)$, $\sigma_i^j$ is the $i$-th singular value for $L_j^\star$, for $j=1,2,\cdots, m_0+1$. Next, we consider the Assumption H2'-(1)-(3).}
    
    {
    First, based on Assumptions H2'-(1)-(2), we get that $\|L_j^\star\|_\infty \leq \frac{\alpha_L}{p}$ for $j=1,\cdots, m_0+1$, which coincides with the constrained space condition proposed in \cite{agarwal2012noisy} and \cite{basu2019low}.}
    
    {
    Then, according to the definition of functions $\alpha(\rho)$ and $\beta(\rho)$, we derive that
    \begin{equation}
        \label{eq:18}
        \alpha(\rho) = \max\left\{\rho, \frac{d_{\max}^\star}{\rho}\right\},\ \beta(\rho) = \frac{\alpha_L}{p} (1 + \rho + \frac{1}{\rho} ).
    \end{equation}
    Thus, by using \eqref{eq:18} together with Assumption H2'-(3), we get for $\rho=1$:
    \begin{equation*}
        \alpha(1)\beta(1) = 3d_{\max}^\star\frac{\alpha_L}{p} = \mathcal{O}\left( d_{\max}^\star\sqrt{\frac{\log(pT)}{T}} \right).
    \end{equation*}
    Hence, with the newly proposed Assumption H2'-(3), we obtain that $\alpha(1)\beta(1) < 1$, which satisfies the sufficient condition of uniqueness of decomposition in Theorem 1 in \cite{hsu2011robust}.}
\end{proof}

\begin{lemma}
\label{lemma:5}
Suppose that the Assumptions of Theorem 1 hold, and use the weighted regularizer $\mathcal{Q}$. Further, tor a fixed $\tau \in \mathcal{T}$, define $\widehat{\Delta}^L_{1,\tau} = \widehat{L}_{1,\tau} - L_1^\star$, $\widehat{\Delta}^S_{1,\tau} = \widehat{S}_{1,\tau} - S_1^\star$, $\widehat{\Delta}^L_{2,\tau} = \widehat{L}_{2,\tau} - L_2^\star$, and $\widehat{\Delta}^S_{2,\tau} = \widehat{S}_{2,\tau} - S_2^\star$ for two intervals $[1,\tau)$ and $[\tau,T)$, and the misspecified error terms $\widetilde{\Delta}^L_{1/2,\tau} = \widehat{L}_1 - L_2^\star$, $\widetilde{\Delta}^S_{1/2,\tau} = \widehat{S}_1 - S_2^\star$ for the interval $[\tau^\star, \tau)$, respectively. Then, for the tuning parameters $(\lambda_{1,\tau}, \mu_{1,\tau})$ and $(\lambda_{2,\tau}, \mu_{2,\tau})$ proposed in (10) of Theorem 1, we obtain that:
\begin{equation*}
    \mathcal{Q}\left( \widehat{\Delta}^L_{1,\tau}, \widehat{\Delta}^S_{1,\tau} \right) \leq 4\mathcal{Q}\left( \widehat{\Delta}^L_{1,\tau}\vert_A, \widehat{\Delta}^S_{1,\tau}\vert_{\mathcal{I}} \right), \quad \mathcal{Q}\left( \widehat{\Delta}^L_{2,\tau}, \widehat{\Delta}^S_{2,\tau} \right) \leq 4\mathcal{Q}\left( \widehat{\Delta}^L_{2,\tau}\vert_A, \widehat{\Delta}^S_{2,\tau}\vert_{\mathcal{I}} \right),
\end{equation*}
and
\begin{equation*}
    \mathcal{Q}\left( \widetilde{\Delta}^L_{1/2,\tau}, \widetilde{\Delta}^S_{1/2,\tau} \right) \leq 4\mathcal{Q}\left( \widetilde{\Delta}^L_{1/2,\tau}\vert_A, \widetilde{\Delta}^S_{1/2,\tau}\vert_{\mathcal{I}} \right).
\end{equation*}
\end{lemma}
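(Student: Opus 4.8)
The plan is to prove all three inequalities by the same template, namely the standard ``the error lies in a restricted cone'' argument for a regularized $M$-estimator with a decomposable penalty. I would carry out the argument in full for $(\widehat{\Delta}^L_{1,\tau}, \widehat{\Delta}^S_{1,\tau})$ on $[1,\tau)$ and then indicate the identical modifications for the other two. The three ingredients are: optimality of the penalized estimator (a basic inequality), the deviation bounds of Lemma~\ref{lemma:3} that dominate the gradient of the quadratic loss by half the relevant tuning parameter, and the decomposability inequality of Lemma~\ref{lemma:2}.

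First I would write the basic inequality coming from the fact that $(\widehat{L}_{1,\tau}, \widehat{S}_{1,\tau})$ minimizes the penalized loss on $[1,\tau)$: evaluating the objective at the estimator and at the reference $(L_1^\star, S_1^\star)$ and cancelling yields, with $\mathcal{L}_1(L,S):=\frac{1}{\tau-1}\sum_{t=1}^{\tau-1}\|X_t-(L+S)X_{t-1}\|_2^2$,
\[
\mathcal{L}_1(\widehat{L}_{1,\tau}, \widehat{S}_{1,\tau}) - \mathcal{L}_1(L_1^\star, S_1^\star) \le \mu_{1,\tau}\big(\|L_1^\star\|_* - \|\widehat{L}_{1,\tau}\|_*\big) + \lambda_{1,\tau}\big(\|S_1^\star\|_1 - \|\widehat{S}_{1,\tau}\|_1\big).
\]
Since $\mathcal{L}_1$ is convex and its gradient at $(L_1^\star,S_1^\star)$ is, up to a factor of two, $-H$ with $H:=\frac{1}{\tau-1}\sum_{t=1}^{\tau-1}X_{t-1}(X_t-(L_1^\star+S_1^\star)X_{t-1})^\prime$, convexity lower bounds the left-hand side by the linear term $\langle H,\widehat{\Delta}^L_{1,\tau}+\widehat{\Delta}^S_{1,\tau}\rangle$. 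Applying Hölder duality, $\langle H,\widehat{\Delta}^L_{1,\tau}\rangle\le\|H\|_{\mathrm{op}}\|\widehat{\Delta}^L_{1,\tau}\|_*$ and $\langle H,\widehat{\Delta}^S_{1,\tau}\rangle\le\|H\|_\infty\|\widehat{\Delta}^S_{1,\tau}\|_1$, and invoking Lemma~\ref{lemma:3} with the tuning-parameter choice \eqref{eq:6} to obtain $\|H\|_\infty\le\lambda_{1,\tau}/2$ and $\|H\|_{\mathrm{op}}\le\mu_{1,\tau}/2$, I would divide through by $\mu_{1,\tau}$ and rewrite in terms of the weighted regularizer to get $\mathcal{Q}(\widehat{L}_{1,\tau},\widehat{S}_{1,\tau})-\mathcal{Q}(L_1^\star,S_1^\star)\le\tfrac12\,\mathcal{Q}(\widehat{\Delta}^L_{1,\tau},\widehat{\Delta}^S_{1,\tau})$.

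Next I would combine this with Lemma~\ref{lemma:2}, which supplies the reverse bound $\mathcal{Q}(L_1^\star,S_1^\star)-\mathcal{Q}(\widehat{L}_{1,\tau},\widehat{S}_{1,\tau})\le\mathcal{Q}(\widehat{\Delta}^L_{1,\tau}|_A,\widehat{\Delta}^S_{1,\tau}|_{\mathcal{I}})-\mathcal{Q}(\widehat{\Delta}^L_{1,\tau}|_B,\widehat{\Delta}^S_{1,\tau}|_{\mathcal{I}^c})$. Writing $a:=\mathcal{Q}(\widehat{\Delta}^L_{1,\tau}|_A,\widehat{\Delta}^S_{1,\tau}|_{\mathcal{I}})$ and $b:=\mathcal{Q}(\widehat{\Delta}^L_{1,\tau}|_B,\widehat{\Delta}^S_{1,\tau}|_{\mathcal{I}^c})$, adding the two displays gives $b-a\le\tfrac12\mathcal{Q}(\widehat{\Delta})$; orthogonality of the subspaces $A,B$ (resp.\ the disjoint supports $\mathcal{I},\mathcal{I}^c$) yields the triangle bound $\mathcal{Q}(\widehat{\Delta})\le a+b$, so $b-a\le\tfrac12(a+b)$, i.e.\ $b\le 3a$, and hence $\mathcal{Q}(\widehat{\Delta})\le a+b\le 4a$, which is exactly the claimed constant. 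The same computation applies to $(\widehat{\Delta}^L_{2,\tau},\widehat{\Delta}^S_{2,\tau})$ on $[\tau,T)$ using the symmetric bounds \eqref{eq:16'}--\eqref{eq:17'}, and to the cross term $(\widetilde{\Delta}^L_{1/2,\tau},\widetilde{\Delta}^S_{1/2,\tau})$, where I would instead take $(L_2^\star,S_2^\star)$ as the competitor in the global objective on $[1,\tau)$; optimality holds against any competitor, so the template is unchanged and the gradient is now evaluated against $A_2^\star$, which on the sub-interval $[\tau^\star,\tau)$ reduces to the pure-noise average $\sum X_{t-1}\epsilon_t^\prime$.

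The main obstacle is the misspecification. When $\tau>\tau^\star$ the interval $[1,\tau)$ mixes the two regimes, so Lemma~\ref{lemma:3} controls $\|H\|_\infty$ (and $\|H\|_{\mathrm{op}}$) only up to the additional term $c_0\frac{(\tau-\tau^\star)_+}{\tau-1}(M_S\vee\alpha_L)(d^\star_{\max}+\sqrt{r^\star_{\max}})$, and symmetrically the competitor $(L_2^\star,S_2^\star)$ moves the misspecified portion to $[1,\tau^\star)$. I expect the delicate step to be ensuring that, uniformly over $\tau\in\mathcal{T}$, this contribution is absorbed so that each cone still closes with the universal constant: this is precisely where the novel deviation-bound-with-misspecification condition (Lemma~\ref{lemma:3}, built on Definition~\ref{def:2}) and the inflating $\log(\tau-1)$ factors in the tuning parameters \eqref{eq:6} enter, while Assumption~H3 keeps $\tau$ bounded away from the endpoints and Assumption~H1 controls the relative jump and complexity magnitudes. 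Correctly attributing the misspecification term between the error $\widehat{\Delta}_{1,\tau}$ and the cross error $\widetilde{\Delta}_{1/2,\tau}$ — which are coupled through the jump $A_2^\star-A_1^\star$ — is what allows all three inequalities to be established simultaneously with the same factor.
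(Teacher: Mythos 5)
Your overall template --- basic inequality from optimality, Lemma~\ref{lemma:3} to dominate the empirical-process term by a fraction of the tuning parameter, the decomposability bound of Lemma~\ref{lemma:2}, and the arithmetic $b\le 3a\Rightarrow \mathcal{Q}(\widehat{\Delta})\le a+b\le 4a$ --- is exactly the paper's argument, and for the clean interval (e.g.\ $[\tau,T)$ when $\tau>\tau^\star$) your derivation coincides with the paper's step for $\widehat{\Delta}^L_{2,\tau},\widehat{\Delta}^S_{2,\tau}$. One bookkeeping point: you need the deviation bound to control \emph{twice} the gradient term by $\tfrac{\lambda_{1,\tau}}{2}\|\cdot\|_1+\tfrac{\mu_{1,\tau}}{2}\|\cdot\|_*$ (i.e.\ effectively $\|H\|_\infty\le\lambda_{1,\tau}/4$), not $\|H\|_\infty\le\lambda_{1,\tau}/2$ as you wrote; otherwise your intermediate display reads $\mathcal{Q}(\widehat{L},\widehat{S})-\mathcal{Q}(L^\star,S^\star)\le\mathcal{Q}(\widehat{\Delta})$ and the cone inequality $b-a\le a+b$ becomes vacuous. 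The factor $4$ in the tuning parameters \eqref{eq:6} is there precisely to make the stronger version hold.

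The genuine gap is in your treatment of the misspecified interval. You propose to keep the \emph{global} basic inequality on $[1,\tau)$ and absorb the misspecification contribution of Lemma~\ref{lemma:3} into the tuning parameter; but that extra term is $c_0\frac{(\tau-\tau^\star)_+}{\tau-1}(M_S\vee\alpha_L)(d^\star_{\max}+\sqrt{r^\star_{\max}})$, and for $\tau$ well to the right of $\tau^\star$ the ratio $(\tau-\tau^\star)_+/(\tau-1)$ is $O(1)$, so this term is \emph{not} dominated by $\lambda_{1,\tau}/4\asymp\sqrt{\log(pT)/\tau}$ uniformly over $\tau\in\mathcal{T}$, and the cone does not close with the universal constant $4$. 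The same problem, more severely, afflicts your plan for the cross term: taking $(L_2^\star,S_2^\star)$ as competitor in the global objective on $[1,\tau)$ leaves a gradient contribution from $[1,\tau^\star)$ of order $\frac{\tau^\star-1}{\tau-1}\|A_2^\star-A_1^\star\|$, which is of the order of the jump size and cannot be absorbed at all. The paper avoids both issues by splitting $[1,\tau)$ at $\tau^\star$ and writing a \emph{separate} basic inequality on each sub-interval, each compared against the segment-correct parameters $(L_1^\star,S_1^\star)$ on $[1,\tau^\star)$ and $(L_2^\star,S_2^\star)$ on $[\tau^\star,\tau)$, so that no misspecification term enters the cone argument and all three inequalities follow from the clean-interval computation. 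If you want to repair your route rather than adopt the paper's splitting, you would need to justify why the global minimizer satisfies (approximately) the sub-interval basic inequalities, or restrict the range of $\tau$ for which the lemma is invoked --- neither of which is addressed in your plan.
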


\begin{proof}[Proof of Lemma \ref{lemma:5}]
    Assuming that $\tau > \tau^\star$, we investigate the behavior of the misspecified model in the interval $[1,\tau)$ and the non-misspecified model in the interval $[\tau, T)$, separately. Therefore, for the interval $[\tau, T)$ and $(\widehat{L}_{2,\tau}, \widehat{S}_{2,\tau})$, according to the defined objective functions $\ell(L_1, S_1; \mathbf{X}^{[1:\tau)})$ and $\ell(L_2, S_2; \mathbf{X}^{[\tau:T)})$ in the main text, we obtain that for minimizing $\ell(L_2, S_2; \mathbf{X}^{[\tau:T)})$, we derive:
    \begin{equation*}
        \begin{aligned}
        &\frac{1}{T - \tau}\sum_{t=\tau}^{T-1}\|X_t - (\widehat{L}_2 + \widehat{S}_2)X_{t-1}\|_2^2 + \lambda_{2,\tau}\|\widehat{S}_2\|_1 + \mu_{2,\tau}\|\widehat{L}_2\|_* \\
        \leq &\frac{1}{T - \tau}\sum_{t=\tau}^{T-1}\|X_t - (L_2^\star + S_2^\star)X_{t-1}\|_2^2 + \lambda_{2,\tau}\|S_{2}^\star\|_1 + \mu_{2,\tau}\|L_{2}^\star\|_*.
        \end{aligned}
    \end{equation*}
    After some algebraic rearrangements, and due to the nature of the decomposition spaces $(A, B)$ for the low-rank components and the corresponding decomposable support sets $(\mathcal{I}, \mathcal{I}^c)$ for the sparse components, we get:
    \begin{align*}
        0
        &\leq \frac{1}{T-\tau}\sum_{t=\tau}^{T-1}\|X_{t-1}(\widehat{\Delta}^L_{2,\tau} + \widehat{\Delta}^S_{2,\tau})\|_2^2 \nonumber \\
        &\leq \frac{2}{T-\tau}\sum_{t=\tau}^{T-1}X_{t-1}^\prime(\widehat{\Delta}^L_{2,\tau} + \widehat{\Delta}^S_{2,\tau})^\prime \epsilon_t + \lambda_{2,\tau}\left( \|S_2^\star\|_1 - \|\widehat{S}_{2,\tau}\|_1 \right) + \mu_{2,\tau}\left( \|L_2^\star\|_* - \|\widehat{L}_{2,\tau}\|_* \right) \nonumber \\
        &\overset{\text{(i)}}{\leq} 2c_0\sqrt{\frac{\log p + \log(T-\tau)}{T-\tau}}\|\widehat{\Delta}^S_{2,\tau}\|_1 + 2 c_0\sqrt{\frac{p + \log(T-\tau)}{T-\tau}}\|\widehat{\Delta}^L_{2,\tau}\|_* \\
        &+ \lambda_{2,\tau}\left( \|S_2^\star\|_1 - \|\widehat{S}_{2,\tau}\|_1 \right) + \mu_{2,\tau}\left( \|L_2^\star\|_* - \|\widehat{L}_{2,\tau}\|_* \right) \nonumber \\
        &\leq \frac{\lambda_{2,\tau}}{2}\|\widehat{\Delta}^S_{2,\tau}\|_1 + \frac{\mu_{2,\tau}}{2}\|\widehat{\Delta}^L_{2,\tau}\|_* + \lambda_{2,\tau}\left( \|S_2^\star\|_1 - \|\widehat{S}_{2,\tau}\|_1 \right) + \mu_{2,\tau}\left( \|L_2^\star\|_* - \|\widehat{L}_{2,\tau}\|_* \right) \nonumber \\
        &\leq \frac{3}{2}\mu_{2,\tau}\mathcal{Q}(\widehat{\Delta}^L_{2,\tau}\vert_A, \widehat{\Delta}^S_{2,\tau}\vert_{\mathcal{I}}) - \frac{1}{2}\mu_{2,\tau}\mathcal{Q}(\widehat{\Delta}^L_{2,\tau}\vert_B, \widehat{\Delta}^S_{2,\tau}\vert_{\mathcal{I}^c}), \nonumber 
    \end{align*}
    where inequality (i) holds because of the deviation bound derived in Lemma \ref{lemma:3}. 
    Therefore, we can further derive that:
    \begin{equation*}
        \mathcal{Q}(\widehat{\Delta}^L_{2,\tau}, \widehat{\Delta}^S_{2,\tau}) \leq 4\mathcal{Q}(\widehat{\Delta}^L_{2,\tau}\vert_A, \widehat{\Delta}^S_{2,\tau}\vert_{\mathcal{I}}).
    \end{equation*}
    
    On the other hand, for the misspecified model in the interval $[1,\tau)$, by minimizing the objective function $\ell(L_1, S_1; \mathbf{X}^{[1:\tau)})$ to the intervals $[1,\tau^\star)$ and $[\tau^\star, \tau)$ separately, we obtain:
    \begin{equation*}
        \begin{aligned}
        &\frac{1}{\tau^\star-1}\sum_{t=1}^{\tau^\star-1}\|X_t - (\widehat{L}_1 + \widehat{S}_1)X_{t-1}\|_2^2 + \lambda_{1,\tau}\|\widehat{S}_1\|_1 + \mu_{1,\tau}\|\widehat{L}_1\|_* \\
        \leq &\frac{1}{\tau^\star-1}\sum_{t=1}^{\tau^\star-1}\|X_t - (L_1^\star + S_1^\star)X_{t-1}\|_2^2 + \lambda_{1,\tau}\|S_1^\star\|_1 + \mu_{1,\tau}\|L_1^\star\|_*,
        \end{aligned}
    \end{equation*}
    and
    \begin{equation*}
        \begin{aligned}
        &\frac{1}{\tau-\tau^\star}\sum_{t=\tau^\star}^{\tau-1}\|X_t - (\widehat{L}_1 + \widehat{S}_1)X_{t-1}\|_2^2 + \lambda_{1,\tau}\|\widehat{S}_1\|_1 + \mu_{1,\tau}\|\widehat{L}_1\|_* \\
        \leq &\frac{1}{\tau-\tau^\star}\sum_{t=\tau^\star}^{\tau-1}\|X_t - (L_2^\star + S_2^\star)X_{t-1}\|_2^2 + \lambda_{1,\tau}\|S_2^\star\|_1 + \mu_{1,\tau}\|L_2^\star\|_*.
        \end{aligned}
    \end{equation*}
    Similarly for the first inequality, after some algebraic rearrangements, we derive that
    \begin{align*}
        0
        &\leq \frac{1}{\tau^\star-1}\sum_{t=1}^{\tau^\star-1}\|X_{t-1}(\widehat{\Delta}^L_{1,\tau} + \widehat{\Delta}^S_{1,\tau})\|_2^2 \nonumber \\
        &\leq \frac{2}{\tau^\star-1}\sum_{t=1}^{\tau^\star-1}X_{t-1}^\prime(\widehat{\Delta}^L_{1,\tau} + \widehat{\Delta}^S_{1,\tau})^\prime \epsilon_t + \lambda_{1,\tau}\left( \|S_1^\star\|_1 - \|\widehat{S}_{1,\tau}\|_1 \right) + \mu_{1,\tau}\left( \|L_1^\star\|_* - \|\widehat{L}_{1,\tau}\|_* \right) \nonumber \\
        &\leq 2c_0\sqrt{\frac{\log p + \log(\tau-1)}{\tau-1}}\|\widehat{\Delta}^S_{1,\tau}\|_1 + 2 c_0\sqrt{\frac{p + \log(\tau-1)}{\tau-1}}\|\widehat{\Delta}^L_{1,\tau}\|_* \\
        &+ \lambda_{1,\tau}\left( \|S_1^\star\|_1 - \|\widehat{S}_{1,\tau}\|_1 \right) + \mu_{1,\tau}\left( \|L_1^\star\|_* - \|\widehat{L}_{1,\tau}\|_* \right) \nonumber \\
        &\leq \frac{\lambda_{1,\tau}}{2}\|\widehat{\Delta}^S_{1,\tau}\|_1 + \frac{\mu_{1,\tau}}{2}\|\widehat{\Delta}^L_{1,\tau}\|_* + \lambda_{1,\tau}\left( \|S_1^\star\|_1 - \|\widehat{S}_{1,\tau}\|_1 \right) + \mu_{1,\tau}\left( \|L_1^\star\|_* - \|\widehat{L}_{1,\tau}\|_* \right) \nonumber \\
        &\leq \frac{3}{2}\mu_{1,\tau}\mathcal{Q}(\widehat{\Delta}^L_{1,\tau}\vert_A, \widehat{\Delta}^S_{1,\tau}\vert_{\mathcal{I}}) - \frac{1}{2}\mu_{1,\tau}\mathcal{Q}(\widehat{\Delta}^L_{1,\tau}\vert_B, \widehat{\Delta}^S_{1,\tau}\vert_{\mathcal{I}^c}), \nonumber 
    \end{align*}
    since the second inequality can be derived by the same procedure. Therefore, we conclude that
    \begin{equation*}
        \mathcal{Q}(\widehat{\Delta}^L_{1,\tau}, \widehat{\Delta}^S_{1,\tau}) \leq 4\mathcal{Q}(\widehat{\Delta}^L_{1,\tau}\vert_A, \widehat{\Delta}^S_{1,\tau}\vert_{\mathcal{I}})\ \text{and}\ \mathcal{Q}(\widetilde{\Delta}^L_{1/2,\tau}, \widetilde{\Delta}^S_{1/2,\tau}) \leq 4\mathcal{Q}(\widetilde{\Delta}^L_{1/2,\tau}\vert_A, \widetilde{\Delta}^S_{1/2,\tau}\vert_{\mathcal{I}}).
    \end{equation*}
\end{proof}
{
\begin{lemma}
\label{lemma:6}
Under Assumptions H1'-H5', for a set of estimated change points $(s_1, s_2, \cdots, s_m)$ with $m < m_0$, there exist universal positive constants $c_1, c_2 > 0$ such that:
\begin{equation*}
    \mathbb{P}\left( \min_{(s_1,\dots, s_m)}\mathcal{L}_n(s_1,\dots,s_m; \bm{\lambda}, \bm{\mu}) > \sum_{t=1}^T\|\epsilon_t\|_2^2 + c_1\widetilde{v}\Delta_T - c_2 mT\xi_T\left(d_{\max}^{\star^2} + r_{\max}^{\star^{\frac{3}{2}}}\right) \right) \to 1,
\end{equation*}
where $\widetilde{v} \overset{\text{def}}{=} \min_{1\leq j \leq m_0}\{v_{j,S}^2 + v_{j,L}^2\}$.
\end{lemma}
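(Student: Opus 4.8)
The plan is to show that any partition into $m+1$ segments with $m<m_0$ must leave at least one true change point poorly approximated, and that fitting a single transition matrix across such a point forces an irreducible increase in the residual sum of squares that dominates all stochastic slack. I would start with a purely deterministic pigeonhole step: since consecutive true change points are separated by at least $\Delta_T$, a single candidate $s_i$ can lie within $\Delta_T/2$ of at most one true change point, so $s_1,\dots,s_m$ can be $(\Delta_T/2)$-close to at most $m<m_0$ of them. Hence for \emph{every} admissible partition there is an index $j^\star$ with $|s_i-\tau^\star_{j^\star}|>\Delta_T/2$ for all $i$; the fitted segment $I^\star=[s_{i^\star-1},s_{i^\star})$ containing $\tau^\star_{j^\star}$ then splits into stationary sub-intervals $I_L=[s_{i^\star-1},\tau^\star_{j^\star})$ and $I_R=[\tau^\star_{j^\star},s_{i^\star})$, each of length at least $\Delta_T/2\ge T\xi_T$.

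Next I would expand the objective. Writing $X_t=A^\star_{j(t)}X_{t-1}+\epsilon_t$ and dropping the nonnegative penalty terms,
\begin{equation*}
\mathcal{L}_T-\sum_{t=1}^T\|\epsilon_t\|_2^2 \ \geq\ \underbrace{\sum_{t}\big\|(\widehat A_{i(t)}-A^\star_{j(t)})X_{t-1}\big\|_2^2}_{\text{signal}}\ -\ \underbrace{2\sum_{t}\epsilon_t'(\widehat A_{i(t)}-A^\star_{j(t)})X_{t-1}}_{\text{cross term}} .
\end{equation*}
For the signal I would retain only the contribution of $I^\star$, set $u_L=\widehat A_{i^\star}-A^\star_{j^\star}$ and $u_R=\widehat A_{i^\star}-A^\star_{j^\star+1}$, and apply Lemma~\ref{lemma:1} on $I_L$ and $I_R$ separately to replace the sample second-moment matrices by $\Gamma_{j^\star}(0)$ and $\Gamma_{j^\star+1}(0)$, whose minimum eigenvalues are bounded below by $2\pi\mathfrak{m}>0$ (Appendix~\ref{appendix:A}). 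This yields $\text{signal}\gtrsim \tfrac{\Delta_T}{2}(\|u_L\|_F^2+\|u_R\|_F^2)\ge \tfrac{\Delta_T}{4}\|u_L-u_R\|_F^2=\tfrac{\Delta_T}{4}\|A^\star_{j^\star+1}-A^\star_{j^\star}\|_F^2$, using $\|a\|^2+\|b\|^2\ge\tfrac12\|a-b\|^2$; a single fitted matrix simply cannot be simultaneously close to both regimes. Invoking the identifiability conditions (Remark~\ref{remark:1} and Assumptions H1'--H2', under which the $\alpha_L/p$ corrections are asymptotically negligible) gives $\|A^\star_{j^\star+1}-A^\star_{j^\star}\|_F^2\ge\|A^\star_{j^\star+1}-A^\star_{j^\star}\|_2^2\gtrsim v_{j^\star,S}^2+v_{j^\star,L}^2\ge\widetilde v$, hence $\text{signal}\ge c_1\widetilde v\Delta_T$.

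For the cross term I would proceed interval by interval, decompose $\widehat A_i-A^\star=\widehat\Delta^L_i+\widehat\Delta^S_i$, and apply matrix H\"older, bounding $|\langle\widehat\Delta^L_i,\sum_{I_i}X_{t-1}\epsilon_t'\rangle|\le\|\widehat\Delta^L_i\|_*\,\|\sum_{I_i}X_{t-1}\epsilon_t'\|_{\mathrm{op}}$ and $|\langle\widehat\Delta^S_i,\sum_{I_i}X_{t-1}\epsilon_t'\rangle|\le\|\widehat\Delta^S_i\|_1\,\|\sum_{I_i}X_{t-1}\epsilon_t'\|_\infty$. The operator and $\ell_\infty$ factors are controlled by the (possibly misspecified) deviation bounds of Lemmas~\ref{lemma:1} and~\ref{lemma:3}, whose extra misspecification contribution scales like $T\xi_T$ times $(d^\star_{\max}+\sqrt{r^\star_{\max}})$; the nuclear and $\ell_1$ magnitudes of the fitted errors are governed by the cone condition of Lemma~\ref{lemma:5} together with the estimation rates of Theorem~\ref{thm:2}. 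Pairing these and summing the per-segment slack over the $m+1$ intervals produces $|\text{cross term}|\le c_2\,mT\xi_T(d^{\star 2}_{\max}+r^{\star 3/2}_{\max})$. Because Lemmas~\ref{lemma:1} and~\ref{lemma:3} are stated uniformly over all intervals of length at least $T\xi_T$, all of these bounds hold simultaneously over every admissible partition on a single high-probability event, so the estimate survives the minimization over $(s_1,\dots,s_m)$, and combining the two displays gives the claim. The main obstacle is exactly this cross-term control: obtaining the sharp $d^{\star 2}_{\max}+r^{\star 3/2}_{\max}$ dependence \emph{uniformly} over all underfitting partitions forces careful bookkeeping of the misspecification bias in segments straddling a true change point, together with a union bound over the $O(T^2)$ candidate intervals, whereas the pigeonhole and signal steps are comparatively routine.
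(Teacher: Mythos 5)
Your overall architecture matches the paper's: underfitting forces some true change point $\tau^\star_{j^\star}$ to be straddled by a fitted segment at distance $\gtrsim\Delta_T$ from both endpoints, the quadratic misfit on that segment contributes the $c_1\widetilde v\Delta_T$ gain, and the stochastic/misspecification slack on the remaining segments accounts for the $-c_2 mT\xi_T(d_{\max}^{\star 2}+r_{\max}^{\star 3/2})$ loss. However, two of your key steps do not go through as stated. First, your signal bound passes through $\|A^\star_{j^\star+1}-A^\star_{j^\star}\|_F^2\gtrsim v_{j^\star,S}^2+v_{j^\star,L}^2$. This is not implied by H1'--H2': the changes in the low-rank and sparse components can partially cancel in the combined transition matrix (Remark~\ref{remark:1} only yields lower bounds of the form $v_L$ minus a correction involving $p\|S^\star_j\|_\infty$, which need not be negligible), and the lemma's conclusion is stated in terms of $\widetilde v$, not $\min_j\|A^\star_{j+1}-A^\star_j\|^2$. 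The paper avoids this entirely by applying your ``one fitted matrix cannot be close to both regimes'' pigeonhole \emph{separately to the low-rank and sparse error components}: with $\widehat\Delta^S=S^\star_{j+1}-\widehat S_i$ and $\widetilde\Delta^S=S^\star_j-\widehat S_i$ one has $\|\widehat\Delta^S\|_F+\|\widetilde\Delta^S\|_F\ge v_{j,S}$, and likewise for $L$, which delivers $v_{j,S}^2+v_{j,L}^2$ directly with no cancellation issue.

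Second, your cross-term control invokes the cone condition of Lemma~\ref{lemma:5} together with ``the estimation rates of Theorem~\ref{thm:2}'' to bound $\|\widehat\Delta^L_i\|_*$ and $\|\widehat\Delta^S_i\|_1$ uniformly over segments. Those rates are only available on segments where the model is (essentially) correctly specified; on the straddling segment of an underfitting partition the estimates converge to neither regime, and the paper explicitly notes that no convergence rate for $(\widehat L_i,\widehat S_i)$ is obtainable there. The paper instead runs a three-case analysis keyed to the tuning regimes of Assumption H6: for very short segments ($|s_i-s_{i-1}|\le T\xi_T$) it bounds the cross term by the norms of the \emph{true} parameters $\|L^\star\|_*,\|S^\star\|_1$ (no RE condition is available at all); for segments aligned with true change points up to $T\xi_T$ it uses the cone condition plus the basic inequality to get the $\lambda_i^2 d_{\max}^\star+\mu_i^2 r_{\max}^\star$ rate; and for genuinely straddling segments it does not bound the cross term separately at all, but absorbs it into the quadratic signal term via the factorization $\sqrt{\|\Delta\|_F^2}\bigl(\sqrt{\|\Delta\|_F^2}-c\sqrt{\lambda_i^2 d_{\max}^\star+\mu_i^2 r_{\max}^\star}\bigr)$, so that either the error is large (yielding the $\widetilde v\Delta_T$ gain) or the whole contribution is $O(d_{\max}^\star\log p+r_{\max}^\star p)$. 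You would need this case split to obtain the stated $mT\xi_T(d_{\max}^{\star 2}+r_{\max}^{\star 3/2})$ slack; a single uniform H\"older bound across all segments does not suffice.
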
}

\begin{proof}[Proof of Lemma \ref{lemma:6}]
    Suppose we obtain a set of candidate change points $(s_1, s_2, \cdots, s_m)$ obtained by the rolling-window strategy. Since $m < m_0$, there exists a true change point $t_j^\star$ satisfying $|s_i - t_j^\star| > {\Delta_T}/{4}$. In order to find a lower bound for $\mathcal{L}(s_1, \cdots, s_m; \bm{\lambda}, \bm{\mu})$, based on the vanishing sequence $\{\xi_T\}$ specified in Assumption H3', there are three different cases to consider: (a) $|s_i - s_{i-1}| \leq T\xi_T$, which implies that there is a negligibly small interval between two consecutive estimated change points $s_{i-1}$ and $s_i$; (b) there exist two true change points $\tau_j^\star$ and $\tau_{j+1}^\star$ such that $|s_{i-1} - \tau_j^\star| \leq T\xi_T$ and $|s_i - \tau_{j+1}^\star| \leq T\xi_T$; and (c) otherwise. 
    
    Next, we introduce some additional notation used in the sequel. Let $\widehat{\Delta}^L$ and $\widehat{\Delta}^S$ denote the difference between the true expression and its estimate; i.e., $\widehat{\Delta}^L = L^\star_{j+1} - \widehat{L}_i$ and $\widehat{\Delta}^S = S^\star_{j+1} - \widehat{S}_i$, respectively. We denote by $\widetilde{\Delta}^L$ and $\widetilde{\Delta}^S$ the difference between the true expression and its estimate in the misspecified time segments; i.e., $\widetilde{\Delta}^L = L^\star_j - \widehat{L}_i$ and $\widetilde{\Delta}^S = S^\star_j - \widehat{S}_i$. 
    
    For case (a), without loss of generality, we assume that $\tau_j^\star < s_{i-1} < s_i < \tau^\star_{j+1}$ to obtain:
    \begin{equation*}
        \begin{aligned}
            \sum_{t=s_{i-1}}^{s_i}\|X_t - (\widehat{L}_i + \widehat{S}_i)X_{t-1}\|_2^2 
            &= \sum_{t=s_{i-1}}^{s_i}\|\epsilon_t\|_2^2 + \sum_{t=s_{i-1}}^{s_i}\|X_{t-1}(\widehat{\Delta}^L + \widehat{\Delta}^S)\|_2^2 + 2\sum_{t=s_{i-1}}^{s_i}X_{t-1}^\prime(\widehat{\Delta}^L + \widehat{\Delta}^S)\epsilon_t \\
            &\geq \sum_{t=s_{i-1}}^{s_i}\|\epsilon_t\|_2^2 - 2\left| \sum_{t=s_{i-1}}^{s_i}\langle X_{t-1}^\prime \epsilon_t, \widehat{\Delta}^L \rangle \right| - 2\left| \sum_{t=s_{i-1}}^{s_i}\langle X_{t-1}^\prime \epsilon_t, \widehat{\Delta}^S \rangle \right| \\
            &\geq \sum_{t=s_{i-1}}^{s_i}\|\epsilon_t\|_2^2 - c\sqrt{T\xi_T p}\|\widehat{\Delta}^L\|_* - c^\prime \sqrt{T\xi_T \log p}\|\widehat{\Delta}^S\|_1.
        \end{aligned}
    \end{equation*}
    Based on Assumption H6 on the selection of the tuning parameters, we conclude that:
    \begin{align}
        \label{eq:19}
        &\sum_{t=s_{i-1}}^{s_i}\|X_t - (\widehat{L}_i + \widehat{S}_i)X_{t-1}\|_2^2 + \lambda_i\|\widehat{S}_i\|_1 + \mu_i\|\widehat{L}_i\|_* \nonumber \\
        \geq &\sum_{t=s_{i-1}}^{s_i}\|\epsilon_t\|_2^2 - c\sqrt{T\xi_T p}\|L^\star_{j+1}\|_* - c^\prime \sqrt{T\xi_T \log p}\|S_{j+1}^\star\|_1.
    \end{align}
    
    For case (b), we assume that $s_{i-1} < \tau_j^\star$, $s_i < \tau_{j+1}^\star$, $|s_{i-1} - \tau_j^\star| \leq T\xi_T$, and $|s_i - \tau_{j+1}^\star| \leq T\xi_T$. Since the estimates $\widehat{L}_i$ and $\widehat{S}_i$ are the minimizers to the objective function as (9) in the main paper, then we obtain:
    \begin{equation}
        \label{eq:20}
        \begin{aligned}
        &\frac{1}{s_i - s_{i-1}}\sum_{t=s_{i-1}}^{s_i-1}\|X_t - (\widehat{L}_i + \widehat{S}_i)X_{t-1}\|_2^2 + \lambda_i\|\widehat{S}_i\|_1 + \mu_i\|\widehat{L}_i\|_* \\
        &\leq \frac{1}{s_i - s_{i-1}}\sum_{t=s_{i-1}}^{s_i-1}\|X_t - (L_{j+1}^\star + S_{j+1}^\star)X_{t-1}\|_2^2 + \lambda_i\|S_{j+1}^\star\|_1 + \mu_i\|L_{j+1}^\star\|_*.
        \end{aligned}
    \end{equation}
     Some algebraic rearrangements and based on Assumption H6 on the selection of the tuning parameters, we obtain that:
    \begin{align}\label{eq:21}
        0 &\leq \frac{1}{s_i - s_{i-1}}\sum_{t=s_{i-1}}^{s_i}\|X_{t-1}(\widehat{\Delta}^L + \widehat{\Delta}^S)\|_2^2 \nonumber \\
         &\leq \frac{2}{s_i - s_{i-1}}\sum_{t=s_{i-1}}^{s_i}\langle X_{t-1}^\prime \epsilon_t, \widehat{\Delta}^L + \widehat{\Delta}^S \rangle + \frac{2}{s_i - s_{i-1}}\sum_{t=s_{i-1}}^{\tau_j^\star-1}\langle X_{t-1}^\prime\epsilon_t, L_j^\star-L_{j+1}^\star+S_j^\star-S_{j+1}^\star \rangle\nonumber \\
         &+ \lambda_i(\|S_{j+1}^\star\|_1 - \|\widehat{S}_i\|_1) + \mu_i(\|L_{j+1}^\star\|_* - \|\widehat{L}_i\|_*) \nonumber \\
         &\leq 2\left( c\sqrt{\frac{\log p}{s_i - s_{i-1}}} + M_Sd_{\max}^\star\frac{T\xi_T}{s_i - s_{i-1}} \right)\|\widehat{\Delta}^S\|_1 + \lambda_i(\|S_{j+1}^\star\|_1 - \|\widehat{S}_i\|_1) \nonumber \\
         &+ 2\left(c\sqrt{\frac{p}{s_i - s_{i-1}}} + \alpha_L\sqrt{r_{\max}^\star} \frac{T\xi_T}{s_i-s_{i-1}}\right)\|\widehat{\Delta}^L\|_* + \mu_i(\|L_{j+1}^\star\|_* - \|\widehat{L}_i\|_*) \\
         &\leq \frac{\lambda_i}{2}\|\widehat{\Delta}^S\|_1 + \lambda_i(\|S_{j+1}^\star\|_1 - \|\widehat{S}_i\|_1) + \frac{\mu_i}{2}\|\widehat{\Delta}^L\|_* + \mu_i(\|L_{j+1}^\star\|_* - \|\widehat{L}_i\|_*) \nonumber \\
         &= \frac{3}{2}\lambda_i\|\widehat{\Delta}^S\|_{1,\mathcal{I}} - \frac{1}{2}\lambda_i\|\widehat{\Delta}^S\|_{1,\mathcal{I}^c} + \frac{3}{2}\mu_i\|\widehat{\Delta}^L\|_{*,A} - \frac{1}{2}\mu_i\|\widehat{\Delta}^L\|_{1,B} \nonumber \\
         &= \frac{3}{2}\mu_i\mathcal{Q}(\widehat{\Delta}^L_A, \widehat{\Delta}^S_\mathcal{I}) - \frac{1}{2}\mu_i\mathcal{Q}(\widehat{\Delta}^L_B, \widehat{\Delta}^S_{\mathcal{I}^c}).\nonumber 
    \end{align}
    The properties of the weighted regularizer $\mathcal{Q}$ have been discussed in Lemma \ref{lemma:3}. Therefore, in accordance to equation \eqref{eq:21}, we obtain that:
    \begin{equation*}
        \mathcal{Q}(\widehat{\Delta}^L, \widehat{\Delta}^S) \leq 4\mathcal{Q}(\widehat{\Delta}^L_A, \widehat{\Delta}^S_\mathcal{I}).
    \end{equation*}
    Moreover, an application of the Cauchy-Schwarz inequality leads to the following upper bound for the weighted regularizer $\mathcal{Q}$ with respect to the support sets $(A, \mathcal{I})$ defined before Lemma \ref{lemma:2}: 
    \begin{equation}\label{eq:22}
        \mu_i\mathcal{Q}(\widehat{\Delta}^L_A, \widehat{\Delta}^S_\mathcal{I}) \leq \mu_i\sqrt{r_{\max}^\star}\|\widehat{\Delta}^L\|_F + \lambda_i\sqrt{d_{\max}^\star}\|\widehat{\Delta}^S\|_F \leq \sqrt{\lambda_i^2 d_{\max}^\star + \mu_i^2 r_{\max}^\star}\sqrt{\|\widehat{\Delta}^L\|_F^2 + \|\widehat{\Delta}^S\|_F^2}.
    \end{equation}
    Next, examining the first inequality in \eqref{eq:21}, we see that there exists a positive constant $c^\prime>0$ such that:
    \begin{equation}\label{eq:23}
        \frac{1}{s_i - s_{i-1}}\sum_{t=s_{i-1}}^{s_i}\|X_{t-1}(\widehat{\Delta}^L + \widehat{\Delta}^S)\|_2^2 \geq c^\prime\|\widehat{\Delta}^L + \widehat{\Delta}^S\|_F^2 \geq \nu(\|\widehat{\Delta}^L\|_F^2 + \|\widehat{\Delta}^S\|_F^2) - \frac{1}{2}\mu_i\mathcal{Q}(\widehat{\Delta}^L, \widehat{\Delta}^S),
    \end{equation}
    where $\nu>0$ is the curvature constant appearing in the RSC condition and the last inequality holds due to Lemma \ref{lemma:3}.
    By substituting \eqref{eq:23} into \eqref{eq:21}, we obtain:
    \begin{equation}\label{eq:24}
        \begin{aligned}
        &\nu(\|\widehat{\Delta}^L\|_F^2 + \|\widehat{\Delta}^S\|_F^2) - \frac{1}{2}\mu_i\mathcal{Q}(\widehat{\Delta}^L, \widehat{\Delta}^S) \leq \frac{3}{2}\mu_i\mathcal{Q}(\widehat{\Delta}^L_A, \widehat{\Delta}^S_\mathcal{I}) - \frac{1}{2}\mu_i\mathcal{Q}(\widehat{\Delta}^L_B, \widehat{\Delta}^S_{\mathcal{I}^c}) \\
        \implies &\nu(\|\widehat{\Delta}^L\|_F^2 + \|\widehat{\Delta}^S\|_F^2) \leq 2\mu_i\mathcal{Q}(\widehat{\Delta}^L, \widehat{\Delta}^S) \leq 8\sqrt{\lambda_i^2 d_{\max}^\star + \mu_i^2 r_{\max}^\star}\sqrt{\|\widehat{\Delta}^L\|_F^2 + \|\widehat{\Delta}^S\|_F^2} \\
        \implies &\|\widehat{\Delta}^L\|_F^2 + \|\widehat{\Delta}^S\|_F^2 \leq \frac{64}{\nu^2}(\lambda_i^2 d_{\max}^\star + \mu_i^2 r_{\max}^\star).
        \end{aligned}
    \end{equation}
    Following analogous derivations to \eqref{eq:25}, one can similarly conclude that the error bound of the estimates can still be verified in the interval $[s_{i-1}, s_i)$. Note that there is a misspecified model in the interval $[s_{i-1}, \tau^\star_j)$, which is discussed separately.
    
    First, consider the interval $[\tau_j^\star, s_i)$, for which we have:
    \begin{align}
        \label{eq:25}
        &\sum_{t=t_j^\star}^{s_i-1}\|X_t - (\widehat{L}_i + \widehat{S}_i)X_{t-1}\|_2^2  \nonumber \\
        \geq &\sum_{t=t_j^\star}^{s_i-1}\|\epsilon_t\|_2^2 + c|s_i - \tau_j^\star|\|\widehat{\Delta}^L + \widehat{\Delta}^S\|_F^2 - c^\prime\sqrt{|s_i - t_j^\star|\log p}\|\widehat{\Delta}^S\|_1 - c^\prime\sqrt{|s_i - t_j^\star|p}\|\widehat{\Delta}^L\|_*  \nonumber \\
        \geq &\sum_{t=t_j^\star}^{s_i-1}\|\epsilon_t\|_2^2 + c|s_i - \tau_j^\star|\left( \|\widehat{\Delta}^L\|_F^2 + \|\widehat{\Delta}^S\|_F^2 - \frac{1}{2}\mu_i\mathcal{Q}(\widehat{\Delta}^L, \widehat{\Delta}^S) \right) \nonumber \\
        &- c^\prime\sqrt{|s_i - \tau_j^\star|\log p}\|\widehat{\Delta}^S\|_1 - c^\prime\sqrt{|s_i - t_j^\star|p}\|\widehat{\Delta}^L\|_*  \nonumber \\
        \overset{\text{(i)}}{\geq} &\sum_{t=t_j^\star}^{s_i-1}\|\epsilon_t\|_2^2 + c|s_i - \tau_j^\star|\left( \|\widehat{\Delta}^L\|_F^2 + \|\widehat{\Delta}^S\|_F^2 - (\frac{1}{2}+\frac{1}{c})\mu_i\mathcal{Q}(\widehat{\Delta}^L, \widehat{\Delta}^S) \right) \\
        \overset{\text{(ii)}}{\geq} &\sum_{t=t_j^\star}^{s_i-1}\|\epsilon_t\|_2^2 + c|s_i - \tau_j^\star|\left( \|\widehat{\Delta}^L\|_F^2 + \|\widehat{\Delta}^S\|_F^2 - (\frac{1}{2}+\frac{1}{c})\sqrt{\lambda_i^2 d_{j+1}^\star + \mu_i^2 r_{j+1}^\star}\sqrt{\|\widehat{\Delta}^L\|_F^2 + \|\widehat{\Delta}^S\|_F^2}\right)  \nonumber \\
        \geq &\sum_{t=t_j^\star}^{s_i-1}\|\epsilon_t\|_2^2 + c|s_i - \tau_j^\star|\sqrt{\|\widehat{\Delta}^L\|_F^2 + \|\widehat{\Delta}^S\|_F^2}\left( \sqrt{\|\widehat{\Delta}^L\|_F^2 + \|\widehat{\Delta}^S\|_F^2} - (\frac{1}{2}+\frac{1}{c})\sqrt{\lambda_i^2 d_{j+1}^\star + \mu_i^2 r_{j+1}^\star} \right)  \nonumber \\
        \overset{\text{(iii)}}{\geq} &\sum_{t=t_j^\star}^{s_i-1}\|\epsilon_t\|_2^2 - c^{\prime\prime}|s_i - \tau_j^\star| \left\{ \frac{d_{j+1}^\star \log p + r_{j+1}^\star p}{s_i - s_{i-1}} + (M_S^2 d_{\max}^{\star^2}d_{j+1}^\star + \alpha_L^2 r_{\max}^\star r_{j+1}^\star)\left(\frac{T\xi_T}{s_{i} - s_{i-1}}\right)^2 \right\}  \nonumber \\
        - &2c^{\prime\prime}|s_i - \tau_j^\star|\left( M_S d_{\max}^\star d_{j+1}^\star\sqrt{\frac{\log p}{s_i - s_{i-1}}} + \alpha_L \sqrt{r_{\max}^\star} r_{j+1}^\star\sqrt{\frac{p}{s_i - s_{i-1}}}\right)\frac{T\xi_T}{s_i - s_{i-1}}, \nonumber
    \end{align}
    where $c, c^\prime, c^{\prime\prime}$ are large enough positive constants which can be determined based on the tuning parameter rates. In \eqref{eq:25}, (i) holds because of the selection of tuning parameters; (ii) holds by using the result from \eqref{eq:22}; (iii) holds due to the verified error bound in \eqref{eq:24} and the selection of the tuning parameters. Based on the results for $|s_i-\tau_{j+1}^\star| \leq T\xi_T$ and $|s_{i-1}-\tau_j^\star|\leq T\xi_T$, then we get that $T\xi_T / (s_i - s_{i-1}) \to 0$ as $T \to +\infty$; the latter together with Assumption H3' imply
    \begin{equation*}
        \frac{d_{\max}^\star\log p + r_{\max}^\star p}{s_i-s_{i-1}} \geq \left(M_S^2d_{\max}^{\star^3} + \alpha_L^2r_{\max}^{\star^2}\right)\left(\frac{T\xi_T}{s_i - s_{i-1}}\right)^2.
    \end{equation*}
    It can be verified by some algebraic rearrangements that
    \begin{align*}
        &\frac{d_{\max}^\star\log p + r_{\max}^\star p}{s_i-s_{i-1}} \geq \left(M_S^2d_{\max}^{\star^3} + \alpha_L^2r_{\max}^{\star^2}\right)\left(\frac{T\xi_T}{s_i - s_{i-1}}\right)^2 \\
        \iff &\left(\frac{s_i - s_{i-1}}{T\xi_T}\right)\frac{d_{\max}^\star\log p + r_{\max}^\star p}{T\xi_T} \geq M_S^2d_{\max}^{\star^3} + \alpha_L^2r_{\max}^{\star^2},
    \end{align*}
    which can be directly derived from Assumption H3'. Similarly, we can prove the following fact:
    \begin{align*}
        &\frac{d_{\max}^\star\log p + r_{\max}^\star p}{s_i-s_{i-1}} \geq \left( M_S d_{\max}^\star d_{j+1}^\star\sqrt{\frac{\log p}{s_i - s_{i-1}}} + \alpha_L \sqrt{r_{\max}^\star} r_{j+1}^\star\sqrt{\frac{p}{s_i - s_{i-1}}}\right)\frac{T\xi_T}{s_i - s_{i-1}} \\
        \iff &\left(\frac{d_{\max}^\star\log p + r_{\max}^\star p}{T\xi_T}\right)^2 \geq \left( M_S d_{\max}^\star d_{j+1}^\star\sqrt{\frac{\log p}{s_i - s_{i-1}}} + \alpha_L \sqrt{r_{\max}^\star} r_{j+1}^\star\sqrt{\frac{p}{s_i - s_{i-1}}}\right)^2,
    \end{align*}
    the right-hand side being upper bounded by:
    \begin{equation*}
        \begin{aligned}
            &\left( M_S d_{\max}^\star d_{j+1}^\star\sqrt{\frac{\log p}{s_i - s_{i-1}}} + \alpha_L \sqrt{r_{\max}^\star} r_{j+1}^\star\sqrt{\frac{p}{s_i - s_{i-1}}}\right)^2 \\
            &\leq \left(M_S^2d_{\max}^{\star^3} + \alpha_L^2r_{\max}^{\star^2}\right)\left(\frac{d_{\max}^\star\log p + r_{\max}^\star p}{s_i - s_{i-1}}\right).
        \end{aligned}
    \end{equation*}
    Substituting the upper bound into the inequality above, the fact can be verified. Therefore, \eqref{eq:25} can be further lower bounded by:
    \begin{equation}\label{eq:26}
        \sum_{t=t_j^\star}^{s_i-1}\|X_t - (\widehat{L}_i + \widehat{S}_i)X_{t-1}\|_2^2 
        \geq \sum_{t=t_j^\star}^{s_i-1}\|\epsilon_t\|_2^2 - c^{\prime\prime}\left(d_{j+1}^\star\log p + r_{j+1}^\star p\right).
    \end{equation}
    
    Next, we consider the misspecified model in the interval $[s_{i-1}, \tau_j^\star)$, which satisfies the condition $|\tau_j^\star - s_{i-1}| \leq T\xi_T$, by using the notation $\widetilde{\Delta}^L$ and $\widetilde{\Delta}^S$ as previously defined. Then,
    \begin{equation}
        \label{eq:27}
        \begin{aligned}
            &\sum_{t=s_{i-1}}^{\tau_j^\star-1}\|X_t - (\widehat{L}_i + \widehat{S}_i)X_{t-1}\|_2^2 \geq \sum_{t=s_{i-1}}^{\tau_j^\star-1}\|\epsilon\|_2^2 - c^\prime\left( \sqrt{T\xi_T\log p}\|\widetilde{\Delta}^S\|_1 + \sqrt{T\xi_T p}\|\widetilde{\Delta}^L\|_* \right) \\
            \geq &\sum_{t=s_{i-1}}^{\tau_j^\star-1}\|\epsilon\|_2^2 - c^\prime\left\{ \sqrt{T\xi_T\log p}\left(\|\widehat{\Delta}^S\|_1 + \|S_{j+1}^\star - S_j^\star\|_1\right) + \sqrt{T\xi_T p}\left(\|\widehat{\Delta}^L\|_* + \|L_{j+1}^\star - L_j^\star\|_*\right) \right\} \\
            \geq &\sum_{t=s_{i-1}}^{\tau_j^\star-1}\|\epsilon_t\|_2^2 - c^\prime\sqrt{T\xi_T\log p}\|\widehat{\Delta}^S\|_1 - c^\prime\sqrt{T\xi_T p}\|\widehat{\Delta}^L\|_* - c_1^\prime d_{\max}^\star\sqrt{T\xi_T\log p} - c_2^\prime\sqrt{r_{\max}^\star}\sqrt{T\xi_T p} \\
            \overset{\text{(i)}}{\geq} &\sum_{t=s_{i-1}}^{\tau_j^\star - 1}\|\epsilon_t\|_2^2 - \sqrt{T\xi_T(s_i - s_{i-1})}\mu_i\mathcal{Q}(\widehat{\Delta}^L, \widehat{\Delta}^S) - c_1^\prime d_{\max}^\star\sqrt{T\xi_T\log p} - c_2^\prime\sqrt{r_{\max}^\star}\sqrt{T\xi_T p} \\
            \overset{\text{(ii)}}{\geq} &\sum_{t=s_{i-1}}^{\tau_j^\star-1}\|\epsilon_t\|_2^2 - c_0^\prime \sqrt{T\xi_T(s_i - s_{i-1})}\left(\lambda_i^2 d_{j+1}^\star + \mu_i^2 r_{j+1}^\star \right) - c_1^\prime d_{\max}^\star\sqrt{T\xi_T\log p} - c_2^\prime\sqrt{r_{\max}^\star}\sqrt{T\xi_T p} \\
            \overset{\text{(iii)}}{\geq} &\sum_{t=s_{i-1}}^{\tau_j^\star-1}\|\epsilon\|_2^2 - c_1^\prime d_{\max}^\star\sqrt{T\xi_T \log p} - c_2^\prime \sqrt{r_{\max}^\star} \sqrt{T\xi_T p} - 16c_0^\prime\left(d_{j+1}^\star \log p + r_{j+1}^\star p\right)\sqrt{\frac{T\xi_T}{s_i - s_{i-1}}} \\
            - &c_0^{\prime\prime}\left(\frac{(T\xi_T)^{\frac{3}{2}}}{\sqrt{s_i - s_{i-1}}}\right)\left((M_S^2 d_{\max}^{\star^3}+ \alpha_L^2 r_{\max}^{\star^2})\left(\frac{T\xi_T}{s_{i} - s_{i-1}}\right) + 2\left(M_Sd_{\max}^{\star^2}\sqrt{\frac{\log p}{s_i - s_{i-1}}} + \alpha_Lr_{\max}^{\star^{\frac{3}{2}}}\sqrt{\frac{p}{s_i-s_{i-1}}}\right)\right),
        \end{aligned}
    \end{equation}
    where $c^\prime$, $c_0^\prime$, $c_1^\prime$ and $c_2^\prime$ are large enough positive constants. Note that (i) holds because of the selection of the tuning parameters and the relationships between the $\ell_1$, $\ell_2$ and nuclear norms on the true transition matrices;
    (ii) holds because of the upper bound of the weighted penalty term $\mathcal{Q}$ derived in \eqref{eq:22}; (iii) holds because of the selection of the tuning parameters. Similar to \eqref{eq:25}, we need to find a further lower bound for \eqref{eq:27}. 
    Let us first establish the following two facts:
    \begin{equation*}
        16c_0^\prime\left(d_{j+1}^\star \log p + r_{j+1}^\star p\right)\sqrt{\frac{T\xi_T}{s_i - s_{i-1}}} \geq c_0^{\prime\prime}(M_S^2d_{\max}^{\star^3} + \alpha_L^2r_{\max}^{\star^2})\frac{(T\xi_T)^{\frac{5}{2}}}{(s_i - s_{i-1})^{\frac{3}{2}}},
    \end{equation*}
    and
    \begin{equation*}
        \begin{aligned}
            &16c_0^\prime\left(d_{j+1}^\star \log p + r_{j+1}^\star p\right)\sqrt{\frac{T\xi_T}{s_i - s_{i-1}}} \\
            &\geq 2c_0^{\prime\prime}\left(\frac{(T\xi_T)^{\frac{3}{2}}}{\sqrt{s_i-s_{i-1}}}\right)\left(M_Sd_{\max}^{\star^2}\sqrt{\frac{\log p}{s_i - s_{i-1}}} + \alpha_Lr_{\max}^{\frac{3}{2}}\sqrt{\frac{p}{s_i-s_{i-1}}}\right).
        \end{aligned}
    \end{equation*}
    The first inequality can be rearranged as:
    \begin{equation*}
        16c_0^\prime\left(\frac{d_{j+1}^\star\log p + r_{j+1}^\star p}{T\xi_T}\right)\left(\frac{s_i-s_{i-1}}{T\xi_T}\right) \geq c_0^{\prime\prime}(M_S^2d_{\max}^{\star^3} + \alpha_L^2r_{\max}^{\star^2}),
    \end{equation*}
    which can be directly verified by using Assumption H3'. The right-hand side of the second inequality is upper bounded by the Cauchy-Schwarz inequality:
    \begin{equation*}
        M_Sd_{\max}^{\star^2}\sqrt{\frac{\log p}{s_i - s_{i-1}}} + \alpha_Lr_{\max}^{\frac{3}{2}}\sqrt{\frac{p}{s_i-s_{i-1}}} \leq \left(M_S^2d_{\max}^{\star^3} + \alpha_Lr_{\max}^{\star^2}\right)^{\frac{1}{2}}\left(\frac{d_{\max}^\star\log p + r_{\max}^\star p}{s_i - s_{i-1}}\right)^{\frac{1}{2}},
    \end{equation*}
    and after substituting back into the second inequality, we need to establish:
    \begin{align*}
        &16c_0^\prime\left(d_{\max}^\star \log p + r_{\max}^\star p\right)\sqrt{\frac{T\xi_T}{s_i - s_{i-1}}} \\
        &\geq 2c_0^{\prime\prime}\left(\frac{(T\xi_T)^{\frac{3}{2}}}{\sqrt{s_i-s_{i-1}}}\right)\left(M_S^2d_{\max}^{\star^3} + \alpha_Lr_{\max}^{\star^2}\right)^{\frac{1}{2}}\left(\frac{d_{\max}^\star\log p + r_{\max}^\star p}{s_i - s_{i-1}}\right)^{\frac{1}{2}} \\
        \iff &64c_0^\prime\left(\frac{d_{\max}^\star \log p + r_{\max}^\star p}{T\xi_T}\right)\left(\frac{s_i-s_{i-1}}{T\xi_T}\right) \geq c_0^{\prime\prime}\left(M_S^2d_{\max}^{\star^3} + \alpha_Lr_{\max}^{\star^2}\right),
    \end{align*}
    which has already been proven. The following facts are consequences of Assumption H3':
    \begin{equation*}
        d_{\max}^\star\sqrt{T\xi_T\log p} \geq d_{\max}^\star\log p\sqrt{\frac{T\xi_T}{s_i-s_{i-1}}}\ \text{and}\ \sqrt{r_{\max}^\star}\sqrt{T\xi_T p} \geq r_{\max}^\star p\sqrt{\frac{T\xi_T}{s_i-s_{i-1}}}.
    \end{equation*}
    Therefore, we can derive a further lower bound for \eqref{eq:27}:
    \begin{equation}\label{eq:28}
        \sum_{t=s_{i-1}}^{\tau_j^\star-1}\|X_t - (\widehat{L}_i + \widehat{S}_i)X_{t-1}\|_2^2 
        \geq \sum_{t=s_{i-1}}^{t_j^\star-1}\|\epsilon_t\|_2^2 - c_1^\prime d_{\max}^\star\sqrt{T\xi_T\log p} - c_2^\prime \sqrt{r_{\max}^\star}\sqrt{T\xi_T p}.
    \end{equation}
    
    Combining the results from \eqref{eq:26} and \eqref{eq:28}, we get:
    \begin{equation}\label{eq:29}
        \sum_{t=s_{i-1}}^{s_i-1}\|X_t - (\widehat{L}_i + \widehat{S}_i)X_{t-1}\|_2^2 \geq \sum_{t=s_{i-1}}^{s_i-1}\|\epsilon_t\|_2^2 - c_2\left( d_{\max}^\star\sqrt{T\xi_T\log p} + \sqrt{r_{\max}^\star}\sqrt{T\xi_T p} \right). 
    \end{equation}
    
    For case (c), we firstly assume that $s_{i-1} < \tau_j^\star < s_i$, $|s_{i-1} - \tau_j^\star| > \Delta_T/4$ and $|s_i - \tau_j^\star| > \Delta_T/4$, respectively. Therefore, the interval $[s_{i-1}, \tau_j^\star)$where the model is misspecified is not negligible as compared to the other interval $[\tau_j^\star, s_i)$; hence, we can not obtain the convergence rate of $\widehat{L}_i$ and $\widehat{S}_i$ on the whole interval $[s_{i-1}, s_i)$. By using a similar procedure as in \eqref{eq:21}, similar results can be derived as long as we choose the tuning parameters as proposed in case (c) of Assumption H6:
    \begin{equation*}
        \mathcal{Q}\left( \widehat{\Delta}^L, \widehat{\Delta}^S\right) \leq 4\mathcal{Q}\left( \widehat{\Delta}^L\vert_A, \widehat{\Delta}^S\vert_{\mathcal{I}} \right),\quad 
        \mathcal{Q}\left( \widetilde{\Delta}^L, \widetilde{\Delta}^S \right) \leq 4\mathcal{Q}\left( \widetilde{\Delta}^L\vert_A, \widetilde{\Delta}^S\vert_{\mathcal{I}} \right).
    \end{equation*}
    
    Next, by following the same procedure as in the proof of case (b), we separately consider two intervals: $[s_{i-1}, \tau_j^\star)$ and $[\tau_j^\star, s_i)$ as follows: \\
    For the interval $[s_{i-1}, \tau_j^\star)$, we adopt the same notation as before and obtain:
    \begin{align}\label{eq:30}
        &\sum_{t=s_{i-1}}^{\tau_j^\star-1} \|X_t - (\widehat{L}_i + \widehat{S}_i)X_{t-1}\|_2^2 \nonumber \\
        \geq &\sum_{t=s_{i-1}}^{\tau_j^\star-1}\|\epsilon_t\|_2^2 + c|\tau_j^\star - s_{i-1}|\|\widetilde{\Delta}^L + \widetilde{\Delta}^S \|_F^2 - c^\prime\sqrt{|\tau_j^\star - s_{i-1}|\log p}\|\widetilde{\Delta}^S\|_1 - c^\prime\sqrt{|\tau_j^\star - s_{i-1}|p}\|\widetilde{\Delta}^L\|_* \nonumber \\
        \overset{\text{(i)}}{\geq} &\sum_{t=s_{i-1}}^{\tau_j^\star-1}\|\epsilon_t\|_2^2 + c|\tau^\star_j - s_{i-1}|\left(\|\widetilde{\Delta}^L\|_F^2 + \|\widetilde{\Delta}^S\|_F^2 - \frac{3\mu_i}{2}\mathcal{Q}(\widetilde{\Delta}^L, \widetilde{\Delta}^S) - \frac{(d_{\max}^\star+\sqrt{r_{\max}^\star})\|\widetilde{\Delta}^S\|_1 + \|\widetilde{\Delta}^L\|_*}{s_i-s_{i-1}} \right) \nonumber \\
        \geq &\sum_{t=s_{i-1}}^{\tau^\star_j -1}\|\epsilon_t\|_2^2 + c|\tau_j^\star-s_{i-1}|\sqrt{\|\widetilde{\Delta}^L\|_F^2 + \|\widetilde{\Delta}^S\|_F^2}\left(\sqrt{\|\widetilde{\Delta}^L\|_F^2 + \|\widetilde{\Delta}^S\|_F^2} - \sqrt{\lambda_i^2 d_{\max}^\star + \mu_i^2 r_{\max}^\star}\right),
    \end{align}
    where (i) can be verified by using a similar procedure as in \eqref{eq:38} and \eqref{eq:39} in the proof of Theorem 1; the last inequality is a direct consequence of the Cauchy-Schwarz inequality for the upper bound of $\mathcal{Q}(\widetilde{\Delta}^L, \widetilde{\Delta}^S)$ and Assumption H3' on the minimum spacing, which leads to the vanishing of the last term. 
    
    On the other hand, for the interval $[\tau_j^\star, s_i)$, we employ a similar procedure as in \eqref{eq:31} to derive the following result:
    \begin{align}\label{eq:31}
        &\sum_{t=\tau_j^\star}^{s_i-1}\|X_t - (\widehat{L}_i + \widehat{S}_i)X_{t-1}\|_2^2 \nonumber \\
        \geq &\sum_{t=\tau^\star_j}^{s_i -1}\|\epsilon_t\|_2^2 + c|s_i-\tau^\star_j|\sqrt{\|\widehat{\Delta}^L\|_F^2 + \|\widehat{\Delta}^S\|_F^2}\left(\sqrt{\|\widehat{\Delta}^L\|_F^2 + \|\widehat{\Delta}^S\|_F^2} - \sqrt{\lambda_i^2 d_{\max}^\star + \mu_i^2 r_{\max}^\star}\right).
    \end{align}
    {
    According to Assumption H1', either $\|S_{j+1}^\star - S_j^\star\|_2 \geq v_S > 0$ holds or $\|L_{j+1}^\star - L_j^\star\|_2 \geq v_L > 0$ holds. By defining $\widetilde{v} = \min_{1\leq j\leq m_0}\{v_{j,S}^2 + v_{j,L}^2\}$, it is not difficult to see that either $\|\widehat{\Delta}^S\|_F^2 \geq \widetilde{v}/4$ or $\|\widetilde{\Delta}^S\|_F^2 \geq \widetilde{v}/4$, or $\|\widehat{\Delta}^L\|_F^2 \geq \widetilde{v}/4$ or $\|\widetilde{\Delta}^L\|_F^2 \geq \widetilde{v}/4$. Without loss of generality, we assume that $\|\widetilde{\Delta}^L\|_F \geq \widetilde{v}/4$ and $\|\widetilde{\Delta}^S\|_F \geq \widetilde{v}/4$, respectively. We can then obtain a further lower bound for \eqref{eq:30} as follows:
    \begin{align}\label{eq:32}
        \sum_{t=s_{i-1}}^{\tau_j^\star-1}\|X_t - (\widehat{L}_i + \widehat{S}_i)X_{t-1}\|_2^2 
        &\geq \sum_{t=s_{i-1}}^{\tau_j^\star-1}\|\epsilon_t\|_2^2 \nonumber \\
        &+ \frac{\sqrt{2}}{4}c|\tau_j^\star - s_{i-1}|\sqrt{\widetilde{v}}\left( \frac{\sqrt{2}}{4}\sqrt{\widetilde{v}} - \left(2 + \frac{4c^\prime}{c}\right)\sqrt{\lambda_i^2d_{\max}^\star + \mu_i^2 r_{\max}^\star} \right)  \nonumber \\
        &\geq \sum_{t=s_{i-1}}^{\tau_j^\star-1}\|\epsilon_t\|_2^2 + c_1\widetilde{v}\Delta_T.
    \end{align}}
    For the second interval $[\tau_j^\star, s_i)$, we obtain a lower bound for \eqref{eq:31}:
    \begin{align}\label{eq:33}
        \sum_{t=\tau_j^\star}^{s_i-1}\|X_t - (\widehat{L}_i + \widehat{S}_i)X_{t-1}\|_2^2 
        &\geq \sum_{t=\tau_j^\star}^{s_i-1}\|\epsilon_t\|_2^2 - c_2|s_i - \tau_j^\star|\left( \lambda_i^2 d_{\max}^\star + \mu_i^2 r_{\max}^\star \right) \nonumber \\
        &\geq \sum_{t=\tau_j^\star}^{s_i-1}\|\epsilon_t\|_2^2 - c_2\bigg(d_{\max}^\star\log(p\vee \Delta_T) + r_{\max}^\star(p\vee \log \Delta_T)\bigg).
    \end{align}
    
    Next, by combining the lower bounds \eqref{eq:32} and \eqref{eq:33}, we obtain:
    \begin{equation}\label{eq:34}
        \sum_{t=s_{i-1}}^{s_i-1}\|X_t - (\widehat{L}_i + \widehat{S}_i)\|_2^2 \geq \sum_{t=s_{i-1}}^{s_i-1}\|\epsilon_t\|_2^2 + c_1\widetilde{v}\Delta_T - c_2\bigg(d_{\max}^\star\log(p\vee \Delta_T)+ r_{\max}^\star(p\vee \log \Delta_T)\bigg).
    \end{equation}
    Notice that another scenario might arise in the (c) case: namely, $s_{i-1} < \tau_j^\star < s_i$, $T\xi_T < |s_i - \tau_j^\star| \ll \Delta_T$ and $T\xi_T < |\tau_j^\star - s_{i-1}| \ll \Delta_T$. By using analogous calculations as in case (b), we finally establish:
    \begin{equation}\label{eq:35}
        \sum_{t=s_{i-1}}^{s_i-1}\|X_t - (\widehat{L}_i + \widehat{S}_i)\|_2^2 \geq \sum_{t=s_{i-1}}^{s_i-1}\|\epsilon_t\|_2^2 - c_2^\prime T\xi_T\left(d_{\max}^{\star^2} + r_{\max}^{\star^{\frac{3}{2}}}\right).
    \end{equation}
    Combining \eqref{eq:19}, \eqref{eq:29}, \eqref{eq:34}, and \eqref{eq:35} and summing up all $m+1$ intervals leads to the final result. 
\end{proof}

To verify the theoretical properties for the surrogate weakly sparse model, the following lemmas are required. 
\begin{lemma}
\label{lemma:7}
Consider the single change point scenario in Proposition 2 and also assume $\tau>\tau^\star$. For the misspecified model in the interval $[1,\tau)$ with tuning parameters provided in \eqref{eq:12}, and given that the search domain $\mathcal{T}^w$ satisfies Assumption W2, as $T \succsim \log p$, we obtain:\\
(1) with probability at least $1-c_1p^{-1}$:
{\small
\begin{equation*}
    \left\| \frac{1}{\tau-1}\sum_{t=1}^{\tau-1}X_{t-1}(X_t - A_1^\star X_{t-1})^\prime \right\|_\infty \leq \frac{\lambda_{1,\tau}^w}{2} + c_0M_S\frac{(\tau-\tau^\star)_+}{\tau-1}R_q\eta_{\min}^{-q},
\end{equation*}}
(2) with probability at least $1-c_2p^{-1}$:
{\small
\begin{equation*}
    \left\| \frac{1}{T-\tau}\sum_{t=\tau}^{T-\tau}X_{t-1}(X_t - A_2^\star X_{t-1})^\prime \right\|_\infty \leq \frac{\lambda_{2,\tau}^w}{2} + c_0M_S\frac{(\tau^\star-\tau)_+}{T-\tau}R_q\eta_{\min}^{-q},
\end{equation*}}
where $c_0,c_1,c_2$ are some large positive constants. 
\end{lemma}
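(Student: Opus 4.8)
The plan is to mirror the argument for Lemma~\ref{lemma:3}, adapting the misspecification bookkeeping to the weakly sparse geometry. Assume $\tau>\tau^\star$ and set $Y_t \overset{\text{def}}{=} X_t - A_1^\star X_{t-1}$ on $[1,\tau)$. Since the process is zero mean, $\mathbb{E}(Y_t)=0$, while $Y_t$ remains correlated with $X_{t-1}$ on the misspecified portion. The key first step is the exact decomposition
\begin{equation*}
    \frac{1}{\tau-1}\sum_{t=1}^{\tau-1}X_{t-1}Y_t^\prime = \frac{1}{\tau-1}\sum_{t=1}^{\tau-1}X_{t-1}\epsilon_t^\prime + \frac{1}{\tau-1}\sum_{t=\tau^\star}^{\tau-1}X_{t-1}X_{t-1}^\prime(A_2^\star-A_1^\star)^\prime,
\end{equation*}
where $\epsilon_t=\epsilon_t^1$ for $t<\tau^\star$ and $\epsilon_t=\epsilon_t^2$ for $t\geq\tau^\star$, obtained by substituting $X_t=A_2^\star X_{t-1}+\epsilon_t^2$ on $[\tau^\star,\tau)$. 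The first sum is a pure stochastic (``noise'') term, and the second is a structural ``misspecification'' bias supported on $[\tau^\star,\tau)$.

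For the noise term, I would apply the cross-term concentration bound (Proposition~3 in \cite{basu2019low}, exactly as invoked in the proof of Lemma~\ref{lemma:3}) together with a union bound over $\tau\in\mathcal{T}^w$. With $\lambda_{1,\tau}^w$ as in \eqref{eq:13} and the growth condition $T\succsim\log p$, this yields $\|\frac{1}{\tau-1}\sum_{t=1}^{\tau-1}X_{t-1}\epsilon_t^\prime\|_\infty\leq \lambda_{1,\tau}^w/2$ uniformly on $\mathcal{T}^w$ on an event of probability at least $1-c_1p^{-1}$. Here both innovation processes are i.i.d. $\mathcal{N}(0,\sigma^2 I)$ and independent of the respective past, so the machinery applies across the switch at $\tau^\star$, and Assumption~W2 ensures $\tau-1$ is large enough that the $\log(\tau-1)$ inflation inside $\lambda_{1,\tau}^w$ absorbs the cardinality of the search domain in the union bound.

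For the misspecification bias I would factor it as $\frac{\tau-\tau^\star}{\tau-1}\,\widehat\Gamma_{[\tau^\star,\tau)}(A_2^\star-A_1^\star)^\prime$ with $\widehat\Gamma_{[\tau^\star,\tau)}\overset{\text{def}}{=}(\tau-\tau^\star)^{-1}\sum_{t=\tau^\star}^{\tau-1}X_{t-1}X_{t-1}^\prime$, and bound its sup norm by $\frac{\tau-\tau^\star}{\tau-1}\|\widehat\Gamma_{[\tau^\star,\tau)}\|_\infty\|A_2^\star-A_1^\star\|_1$, using that the entrywise sup norm of a matrix product is controlled by the sup norm of the first factor times the maximal row $\ell_1$ norm of the second. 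The sup norm of the sample covariance is $O(1)$ with high probability by combining the stability bounds on $\Gamma_2(0)$ with the concentration in Lemma~\ref{lemma:1}. The $\ell_1$ factor is where the weakly sparse structure enters: splitting the entries of $A_2^\star-A_1^\star$ at the threshold $\eta_{\min}$ (with $\eta_j\propto\lambda_{\cdot,\tau}^w$) and using the cardinality estimate $|\mathcal{J}(\eta)|\leq R_q\eta^{-q}$ from Appendix~\ref{appendix:C} together with the entrywise bound $M_S$ gives $\|A_2^\star-A_1^\star\|_1\leq M_S R_q\eta_{\min}^{-q}+\eta_{\min}^{1-q}R_q$, whose leading term is $M_S R_q\eta_{\min}^{-q}$. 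This produces the claimed bias bound $c_0 M_S\frac{(\tau-\tau^\star)_+}{\tau-1}R_q\eta_{\min}^{-q}$, and adding the two contributions yields part~(1). Part~(2) is the mirror image: on $[\tau,T)$ with $\tau>\tau^\star$ the model is correctly specified, so $(\tau^\star-\tau)_+=0$ and only the noise term survives, while the symmetric case $\tau<\tau^\star$ follows by interchanging the roles of the two segments.

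The main obstacle I anticipate is controlling the bias uniformly when the misspecified interval $[\tau^\star,\tau)$ is short, where $\widehat\Gamma_{[\tau^\star,\tau)}$ need not concentrate. One must then argue either that the prefactor $\frac{\tau-\tau^\star}{\tau-1}$ already renders the term negligible, or replace the concentration step by a crude uniform maximum bound $\max_t\|X_{t-1}\|_\infty^2=O(\log(p\vee T))$ and verify it is still absorbed; reconciling such a bound with the log-free form of the stated bias, and pinning down the constant in the weak-sparsity $\ell_1$ estimate so that it matches the target $R_q\eta_{\min}^{-q}$, is the delicate part. The remaining ingredients are routine given Lemmas~\ref{lemma:1} and~\ref{lemma:3}.
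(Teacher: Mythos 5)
Your proposal is correct and follows essentially the same route as the paper's own (very terse) proof: mirror the argument of Lemma~3, split the cross-term into a noise part handled by the concentration bound of Proposition~3 in \cite{basu2019low} with a union bound over $\mathcal{T}^w$, and bound the misspecification bias on $[\tau^\star,\tau)$ by $c_0 M_S \frac{(\tau-\tau^\star)_+}{\tau-1}|\mathcal{J}(\eta_j)|$, then substitute the cardinality estimate $|\mathcal{J}(\eta_j)|\leq R_q\eta_j^{-q}$. Your additional worry about concentration of $\widehat\Gamma_{[\tau^\star,\tau)}$ on short misspecified intervals is a fair one, but the paper sidesteps it exactly as you suggest (the prefactor $(\tau-\tau^\star)_+/(\tau-1)$ carries the bound), so there is no substantive divergence.
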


\begin{proof}[Proof of Lemma \ref{lemma:7}]
    This proof is similar to the proof of Lemma \ref{lemma:3}. The key step is to measure the deviations for the misspecified model in the posited interval. In this case, since we assume that $\tau > \tau^\star$,  the deviation on the interval $[\tau^\star, \tau)$ is upper bounded by $c_0M_S\frac{(\tau-\tau^\star)_+}{\tau-1}|\mathcal{J}(\eta_{j})|$ for some large constant $c_0>0$. Then, substituting the upper bound of $|\mathcal{J}(\eta_j)|$ by $R_q\eta_j^{-q}$ implies the final result. 
\end{proof}
{
\begin{lemma}
\label{lemma:8}
Under Assumptions W1-W3, for a set of estimated change points $(s_1,s_2,\cdots, s_m)$ with $m < m_0$ and for the minimum spacing $\Delta_T$, and jump size $v_A = \min_{1\leq j\leq m_0}\|A_{j+1}^\star - A_j^\star\|_2$, there exist universal positive constants $c_1, c_2 > 0$ such that:
\begin{equation*}
    \mathbb{P}\left(\min_{(s_1,\dots,s_m)}\mathcal{L}_T(s_1,\dots,s_m; \bm{\lambda}^w) > \sum_{t=1}^T\|\epsilon_t\|_2^2 + c_1v_A\Delta_T - c_2mT\xi_TR_q^2\left(\frac{\log (p\vee T)}{T}\right)^{-q} \right) \to 1.
\end{equation*}
\end{lemma}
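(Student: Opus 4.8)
The plan is to mirror the proof of Lemma~\ref{lemma:6} line by line, replacing the low-rank-plus-sparse machinery (the weighted regularizer $\mathcal{Q}$, the sparsity/rank rates $d_{\max}^\star,r_{\max}^\star$, and the deviation bounds of Lemma~\ref{lemma:3}) by their weakly sparse counterparts: the $\ell_1$-cone decompositions $\|\widetilde{\Delta}\|_1 \leq 4\sqrt{R_q}\,\eta^{-q/2}\|\widetilde{\Delta}\|_2 + 4R_q\eta^{1-q}$ recorded in Appendix~\ref{appendix:C}, the misspecified deviation bound of Lemma~\ref{lemma:7}, the stationary deviation bounds of Lemma~\ref{lemma:1}, and the $\ell_q$-ball estimation rate $\|\widehat{A}-A^\star\|_F^2 \lesssim R_q(\log p/N)^{1-q/2}$ from Corollary~\ref{cor:3}. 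Since $m<m_0$, at least one true break $\tau_j^\star$ is left uncovered, i.e. there is an estimated segment $[s_{i-1},s_i)$ with $|s_{i-1}-\tau_j^\star|>\Delta_T/4$ and $|s_i-\tau_j^\star|>\Delta_T/4$. I would then split the $m+1$ estimated segments into the same three regimes dictated by $\{\xi_T^w\}$ and Assumption W4: (a) $|s_i-s_{i-1}|\le T\xi_T^w$; (b) both endpoints lie within $T\xi_T^w$ of two consecutive true breaks; and (c) the remaining segments, one of which straddles the uncovered $\tau_j^\star$ with both sides of order $\Delta_T$.

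For the negligible segments in regime (a), I would expand the residual sum as $\sum\|\epsilon_t\|_2^2 + \sum\|X_{t-1}\widehat{\Delta}\|_2^2 + 2\sum\langle X_{t-1}'\epsilon_t,\widehat{\Delta}\rangle$, lower-bound the cross term via $|\langle\cdot\rangle|\le\|\sum X_{t-1}\epsilon_t'\|_\infty\|\widehat{\Delta}\|_1$ and the short-interval deviation bound exactly as in \eqref{eq:19}, add the penalty $\lambda_i^w\|\widehat{A}_i\|_1$, and use that weak sparsity forces $\|A_{j+1}^\star\|_1\le\|A_{j+1}^\star\|_q\le R_q^{1/q}$; each such segment then costs at most $c\sqrt{T\xi_T^w\log p}\,R_q^{1/q}$, which is absorbed in the per-segment error budget. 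For regime (b), I would start from the basic inequality given by optimality of $\widehat{A}_i$, turn it into the $\ell_1$-cone condition and then into the $\ell_q$-ball Frobenius error bound of Corollary~\ref{cor:3}, and split $[s_{i-1},s_i)=[s_{i-1},\tau_j^\star)\cup[\tau_j^\star,s_i)$; on the short misspecified piece the bias enters through $|\mathcal{J}(\eta)|\le R_q\eta^{-q}$, while on the long correctly specified piece restricted strong convexity yields $\sum\|\epsilon_t\|_2^2-cR_q(\log p/N)^{1-q/2}$. Collecting these, each regime-(b) segment contributes $\sum\|\epsilon_t\|_2^2$ minus a term of the claimed order $T\xi_T^w R_q^2(\log(p\vee T)/T)^{-q}$.

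The signal is produced in regime (c). Here I would again decompose at $\tau_j^\star$ and, on each sub-interval of length $\gtrsim\Delta_T$, invoke restricted strong convexity to obtain lower bounds of the shape $\sum\|\epsilon_t\|_2^2 + c|\tau_j^\star-s_{i-1}|\sqrt{\|\widetilde{\Delta}\|_F^2}\big(\sqrt{\|\widetilde{\Delta}\|_F^2}-\sqrt{R_q}\,\eta^{-q/2}\big)$, and symmetrically for $\|\widehat{\Delta}\|_F$. Assumption W1 forces $v_A=\|A_{j+1}^\star-A_j^\star\|_2>0$, hence at least one of $\|\widehat{\Delta}\|_F,\|\widetilde{\Delta}\|_F$ exceeds $v_A/2$; substituting this and using that the estimation/bias radius is asymptotically negligible relative to $v_A$ (Assumption W1 combined with W3) gives a strictly positive contribution of order $v_A^2\Delta_T$, which is the dominant signal term of the displayed bound. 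Summing over all $m+1$ segments, this single surviving term dominates while the cumulative negative part collapses to $c_2 mT\xi_T^w R_q^2(\log(p\vee T)/T)^{-q}$, and the stated high-probability statement follows from Lemmas~\ref{lemma:1} and~\ref{lemma:7} holding simultaneously over the search domain with probability tending to one.

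The main obstacle, precisely as in Lemma~\ref{lemma:6}, is the chain of algebraic comparisons controlling the misspecification bias on the short sub-intervals. In the low-rank-plus-sparse proof this required Assumption H3' to show that products of the bias and the estimation error are dominated by $(d_{\max}^\star\log p+r_{\max}^\star p)/N$. The weakly sparse analogue demands balancing the bias factor $R_q\eta_{\min}^{-q}=R_q(\log p/N)^{-q/2}$ against the $\ell_q$ error rate $R_q(\log p/N)^{1-q/2}$ using Assumption W3 (in particular $R_q\eta_{\min}^{-q}\sqrt{\log p/(T\xi_T^w)}\to0$ together with the minimum-spacing divergence), so that every cross term consolidates into a single $R_q^2(\log(p\vee T)/T)^{-q}$ scale per segment and none competes with the $v_A^2\Delta_T$ signal. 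Carrying out these inequalities and checking that the fractional exponent $q$ does not spoil the Cauchy--Schwarz bookkeeping is the delicate step.
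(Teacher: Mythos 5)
The paper's own ``proof'' of this lemma is a one-line deferral to Lemma 4 of \cite{safikhani2017joint}, and your outline fills in precisely the intended argument: the weakly sparse transcription of Lemma~\ref{lemma:6}, with the same three-regime segment split driven by Assumption W4, the same pigeonhole identification of an uncovered true break contributing the $\Delta_T$ signal via the triangle-inequality dichotomy on the two stationary sub-intervals, and the $\ell_q$-ball machinery (thresholded cone decomposition, Lemma~\ref{lemma:7}, Corollary~\ref{cor:3}, Assumption W3) substituted for the low-rank-plus-sparse bounds, so this is essentially the same approach. The one wrinkle is that your signal term is $v_A^2\Delta_T$ while the lemma as stated writes $c_1 v_A\Delta_T$; the quadratic form is what the argument actually produces and is what the paper itself uses when invoking this lemma in the proof of Proposition~\ref{prop:5}, so this reflects a typo in the statement rather than a defect in your proposal.
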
}

\begin{proof}[Proof of Lemma \ref{lemma:8}]
    This lemma is proved in a similar manner as Lemma 4 in \cite{safikhani2017joint}. 
\end{proof}


\newpage
\section{Proofs of Main Theorems}\label{appendix:E}
\begin{proof}[Proof of Theorem \ref{thm:1}]
    Let $\widehat{\tau}$ be the estimated change point obtained by solving optimization problem \eqref{eq:4} addressed in the main text. Based on Algorithm~\ref{algo:1}, we use the following objective function $\mathcal{L}(\tau)$ similar to \eqref{eq:3} in the main text, which can be written as:   
    \begin{equation}\label{eq:36}
        \mathcal{L}(\tau) = \sum_{t=1}^{\tau-1}\|X_t - (\widehat{L}_{1,\tau} + \widehat{S}_{1,\tau})X_{t-1}\|_2^2 + \sum_{t=\tau}^{T-1}\|X_t - (\widehat{L}_{2,\tau} + \widehat{S}_{2,\tau})X_{t-1}\|_2^2\overset{\text{def}}{=} I_1 + I_2,
    \end{equation}
    wherein $\widehat{L}_{j,\tau}$ and $\widehat{S}_{j,\tau}$ for $j=1,2$ are the optimizers of the convex programs \eqref{eq:5} in the main. Note that the estimated low rank and sparse components are functions of $\tau$. Without loss of generality, we assume that $\tau > \tau^\star$, and the length of misspecified interval $(\tau - \tau^\star)$ is large enough. 
    
    Denote by $(L_1^\star, S_1^\star)$ and $(L_2^\star, S_2^\star)$ the true components in the interval $[1,\tau^\star)$ and $[\tau^\star, T)$, respectively. Further, we use the notation $\widehat{\Delta}^L_{j,\tau} = \widehat{L}_{j,\tau} - L_j^\star$, $\widehat{\Delta}^S_{j,\tau} = \widehat{S}_{j,\tau} - S_j^\star$ for $j=1,2$. Then, by using the tuning parameters defined in \eqref{eq:6} in the main text, we obtain the corresponding lower bound of $I_2$:
    \allowdisplaybreaks
    \begin{align}\label{eq:37}
        I_2
        &= \sum_{t=\tau}^{T-1}\|X_t - (\widehat{L}_{2,\tau} + \widehat{S}_{2,\tau})X_{t-1}\|_2^2 \nonumber \\
        &\geq \sum_{t=\tau}^{T-1}\|\epsilon_t\|_2^2 + \sum_{t=\tau}^{e-1}\|X_{t-1}(\widehat{\Delta}^L_{2,\tau} + \widehat{\Delta}^S_{2,\tau})\|_2^2 - 2\left| \sum_{t=\tau}^{T-1}X_{t-1}^\prime(\widehat{\Delta}^L_{2,\tau} + \widehat{\Delta}^S_{2,\tau})^\prime \epsilon_t \right| \nonumber \\
        &\overset{\text{(i)}}{\geq} \sum_{t=\tau}^{T-1}\|\epsilon_t\|_2^2 + c^\prime|T-\tau|\|\widehat{\Delta}^L_{2,\tau} + \widehat{\Delta}^S_{2,\tau}\|_F^2 \nonumber \\
        &- c^{\prime\prime}|T-\tau|\left( \sqrt{\frac{\log p + \log(T-\tau)}{T-\tau}}\|\widehat{\Delta}^S_{2,\tau}\|_1 + \sqrt{\frac{p + \log(T-\tau)}{T-\tau}}\|\widehat{\Delta}^L_{2,\tau}\|_*\right) \nonumber \\
        &\overset{\text{(ii)}}{\geq} \sum_{t=\tau}^{T-1}\|\epsilon_t\|_2^2 + c^\prime|T-\tau|\left( \|\widehat{\Delta}^L_{2,\tau}\|_F^2 + \|\widehat{\Delta}^S_{2,\tau}\|_F^2 - \frac{\mu_{2,\tau}}{2}\mathcal{Q}(\widehat{\Delta}^L_{2,\tau}, \widehat{\Delta}^S_{2,\tau}) \right) - c^{\prime\prime}|T-\tau|\mu_{2,\tau}\mathcal{Q}(\widehat{\Delta}^L_{2,\tau}, \widehat{\Delta}^S_{2,\tau}) \nonumber \\
        &\overset{\text{(iii)}}{\geq} \sum_{t=\tau}^{T-1}\|\epsilon_t\|_2^2 + c^\prime|T-\tau|\sqrt{\|\widehat{\Delta}^L_{2,\tau}\|_F^2 + \|\widehat{\Delta}^S_{2,\tau}\|_F^2}\left( \sqrt{\|\widehat{\Delta}^L_{2,\tau}\|_F^2 + \|\widehat{\Delta}^S_{2,\tau}\|_F^2} \right. \nonumber \\
        &\left.- \left(\frac{1}{2} + \frac{c^{\prime\prime}}{c^\prime}\right)\sqrt{\lambda_{2,\tau}^2 d_{\max}^\star + \mu_{2,\tau}^2 r_{\max}^\star} \right) \nonumber \\
        &\geq \sum_{t=\tau}^{T-1}\|\epsilon_t\|_2^2 - c^{\prime}|T-\tau|\left( \lambda_{2,\tau}^2d_{\max}^\star + \mu_{2,\tau}^2 r_{\max}^\star \right) \nonumber \\
        &\geq \sum_{t=\tau}^{T-1}\|\epsilon_t\|_2^2 - c^\prime\bigg(d_{\max}^\star\log p + r_{\max}^\star p + (d_{\max}^\star + r_{\max}^\star)\log(T-\tau)\bigg),
    \end{align}
    where (i) holds based on Lemma \ref{lemma:3}; (ii) is derived based on the following result together with Assumption H3 on the size of the search domain $\mathcal{T}$:
    \begin{align}
        \label{eq:38}
        \|\widehat{\Delta}^L_{2,\tau} + \widehat{\Delta}^S_{2,\tau}\|_F^2 
        &\geq \|\widehat{\Delta}^L_{2,\tau}\|_F^2 + \|\widehat{\Delta}^S_{2,\tau}\|_F^2 -2\bigg| \langle \widehat{\Delta}^L_{2,\tau}, \widehat{\Delta}^S_{2,\tau}\rangle\bigg| \nonumber \\
        &\geq \|\widehat{\Delta}^L_{2,\tau}\|_F^2 + \|\widehat{\Delta}^S_{2,\tau}\|_F^2 - 2\|\widehat{\Delta}^L_{2,\tau}\|_\infty\|\widehat{\Delta}^S_{2,\tau}\|_1 \nonumber \\
        &\geq \|\widehat{\Delta}^L_{2,\tau}\|_F^2 + \|\widehat{\Delta}^S_{2,\tau}\|_F^2 - \frac{2\alpha_L}{p}\|\widehat{\Delta}^S_{2,\tau}\|_1 \nonumber \\
        &\geq \|\widehat{\Delta}^L_{2,\tau}\|_F^2 + \|\widehat{\Delta}^S_{2,\tau}\|_F^2 - \lambda_{2,\tau}\|\widehat{\Delta}^S_{2,\tau}\|_1 - \mu_{2,\tau}\|\widehat{\Delta}^L_{2,\tau}\|_*;
    \end{align}
    (iii) holds because of an application of the Cauchy-Schwarz inequality to the result of Lemma \ref{lemma:5}.
    
    Next, we derive a lower bound for $I_1$. Before stating the results, we first define the \textit{misspecified} error terms $\widetilde{\Delta}^L_{1/2,\tau} = \widehat{L}_{1,\tau} - L_2^\star$ and $\widetilde{\Delta}^S_{1/2,\tau} = \widehat{S}_{1,\tau} - S_2^\star$, then we similarly obtain that:
    {
    \begin{align}
        \label{eq:39}
        I_1
        &= \sum_{t=1}^{\tau-1}\|X_t - (\widehat{L}_{1,\tau} + \widehat{S}_{1,\tau})X_{t-1}\|_2^2 \nonumber \\
        &\geq \sum_{t=1}^{\tau-1}\|\epsilon_t\|_2^2 + \sum_{t=1}^{\tau^\star-1}\|X_{t-1}(\widehat{\Delta}^L_{1,\tau} + \widehat{\Delta}^S_{1,\tau})\|_2^2 + \sum_{t=\tau^\star}^{\tau-1}\|X_{t-1}(\widetilde{\Delta}^L_{1/2,\tau} + \widetilde{\Delta}^S_{1/2,\tau})\|_2^2 \nonumber \\
        &- 2\left| \sum_{t=1}^{\tau^\star-1}X_{t-1}^\prime(\widehat{\Delta}^L_{1,\tau} + \widehat{\Delta}^S_{1,\tau})^\prime \epsilon_t \right| - 2\left| \sum_{t=\tau^\star}^{\tau-1}X_{t-1}^\prime(\widetilde{\Delta}^L_{1/2,\tau} + \widetilde{\Delta}^S_{1/2,\tau})^\prime \epsilon_t \right| \nonumber \\
        &\overset{(i)}{\geq} \sum_{t=1}^{\tau-1}\|\epsilon_t\|_2^2 + c|\tau^\star-1|\|\widehat{\Delta}^L_{1,\tau} + \widehat{\Delta}^S_{1,\tau}\|_F^2 + c^\prime|\tau-\tau^\star|\|\widetilde{\Delta}^L_{1/2,\tau} + \widetilde{\Delta}^S_{1/2,\tau}\|_F^2 \\
        &- c_1|\tau^\star-1|\left(\sqrt{\frac{\log p + \log(\tau-1)}{\tau-1}}\|\widehat{\Delta}^S_{1,\tau}\|_1 + \sqrt{\frac{p + \log(\tau-1)}{\tau-1}}\|\widehat{\Delta}^L_{1,\tau}\|_* \right) \nonumber \\
        &- c_1^\prime|\tau-\tau^\star|\left( \left(\sqrt{\frac{\log p + \log(\tau-1)}{\tau-1}} + \frac{M_S\vee\alpha_L}{\tau-1}(d_{\max}^\star+\sqrt{r_{\max}^\star})\right)\|\widetilde{\Delta}^S_{1/2,\tau}\|_1 \right. \nonumber \\
        &+\left. \left(\sqrt{\frac{p + \log(\tau-1)}{\tau-1}}+\frac{M_S\vee\alpha_L}{\tau-1}\right)\|\widetilde{\Delta}^L_{1/2,\tau}\|_* \right) \nonumber \\
        &\overset{\text{(ii)}}{\geq} \sum_{t=1}^{\tau-1}\|\epsilon_t\|_2^2 + c|\tau^\star-1|\left( \|\widehat{\Delta}^L_{1,\tau}\|_F^2 + \|\widehat{\Delta}^S_{1,\tau}\|_F^2 - \frac{3\mu_{1,\tau}}{2}\mathcal{Q}(\widehat{\Delta}^L_{1,\tau}, \widehat{\Delta}^S_{1,\tau}) \right) \nonumber \\
        &+ c^\prime|\tau-\tau^\star|\left( \|\widetilde{\Delta}^L_{1/2,\tau}\|_F^2 + \|\widetilde{\Delta}^S_{1/2,\tau}\|_F^2 - \frac{3\mu_{1,\tau}}{2}\mathcal{Q}(\widetilde{\Delta}^L_{1/2,\tau}, \widetilde{\Delta}^S_{1/2,\tau}) \right) \nonumber \\
        &- c^{\prime\prime}|\tau-\tau^\star|\frac{(d_{\max}^\star+\sqrt{r_{\max}^\star})\|\widetilde{\Delta}^S_{1/2,\tau}\|_1 + \|\widetilde{\Delta}^L_{1/2,\tau}\|_*}{\tau-1} \nonumber \\
        &\overset{\text{(iii)}}{\geq} \sum_{t=1}^{\tau-1}\|\epsilon_t\|_2^2 + c^\prime_1|\tau-\tau^\star|\left(v_S^2 + v_L^2\right) - c_2^\prime\left(d_{\max}^\star\log(p\vee T) + r_{\max}^\star(p\vee T)\right), \nonumber
    \end{align}}
    \noindent
    where $C_0 \geq (M_S\vee\alpha_L)$ is some large constant. Inequality (i) is derived by the deviation bound in Lemma \ref{lemma:3}; (ii) holds due to the definition of the weighted regularizer $\mathcal{Q}$ and \eqref{eq:38}; {(iii) is derived by substituting the differences of sparse components $\|\widehat{\Delta}^S_{1,\tau}\|_2$ and $\|\widetilde{\Delta}^S_{1/2,\tau}\|_2$ by $v_S$ and the differences of low rank components $\|\widehat{\Delta}^L_{1,\tau}\|_2$ and $\|\widetilde{\Delta}^L_{1/2,\tau}\|_2$ by $v_L$, respectively.} 
    
    Since we can not verify the RE condition on the misspecified interval $[1,\tau)$ and due to  $\|S_2^\star - S_1^\star\|_2 \geq v_S >0$ and $\|L_2^\star - L_1^\star\|_2 \geq v_L > 0$, then we have either $\|\widehat{\Delta}^S_{1,\tau}\|_2 \geq v_S/4$ or $\|\widetilde{\Delta}^S_{1/2,\tau}\|_2 \geq v_S/4$, and either $\|\widehat{\Delta}^L_{1,\tau}\|_2 \geq v_L/4$ or $\|\widetilde{\Delta}^L_{1/2,\tau}\|_2 \geq v_L/4$. Assume that $\|\widetilde{\Delta}_{1/2,\tau}^S\|_2 \geq v_S/4$ and $\|\widetilde{\Delta}^L_{1/2,\tau}\|_2 \geq v_L/4$, then based on Assumptions H2 and H3, it implies that
    \begin{equation*}
        \frac{(d_{\max}^\star+\sqrt{r_{\max}^\star})\|\widetilde{\Delta}^S_{1/2,\tau}\|_1+\|\widetilde{\Delta}^L_{1/2,\tau}\|_*}{\tau-1} \to 0,
    \end{equation*}
    hence, for some constants $c_1$, $c_2>0$ we get: 
    \begin{equation}\label{eq:40}
        I_1 \geq \sum_{t=1}^{\tau-1}\|\epsilon_t\|_2^2 + c_1|\tau-\tau^\star| - c_2\bigg(d_{\max}^\star \log p + r_{\max}^\star p + (d^\star_{\max} + r_{\max}^\star)\log(\tau-1)\bigg).
    \end{equation}
    Combining \eqref{eq:38} and \eqref{eq:40} establishes that the objective function $\mathcal{L}(\tau)$ satisfies:
    {
    \begin{equation}\label{eq:41}
        \mathcal{L}(\tau) \geq \sum_{t=1}^{T-1}\|\epsilon_t\|_2^2 + K_1\left(v_S^2 + v_L^2\right)|\tau-\tau^\star| - K_2\bigg(d_{\max}^\star\log(p \vee T) + r_{\max}^\star(p \vee \log T)\bigg).
    \end{equation}}
    
    Next, we prove the upper bound of $\mathcal{L}(\tau^\star)$. For some constant $K>0$,
    \begin{equation}\label{eq:42}
        \mathcal{L}(\tau^\star) \leq \sum_{t=1}^{T-1}\|\epsilon_t\|_2^2 + K\bigg( d_{\max}^\star\log(p\vee T) + r_{\max}^\star(p \vee \log T) \bigg).
    \end{equation}
    To see this result, by using a similar procedure, we have:
    \begin{equation*}
        \mathcal{L}(\tau^\star) = \sum_{t=1}^{\tau^\star-1}\|X_t - (\widehat{L}_{1,\tau} + \widehat{S}_{1,\tau})X_{t-1}\|_2^2 + \sum_{t=\tau^\star}^{T-1}\|X_t - (\widehat{L}_{2,\tau} + \widehat{S}_{2,\tau})X_{t-1}\|_2^2\overset{\text{def}}{=} J_1 + J_2.
    \end{equation*}
    Then, we obtain:
    \begin{align}\label{eq:43}
        J_1 
        &\leq \sum_{t=1}^{\tau^\star-1} \|\epsilon_t\|_2^2 + 2c|\tau^\star-1|\left(\|\widehat{\Delta}^L_{1,\tau^\star}\|_F^2 + \|\widehat{\Delta}^S_{1,\tau^\star}\|_F^2 + c^\prime\sqrt{\frac{\log p+\log(\tau^\star-1)}{\tau^\star-1}}\|\widehat{\Delta}^S_{1,\tau^\star}\|_1 \right. \nonumber \\
        &+\left. c^\prime\sqrt{\frac{p+\log(\tau^\star-1)}{\tau^\star-1}}\|\widehat{\Delta}^L_{1,\tau^\star}\|_*\right) \nonumber \\
        &\leq \sum_{t=1}^{\tau^\star-1}\|\epsilon_t\|_2^2 + K_1\bigg( d_{\max}^\star\log(p\vee T) + r_{\max}^\star(p \vee \log T) \bigg),
    \end{align}
    and similarly we have:
    \begin{equation}\label{eq:44}
        J_2 \leq \sum_{t=\tau^\star}^{T-1}\|\epsilon_t\|_2^2 + K_2\left( d_{\max}^\star \log(p\vee T) + r_{\max}^\star(p \vee \log T) \right).
    \end{equation}
    Hence, combining inequalities \eqref{eq:43} and \eqref{eq:44} leads to the fact \eqref{eq:41}.
    
    Based on \eqref{eq:41} and \eqref{eq:42}, and using the fact that $\widehat{\tau}$ is the minimizer of optimization program (4) in the main text, we get that with high probability:
    {
    \begin{align}\label{eq:45}
        &\sum_{t=1}^{T-1}\|\epsilon_t\|_2^2 + K_1(v_S^2 + v_L^2)|\widehat{\tau} - \tau^\star| - K_2\bigg( d^\star_{\max}\log(p\vee T) + r_{\max}^\star(p\vee \log T) \bigg) \leq \mathcal{L}(\widehat{\tau}) \nonumber \\
        \leq &\mathcal{L}(\tau^\star) \leq \sum_{t=1}^{T-1}\|\epsilon_t\|_2^2 + K\bigg( d_{\max}^\star\log(p\vee T) + r_{\max}^\star(p\vee \log T) \bigg).
    \end{align}
    Therefore, with high probability, for some large enough constant $K_0>0$, the following holds
    \begin{equation}\label{eq:46}
        |\widehat{\tau} - \tau^\star| \leq K_0\frac{ d_{\max}^\star\log(p\vee T) + r_{\max}^\star(p\vee \log T) }{v_S^2 + v_L^2},
    \end{equation}
    which concludes the proof of the Theorem.}
\end{proof}

{
\begin{remark}
\label{remark:4}
The consistency rate for a single change point derived in Theorem 1 is optimal up to a logarithmic factor $d_{\max}^* \log p$ for the sparse component with an additional term $r_{\max}^* p$ due to the low rank component. A closer look at the proof of Theorem 1 reveals that the inclusion of these two factors in the consistency rate is due to the unknown auto-regressive parameters to the left and the right of the change point which need to be estimated simultaneously, while the location of the change point is being estimated. As stated in \cite{csorgo1997limit}, the optimal consistency rate for locating a single change point in a family of low-dimensional linear models is $\mathcal{O} \left( 1/ \| A_2^* - A_1^*  \|_2^2 \right) $. Hence, it is of theoretical interest to examine what (possibly more stringent) conditions on the model parameters and/or model dimensions could remove these extra factors in the consistency rate obtained in Theorem 1.
\end{remark}
}
{
\begin{remark}
\label{remark:5}
The asymptotic framework considered in Theorem 1 is that of ``increasing domain asymptotics" in which the sampling rate is fixed and physical time (number of observations) increases. Further, note that we consider an \textit{offline change point detection setting} in which all the time points are present and one is primarily interested in estimating their unknown number and locations based on the observed data. Thus, an increasing domain asymptotics regime is meaningful for the detection framework under consideration. Such a modeling and asymptotic framework has been stated/utilized in the literature for univariate AR processes (\cite{davis2006structural}) as well as high-dimensional VAR processes (\cite{wang2019localizing,safikhani2017joint}). As mentioned in \cite{davis2006structural}, the detection problem can be seen as segmenting the series into blocks of different autoregressive (AR) processes with the objective of obtaining the ``best-fitting" model
from the class of piecewise AR processes.
\end{remark}
}

\begin{proof}[Proof of Theorem \ref{thm:2}]
    This proof is similar to the proof of Proposition 4.1 in \cite{basu2015regularized}. The key steps in the proof that require verification are (a) the restricted strong convexity condition and (b) the deviation bound condition (see Appendix A) for the intervals $[1,\tau^\star-R)$ and $[\tau^\star+R, T)$, respectively, for the radius $R$. For (a), analogous arguments as in the proof of Theorem 4 in \cite{safikhani2017joint} establish the result. Further (b) follows from the result established in Lemma \ref{lemma:1}. 
\end{proof}

\begin{proof}[Proof of Theorem \ref{thm:3}]
    We first establish the following fact: suppose $(m_0, \widehat{\tau}_i, i=1,2,\dots, m_0)$ is a subset of the candidate set $\widetilde{\mathcal{S}}$, which satisfies $\max_{1\leq i \leq m_0}|\widehat{\tau}_i - \tau_i^\star| \leq T\xi_T$. Then, we can obtain the upper bound for $\mathcal{L}_n(\widehat{\tau}_1, \cdots, \widehat{\tau}_{m_0}; \bm{\lambda}, \bm{\mu})$:
    \begin{equation}\label{eq:47}
        \mathcal{L}_T(\widehat{\tau}_1, \dots, \widehat{\tau}_{m_0}; \bm{\lambda}, \bm{\mu}) \leq \sum_{t=1}^T\|\epsilon_t\|_2^2 + Km_0T\xi_T (d_{\max}^{\star^2} + r_{\max}^{\star^{\frac{3}{2}}}),
    \end{equation}
    where $K>0$ is a large enough constant.
    
    To prove \eqref{eq:47}, we can focus on the estimated interval $(\widehat{\tau}_{i-1}, \widehat{\tau}_i)$ and corresponding estimates: $\widehat{L}_i$ and $\widehat{S}_i$. Suppose there is a true change point $\tau_j^\star$ such that: $\widehat{\tau}_{i-1} < \tau_j^\star < \widehat{\tau}_i$ with $|\tau_j^\star - \widehat{\tau}_{i-1}| \leq T\xi_T$. Similar to the proof in case (b) in Lemma \ref{lemma:6}, for the interval $[\tau^\star_j, \widehat{\tau}_i)$, by choosing the same tuning parameters as in case (b) of Assumption H6, we have:
    \allowdisplaybreaks
    \begin{align}\label{eq:48}
        &\sum_{t=\tau_j^\star}^{\widehat{\tau}_i-1}\|X_t - (\widehat{L}_i + \widehat{S}_i)X_{t-1}\|_2^2 \nonumber \\
        \leq &\sum_{t=t_j^\star}^{\widehat{t}_i-1}\|\epsilon_t\|_2^2 + c_3|\widehat{\tau}_i - \tau_j^\star|\|\widehat{\Delta}^L + \widehat{\Delta}^S\|_F^2 + c_3^\prime\left( \sqrt{|\widehat{\tau}_i - \tau_j^\star|p}\|\widehat{\Delta}^L\|_* + \sqrt{|\widehat{\tau}_i - \tau_j^\star|\log p}\|\widehat{\Delta}^S\|_1 \right) \nonumber \\
        \leq &\sum_{t=\tau_j^\star}^{\widehat{\tau}_i-1}\|\epsilon_t\|_2^2 + 2c_3|\widehat{\tau}_i - \tau_j^\star|(\|\widehat{\Delta}^L\|_F^2 + \|\widehat{\Delta}^S\|_F^2) + c_3^\prime|\widehat{\tau}_i - \tau_j^\star|\left(\sqrt{\frac{p}{|\widehat{\tau}_i - \tau_j^\star|}}\|\widehat{\Delta}^L\|_* + \sqrt{\frac{\log p}{|\widehat{\tau}_i - \tau_j^\star|}}\|\widehat{\Delta}^S\|_1 \right) \nonumber \\
        \overset{\text{(i)}}{\leq} &\sum_{t=\tau_j^\star}^{\widehat{\tau}_i-1}\|\epsilon_t\|_2^2 + 2c_3|\widehat{\tau}_i - \tau_j^\star|(\|\widehat{\Delta}^L\|_F^2 + \|\widehat{\Delta}^S\|_F^2) + c_3^\prime|\widehat{\tau}_i - \tau_j^\star|\frac{1}{2}{\mu_i}{\mathcal{Q}}(\widehat{\Delta}^L, \widehat{\Delta}^S) \nonumber \\
        = &\sum_{t=\tau_j^\star}^{\widehat{\tau}_i - 1}\|\epsilon_t\|_2^2 + c_3|\widehat{\tau}_i - \tau_j^\star|\left( 2(\|\widehat{\Delta}^L\|_F^2 + \|\widehat{\Delta}^S\|_F^2) + \frac{c_3^\prime}{2c_3}{\mu_i}\mathcal{Q}(\widehat{\Delta}^L, \widehat{\Delta}^S) \right) \nonumber \\
        \overset{\text{(ii)}}{\leq} &\sum_{t=\tau_j^\star}^{\widehat{\tau}_i - 1}\|\epsilon_t\|_2^2 + c_3|\widehat{\tau}_i - \tau_j^\star|\sqrt{\|\widehat{\Delta}^L\|_F^2 + \|\widehat{\Delta}^S\|_F^2}\left(2\sqrt{\|\widehat{\Delta}^L\|_F^2 + \|\widehat{\Delta}^S\|_F^2} + \frac{c_3^\prime}{2c_3}\sqrt{{\lambda_i}^2 d_{\max}^\star + {\mu_i}^2 r_{\max}^\star}\right) \nonumber \\
        \overset{\text{(iii)}}{\leq} &\sum_{t=\tau_j^\star}^{\widehat{\tau}_i-1}\|\epsilon_t\|_2^2 + \mathcal{O}_p\left( T\xi_T\left(d_{\max}^{\star^2} + r_{\max}^{\star^{\frac{3}{2}}}\right) \right),
    \end{align}
    where (i) holds because of the selection of the tuning parameters (see Lemma \ref{lemma:6} case (b)); (ii) holds since we can derive the error bound together with the upper bound of the weighted regularizer $\mathcal{Q}$; (iii) holds because of Assumptions H3' and H6. 
    
    For the other interval $[\widehat{\tau}_{i-1}, \tau_j^\star)$ we also get:
    \begin{align}\label{eq:49}
        &\sum_{t=\widehat{\tau}_{i-1}}^{\tau_j^\star-1}\|X_t - (\widehat{L}_i + \widehat{S}_i)X_{t-1}\|_2^2 \nonumber \\
        \leq &\sum_{t=\widehat{\tau}_{i-1}}^{\tau_j^\star-1}\|\epsilon_t\|_2^2 + c_3|\tau_j^\star - \widehat{\tau}_{i-1}|\|\widetilde{\Delta}^L + \widetilde{\Delta}^S\|_F^2 + c_3^\prime\left( \sqrt{|\tau_j^\star - \widehat{\tau}_{i-1}|p}\|\widetilde{\Delta}^L\|_* + \sqrt{|\tau_j^\star - \widehat{\tau}_{i-1}|\log p}\|\widetilde{\Delta}^S\|_1 \right) \nonumber \\
        \leq &\sum_{t=\widehat{\tau}_{i-1}}^{\tau_j^\star-1}\|\epsilon_t\|_2^2 + 2c_3|\tau_j^\star - \widehat{\tau}_{i-1}|\left(\|\widehat{\Delta}^L\|_F^2 + \|\widehat{\Delta}^S\|_F^2 + \|L_{j+1}^\star - L_j^\star\|_F^2 + \|S_{j+1}^\star - S_j^\star\|_F^2\right) \nonumber \\
        + &c_3^\prime\left( \sqrt{|\tau_j^\star - \widehat{\tau}_{i-1}|p}(\|\widehat{\Delta}^L\|_* + \|L_{j+1}^\star - L_j^\star\|_*) + \sqrt{|\tau_j^\star - \widehat{\tau}_{i-1}|\log p}(\|\widehat{\Delta}^S\|_1 + \|S_{j+1}^\star - S_j^\star\|_1) \right) \nonumber \\
        \leq &\sum_{t=\widehat{\tau}_{i-1}}^{\tau_j^\star-1}\|\epsilon_t\|_2^2 + \mathcal{O}_p\left( T\xi_T\left( d_{\max}^{\star^2} + r_{\max}^{\star^{\frac{3}{2}}}) \right) \right).
    \end{align}
    Combining \eqref{eq:48} and \eqref{eq:49} and adding all $m_0+1$ intervals lead to \eqref{eq:47}.
    
    Next, in order to prove the consistency of the number of estimated change points, we need to prove that: (a) $\mathbb{P}(\widehat{m} < m_0) \to 0$; and (b) $\mathbb{P}(\widehat{m} > m_0) \to 0$, respectively. To prove (a), we apply the result from Lemma \ref{lemma:6}, which leads to:
    \begin{align}\label{eq:50}
        &\text{IC}(\widehat{\tau}_1,\dots, \widehat{\tau}_{\widehat{m}}; \bm{\lambda}, \bm{\mu}, \omega_T) \nonumber = \mathcal{L}_T(\widehat{\tau}_1, \dots, \widehat{\tau}_{\widehat{m}}; \bm{\lambda}, \bm{\mu}) + \widehat{m}\omega_T \nonumber \\
        \overset{\text{(i)}}{>} &\sum_{t=1}^T\|\epsilon_t\|_2^2 + c_1\widetilde{v}\Delta_T - c_2\widehat{m}T\xi_T(d_{\max}^{\star^2} + r_{\max}^{\star^{\frac{3}{2}}}) + \widehat{m}\omega_T \nonumber \\
        \geq &\mathcal{L}_T(\widehat{\tau}_1,\dots,\widehat{\tau}_{m_0}; \bm{\lambda}, \bm{\mu}) + m_0\omega_T + c_1\widetilde{v}\Delta_T - c_2m_0T\xi_T(d_{\max}^{\star^2} + r_{\max}^{\star^{\frac{3}{2}}}) - (m_0 - \widehat{m})\omega_T \nonumber \\
        \overset{\text{(ii)}}{\geq} &\mathcal{L}_T(\widehat{\tau}_1, \dots, \widehat{\tau}_{m_0}; \bm{\lambda}, \bm{\mu}) + m_0\omega_T,
    \end{align}
    where (i) holds because of Lemma \ref{lemma:6}; (ii) holds because of Assumption H5. The result in \eqref{eq:50} shows that $(\widehat{\tau}_1, \cdots, \widehat{\tau}_{\widehat{m}})$ is not the optimal solution to minimize IC function defined in (10) in the main; hence, we conclude that $\mathbb{P}(\widehat{m} < m_0) \to 0$. To prove (b), we assume that $(\widehat{\tau}_1, \cdots, \widehat{\tau}_{\widehat{m}})$ are the estimated change points with $\widehat{m} > m_0$. Then, similarly we get:
    \begin{equation}\label{eq:51}
        \mathcal{L}_T(\widehat{\tau}_1, \dots, \widehat{\tau}_{\widehat{m}}; \bm{\lambda}, \bm{\mu}) \geq \sum_{t=1}^T \|\epsilon_t\|_2^2 - c_2^\prime \widehat{m}T\xi_T (d_{\max}^{\star^2} + r_{\max}^{\star^{\frac{3}{2}}}).
    \end{equation}
    Next, choose a subset $\{\widehat{\tau}_{i_1}, \cdots, \widehat{\tau}_{i_{m_0}}\}$ from $\{\widehat{\tau}_1,\cdots, \widehat{\tau}_{\widehat{m}}\}$ such that $\max_{1\leq j\leq m_0}|\widehat{\tau}_{i_j} - \tau_j^\star| \leq T\xi_T$. Then, based on the definitions for $\text{IC}(\widehat{\tau}_1,\cdots, \widehat{\tau}_{\widehat{m}}; \bm{\lambda}, \bm{\mu}, \omega_T)$ and $\text{IC}(\widehat{\tau}_{i_1},\cdots,\widehat{\tau}_{i_{m_0}}; \bm{\lambda}, \bm{\mu}, \omega_T)$ and using \eqref{eq:51} we obtain:
    \begin{align}\label{eq:52}
        \sum_{t=1}^T\|\epsilon_t\|_2^2 - c_2^\prime \widehat{m}T\xi_T(d_{\max}^{\star^2} + r_{\max}^{\star^2}) + \widehat{m}\omega_T
        &\leq \text{IC}(\widehat{\tau}_1, \dots, \widehat{\tau}_{\widehat{m}}; \bm{\lambda}, \bm{\mu}, \omega_T) \nonumber \\
        &\leq \text{IC}(\widehat{\tau}_{i_1}, \dots, \widehat{\tau}_{i_{m_0}}; \bm{\lambda}, \bm{\mu}, \omega_T) \nonumber \\
        &\leq \sum_{t=1}^T\|\epsilon_t\|_2^2 + Km_0T\xi_T(d_{\max}^{\star^2} + r_{\max}^{\star^{\frac{3}{2}}}) + m_0\omega_T,
    \end{align}
    which leads to:
    \begin{equation}\label{eq:53}
        (\widehat{m} - m_0)\omega_T \leq (Km_0 + c_2^\prime \widehat{m})T\xi_T(d_{\max}^{\star^2} + r_{\max}^{\star^{\frac{3}{2}}}).
    \end{equation}
    Assumption $m_0T\xi_T(d_{\max}^{\star^2} + r_{\max}^{\star^{\frac{3}{2}}})/\omega_T \to 0$ implies that $m_0 < \widehat{m} \leq m_0$, which is a \textit{contradiction}. Thus, we have established case (b) that $\mathbb{P}(\widehat{m} > m_0) \to 0$. Hence, we successfully prove that $\mathbb{P}(\widehat{m} = m_0) \to 1$.
    {
    The second part of Theorem \ref{thm:3} follows directly from the first part. By using similar arguments as in the proof of Theorem 1, it shows that for any estimated change point $\widehat{\tau}_j$, and corresponding true change point $\tau_j^\star$ such that:
    \begin{equation*}
        \sum_{t=1}^T\|\epsilon_t\|_2^2 + c_1\widetilde{v}|\widehat{\tau}_j - \tau^\star_j| - c_2m_0T\xi_T\left(d_{\max}^{\star^2} + r_{\max}^{\star^{\frac{3}{2}}}\right)
        \leq \sum_{t=1}^T\|\epsilon_t\|_2^2 + Km_0T\xi_T\left(d_{\max}^{\star^2} + r_{\max}^{\star^{\frac{3}{2}}}\right),
    \end{equation*}
    which implies that
    \begin{equation*}
        \max_{1\leq j\leq m_0}|\widehat{\tau}_j - \tau^\star_j| \leq Bm_0T\xi_T\frac{d_{\max}^{\star^2} + r_{\max}^{\star^{\frac{3}{2}}}}{\min_{1\leq j\leq m_0}\{v_{j,S}^2 + v_{j,L}^2\}},
    \end{equation*}
    where $B>0$ is a large enough constant.
    }
    
\end{proof}

\begin{proof}[Proof of Corollary \ref{cor:1}]
    This proof is similar to the proof of Theorem 4 in \cite{safikhani2017joint}. We first remove the $R$-radius neighborhoods for each estimated change points $\widehat{\tau}_i$, we thus obtain the stationary segments $I_i \overset{\text{def}}{=}[\widehat{\tau}_i-R, \widehat{\tau}_i+R]$ for $i=1,2,\dots,m_0$. Then, let $N_i$ be the length of the $i$-th segment, the two key aspects that need to be verified are (a) the restricted strong convexity condition; (b) the deviation bound condition. 
    
    For each estimated segment $I_i$, the result of Theorem \ref{thm:3} suggests that $N_i = \mathcal{O}(\Delta_T)$; therefore, sufficiently large sample sizes are available to verify the RSC condition and the deviation bounds in each segment. The verification is similar to Proposition 4.1 in \cite{basu2015regularized}. 
    
    Therefore, by using the tuning parameters selected and the result in Proposition 1(a) in \cite{basu2019low}, the final result follows.
\end{proof}

{
\begin{proof}[Proof of Corollary \ref{cor:2}]
    This proof is similar to the proof of Theorem \ref{thm:3} and Theorem \ref{thm:1}. By using the conclusion in Theorem \ref{thm:3}, we have $\mathbb{P}(\widehat{m} = m_0) \to 1$. Since we are using the similar procedure as singel change point detection proposed in Theorem \ref{thm:1}, the estimated change points $\widetilde{\tau}_j$ satisfy the similar results as the proof of Theorem \ref{thm:1}. Hence, for the $j$th refined change point:
    \begin{equation*}
        |\widetilde{\tau}_j - \tau^\star_j| \leq K_0\frac{d_{j}^\star\log(p\vee h) + r_{j}^\star(p\vee \log h)}{v_{j,S}^2 + v_{j,L}^2},
    \end{equation*}
    then combining all $\widehat{m}$ refined change points leads to the final result.
\end{proof}}

\begin{proof}[Proof of Corollary \ref{cor:3}]
    This proof is similar to the proof of Corollary 3 in \cite{negahban2012unified}. The main idea is to find an upper bound for the pseudo-sparsity level and an upper bound for the $\ell_1$ norm of the true model parameter for the complementary sparse support set $\mathcal{J}(\eta_j)$, which have been already derived in the proof of Lemma~\ref{lemma:7}. 
    
    The RSC condition can be verified as well for each estimated segment by using the same procedure as in the proof of Lemma \ref{lemma:2}. Applying Theorem 1 in \cite{negahban2012unified} to the specific segment leads to the result.
    
    Specifically, according to Theorem 1 in \cite{negahban2012unified}, with suitable selected tuning parameters, the error bound for the estimated model parameters is given by:
    \begin{equation*}
        \|\widehat{A}_j^w - A^\star\|_F^2 \leq c_1\lambda_j^{w^2}|\mathcal{J}(\eta_j)| + c_2\lambda_j^w\left(c_3\frac{\log p}{N_j}\|A^\star\|_{1,\mathcal{J}^c(\eta_j)}^2 + 4\|A^\star\|_{1,\mathcal{J}(\eta_j)^c}\right);
    \end{equation*}
    therefore, by substituting the results of (20), we obtain
    \begin{equation*}
        \|\widehat{A}_j^w - A^\star\|_F^2 \leq c_1\lambda_j^{w^{2-q}}R_q + c_2\left(\lambda_j^{w^{2-q}}R_q\right)^2\frac{\log p}{\lambda_j^wN_j} \leq C_0R_q\left(\frac{\log p}{N_j}\right)^{1-\frac{q}{2}},
    \end{equation*}
    where $c_1$, $c_2$, $c_3$, and $C_0$ are universal positive constants.
\end{proof}

\begin{proof}[Proof of Proposition \ref{prop:1}]
    The result can be directly established by using the definition of the Hausdorff distance and the rolling-window mechanism provided in Algorithm~\ref{algo:1}. Based on Assumption H4, the number of candidate change points $\widetilde{m}$ obtained by the rolling-window strategy satisfies $\widetilde{m} > T/\Delta_T > m_0$. Therefore, we get that $\mathbb{P}(\widetilde{m} \geq m_0) = 1$. 
    
    Based on the result of Theorem 1, for any true change point $\tau^\star_j$, once the window includes $\tau^\star_j$, there exists an estimated change point $\widehat{\tau}_i$ satisfying with high probability:
    {
    \begin{equation*}
        |\widehat{\tau}_i-\tau^\star_j| \leq K\frac{d_{\max}^\star\log(p\vee T) + r_{\max}^\star(p \vee \log T)}{v_{j,S}^2+v_{j,L}^2}
    \end{equation*}}
    for some large enough positive constant $K$. Combining all $m_0$ change points, we obtain the final result.
\end{proof}

\begin{proof}[Proof of Proposition \ref{prop:2}]
    Suppose that $A = L+S$ is one of the transition matrices in model (1). Further, suppose  $A$ is in the given $\ell_q$-ball and the support set of the sparse component $S$ is denoted by $\mathcal{I}$, and $|\mathcal{I}| = d^\star$. We can then get:
    \begin{equation}\label{eq:54}
        \begin{aligned}
            \sum_{i=1}^p\sum_{j=1}^p|A_{ij}|^q
            &= \sum_{(i,j)\in \mathcal{I}}|L_{ij}+S_{ij}|^q + \sum_{(i,j)\in \mathcal{I}^c}|L_{ij}+S_{ij}|^q \\
            &= \sum_{(i,j) \in \mathcal{I}}|L_{ij}+S_{ij}|^q + \sum_{(i,j)\in \mathcal{I}^c}|L_{ij}|^q \overset{\text{def}}{=} J_1 + J_2.
        \end{aligned}
    \end{equation}
    
    First, a Singular Value Decomposition of matrix $L$ yields: $L = UDV^\prime$, where $U = [u_1, \dots, u_p] \in \mathbb{R}^{p\times p}$, $V = [v_1, \dots, v_p] \in \mathbb{R}^{p \times p}$ are orthonormal matrices (i.e., for any $u_i$ or $v_j$, $\|u_i\| = \|v_j\| = 1$), and $D = \text{diag}(\sigma_1, \dots, \sigma_r, 0, \dots, 0)$, where $\sigma_k$ is the $k$-th largest singular value of $L$, and $r$ is the rank of $L$. We can then obtain:
    \begin{equation}\label{eq:55}
        \begin{aligned}
            J_2 &\leq \sum_{(i,j)\in \mathcal{I}^c}\left|\sum_{k=1}^r\sigma_k u_{ik}v_{jk} \right|^q \leq \sum_{(i,j)\in \mathcal{I}^c}\left|\left(\sum_{k=1}^r\sigma_k u_{ik}^2\right)^\frac{1}{2} \left(\sum_{k=1}^r \sigma_k v_{jk}^2\right)^\frac{1}{2} \right|^q \\
            &\leq \sum_{(i,j)\in \mathcal{I}^c}\left| \sigma_1 \right|^q = |\sigma_1|^q(p^2 - d^\star).
        \end{aligned}
    \end{equation}
    
    Next, due to the fact that $|L_{ij} + S_{ij}|^q \leq |L_{ij}|^q + |S_{ij}|^q$, we can obtain the following result
    \begin{equation}\label{eq:56}
        J_1 \leq \sum_{(i,j)\in \mathcal{I}}|L_{ij}|^q + \sum_{(i,j)\in \mathcal{I}}|S_{ij}|^q \leq d^\star\left\{ \left(\frac{\alpha_L}{p}\right)^q + M_S^q\right\}.
    \end{equation}
    
    Combining the results \eqref{eq:55} and \eqref{eq:56} leads to the final result:
    \begin{equation}\label{eq:57}
        \sum_{i=1}^p\sum_{j=1}^p |A_{ij}|^q \leq d^\star\left(\left(\frac{\alpha_L}{p}\right)^q + M_S^q\right) + (p^2 - d^\star)|\sigma_1|^q.
    \end{equation}
\end{proof}

\begin{proof}[Proof of Proposition \ref{prop:3}]
    Let the transition matrices $A_1^\star$ and $A_2^\star \in \mathbb{B}_q(R_q)$, with a fixed $q \in (0, 1]$, and $R_q$ satisfying the condition proposed in Proposition 1. Also, assume that the associated true change point satisfies $\tau^\star \in [1,T)$. To establish the result, we follow a similar strategy as in the proof of Theorem 1. First, we establish:
    \begin{equation}\label{eq:58}
        \ell(\tau^\star) \leq \sum_{t=1}^{T-1}\|\epsilon_t\|_2^2 + c_0 T^{\frac{q}{2}}R_q\left(\log p\right)^{1-\frac{q}{2}}.
    \end{equation}
    Split the objective function $\ell(t)$ as follows:
    \begin{equation*}
        \ell(\tau^\star) = \sum_{t=1}^{\tau^\star-1}\|X_t - \widehat{A}_{1,\tau^\star}X_{t-1}\|_2^2 + \sum_{t=\tau^\star}^{T-1}\|X_t - \widehat{A}_{2,\tau^\star}X_{t-1}\|_2^2 \equiv I_1 + I_2.
    \end{equation*}
    Then, based on the definition $\ell_q$ norm, we are able to obtain that:
    \begin{align}\label{eq:59}
        I_1
        &= \sum_{t=1}^{\tau^\star-1}\|X_t - \widehat{A}_{1,\tau^\star}X_{t-1}\|_2^2  \nonumber \\
        &\leq \sum_{t=1}^{\tau^\star-1}\|\epsilon_t\|_2^2 + c_1|\tau^\star-1|\|\widehat{A}_{1,\tau^\star} - A_1^\star\|_2^2 + c_1^\prime\sqrt{|\tau^\star-1|(\log p+\log(\tau^\star-1))}\|\widehat{A}_{1,\tau^\star} - A_1^\star\|_1 \nonumber \\
        &\leq \sum_{t=1}^{\tau^\star-1}\|\epsilon_t\|_2^2 + c_1|\tau^\star-1|\|\widehat{A}_{1,\tau^\star} - A_1^\star\|_2\left( \|\widehat{A}_{1,\tau^\star} - A_1^\star\|_2 + \frac{c_1^\prime}{c_1}\sqrt{R_q}\left(\frac{\log p+\log(\tau^\star-1)}{\tau^\star-1}\right)^{\frac{1}{2}(1-\frac{q}{2})} \right) \nonumber \\
        &+ c_1^\prime R_q\left(\frac{\log p+\log(\tau^\star-1)}{\tau^\star-1}\right)^{1-\frac{q}{2}} \nonumber \\
        &\leq \sum_{t=1}^{\tau^\star-1}\|\epsilon_t\|_2^2 + c_1|\tau^\star - 1|\|\widehat{A}_{1,\tau^\star} - A_1^\star\|_2^2 + c_1^\prime R_q\left(\frac{\log p + \log(\tau^\star-1)}{\tau^\star-1}\right)^{1-\frac{q}{2}} \nonumber \\
        &\leq \sum_{t=1}^{\tau^\star-1}\|\epsilon_t\|_2^2 + c_1 T^{\frac{q}{2}}R_q\left(\log p+\log T\right)^{1-\frac{q}{2}}.
    \end{align}
   Analogously, we can get for $I_2$:
    \begin{equation}\label{eq:60}
        I_2 \leq \sum_{t=\tau^\star}^{T-1}\|\epsilon_t\|_2^2 + c_2 T^{\frac{q}{2}}R_q\left(\log p + \log T\right)^{1-\frac{q}{2}}.
    \end{equation}
    Combining \eqref{eq:59} and \eqref{eq:60} leads to the result in \eqref{eq:57}. Next, we prove that for any fixed time point $\tau \in \mathcal{T}$, there exists some large enough constants $c_1, c_2>0$, together with jump size $v_A \overset{\text{def}}{=} \|A_2^\star - A_1^\star\|_2$ such that the lower bound for $\ell(\tau)$ is given by:
    \begin{equation}\label{eq:61}
        \ell(\tau) \geq \sum_{t=1}^{T-1}\|\epsilon_t\|_2^2 - c_1 T^{\frac{q}{2}}R_q(\log p + \log T)^{1-\frac{q}{2}} + c_2v_A^2|\tau-\tau^\star|. 
    \end{equation}
    Consider the interval $[1,\tau)$ and $[\tau,T)$ separately. Notice that, in this situation, we might have a misspecified model in that interval. Specifically, let us assume $\tau > \tau^\star$; then, the interval with a misspecified model corresponds to $[\tau^\star, \tau)$. We then have:
    \begin{equation*}
        \ell(\tau) = \sum_{t=1}^{\tau-1}\|X_t - \widehat{A}_{1,\tau}X_{t-1}\|_2^2 + \sum_{t=\tau}^{T-1}\|X_t - \widehat{A}_{2,\tau}X_{t-1}\|_2^2 \equiv I_1 + I_2,
    \end{equation*}
   and for $I_1$:
   {
   \begingroup
    \allowdisplaybreaks
    \begin{align}\label{eq:62}
        I_1 &= \sum_{t=1}^{\tau^\star-1}\|X_t - \widehat{A}_{1,\tau}X_{t-1}\|_2^2 + \sum_{t=\tau^\star}^{\tau-1}\|X_t - \widehat{A}_{1,\tau}X_{t-1}\|_2^2 \nonumber \\
        &\geq \sum_{t=1}^{\tau^\star-1}\|\epsilon_t\|_2^2 + c|\tau^\star-1|\|\widehat{A}_{1,\tau} - A_1^\star\|_2^2 - c^\prime \sqrt{|\tau^\star-1|(\log p+\log(\tau-1))}\|\widehat{A}_{1,\tau} - A_1^\star\|_1 \nonumber \\
        &+ \sum_{t=\tau^\star}^{\tau-1}\|\epsilon_t\|_2^2 + \widetilde{c}|\tau-\tau^\star|\|\widehat{A}_{1,\tau} - A_2^\star\|_2^2 - \widetilde{c}^\prime \sqrt{|\tau-\tau^\star|(\log p+\log(\tau-1))}\|\widehat{A}_{1,\tau} - A_2^\star\|_1 \nonumber \\
        &- \widetilde{c}^{\prime\prime}|\tau-\tau^\star|\frac{M_SR_q}{\tau-1}\left(\frac{\log p + \log(\tau-1)}{\tau-1}\right)^{-\frac{q}{2}}\|\widehat{A}_1 - A_2^\star\|_1 \nonumber \\
        &\overset{\text{(i)}}{\geq} \sum_{t=1}^{\tau-1}\|\epsilon_t\|_2^2 + c|\tau^\star-1|\|\widehat{A}_{1,\tau} - A_1^\star\|_2\left( \|\widehat{A}_{1,\tau} - A_1^\star\|_2 - \frac{c^\prime}{c}\sqrt{R_q}\left(\frac{\log p + \log(\tau-1)}{\tau-1}\right)^{\frac{1}{2}(1-\frac{q}{2})} \right) \nonumber \\
        &+ \widetilde{c}|\tau-\tau^\star|\|\widehat{A}_{1,\tau} - A_2^\star\|_2\left( \|\widehat{A}_{1,\tau} - A_2^\star\|_2 - \frac{\widetilde{c}^\prime}{\widetilde{c}}\sqrt{R_q}\left(\frac{\log p+\log(\tau-1)}{\tau-1}\right)^{\frac{1}{2}(1-\frac{q}{2})}\right) \nonumber \\
        &- 4c R_q\left(\frac{\log p+\log(\tau-1)}{\tau-1}\right)^{1-\frac{q}{2}} - 4\widetilde{c}R_q\left(\frac{\log p+\log(\tau-1)}{\tau-1}\right)^{1-\frac{q}{2}} \nonumber \\
        &\overset{\text{(ii)}}{\geq} \sum_{t=1}^{\tau-1}\|\epsilon_t\|_2^2 - c_1 T^{\frac{q}{2}}R_q(\log p+\log T)^{1-\frac{q}{2}} + c_2v_A^2|\tau-\tau^\star|.
    \end{align}
    \endgroup}
    (i) holds due to Assumption W2 on the search domain $\mathcal{T}^w$; (ii) holds due to assuming that $\|\widehat{A}_1 - A_2^\star\|_2 \geq v_A/4 > 0$.
    
    Analogously, we can derive a lower bound for $I_2$:
    \begin{equation}\label{eq:63}
        I_2 \geq \sum_{t=\tau}^{T-1}\|\epsilon_t\|_2^2 - c_1 T^{\frac{q}{2}}R_q(\log p + \log T)^{1-\frac{q}{2}}.
    \end{equation}
    Hence, we proved the conclusion in \eqref{eq:61}. Next, by using \eqref{eq:62} and \eqref{eq:63}, we obtain that with high probability the following holds:
    {
    \begin{equation}\label{eq:64}
        \begin{aligned}
            &\sum_{t=1}^{T-1}\|\epsilon_t\|_2^2 - c_1 h^{\frac{q}{2}}R_q(\log p+\log T)^{1-\frac{q}{2}} + c_2v_A^2|\widehat{\tau}-\tau^\star| \\
            &\leq \ell(\widehat{\tau}) \leq \ell(t_j^\star) \leq \sum_{t=1}^{T-1}\|\epsilon_t\|_2^2 + c_0 T^{\frac{q}{2}}R_q\left(\log p+\log T\right)^{1-\frac{q}{2}}.
        \end{aligned}
    \end{equation}
    Thus, the error bound for $|\widehat{\tau} - \tau^\star|$ is given by
    \begin{equation}\label{eq:65}
        |\widehat{\tau} - \tau^\star| \leq \frac{c_0+c_1}{c_2} \frac{T^{\frac{q}{2}}R_q\left(\log(p\vee T)\right)^{1-\frac{q}{2}}}{v_A^2},
    \end{equation}
    for some constants $c_0, c_1, c_2>0$ and $v_A = \|A_2^\star - A_1^\star\|_2$. }
\end{proof}

\begin{proof}[Proof of Proposition \ref{prop:4}]
    The proof is analogous to that of Proposition 1. Based on the rolling-window mechanism and the result of Proposition 2, we can verify the result. In this case, we just need to replace the sample size $T$ by the window-size $h$. 
\end{proof}

\begin{proof}[Proof of Proposition \ref{prop:5}]
    This proof follows in a similar manner to that of Theorem \ref{thm:3}. By following the arguments in Lemma \ref{lemma:8}, we firstly verify the upper bound of $\mathcal{L}_T^w(\widehat{\tau}_1, \cdots, \widehat{\tau}_{m_0}; \bm{\lambda}^w)$ with respect to the set of estimated change points $(\widehat{\tau}_1,\cdots, \widehat{\tau}_{m_0})$. The latter satisfy $\max_{1\leq i\leq m_0}|\widehat{\tau}_i - \tau^\star_i| \leq T\xi_T^w$. Similar to the proof of Theorem 2, we obtain that
    \begin{equation*}
        \mathcal{L}_T(\widehat{\tau}_1,\dots,\widehat{\tau}_{m_0}; \bm{\lambda}^w) \leq \sum_{t=1}^T\|\epsilon_t\|_2^2 + K^\prime m_0T\xi_T^wR_q^2\left(\frac{\log(p\vee T)}{T}\right)^{-q},
    \end{equation*}
    where $K^\prime > 0$ is a large enough constant and $\Delta_T = \min_{1\leq i \leq m_0-1}|\tau_i^\star - \tau_{i+1}^\star|$.
    
    Next, we establish: (a) $\mathbb{P}(\widehat{m} < m_0) \to 0$; (b) $\mathbb{P}(\widehat{m} > m_0) \to 0$, respectively. For (a), we have that: denote $\widetilde{v}_A = \min_{1\leq j\leq m_0} v_{j,A}$, where $v_{j,A} \overset{\text{def}}{=} \|A_{j+1}^\star - A_j^\star\|_2$, then
    \begin{equation*}
        \begin{aligned}
            &\text{IC}^w(\widehat{\tau}_1,\dots,\widehat{\tau}_{\widehat{m}}; \bm{\lambda}^w, \omega_T^w) = \mathcal{L}_T^w(\widehat{\tau}_1, \dots, \widehat{\tau}_{\widehat{m}}; \bm{\lambda}^w) + \widehat{m}\omega_T^w \\
            > &\sum_{t=1}^T\|\epsilon_t\|_2^2 + c_1\widetilde{v}_A^{2}\Delta_T - c_2\widehat{m}T\xi_T^wR_q^2\left(\frac{\log(p\vee T)}{T}\right)^{-q} + \widehat{m}\omega_T^w \\
            \geq &\mathcal{L}_T(\widehat{\tau}_1,\dots,\widehat{\tau}_{m_0}; \bm{\lambda}^w) + m_0\omega_T^w + c_1\widetilde{v}_A^2\Delta_T - c_2m_0T\xi_T^wR_q^2\left(\frac{\log(p\vee T)}{T}\right)^{-q} - (m_0 - \widehat{m})\omega_T^w \\
            \geq &\mathcal{L}_T(\widehat{\tau}_1,\dots,\widehat{\tau}_{m_0}; \bm{\lambda}^w) + m_0\omega_T^w,
        \end{aligned}
    \end{equation*}
    which implies that the set of estimated change points $(\widehat{\tau}_1,\cdots,\widehat{\tau}_{\widehat{m}})$ are not the optimal solution for minimizing $\text{IC}^w$. Hence, we conclude that $\mathbb{P}(\widehat{m} < m_0) \to 0$. To prove (b), suppose the set of estimated change points $(\widehat{\tau}_1,\cdots,\widehat{\tau}_{\widehat{m}})$ satisfy $\widehat{m} > m_0$; hence, we similarly obtain the following result:
    \begin{equation*}
        \mathcal{L}_T^w(\widehat{\tau}_1,\dots,\widehat{\tau}_{\widehat{m}}; \bm{\lambda}^w) \geq \sum_{t=1}^T\|\epsilon_t\|_2^2 - c_2^\prime\widehat{m}T\xi_T^wR_q^2\left(\frac{\log(p\vee T)}{T}\right)^{-q}.
    \end{equation*}
    Choose a subset of $(\widehat{\tau}_1,\cdots, \widehat{\tau}_{\widehat{m}})$ with $m_0$ elements, such that $\max_{1\leq i\leq m_0}|\widehat{\tau}_{k_i} - \tau^\star_i| \leq T\xi_T^w$. We then have:
    \begin{equation*}
        \begin{aligned}
            &\sum_{t=1}^T\|\epsilon_t\|_2^2 - c_2^\prime\widehat{m}T\xi_T^wR_q^2\left(\frac{\log(p\vee T)}{T}\right)^{-q} + \widehat{m}\omega_T^w \\
            \leq &\text{IC}^w(\widehat{\tau}_1,\dots,\widehat{\tau}_{\widehat{m}}; \bm{\lambda}^w, \omega_T^w) \leq \text{IC}^w(\widehat{\tau}_{k_1},\dots,\widehat{\tau}_{k_{m_0}}; \bm{\lambda}^w, \omega_T^w) \\
            \leq &\sum_{t=1}^T\|\epsilon_t\|_2^2 + K^\prime m_0T\xi_T^wR_q^2\left(\frac{\log(p\vee T)}{T}\right)^{-q},
        \end{aligned}
    \end{equation*}
    which implies that $m_0 < \widehat{m} \leq m_0$, which is a contradiction. Therefore, we have $\mathbb{P}(\widehat{m} = m_0) \to 1$. The error bound is established by similar arguments as in Lemma~\ref{lemma:6}.
\end{proof}


Under the additional Assumptions (W5a)--(W5c), Proposition \ref{prop:6} can be verified by the following proof.
\begin{proof}[Proof of Proposition \ref{prop:6}]
    According to Proposition \ref{prop:4} and Proposition \ref{prop:1}, we separately obtain that:
    \begin{equation*}
        d_H(\widetilde{\mathcal{S}}_w, \mathcal{S}^\star) = K^\prime \frac{h^{\frac{q}{2}}R_q\left(\log (p\vee h)\right)^{1-\frac{q}{2}}}{\min_{1\leq j\leq m_0}v_{j,A}^2},
    \end{equation*}
    and
    \begin{equation*}
        d_H(\widetilde{\mathcal{S}}, \mathcal{S}^\star) = K\frac{d_{\max}^\star\log(p\vee h) + r_{\max}^\star(p\vee \log h)}{\min_{1\leq j\leq m_0}\{v_{j,S}^2+v_{j,L}^2\}},
    \end{equation*}
    for some constants $K, K^\prime > 0$. 
    {
    Since we have $v_{j,A} = \|A_{j+1}^\star - A_j^\star\|_2 = \|(S_{j+1}^\star-S_j^\star)+(L_{j+1}^\star - L_j^\star)\|_2$ for $j=1,2,\dots,m_0$, hence, we have $2v_{j,A}^2 \geq (v_{j,S}^2 + v_{j,L}^2)$. Then, suppose $p \succsim h$ we establish the left-hand side:
    \begin{align}
        \label{eq:66}
        \frac{d_H(\widetilde{\mathcal{S}}_w, \mathcal{S}^\star)}{d_H(\widetilde{\mathcal{S}}, \mathcal{S}^\star)} 
        &\geq \frac{K^\prime}{K}\frac{\min_j\{v_{j,S}^2+v_{j,L}^2\}}{\min_j v_{j,A}^2}\frac{h^{\frac{q}{2}}R_q(\log(p\vee h))^{1-\frac{q}{2}}}{d_{\max}^\star\log(p\vee h) + r_{\max}^\star(p\vee \log h)} \nonumber \\
        &\geq \frac{K^\prime}{2K} \frac{c_0^{\frac{q}{2}}(\log T)^{\frac{q}{2}}R_q(\log p)^{1-\frac{q}{2}}}{\left(d_{\max}^\star\log p + r_{\max}^\star p\right)^{1-\frac{q}{2}}} \nonumber \\
        &\geq \frac{K^\prime}{2K} \frac{c_0^\prime (\log T)^{\frac{q}{2}}\left\{d_{\max}^\star\left(\frac{\alpha_L^q}{p^q} + M_S^q\right) + (p^2 - d_{\max}^\star)|\sigma_1|^q \right\}(\log p)^{1-\frac{q}{2}} }{\left(d_{\max}^\star\log p + r_{\max}^\star p\right)^{1-\frac{q}{2}}} \nonumber \\
        &\geq \frac{c_0^\prime K}{2K^\prime} \frac{ d_{\max}^\star\left(\frac{\alpha_L^q}{p^q} + M_S^q\right) + (p^2 - d_{\max}^\star)|\sigma_1|^q }{\left( d_{\max}^\star + r_{\max}^\star\frac{p}{\log p} \right)^{1-q}} \\
        &\geq \frac{c_0^\prime K}{2K^\prime} \frac{d_{\max}^\star\frac{\alpha_L^q}{p^q} + (p^2 - d_{\max}^\star)(\frac{\alpha_L}{p})^q}{\left(d_{\max}^\star + r_{\max}^\star\frac{p}{\log p} \right)^{1-q}} \geq \frac{c_0^{\prime\prime} K}{2K^\prime} \frac{p^{2-q}}{\left(d_{\max}^\star + r_{\max}^\star\frac{p}{\log p} \right)^{1-q}} \geq 1 \nonumber
    \end{align}}
    
    {
    Next, we determine an upper bound for the ratio of estimation errors. 
    \begin{equation}
        \label{eq:67}
        \begin{aligned}
            \frac{d_H(\widetilde{\mathcal{S}}_w, \mathcal{S}^\star)}{d_H(\widetilde{\mathcal{S}}, \mathcal{S}^\star)} 
            &= \frac{K}{K^\prime}\frac{\min_j\{v_{j,S}^2+v_{j,L}^2\}}{\min_j v_{j,A}^2}\frac{h^{\frac{q}{2}}R_q(\log p)^{1-\frac{q}{2}}}{d_{\max}^\star\log p + r_{\max}^\star p} \\
            &= \frac{K}{K^\prime}\frac{\min_j\{v_{j,S}^2+v_{j,L}^2\}}{\min_j v_{j,A}^2}\frac{c_0^{\frac{q}{2}}(\log T)^{\frac{q}{2}}R_q(\log p)^{1-\frac{q}{2}} }{\left( d_{\max}^\star\log p + r_{\max}^\star p \right)^{1-\frac{q}{2}}} \\
            &\overset{\text{(i)}}{\leq} \frac{K}{K^\prime} \frac{\min_j\{v_{j,S}^2+v_{j,L}^2\}}{\min_j v_{j,A}^2}\frac{c_0^\prime (d_{\max}^\star + r^\star_{\max})^{1-\frac{q}{2}}p^{2-q}\max\left\{\alpha_L, M_S\right\}^q }{\left(d_{\max}^\star + r_{\max}^\star \right)^{1-\frac{q}{2}}} \\
            &= \frac{\min_j\{v_{j,S}^2+v_{j,L}^2\}}{\min_j v_{j,A}^2}c_0^{\prime\prime}p^{2-q}(\log T)^{\frac{q}{2}},
        \end{aligned}
    \end{equation}}
    where $c_0^\prime$ and $c_0^{\prime\prime}$ are some large enough universal constants, and (i) holds due to Assumption (W5c). Since we only consider the case that the information ratios $0 < \gamma_j < p$, which indicates that the sparse components are dominating as well as the jump size of $A_j^\star$'s are lower bounded by a small enough constant, then we have the last equation in \eqref{eq:67} is bounded by $c_0^{\prime\prime}p^{2-q}(\log T)^{\frac{q}{2}}$. Therefore, combining the results in \eqref{eq:66} and \eqref{eq:67} leads to the desired outcome. 
\end{proof}

\newpage
\section{Additional Numerical Experiments}\label{appendix:F}
Table \ref{tab:1} summarizes all the parameter settings for all scenarios introduced in Section 5.1. 
\begin{table}[!ht]
    \spacingset{1}
    \centering
    \caption{Model parameters for different settings considered.}
    \label{tab:1}
    \resizebox{0.64\textwidth}{!}{%
    \begin{tabular}{c|c|c|c|c|c|c|c}
        \hline\hline
          & $p$ & $T$ & $\tau^\star / T$ & $(r^\star_1, r^\star_2)$ & $v_L$ & $v_S$ & $(\gamma_1, \gamma_2)$  \\
        \hline
        A.1 & 20 & 300 & $0.500$ & $(1,3)$ & 0.10 & 1.5 & $(0.25, 0.25)$ \\
        A.2 & 20 & 300 & $0.500$ & $(1,3)$ & 0.25 & 1.5 & $(0.25, 0.25)$ \\
        A.3 & 20 & 300 & $0.500$ & $(1,3)$ & 0.50 & 1.5 & $(0.25, 0.25)$ \\ 
        \hline
        B.1 & 20 & 300 & $0.500$ & $(1,2)$ & 0.25 & 2.0 & $(2.0, 2.0)$ \\
        B.2 & 20 & 300 & $0.500$ & $(1,2)$ & 0.50 & 2.0 & $(2.0, 2.0)$ \\
        B.3 & 20 & 300 & $0.500$ & $(1,2)$ & 0.75 & 2.0 & $(2.0, 2.0)$ \\
        \hline
        C.1 & 20 & 300 & $0.500$ & $(1,2)$ & 0.25 & 2.0 & $(1.75, 2.0)$ \\
        C.2 & 20 & 300 & $0.500$ & $(1,2)$ & 0.25 & 2.0 & $(1.25, 2.0)$ \\
        C.3 & 20 & 300 & $0.500$ & $(1,2)$ & 0.25 & 2.0 & $(1.0, 2.0)$ \\
        C.4 & 20 & 300 & $0.500$ & $(1,2)$ & 0.25 & 2.0 & $(0.5, 2.0)$ \\
        \hline
        D.1 & 20 & 300 & $0.500$ & $(1,2)$ & 3.0 & 0.75 & $(1.5, 1.5)$ \\
        D.2 & 20 & 300 & $0.500$ & $(1,2)$ & 3.5 & 0.75 & $(1.5, 1.5)$ \\
        D.3 & 20 & 300 & $0.500$ & $(1,2)$ & 4.0 & 0.75 & $(1.5, 1.5)$ \\
        \hline
        E.1 & 20 & 300 & $0.500$ & $(1,3)$ & 2.5 & 0.15 & $(0.25, 0.25)$ \\
        E.2 & 20 & 300 & $0.500$ & $(1,3)$ & 3.0 & 0.15 & $(0.25, 0.25)$ \\
        E.3 & 20 & 300 & $0.500$ & $(1,3)$ & 4.5 & 0.15 & $(0.25, 0.25)$ \\
        \hline
        F.1 & 20 & 300 & $0.500$ & $(1,2)$ & 2.5 & 0.25 & $(0.5, 0.45)$ \\
        F.2 & 20 & 300 & $0.500$ & $(1,2)$ & 2.5 & 0.25 & $(0.5, 0.75)$ \\
        F.3 & 20 & 300 & $0.500$ & $(1,2)$ & 2.5 & 0.25 & $(0.5, 0.95)$ \\
       \hline\hline
    \end{tabular}}
\end{table}
{Table~\ref{tab:2} presents the extra settings for scenario G. Precisely, we investigate the scenario with the high dimensional model parameters:
\begin{itemize}
    \item[(G)] {In this setting, we investigate high-dimensional scenarios with $p=80$, $T=200$ and a single change point. Other model parameters, including $\gamma_j$ and jump sizes $v_S$ and $v_L$ are similar to those in setting A. Note that in this scenario the number of effective parameters (i.e., $d_j^\star\log p + r^\star_jp$) is in the range $[100, 300]$, thus corresponding to a high dimensional setting. The specific settings are listed in Table~\ref{tab:result-G}.} 
\end{itemize}
The results of those settings are presented in Table~\ref{tab:3}. We can easily observe that the accuracy of the estimated change points, as well as the transition matrices are satisfactory under the high dimensional setting. The estimated model parameters are also highly satisfactory based on the sensitivity and specificity metrics. }
\begin{table}[]
    \spacingset{1}
    \centering
    \caption{{Model parameters for the high dimensional scenario G.}}
    \label{tab:result-G}
    \resizebox{0.64\textwidth}{!}{%
    \begin{tabular}{c|c|c|c|c|c|c|c}
        \hline\hline
          & $p$ & $T$ & $\tau^\star / T$ & $(r^\star_1, r^\star_2)$ & $v_L$ & $v_S$ & $(\gamma_1, \gamma_2)$  \\
        \hline
        G.1 & 80 & 200 & 0.500 & (1,3) & 0.20 & 0.75 & (0.25, 0.25) \\
        G.2 & 80 & 200 & 0.500 & (1,3) & 0.40 & 0.75 & (0.25, 0.25) \\
        G.3 & 80 & 200 & 0.200 & (1,3) & 0.20 & 0.75 & (0.25, 0.25) \\
        G.4 & 80 & 200 & 0.800 & (1,3) & 0.20 & 0.75 & (0.25, 0.25) \\
        G.5 & 80 & 200 & 0.500 & (3,1) & 0.20 & 0.75 & (0.25, 0.25) \\
        G.6 & 80 & 200 & 0.500 & (3,3) & 0.20 & 0.75 & (0.25, 0.25) \\
        G.7 & 80 & 200 & 0.500 & (5,3) & 0.20 & 0.75 & (0.25, 0.25) \\
        G.8 & 50 & 200 & 0.500 & (1,3) & 0.45 & 0.40 & (0.75, 0.75) \\
       \hline\hline
    \end{tabular}
    }
\end{table}
\begin{table}[!ht]
    \spacingset{1}
    \centering
    \caption{ {Performance of the full L+S model under the simulation setting G.}}
    \label{tab:3}
    \resizebox{0.9\textwidth}{!}{%
    \begin{tabular}{c|c|c|c|c|c|c|c}
        \hline\hline
          & mean & sd & $\widehat{r}_1$ & $\widehat{r}_2$ & SEN & SPC & Total RE/ Sparse RE / Low-rank RE  \\
        \hline
        G.1 & 0.528 & 0.077 & $1.000$ & $2.800$ & $(0.946, 0.959)$ & $(0.925, 0.907)$ & $(0.635, 0.655)/(0.639, 0.762)/(0.773, 0.781)$ \\
        G.2 & 0.499 & 0.003 & $1.000$ & $3.300$ & $(0.984, 0.924)$ & $(0.934, 0.952)$ & $(0.599, 0.684)/(0.601, 0.770)/(0.686,0.808)$ \\
        G.3 & 0.203 & 0.016 & $1.051$ & $2.773$ & $(0.887, 0.945)$ & $(0.985, 0.957)$ & $(0.782, 0.635)/(0.692, 0.503)/(0.897, 0.704)$ \\
        G.4 & 0.822 & 0.035 & $1.000$ & $1.885$ & $(0.967, 0.918)$ & $(0.955, 0.932)$ & $(0.603, 0.745)/(0.531, 0.691)/(0.688,0.880)$ \\
        G.5 & 0.515 & 0.056 & 2.750 & 1.050 & $(0.965, 0.955)$ & $(0.928, 0.987)$ & $(0.606, 0.775)/(0.608, 0.742)/(0.776, 0.763)$ \\
        G.6 & 0.514 & 0.056 & 3.000 & 3.250 & $(0.963, 0.977)$ & $(0.926, 0.979)$ & $(0.607, 0.853)/(0.608, 0.771)/(0.776, 0.820)$ \\
        G.7 & 0.502 & 0.006 & 5.900 & 4.025 & $(0.991, 0.987)$ & $(0.928, 0.939)$ & $(0.575, 0.850)/(0.567, 0.669)/(0.826, 0.969)$ \\
        G.8 & 0.539 & 0.031 & 0.855 & 2.335 & $(0.925, 0.902)$ & $(0.865, 0.899)$ & $(1.002, 0.975)/(1.200, 1.004)/(0.827, 0.927)$ \\
      \hline\hline
    \end{tabular}}
\end{table}

\subsection{Performance of the Surrogate Weakly Sparse Model for the Detection of a Single Change Point}
Table \ref{tab:4} summarizes the results of the surrogate weakly sparse model. Analogously to the results for the low-rank plus sparse model, under settings A and D the estimates of the change point are highly accurate. In settings B and E, the surrogate model performs worse than the full model, since the difference in the norm of the transitions matrices is rather small. Specifically, under setting B, the low-rank components contribute most of the ``signal", even though their changes before and after the change point are rather small, thus effectively not satisfying Assumption W1. A similar reasoning justifies the rather poor performance of the surrogate model under setting E. In settings C and F, we investigate the case of different information ratios, covered in the second part of assumption W1. It can be seen that performance gradually improves by enlarging the differences between the information ratios. Estimation of the transition matrices is analogous to that under the full model; when the sparse component contributes most of the ``signal" as in settings A, E and F, the relative error of is good and comparable to that of the full model. On the other hand, the relative error becomes worse than that obtained by the full model.
\begin{table}[ht]
    \spacingset{1}
    \centering
    \caption{Performance of the surrogate model under different simulation settings.}
    \label{tab:4}
    \resizebox{0.8\textwidth}{!}{%
    \begin{tabular}{c|c|c|c|c|c|c|c}
        \hline\hline
          & mean & sd & RE & & mean & sd & RE \\
        \hline
        A.1 & 0.498 & 0.002 & $(0.188, 0.201)$ & D.1 & 0.502 & 0.025 & $(0.766, 0.829)$ \\
        A.2 & 0.498 & 0.002 & $(0.190, 0.200)$ & D.2 & 0.498 & 0.012 & $(0.763, 0.743)$ \\
        A.3 & 0.498 & 0.002 & $(0.190, 0.206)$ & D.3 & 0.498 & 0.011 & $(0.762, 0.691)$ \\ 
        \hline
        B.1 & 0.538 & 0.125 & $(0.788, 0.814)$ & E.1 & 0.525 & 0.154 & $(0.199, 0.234)$ \\
        B.2 & 0.538 & 0.125 & $(0.787, 0.815)$ & E.2 & 0.510 & 0.104 & $(0.200, 0.248)$ \\
        B.3 & 0.539 & 0.124 & $(0.787, 0.814)$ & E.3 & 0.518 & 0.060 & $(0.198, 0.285)$ \\
        \hline
        C.1 & 0.550 & 0.112 & $(0.765, 0.814)$ & F.1 & 0.456 & 0.200 & $(0.431, 0.352)$ \\
        C.2 & 0.515 & 0.076 & $(0.737, 0.798)$ & F.2 & 0.470 & 0.098 & $(0.411, 0.458)$ \\
        C.3 & 0.494 & 0.041 & $(0.682, 0.783)$ & F.3 & 0.475 & 0.095 & $(0.415, 0.571)$ \\
        C.4 & 0.501 & 0.008 & $(0.370, 0.775)$ & & & & \\
       \hline\hline
    \end{tabular}}
\end{table}

\subsection{Performance of Multiple Change Points Detection}
Table \ref{tab:multi-summary} below, summarizes the parameter settings for each scenario considered. 
\begin{table}[!ht]
    \spacingset{1}
    \centering
    \caption{Model parameters under different multiple change points scenario settings.}
    \label{tab:multi-summary}
    \resizebox{\textwidth}{!}{%
    \begin{tabular}{c|c|c|c|c|c|c|c}
        \hline\hline
          & $p$ & $T$ & $\tau_j^\star / T$ & ranks & $\Delta L_j$ & $\Delta S_j$ & $\gamma_j$  \\
        \hline
        L.1 & 20 & 1200 & $(0.167, 0.333, 0.500, 0.667, 0.833)$ & $(1,1,1,1,1,1)$ & 0.10 & 1.50 & $0.25$ \\
        L.2 & 20 & 1800 & $(0.100, 0.250, 0.400, 0.600, 0.800)$ & $(3,3,3,3,3,3)$ & 0.10 & 1.50 & $0.25$ \\
        L.3 & 20 & 2400 & $(0.100, 0.300, 0.500, 0.700, 0.900)$ & $(1,2,3,3,2,1)$ & 0.10 & 1.50 & $0.25$ \\ 
        \hline
        M.1 & 100 & 1200 & $(0.3333, 0.6667)$ & $(1,1,1)$ & 0.25 & 1.50 & $0.25$ \\
        M.2 & 125 & 1800 & $(0.3333, 0.6667)$ & $(1,1,1)$ & 0.30 & 1.50 & $0.25$ \\
        \hline
        N.1 & 20 & 300 & $(0.3333, 0.6667)$ & $(1,3,2)$ & $(0.35,0.25)$ & $(2.50, 3.00)$ & $0.25$ \\
        N.2 & 20 & 300 & $(0.1667, 0.8333)$ & $(1,3,2)$ & $(0.35,0.25)$ & $(2.50, 3.00)$ & $0.25$ \\
        N.3 & 20 & 300 & $(0.3333, 0.6667)$ & $(1,3,2)$ & $(0.50,0.50)$ & $(3.00, 3.00)$ & $0.25$ \\
       \hline\hline
    \end{tabular}}
\end{table}

{
Figure \ref{fig:random-structure} depicts the random structure investigated in scenario N.
\begin{figure}[!ht]
    \centering
    \includegraphics[trim = {0 4cm 0 4cm}, clip, scale=.35]{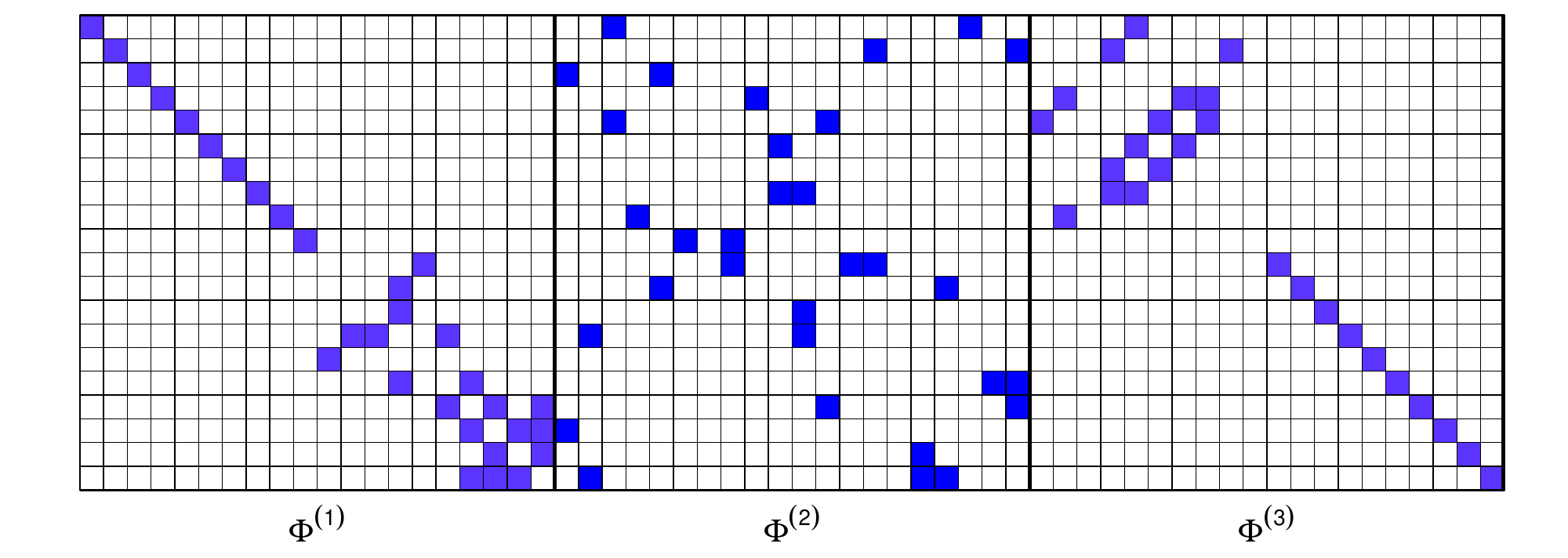}%
    \includegraphics[trim = {0 4cm 0 4cm}, clip, scale=.35]{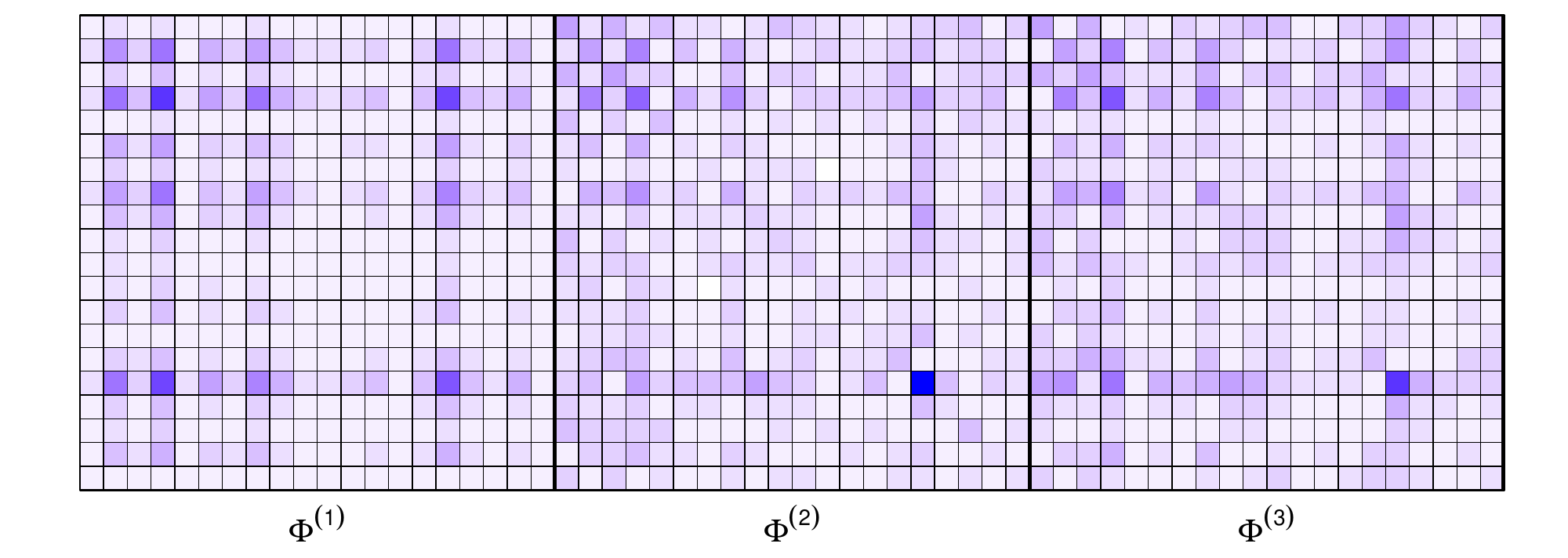}
    \caption{Left: random sparse pattern used in scenario N; Right: low rank pattern in scenario N, the ranks are 1, 3, and 2, respectively.}
    \label{fig:random-structure}
\end{figure}
}

Next, we present the performance of the multiple change points detection algorithm in Figure~\ref{fig:multi-boxplots} over 50 replications under setting N. 
\begin{figure}[!ht]
    \centering
    \includegraphics[scale=.32]{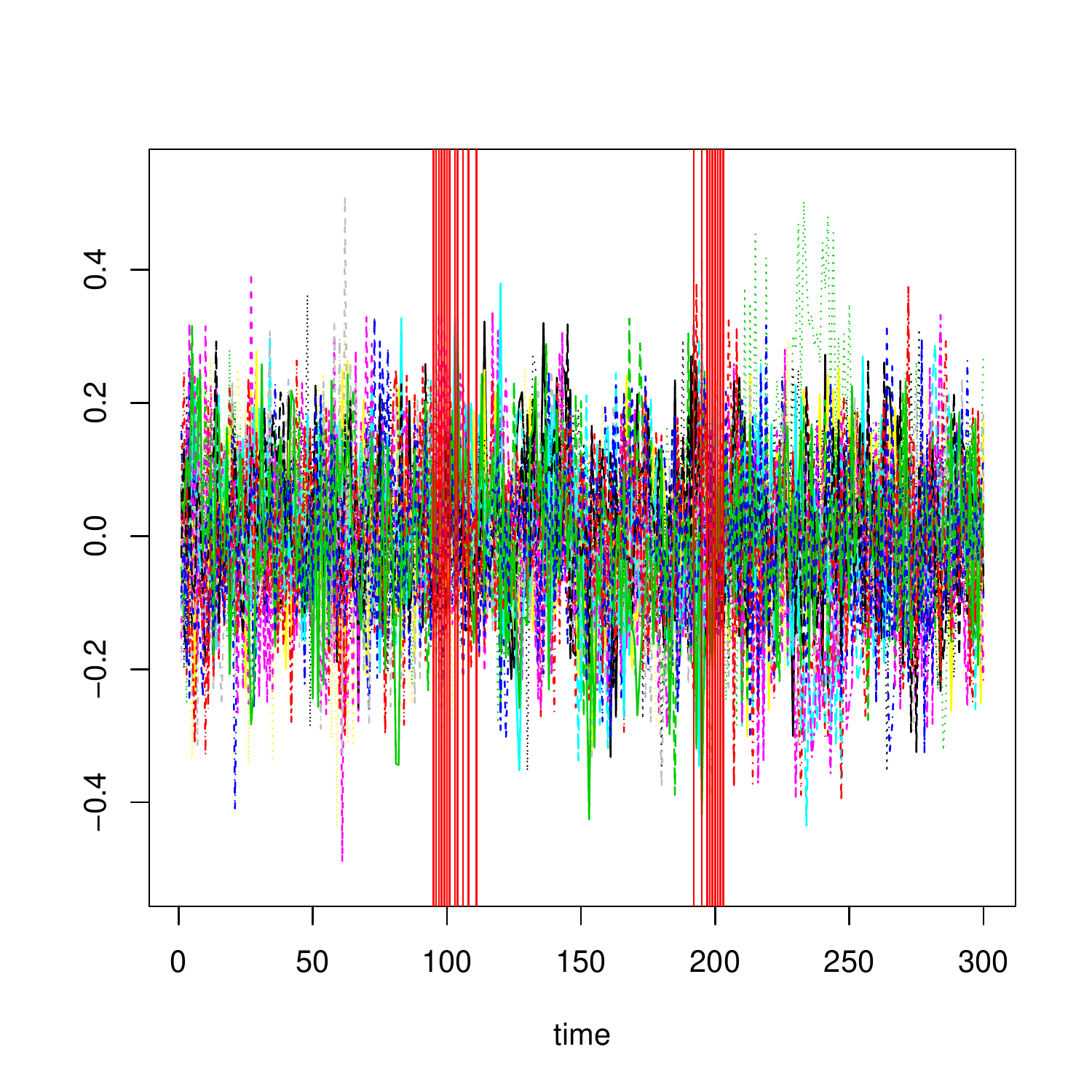}
    \includegraphics[scale=.32]{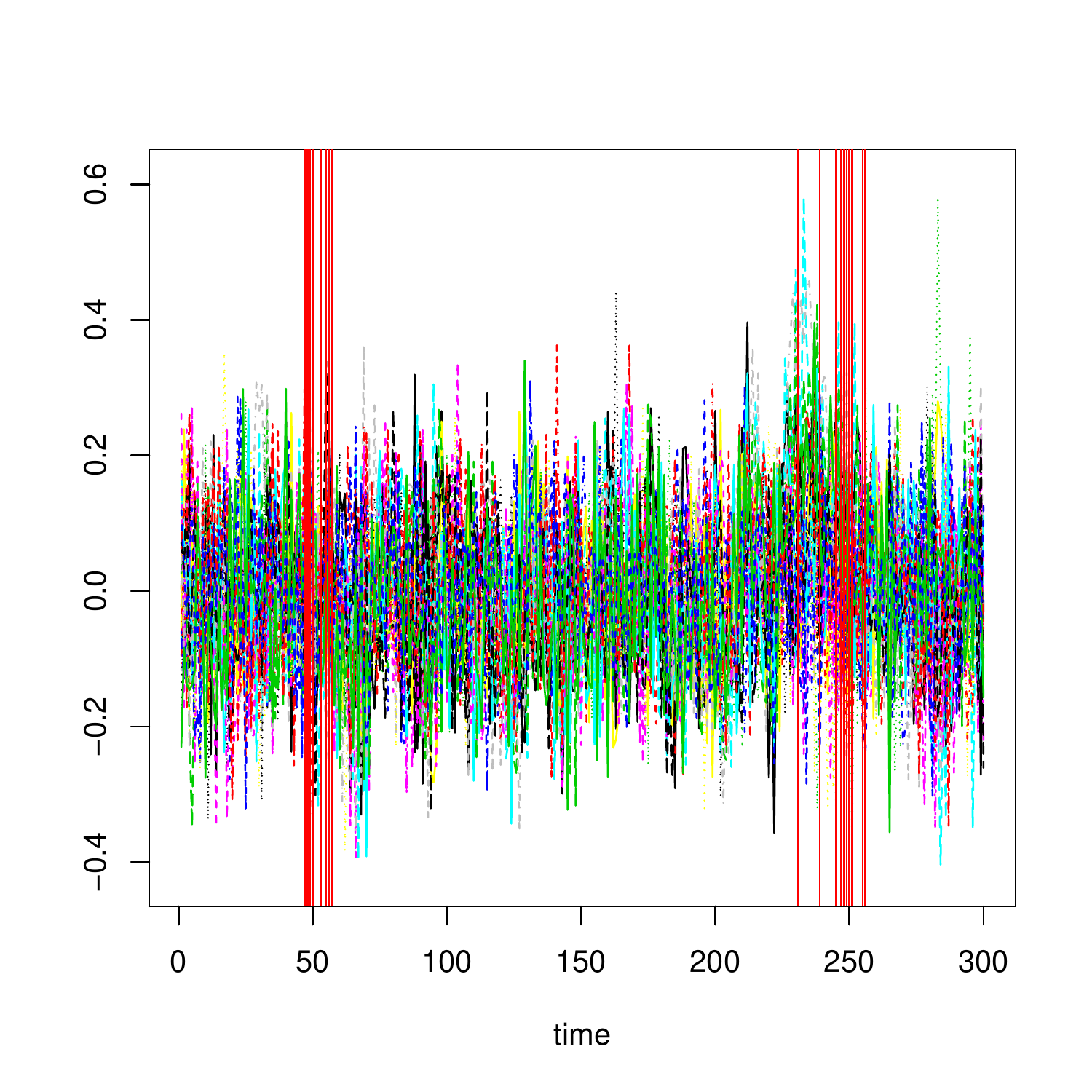}%
    \includegraphics[scale=.32]{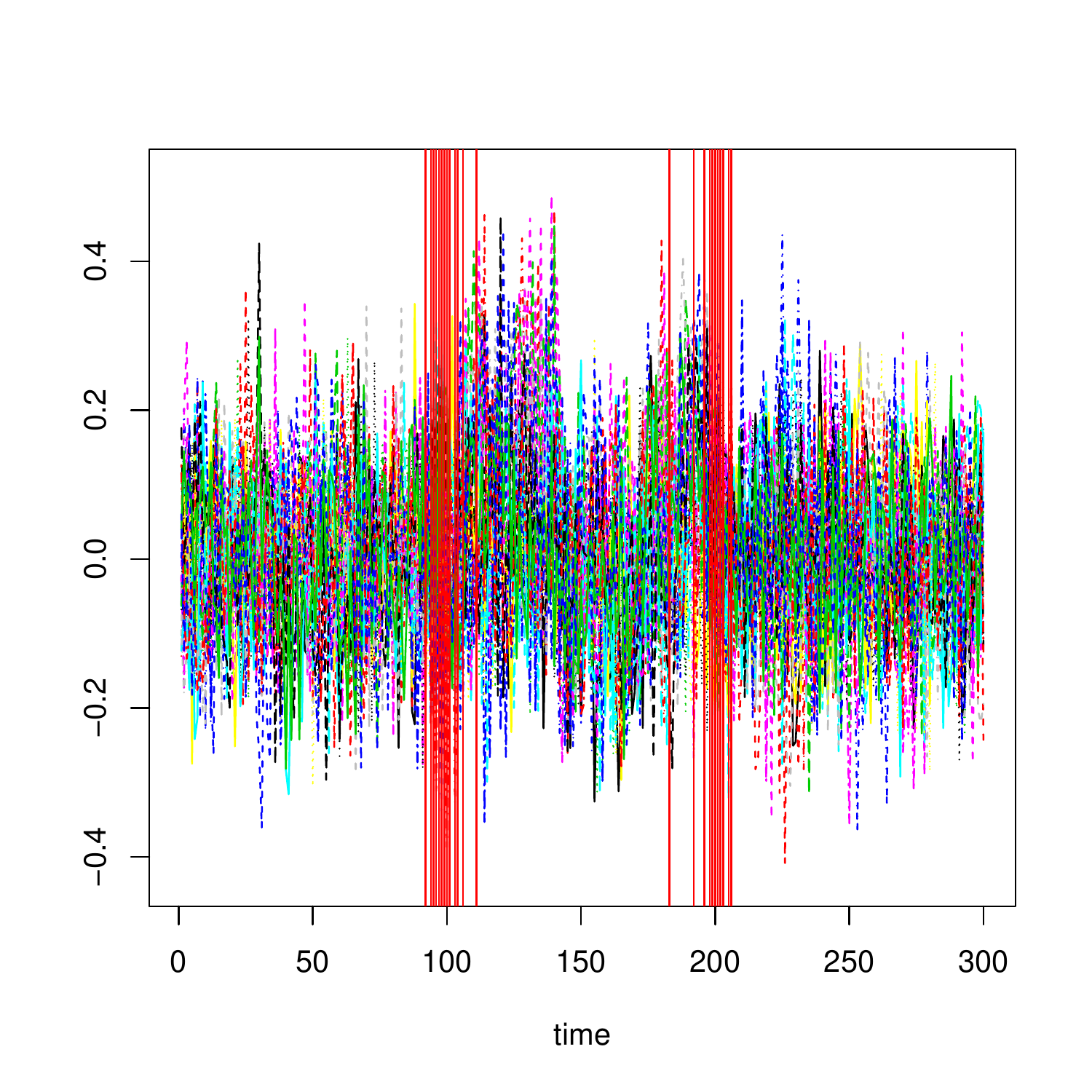}%
    \vfill
    \includegraphics[scale=.18]{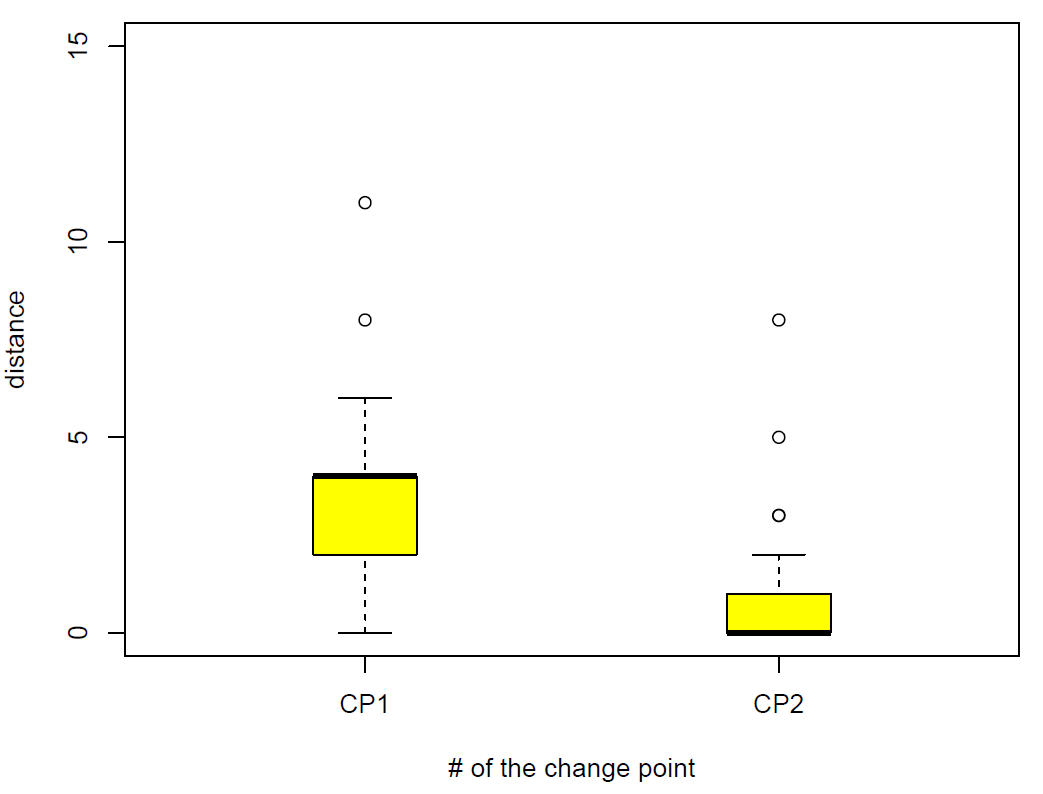}%
    \includegraphics[scale=.18]{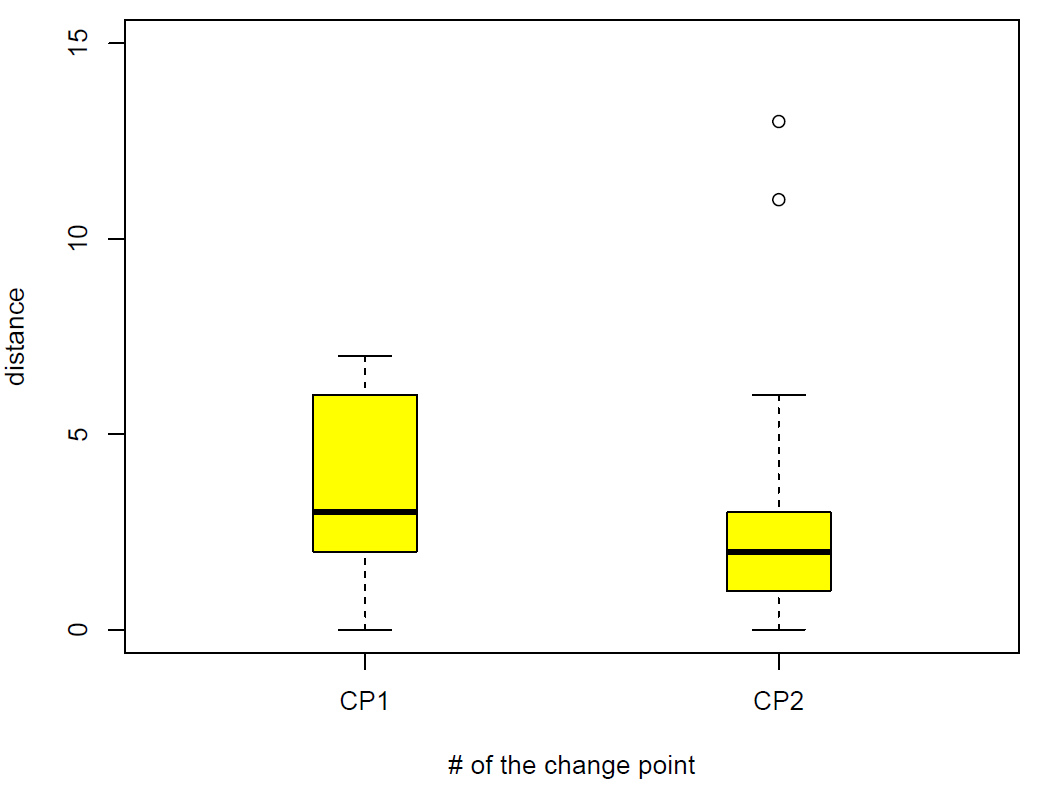}%
    \includegraphics[scale=.18]{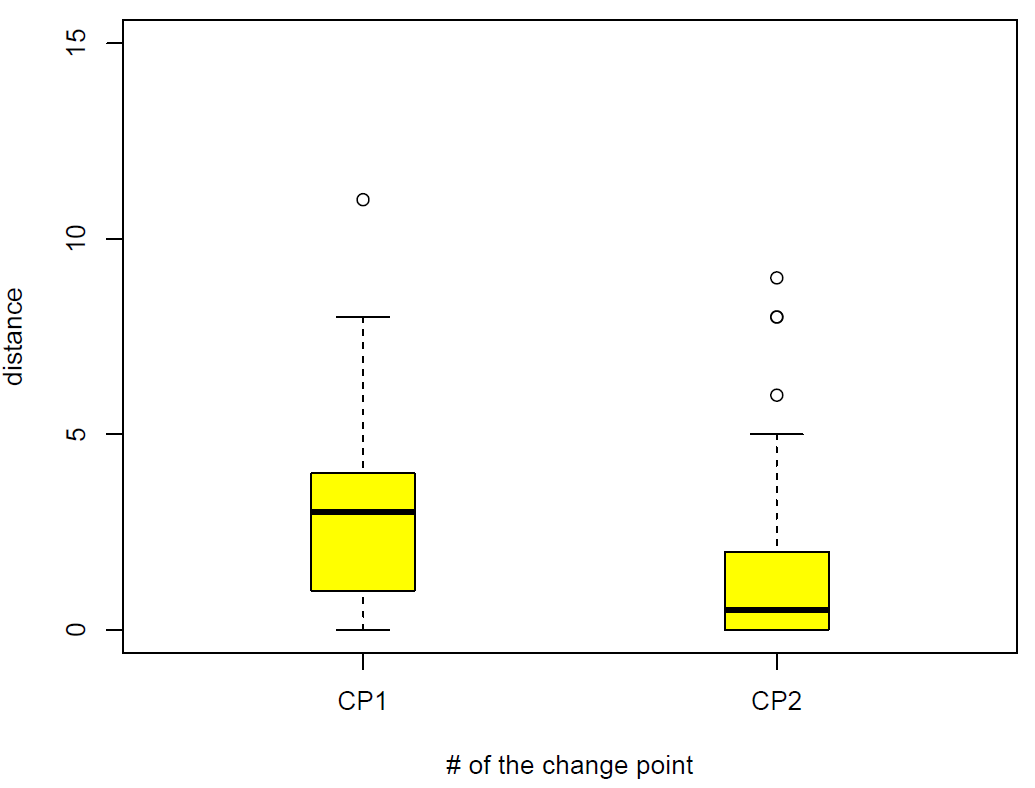}
    \caption{Final selected change points (red lines) by using two-step algorithm and boxplots for $|\widehat{t} - t^\star|$ under different scenario N settings.} 
    \label{fig:multi-boxplots}
\end{figure}

\subsection{Investigating the Impact of Signal-to-Noise Ration on Detection Power}
{

We design a series of simulation experiments to examine the minimum sample size $T$ needed to identify change points with a minimum selection rate, say $80\%$. The setting is as follows: $p=20$; 5 change points located at $\lfloor T/6 \rfloor$, $\lfloor 2T/6 \rfloor, \cdots, \lfloor 5T/6 \rfloor$, respectively. The jump sizes $v_S$ and $v_L$ for each change point are fixed. Specifically, the total jump size is chosen as 0.4, 0.8, and 1.6, respectively, and we examine the following sample sizes: $T=50, 55, 60, 150, 300$ to evaluate the detection power (selection rate). The detection power is calculated by averaging the detection rate for all 5 change points over 50 simulation replications. The following Figure \ref{fig:2} illustrates the detection power and Table \ref{tab:detect-power} presents the specific values. 
\begin{table}[!ht]
    \centering
    \spacingset{1}
    \caption{Averaged detection power for different sample sizes and signals.}
    \label{tab:detect-power}
    \begin{tabular}{c|c|c}
    \hline\hline
        jump size & sample size & detect power \\
    \hline
        \multirow{6}*{$v=0.4$} & 50 & 3\% \\
                              & 55 & 16\% \\
                              & 60 & 24\% \\
                              & 150 & 44\% \\
                              & 300 & 62\% \\
                              & 600 & 92\% \\
    \hline
        \multirow{6}*{$v=0.8$} & 50 & 5\% \\
                              & 55 & 16.8\% \\
                              & 60 & 26\% \\
                              & 150 & 66\% \\
                              & 300 & 94\% \\
                              & 600 & 100\% \\
    \hline
        \multirow{6}*{$v=1.6$} & 50 & 14\% \\
                              & 55 & 51.2\% \\
                              & 60 & 82.8\% \\
                              & 150 & 100\% \\
                              & 300 & 100\% \\
                              & 600 & 100\% \\
    \hline\hline
    \end{tabular}
\end{table}

\begin{figure}[!ht]
    \centering
    \includegraphics[scale=.3]{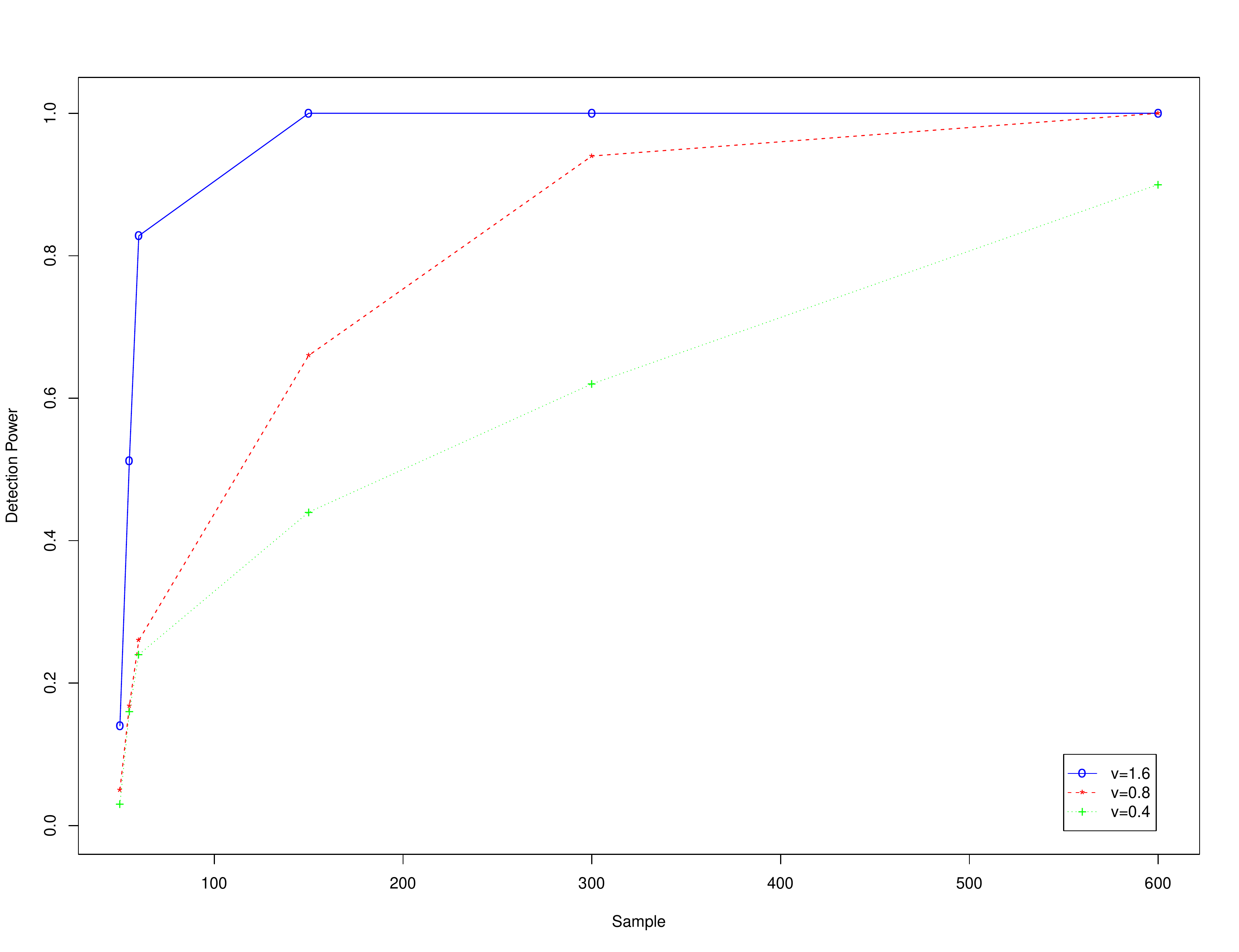}
    \caption{Averaged detection rate over change points for different sample sizes.}
    \label{fig:2}
\end{figure}
As expected, as the jump size $v$ increases together with he sample size, the selection rate consistently exceeds 90\%. These results provide guidelines for practitioners as well. For example, $v$ can easily exceed 0.8, whenever the rank changes, since the low rank component is a dense matrix.

In the EEG data application, the estimated jump size (calculated based on the estimated transitions matrices) is approximately 2.75. In the macroeconomics data application, the estimated jump size is approximately 4.20.
Hence, based on the number of break points identified, the length of the time series, the estimated jump sizes $v$, and the results presented in Table \ref{tab:detect-power}, it is reasonable to presume that the algorithm identifies correctly the underlying break points. Recall that for the EEG application, the design of the experiment corroborates the correctness of the results, whereas for the macroeconomics data, the corroborating evidence comes from important economic events and shocks that the literature recognizes as important drivers to induce break points.
}

\subsection{A Comparison of Run Times between the Low Rank plus Sparse and the Surrogate Weakly Sparse Models}
We undertake such a comparison for settings A, C and D presented in Section 5 in the main context. The results averaged over 50 replicates indicate that Algorithm 1 for the full model takes approximately 900 secs per replicate, while the surrogate model less than 200 secs. For the two-step Algorithm 2, the average run time for the full model takes approximately 3.5 hours per replicate, while that for the surrogate model approximately 20 minutes per replicate. The results are plotted in the following Figure~\ref{fig:running-time}.
\begin{figure}[!ht]
    \centering
    \resizebox{6in}{2.25in}{
    \begin{subfigure}
        \centering
        \includegraphics[scale=.45]{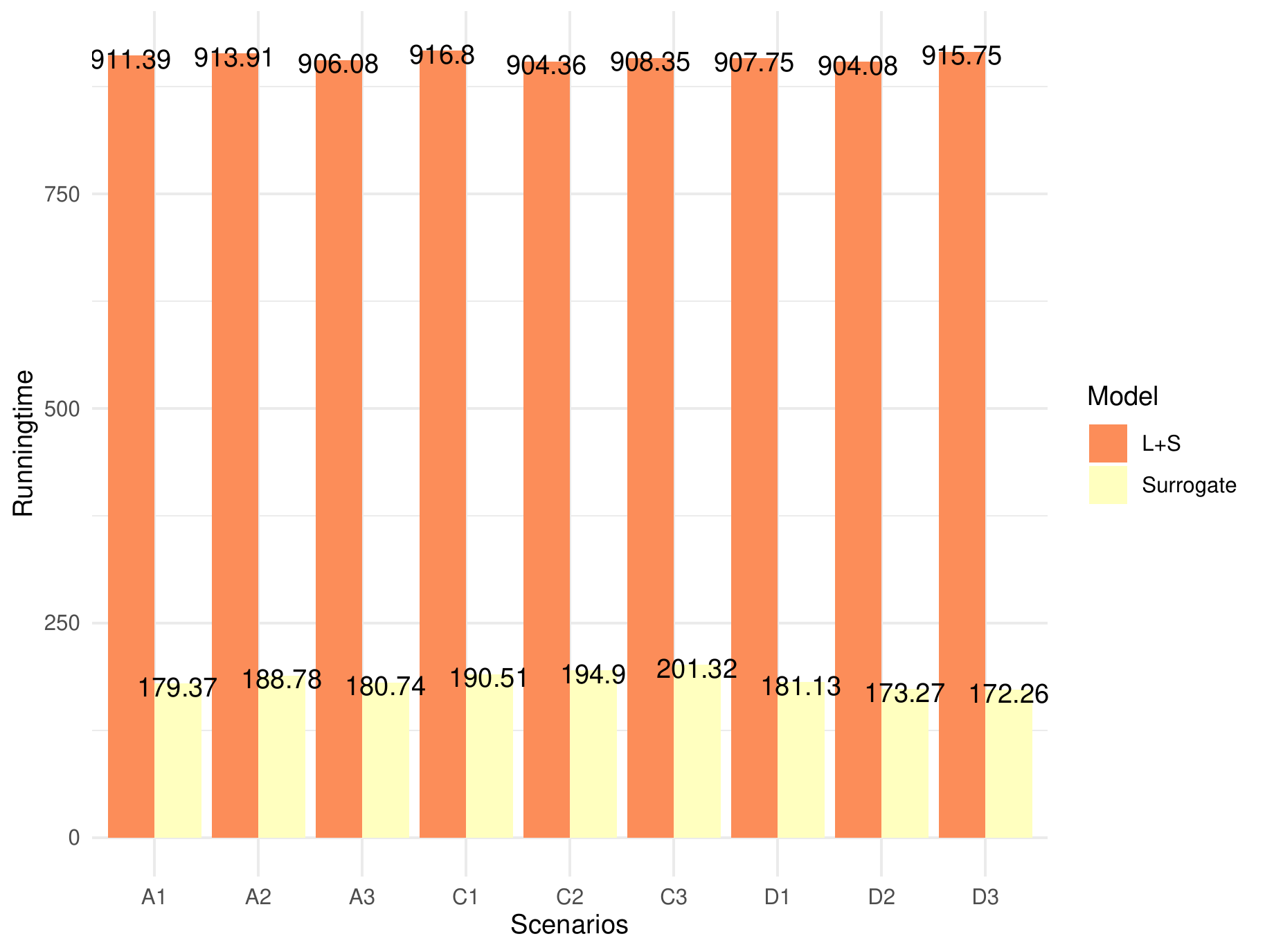}%
        \hfill
        \includegraphics[scale=.45]{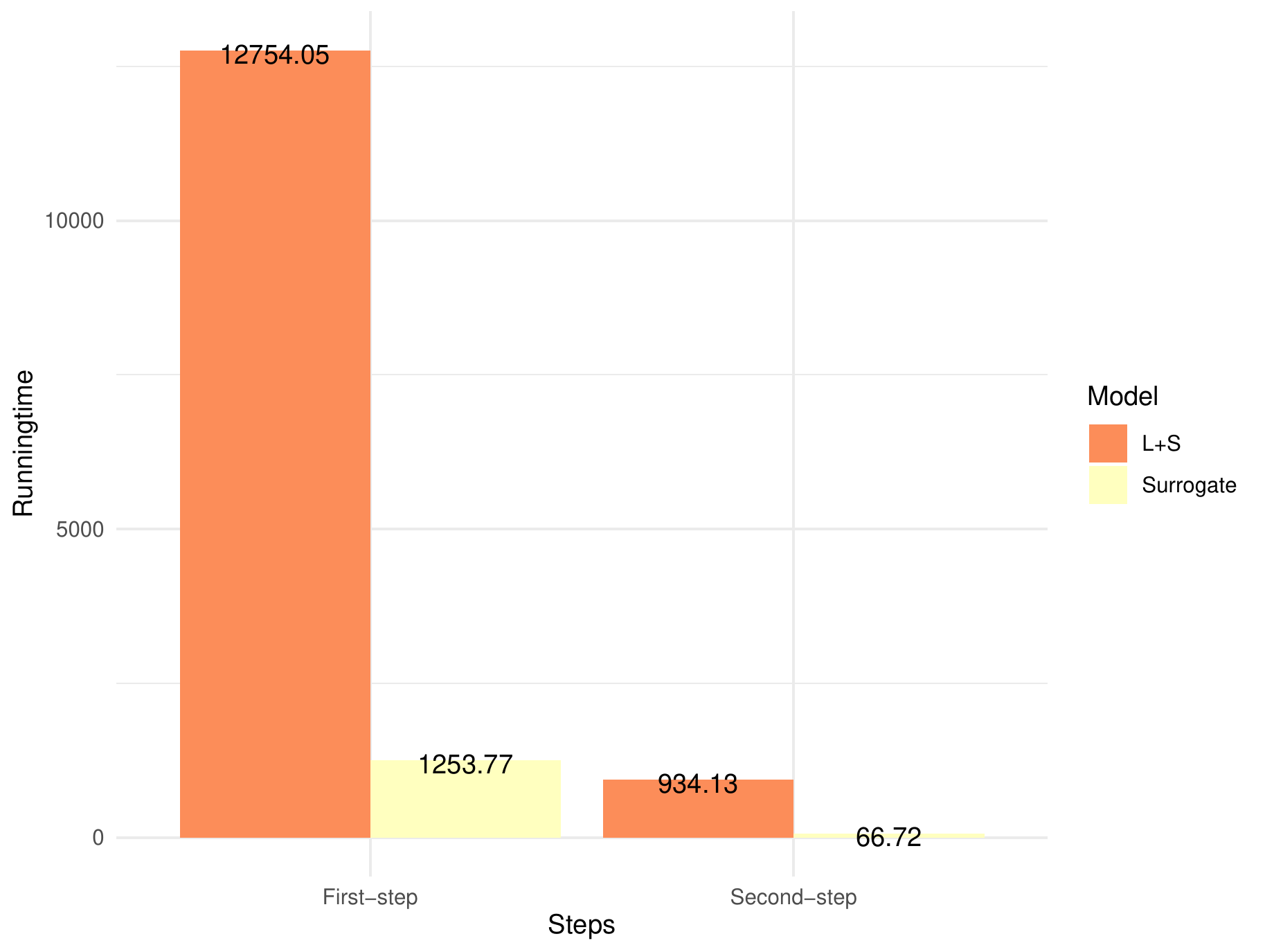}
    \end{subfigure}}
    \caption{Comparison the run times for the full low-rank plus sparse and alternative weakly sparse models; Left panel: single change point detection in settings A, C, and D; Right panel: multiple change point detection in setting L.1.
    \label{fig:running-time}}
\end{figure}
The high computational cost of the exhaustive search procedure for the full model is apparent and is due to performing multiple SVDs, while the surrogate model provides significant computational savings and hence justify its use when suitable, based the theoretical developments and guarantees presented in Section 4. 

\subsection{Comparisons between the Low Rank plus Sparse VAR Model and a Factor Model}
\label{sec:compare-factor}
\subsubsection{A comparison with the factor-based model under  scenarios L.1 and L.2}
{In \cite{barigozzi2018simultaneous}, the authors investigate detection of multiple change points in a static factor model. Note that a factor model
assumes that the data exhibit low rank structure in the 
contemporaneous dependence (correlation) structure, as opposed to their lead-lag (autocorrelation) structure as in a VAR model. Next, we  provide results for scenarios L.1 and L.2 in Table~\ref{tab:compare-factor},
respectively.}
\begin{table}[!ht]
    \spacingset{1}
    \centering
    \caption{Results for change point selection by low rank plus sparse VAR  model and a factor-based model.}
    \label{tab:compare-factor}
    \resizebox{\textwidth}{!}{%
    \begin{tabular}{c|c|c|c|c|c|c|c|c|c|c|c|c}
        \hline\hline
                          & Model & points & truth & mean & sd & selection rate & Model & points & truth & mean & sd & selection rate \\
        \hline
        \multirow{5}*{L.1} & \multirow{5}*{L+S VAR model} & 1 & 0.1667 & 0.1667 & 0.0004 & 1.00 & \multirow{5}*{Factor model} & 1 & 0.1667 & 0.1657 & 0.0061 & {0.78} \\
                         & & 2 & 0.3333 & 0.3333 & 0.0003 & 1.00 & & 2 & 0.3333 & 0.3340 & 0.0057 & 0.44 \\
                         & & 3 & 0.5000 & 0.4999 & 0.0003 & 1.00 & & 3 & 0.5000 & 0.5005 & 0.0065 & 0.56 \\
                         & & 4 & 0.6667 & 0.6665 & 0.0004 & 1.00 & & 4 & 0.6667 & 0.6706 & 0.0065 & 0.46 \\
                         & & 5 & 0.8333 & 0.8335 & 0.0004 & 1.00 & & 5 & 0.8333 & 0.8361 & 0.0073 & 0.70 \\
        \hline
        \multirow{5}*{L.2} & \multirow{5}*{L+S VAR model} & 1 & 0.1000 & 0.0999 & 0.0002 & 1.00 & \multirow{5}*{Factor model} & 1 & 0.1000 & 0.1000 & 0.0046 & {0.94} \\
                         & & 2 & 0.2500 & 0.2500 & 0.0000 & 1.00 & & 2 & 0.2500 & 0.2528 & 0.0041 & {0.84} \\
                         & & 3 & 0.4000 & 0.3999 & 0.0002 & 1.00 & & 3 & 0.4000 & 0.4011 & 0.0040 & {0.94} \\
                         & & 4 & 0.6000 & 0.6000 & 0.0000 & 1.00 & & 4 & 0.6000 & 0.6038 & 0.0028 & 0.70 \\
                         & & 5 & 0.8000 & 0.7999 & 0.0001 & 1.00 & & 5 & 0.8000 & 0.8020 & 0.0029 & 0.94 \\
        \hline\hline
    \end{tabular}}
\end{table}

{It can be seen that the accuracy of the location of the detected change points by the factor-based model is high; nevertheless, the selection rate (\# of times that it correctly identifies the right number of change points) is significantly lower than that of the VAR model. To illustrate further the latter point, the mean/median Hausdorff distance between the estimated change points set $\widetilde{\mathcal{S}}$ and the true change points set $\mathcal{S}^\star$ is tabulated in Table~\ref{tab:hausdorff-dist}. The result is not particularly surprising, since the true data generating mechanism is according to the posited low rank plus sparse VAR model.}
\begin{table}[!ht]
\spacingset{1}
\centering
\caption{Hausdorff distance $d_H(\widetilde{\mathcal{S}}, \mathcal{S}^\star)$ comparison with factor change point model.}
\label{tab:hausdorff-dist}
\begin{tabular}{c|c|ccc}
  \hline\hline
  & Model & mean($d_H(\widetilde{\mathcal{S}}, \mathcal{S}^\star)$) & std($d_H(\widetilde{\mathcal{S}}, \mathcal{S}^\star)$) & median($d_H(\widetilde{\mathcal{S}}, \mathcal{S}^\star)$) \\
  \hline
  \multirow{2}*{L.1} & Our model & \textbf{1.46} & \textbf{2.6358} & \textbf{0.00} \\
  \cline{2-5}
  & Factor model & 27.26 & 18.6808 & 19.00 \\
  \hline
  \multirow{2}*{L.2} & Our model & \textbf{1.45} & \textbf{2.3774} & \textbf{0.00} \\
  \cline{2-5}
  & Factor model & 33.34 & 23.4082 & 30.00 \\ 
  \hline\hline
\end{tabular}
\end{table}

\subsubsection{A comparison with the factor-based model under a dynamical factor model (DFM) generating mechanism}
{
For a further comparison between the detection strategy based on the factor model (\cite{barigozzi2018simultaneous}) and the two step rolling window strategy based on the posited VAR model, we employed a \textit{dynamical factor model} to generate the data. Hence, the data are generated according to:
\begin{equation*}
    X_t = \Lambda_jF_t + e_t, \quad F_t = \Psi F_{t-1} + \epsilon_t,
\end{equation*}
where $\Lambda_j, j=1,2,\dots, m_0+1$ are factor loadings and $\Psi$ is a diagonal transition matrix of the latent process $\{F_t\}$ with independent and identically distributed standard normal entries. Both error terms $e_t$ and $\epsilon_t$ are independent and identically normally distributed with mean zero and variance $0.01 \mathbf{I}$. \\
 The dimension of the time series is set to $p=20$, the sample size to $T=300$, the locations of the change points at $t_1^\star=100$ and $t_2^\star=200$, and the dimension of the latent factor process to $r=5$. The loadings matrices $\Lambda_j$ are generated at random, with varying magnitudes across different stationary segments.}

{Table~\ref{tab:dfm} tabulates the mean and standard deviation of the relative location of the detected change points, as well as the selection rate for the two-step procedure for the VAR model and the procedure based on the factor model over 50 replications. 
}
\begin{table}[!ht]
    \centering
    \spacingset{1}
    \caption{Comparison of the two-step strategy for the VAR model and the strategy based on factor model under a DFM data generating mechanism.}
    \label{tab:dfm}
    \resizebox{\textwidth}{!}{%
    \begin{tabular}{c|c|c|c|c|c}
        \hline\hline
        Method & CP & Truth & Mean & Sd & Selection rate \\
        \hline
        \multirow{2}*{Two-step strategy for a L+S VAR model} & 1 & 0.333 & 0.342 & 0.050 & 0.80 \\
        \cline{2-6}
         & 2 & 0.667 & 0.647 & 0.041 & 0.66 \\
         \hline
        \multirow{2}*{Factor model based strategy ( \cite{barigozzi2018simultaneous})} & 1 & 0.333 & 0.232 & 0.107 & 0.06 \\
        \cline{2-6}
        & 2 & 0.667 & 0.801 & 0.118 & 0.10 \\
        \hline\hline
    \end{tabular}}
\end{table}

{It can be seen that the two-step strategy based on the posited low-rank plus sparse VAR model exhibits a significantly higher selection rate than the strategy based on the static factor model. Further, the former provides much more accurate estimates of the locations of the underlying change points.}

{Note that both models \textit{misspecify} the true data generating mechanism. The factor model assumes a static factor structure (no autoregressive dynamics in the latent factor), whereas the VAR model assumes autoregressive dynamics on the observed data. The inferior performance of the strategy based on the factor model may be due to the detection mechanism used in  \cite{barigozzi2018simultaneous}, which first extracts principal components of the data across the whole observation interval and then leverages a binary segmentation algorithm to identify the change points. 
}

\subsection{A comparison with the TSP Algorithm}
{
To compare with the TSP algorithm proposed in \cite{bai2020multiple}, we use the settings in scenario B.1. in the Performance Evaluation Section in \cite{bai2020multiple}, wherein $T=300$, $p=20$, with two change points $t_1=100$ and $t_2=200$. Further, we specify a \textit{fixed} (not changing) low rank component of rank 5, and time-varying sparse components with 1-off diagonal structure and magnitudes equal to -0.75, 0.8, and -0.7, respectively. The performance of the fused lasso based algorithm in \cite{bai2020multiple} and the two-step algorithm in the current manuscript is compared in Table \ref{tab:compare-tsp}:
\begin{table}[!ht]
    \centering
    \spacingset{1}
    \caption{Performance of comparison between Two-step L+S and \cite{bai2020multiple}}
    \label{tab:compare-tsp}
    \begin{tabular}{c|c|c|c|c|c}
    \hline\hline
         & points & truth & mean & sd & selection rate \\
    \hline
        \multirow{2}*{Two-step L+S} & 1 & 0.3333 & 0.3327 & 0.0014 & 1.00 \\
                                    & 2 & 0.6667 & 0.6669 & 0.0012 & 1.00 \\
    \hline
        \multirow{2}*{\cite{bai2020multiple}} & 1 & 0.3333 & 0.3413 & 0.0234 & 0.98 \\
                                              & 2 & 0.6667 & 0.6665 & 0.0087 & 1.00 \\
    \hline\hline
    \end{tabular}
\end{table}

It can be seen that the newly developed Two-step L+S algorithm matches the performance of the algorithm in \cite{bai2020multiple}, while at the same time it can handle the much more challenging setting wherein \textit{both} the low rank and the sparse components of the VAR transition matrices can change.
}

\subsection{A Comparison of the Two-step Strategy with a Dynamic Programming (DP) Algorithm}\label{sec:dp}
Next, prompted by a comment from a reviewer,  we investigate a popular and generally applicable strategy for detecting multiple change points, namely one based on a dynamic program (see \cite{friedrich2008complexity} for a similar algorithm for detecting multiple change points in a \textit{pure sparse} VAR model). The algorithmic details are given in Appendix B Algorithm \ref{algo:3}. {As is well-known, the time complexity of a dynamic programming algorithm is $\mathcal{O}(T^2C(T))$, where $C(T)$ denotes the computational cost of estimating the parameters over the entire observation sequence.} Contrary, as mentioned earlier, {the total computational time complexity of the two step rolling window strategy is $\mathcal{O}(TC(T))$.}

The setting under consideration is as follows: the data are generated according to the posited low rank plus sparse VAR model with
$p=20$, $T=240$ and two change points located at $t_1^\star=80$ and $t_2^\star=160$. The ranks for each stationary segment remain the same $r_j \equiv 1$ and the jump sizes are set to $v_L = 0.1$ and $v_S = 1.5$. The information ratio $\gamma_j = 0.25$ for all three stationary segments.
\begin{table}[!ht]
    \centering
    \spacingset{1}
    \caption{Comparison of Proposed Two-step Algorithm with DP Algorithm.}
    \label{tab:dp-results}
    \begin{tabular}{c|c|c}
    \hline\hline
    & Model & Case (DP)  \\
    \hline
    \multirow{3}*{Running time (sec)} & Two-step L+S VAR & 942.11 \\
    \cline{2-3}
     & Two-step Surrogate & 193.20 \\
     \cline{2-3}
     & DP L+S VAR & {1417.71} \\
     \hline
     \multirow{3}*{No. of estimated change points} & Two-step L+S VAR & 2 \\
     \cline{2-3}
     & Two-step Surrogate & 2 \\
     \cline{2-3}
     & DP L+S VAR & {2} \\
     \hline
    \multirow{3}*{Estimated change points: $\widehat{t}_j/T$} & Two-step L+S VAR & $(0.3334_{0.0005}, 0.6665_{0.0003})$ \\
    \cline{2-3}
     & Two Step Surrogate & $(0.3294_{0.0025}, 0.6708_{0.0103})$ \\
     \cline{2-3}
     & DP L+S VAR & {$(0.3300_{0.0000}, 0.6611_{0.0004})$} \\
    \hline\hline
    \end{tabular}
\end{table}

{
The results presented in Table~\ref{tab:dp-results} are based on the following evaluation metrics: (1) the \textit{running time} for the low rank plus sparse model and the surrogate weakly sparse model using the two step rolling window strategy, and the low rank plus sparse model using the DP algorithm; (2) the \textit{number of estimated change points}; and (3) the mean
and standard deviation of the \textit{estimated change points}.
}

{
It can be seen that the running time for the DP Algorithm is 1.5 times longer than the two-step rolling window algorithm for the full low rank plus sparse model. The two step strategy for the surrogate model requires only a fraction of time compared to that of the full model. Further, both strategies accurately estimate \textit{both the number and the relative locations} of the true change points under this setting.}

\newpage
\section{Additional Results for Applications}\label{appendix:G}
\subsection{Guidelines for Applying the Methods to Data and Tuning Parameters Selection}
Recall that the information ratio plays a key role in the identifiability of the change points under both the full and the surrogate model. 
Since the information ratio is unknown in practice, it becomes unclear whether the surrogate model is capable of detecting the underlying change points, even when the full model clearly can. However, the computational savings of the former make it an attractive candidate for a first pass at obtaining candidate change points.
To that end, we outline below a strategy for deciding on the question of applicability of the surrogate model.

\begin{itemize}
    \item[Step 1:] Use the surrogate model and obtain candidate change points (after applying the screening step): $\widetilde{\tau}_1, \widetilde{\tau}_2, \dots, \widetilde{\tau}_m$. 
    \item[Step 2:] Let $\widetilde{\mathcal{I}}_j \overset{\text{def}}{=} |\widetilde{\tau}_{j+1} - \widetilde{\tau}_j|$, for $j=0,1,\dots, m$, where $\widetilde{\tau}_0 = 1$ and $\widetilde{\tau}_{m+1} = T$. Then, apply the full model on each selected time segment $\widetilde{\Delta}_j$. Suppose in the $j$-th segment $\widetilde{\mathcal{I}}_{j}$, we have estimated change points $\widehat{\tau}_1^{(j)}, \dots, \widehat{\tau}_{T_{j}}^{(j)}$ obtained by the full L+S model, where $T_j = |\widetilde{\tau}_{j+1} - \widetilde{\tau}_j|$. Then the final estimated change points set is given by:
    \begin{equation*}
        \left(\bigcup_{j=0}^m\{\widehat{\tau}_1^{(j)}, \dots, \widehat{\tau}_{T_j}^{(j)}\}\right) \cup \{\widetilde{\tau}_1, \dots, \widetilde{\tau}_m\}.
    \end{equation*}
\end{itemize}

Next, we discuss how the following tuning parameters are selected.
\begin{itemize}
    \item[$\circ$] For the full low-rank plus sparse model, we need to select the tuning parameters $\lambda_j$ and $\mu_j$;
    \item[$\circ$] For the surrogate weakly sparse model, the tuning parameter $\eta_j$ needs to be selected;
    \item[$\circ$] For the selection step for candidate change points, a proper window size is the key factor impacting the accuracy and speed of the algorithm. Further, the screening step requires the penalization parameter $\omega_n$ to be specified. 
\end{itemize}

Our recommendations are summarized next:
\begin{itemize}
    \item[$(\lambda_j, \mu_j)$]: {We use the same selection procedure as discussed in Section 5 in the main manuscript; specifically, we use the theoretical values provided and select the constants $c_0, c_0^\prime$ instead. By using a grid search, we simultaneously pick these two tuning parameters for each specified segment. }
    \item[$\eta_j$]: According to \cite{negahban2012unified}, we adopt the theoretical assumption on $\eta_j$: $\eta_j \propto \xi_j$, where $\xi_j$ is the corresponding lasso penalization parameter selected by \texttt{glmnet} and \texttt{sparsevar}. We typically choose $\eta_j \in [0.01\xi_j, 0.1\xi_j]$. 
    {
    \item[$\alpha_L$]: In practice, we choose $\alpha_L$ based on the goal of the application. We empirically choose $\alpha_L$ based on the theoretical value $cp\sqrt{\frac{\log(pT)}{T}}$, and choose the constant $c \in [0.1,1]$ in order to obtain a satisfactory estimation of the sparse components. 
    }
    \item[$h$]: For window-size $h$, a feasible selection method is introduced in Section 5.
    \item[$\omega_n$]: The idea is to first finish the backward elimination algorithm (BEA) until no break points are left. Then, we cluster the \emph{jumps} in the objective function $ \mathcal{L}_T $ into two subgroups, small and large. Intuitively, if removing a break point leads to a small jump in $ \mathcal{L}_T $, then the break point is likely redundant. In contrast, larger jumps correspond to true break points. The smallest jump in the second group is thus a reasonable candidate for $ \omega_n $. The proposed algorithm is summarized as follow:
    \begin{itemize}
        \item[(i)] Apply the BEA algorithm to the set $ \widetilde{\mathcal{S}} $ until no break points are left. Denote the ordered deleted break points as $ \widetilde{t}_{i_1}, \widetilde{t}_{i_2}, \ldots, \widetilde{t}_{i_{\widetilde{m}}} $.
        \item[(ii)] For each  $ k = 1, 2, \ldots, \widetilde{m} $, set $ v_k = \left|    \mathcal{L}_T(\widetilde{t}_{i_k}, \ldots, \widetilde{t}_{i_{\widetilde{m}}}; \bm{\lambda}, \bm{\mu}) - \mathcal{L}_T(\widetilde{t}_{i_{k-1}}, \ldots, \widetilde{t}_{i_{\widetilde{m}}}; \bm{\lambda}, \bm{\mu})\right| $. Define $V = \left \lbrace v_1, v_2, \ldots, v_{\widetilde{m}}  \right \rbrace $.
        \item[(iii)] Apply k-means clustering algorithm (\cite{hartigan1979algorithm}) to the set $ V $ with two centers. Denote the subset with smaller center as the small subgroup, $ V_S $, and the other subset as the large subgroup, $ V_L $.
        \item[(iv)]
        \begin{itemize}
            \item[(a)] If $ \left( \mbox{between-group SS/total SS} \right) $ in (iii) is high, set $ \omega_n = \min V_L $.
            \item[(b)] If $ \left( \mbox{between-group SS/total SS} \right) $ in (iii) is low, set $ \omega_n = \max V $.
        \end{itemize}  
    \end{itemize}
\end{itemize}

\subsection{Detailed Results for the EEG Dataset}\label{sec:eeg-application}
The time series for all 21 EEG channels are shown in the left plot of Figure \ref{fig:application-data}. By examining the time series data, it can be seen that the signal changes significantly. To speed up computations, we select one data point every 1/16 seconds and reduce the total time points to $T = 4376$.
\begin{figure}[!ht]
    \centering
    \begin{subfigure}
        \centering
        \includegraphics[width=.48\linewidth, height=.4\linewidth]{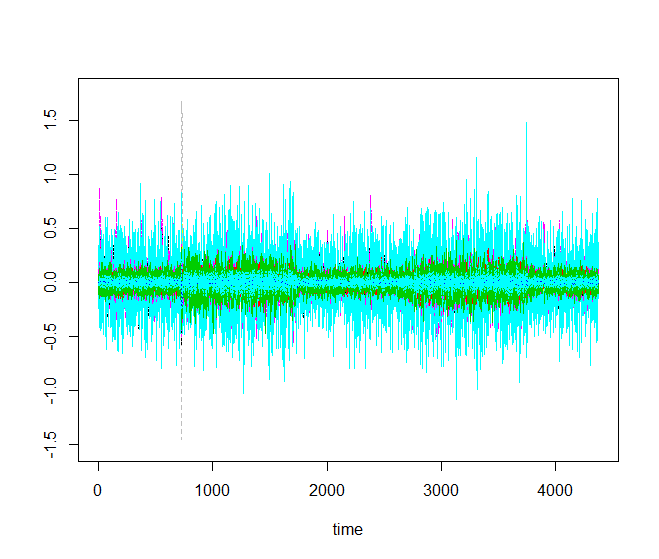}%
        \includegraphics[width=.48\linewidth, height=.4\linewidth]{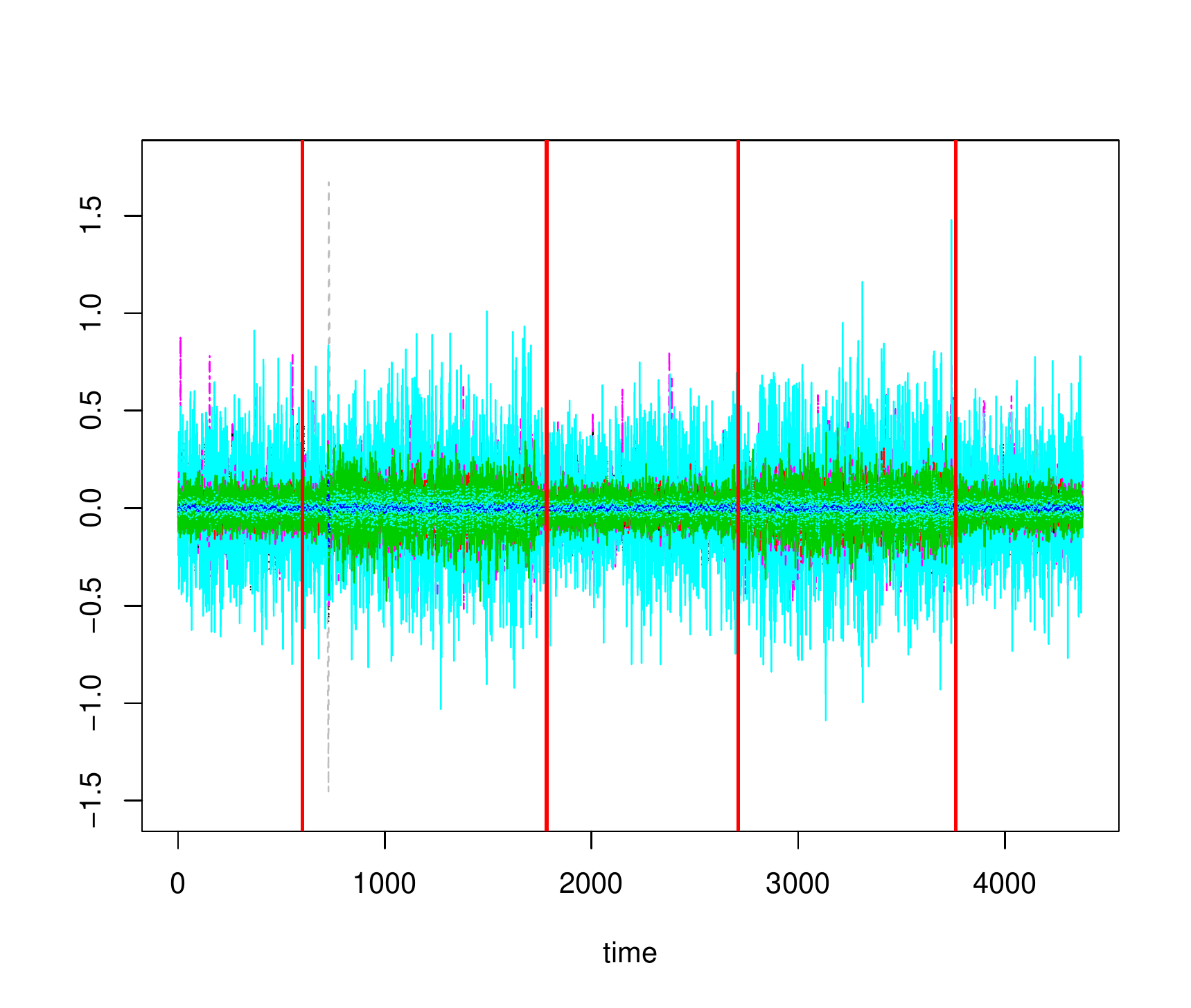}
    \end{subfigure}
    \vspace{-10pt}
    \caption{Left: Selected 21 EEG channels reduced time series data; Right: Estimated 4 change points.}
    \label{fig:application-data}
\end{figure}
We also estimate the structured transition matrices in each estimated segment. The following Figure \ref{fig:patterns} shows the estimated sparsity patterns and low-rank patterns. It can be seen that there are obvious similarities among segments 1, 3, and 5, where the subject had eyes closed, and segments 2 and 4, where eyes were open. Further, the estimated ranks for these 5 segments are: 5, 12, 6, 12, and 4, respectively. 
\begin{figure}[!ht]
    \centering
    \resizebox{\textwidth}{!}{%
    \begin{subfigure}
        \centering 
        \includegraphics[scale=.31]{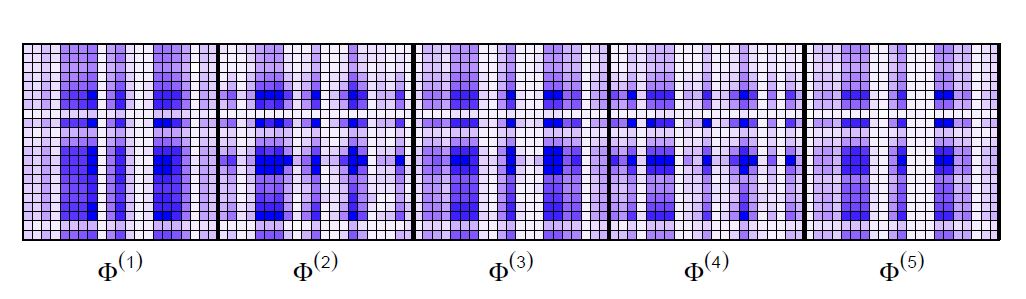}
        \includegraphics[scale=.31]{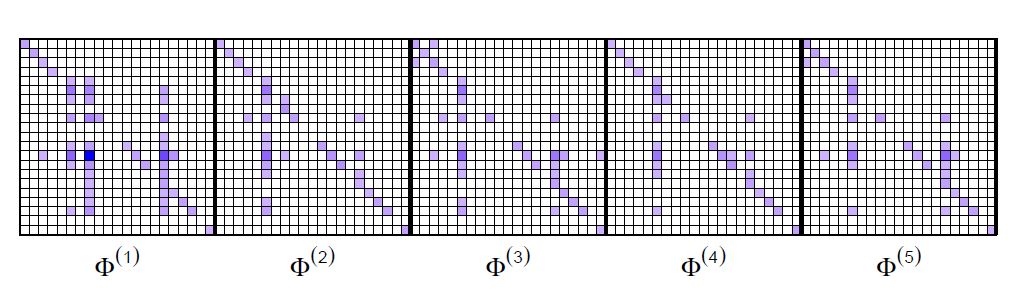}
    \end{subfigure}}
    \caption{Left: Estimated low-rank pattern for 5 segments; Right: Estimated sparse pattern for 5 segments.}
    \label{fig:patterns}
\end{figure}
Furthermore, the connectivity patterns in the sparse components are depicted in Figure \ref{fig:networks}. The name of the EEG channel denotes the node and entries of the transition matrices with magnitude larger than 0.1 are shown.
\begin{figure*}[!ht]
    \centering
    \begin{subfigure}
        \centering
        \includegraphics[width=0.32\linewidth]{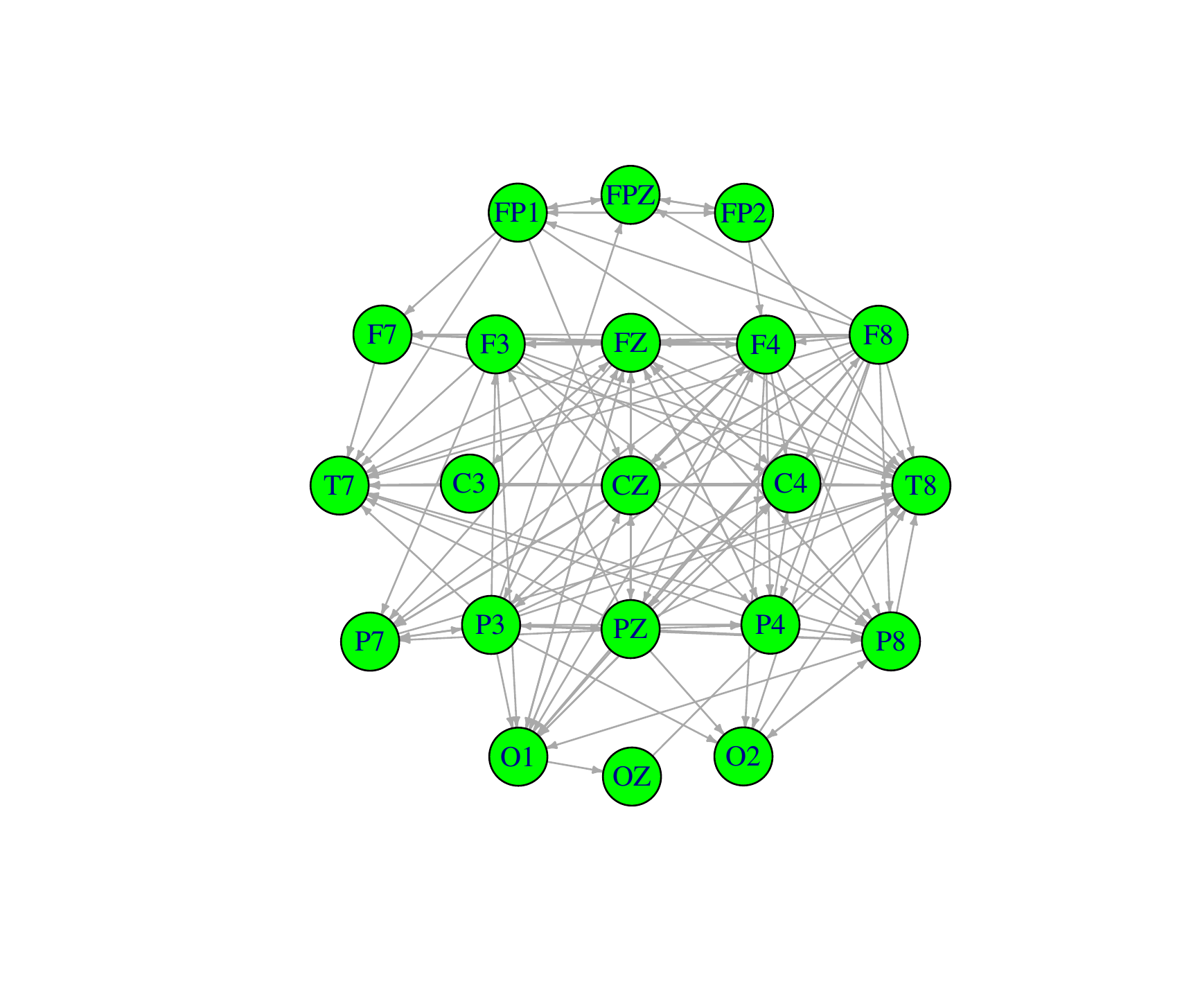}
        \includegraphics[width=0.32\linewidth]{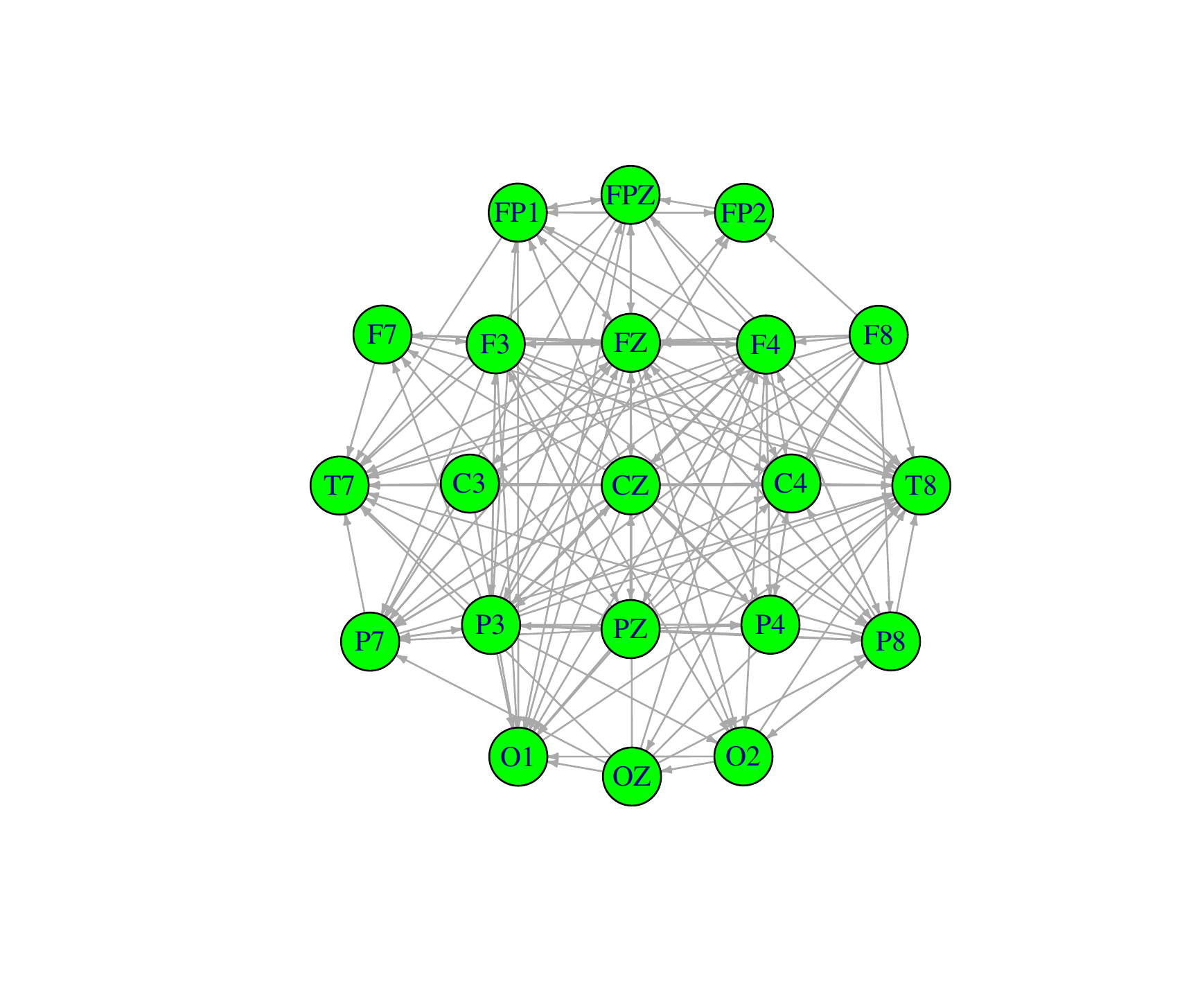}
        \includegraphics[width=0.32\linewidth]{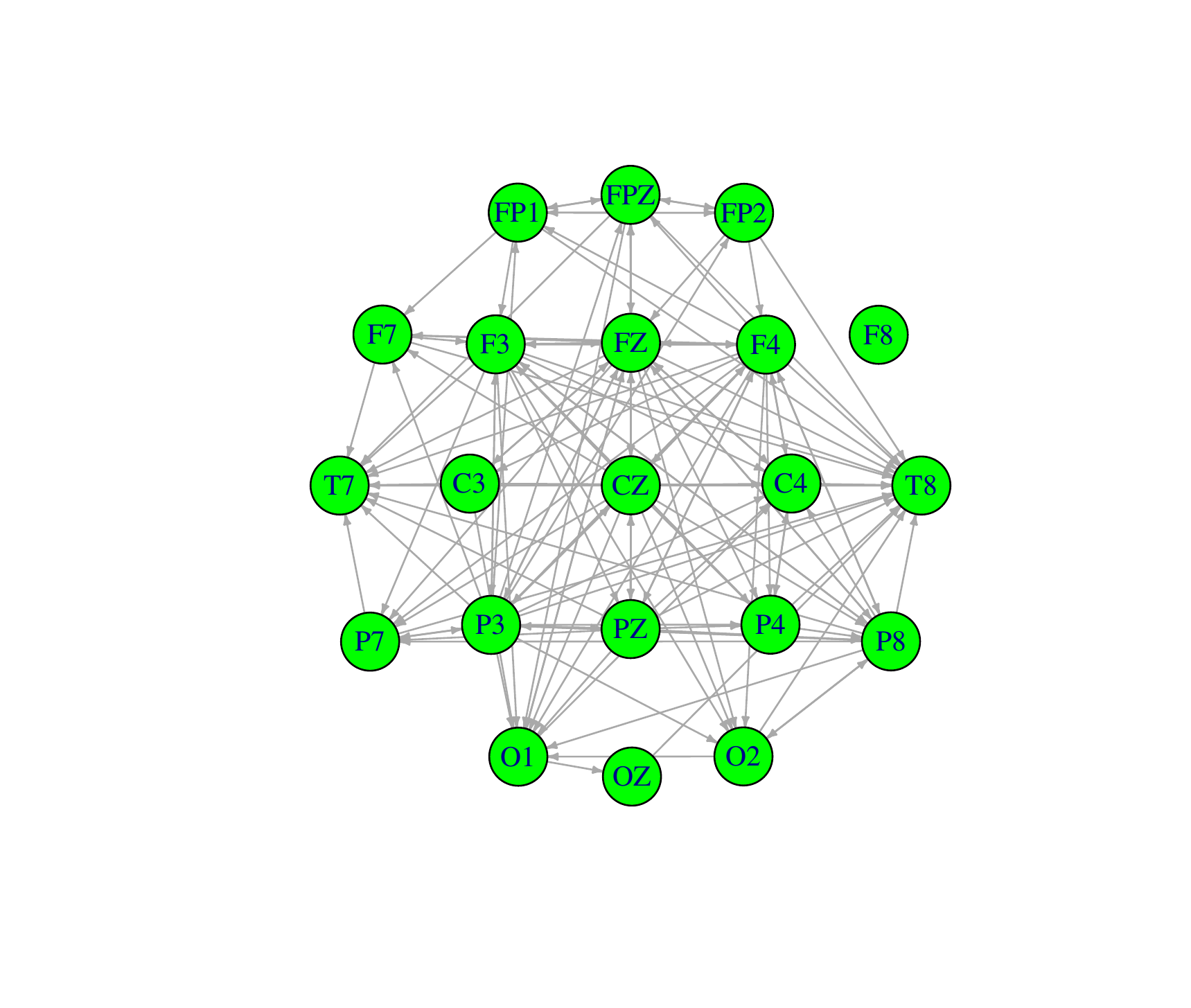}
        \includegraphics[width=0.32\linewidth]{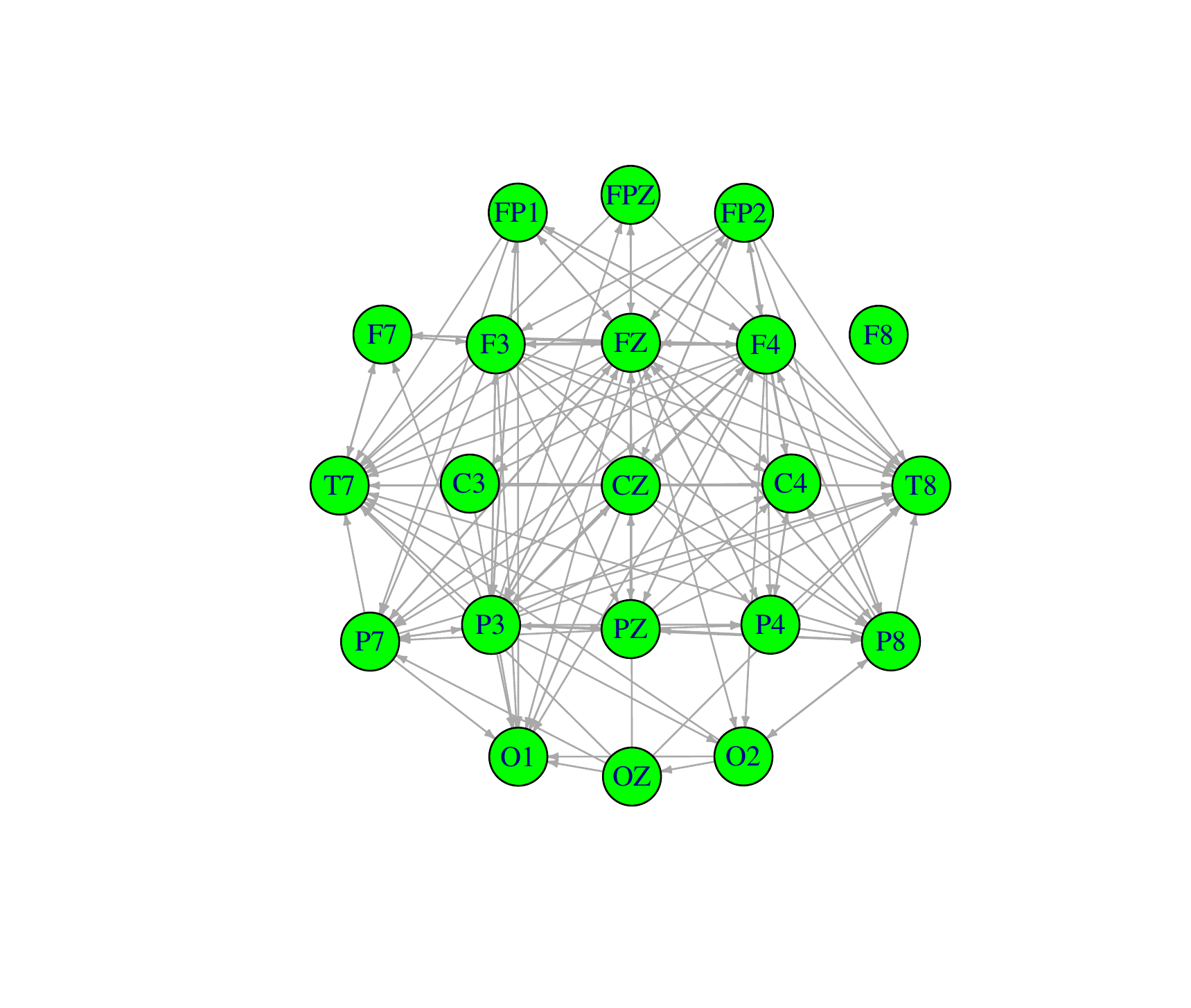}
        \includegraphics[width=0.32\linewidth]{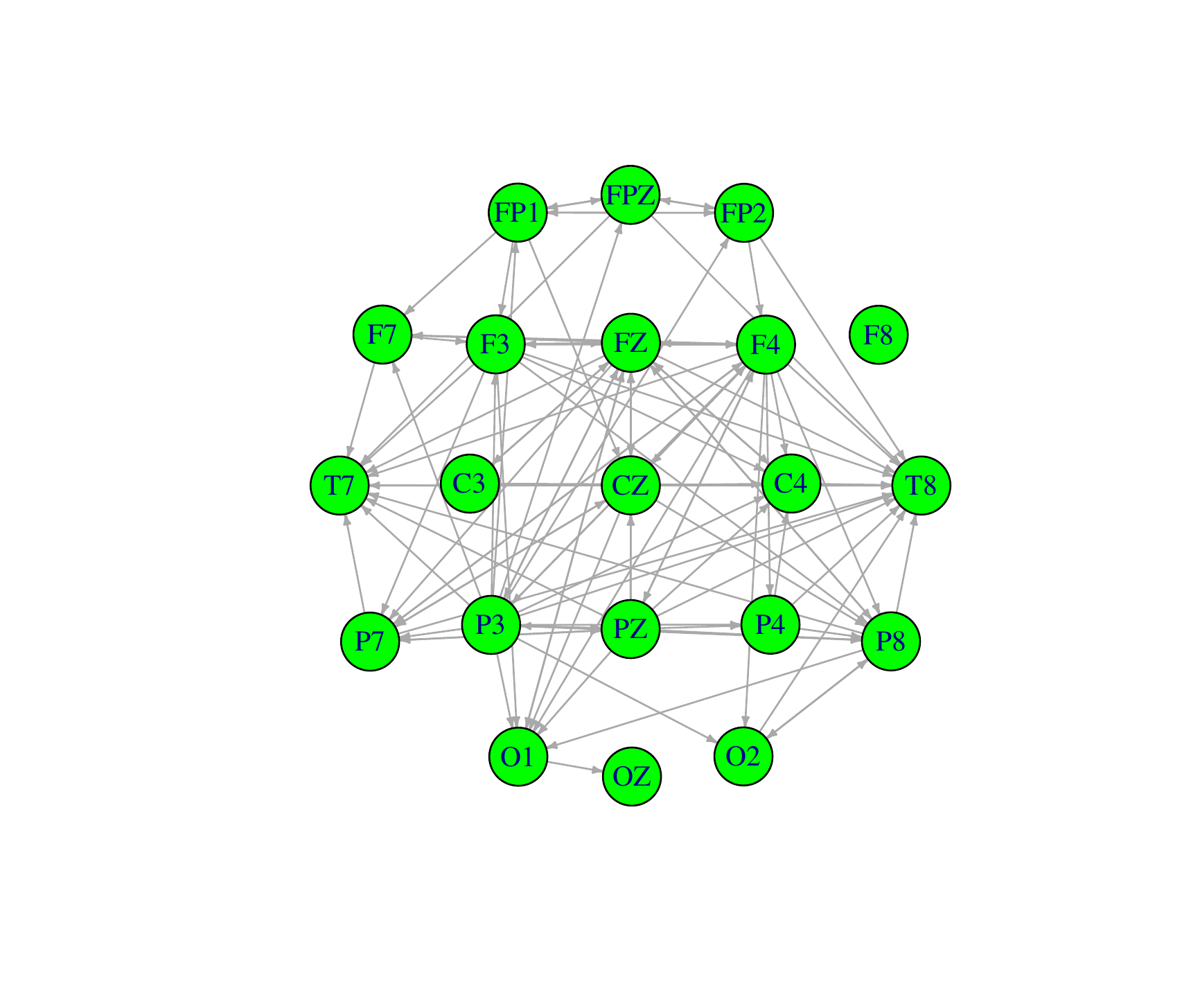}
    \end{subfigure}
    \caption{From top left to bottom right, we present the networks of sparse components for the segment 1-5.} 
    \label{fig:networks}
\end{figure*}

\subsection{Detailed Results for the Macroeconomics Data}
The modified macroeconomics data together with all 6 estimated change points is provided in Figure \ref{fig:macro}.
\begin{figure}[!ht]
    \centering
    \includegraphics[scale=.5]{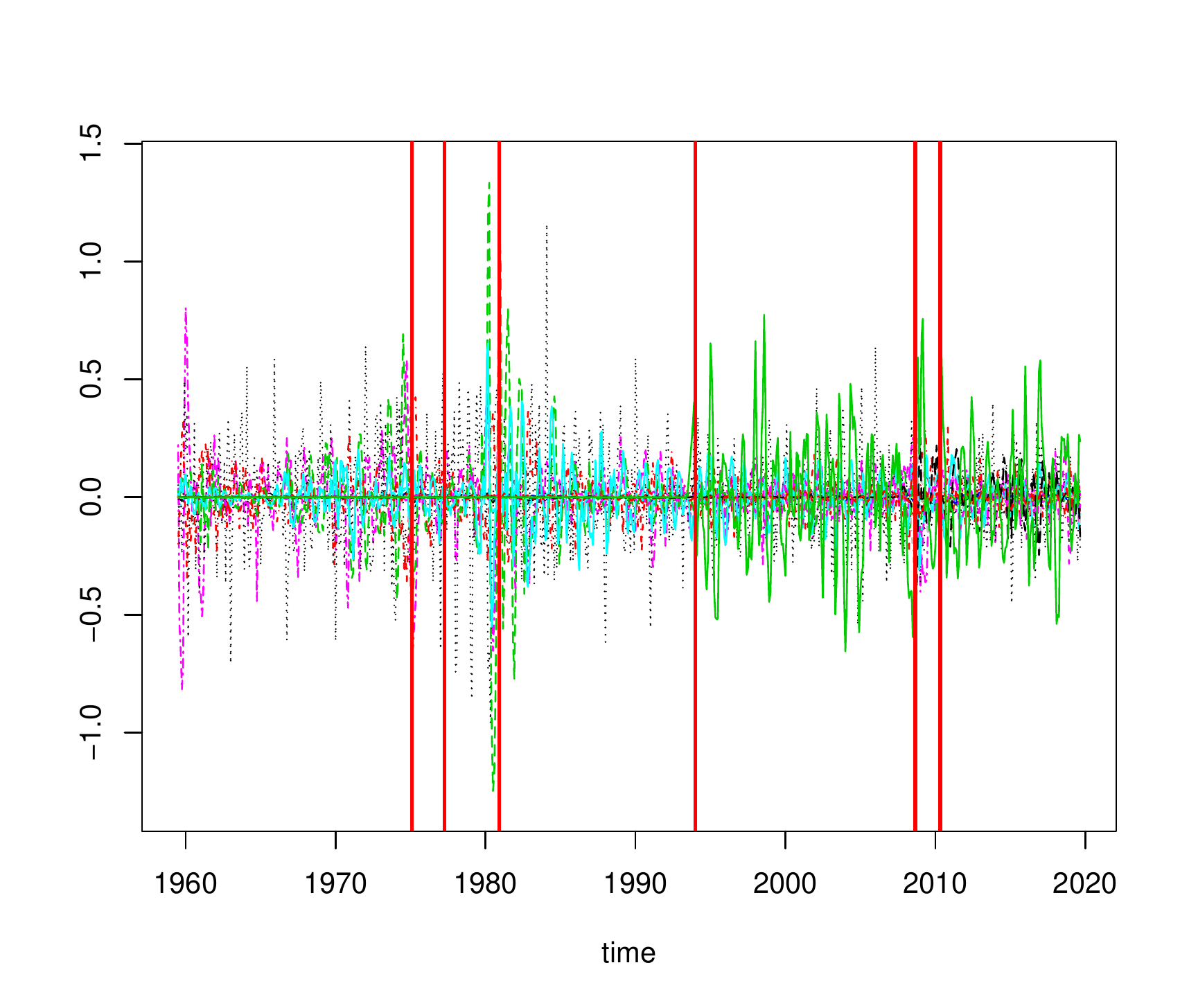}
    \vspace{-10pt}
    \caption{Macroeconomic indicators for the 1959-2019 period with all 6 estimated change points (red lines).}
    \label{fig:macro}
\end{figure}

Next, we show again the identified change points in Table \ref{tab:cp-events-addition} below.
\begin{table}[!ht]
    \spacingset{1}
    \centering
    \caption{Estimated Change Points and Candidate Related Events.}
    \label{tab:cp-events-addition}
    \resizebox{0.8\textwidth}{!}{%
    \begin{tabular}{c|l}
    \hline\hline
        Date (mm/dd/yyyy) & Candidate Related Events\\
        \hline
        02/01/1975 & Aftermath of 1973 oil crisis \\
        \hline
        04/01/1977 & Rapid build-up of inflation expectations \\
        \hline
        12/01/1980 & Rapid increase of interest rates by the Volcker Fed \\
        \hline
        01/01/1994 & Multiple events - see Appendix G.3 \\
        \hline
        09/01/2008 & Recession following collapse of Lehman Brothers\\
        \hline
        05/01/2010 & Recovery from the Great Financial crisis of 2008  \\
        \hline\hline
    \end{tabular}}
\end{table}
The first change point corresponds to the aftermath of the first oil crisis in 1973 and the collapse of the post-war Bretton Woods system of monetary management of commercial and financial relations among the leading western economics \citep{bordo2007retrospective}, that led to low growth and sustained inflation.
The second change point identified, marks the rapid build-up of inflation expectations \citep{kareken1978inflation} that led the Federal Reserve Board under Chairman Volcker to pursue a contractionary monetary policy through doubling the federal funds rate to 20\% to fight-off persisting inflation expectations \citep{orphanides2004monetary}. The next change point is associated with multiple events, including the Republican Party controlling the US House of Representatives for the first time since 1952 with a business and markets friendly agenda, and the ratification of the North American Free Trade Agreement. The last two change points are associated with the onset and exit of the Great Financial Crisis of 2008 that led to a deep recession, collapse and/or bailouts of various financial institutions, liquidity crunches and a debt crisis in peripheral countries in the Eurozone that exhibited a negative feedback to the US economy \citep{eichengreen2014hall}.

\end{appendices}
\end{document}